\newtheorem{theorem}{Theorem}[section]
\newtheorem{lemma}{Lemma}[section]
\theoremstyle{definition}
\begin{document}
\large

{\centering

{\LARGE \bf On absolute continuity of the spectrum of a d-dimensional
periodic magnetic Dirac operator}
\vskip 1.0cm

{\Large L.I.~Danilov }
\vskip 0.5cm

{\large Physical-Technical Institute,}\\
{\large 426000, Izhevsk, Kirov street, 132, Russia}\\
{\large e-mail: danilov@otf.pti.udm.ru}

\vskip 2.0cm
{\Large Abstract}
\vskip 0.5cm

}

In this paper, for $d\geqslant 3$, we prove the absolute continuity of the
spectrum of a $d$-dimensional periodic Dirac operator with some discontinuous
magnetic and electric potentials. In particular, for $d=3$, electric potentials
from Zygmund classes $L^3\ln ^{1+\delta }L(K)$, $\delta >0$, and also ones
with Coulomb singularities, with constraints on charges depending on the
magnetic potential, are admitted (here $K$ is the fundamental domain of the
period lattice).

\section*{Introduction and main results}
Let ${\mathcal M}_M$, $M\in {\mathbb N}$, be the linear space of complex 
$(M\times M)$-matrices, let ${\mathcal S}_M$ be the set of Her\-mi\-tian matrices from 
${\mathcal M}_M\, $, and let the matrices $\widehat \alpha _j\in {\mathcal S}_M\, $, 
$j=1,\dots ,d$ ($d\geqslant 2$), satisfy the commutation relations 
$
\widehat \alpha _j\widehat \alpha _l+\widehat \alpha _l\widehat \alpha _j=
2\delta _{jl}\widehat I,
$
where $\widehat I\in {\mathcal M}_M$ is the identity matrix and $\delta _{jl}$ is the 
Kronecker delta. Denote
$$
{\mathcal S}^{(s)}_M=\{ \widehat L\in {\mathcal S}_M:\widehat L\widehat \alpha _j=
(-1)^s\widehat \alpha _j\widehat L \text{\ for\ all\ } j=1,\dots ,d\} \, ,\ s=0,1\, .
$$
We consider the $d$-dimensional Dirac operator
$$
\widehat {\mathcal D}+\widehat W=-i\, \sum\limits_{j=1}^d\widehat \alpha _j\, \frac 
{\partial}{\partial x_j}+\widehat W(x)\, ,\ x\in {\mathbb R}^d,\,  \eqno (0.1)
$$
with a periodic matrix function $\widehat W:{\mathbb R}^d\to {\mathcal S}_M\, 
$, $d\geqslant 2$ ($i^2=-1$), with a period lattice $\Lambda \subset {\mathbb R}^d$.
In particular, the operator (0.1) can have the form
$$
\widehat {\mathcal D}+\widehat W=\sum\limits_{j=1}^d\widehat \alpha _j\, \bigl( 
-i\, \frac {\partial}{\partial x_j}-A_j\bigr) +\widehat V\, ,\ \ \ 
\widehat V=\widehat V^{\, (0)}+\widehat V^{\, (1)}, \eqno (0.2) 
$$
where the components $A_j$ of the magnetic potential $A:{\mathbb 
R}^d\to {\mathbb R}^d$ and the matrix functions $\widehat V^{\, (s)}:{\mathbb R}^d
\to {\mathcal S}^{(s)}_M$, $s=0,1$, are also periodic with the period lattice 
$\Lambda \subset {\mathbb R}^d$. In the sequel, the matrix functions $\widehat 
V^{\, (s)}$, $s=0,1$, will be also chosen in the form
$$
\widehat V^{\, (0)}=V\widehat I\, ,\ \widehat V^{\, (1)}=V_1\widehat \beta \, ,  
\eqno (0.3)
$$
where $V,V_1$ are $\Lambda$-periodic real-valued functions and $\widehat \beta \in 
{\mathcal S}^{(1)}_M$ is a Hermitian matrix with $\widehat \beta ^{\, 2}=
\widehat I\, $, and in the particular form 
$$
\widehat V^{\, (0)}=V\widehat I\, ,\ \widehat V^{\, (1)}=m\widehat \beta \, ,  
\eqno (0.4)
$$
where $V:{\mathbb R}^d\to {\mathbb R}$ is a $\Lambda $-periodic electric potential 
and $m\in {\mathbb R}$.

The coordinates in ${\mathbb R}^d$ are taken relative to an orthogonal basis 
$\{ {\mathcal E}_j\} $ ($|{\mathcal E}_j|=1$, $j=1,\dots ,d$; $|.|$ and $(.,.)$ 
are the length and the scalar product of vectors in ${\mathbb R}^d$), $A_j(x)=
(A(x),{\mathcal E}_j)$, $x\in {\mathbb R}^d$. Let $\{ E_j\} $ be the basis 
in the lattice $\Lambda \subset {\mathbb R}^d$,
$$
K=\{ x=\sum\limits_{j=1}^d\xi _jE_j : 0\leqslant \xi _j<1\, ,\ j=1,\dots ,d\} \, .
$$ 
Denote by $v(.)$ the Lebesgue measure on ${\mathbb R}^d$; $v(K)$ is the volume of the
fundamental domain $K$. In what follows, the functions defined on the fundamental
domain $K$ will be also identified with their $\Lambda $-periodic extensions to 
${\mathbb R}^d$.

The scalar products and the norms on the spaces ${\mathbb C}^M$, $L^2({\mathbb R}^d;
{\mathbb C}^M)$, and $L^2(K;{\mathbb C}^M)$ are introduced in the usual way 
(as a rule, omitting the notation for the corresponding space). We assume that the
scalar products are linear in the second argument. For matrices $\widehat L\in 
{\mathcal M}_M\, $, we write
$$
\| \widehat L\| _{{\mathcal M}_M}=\max\limits_{u \, \in \, {\mathbb C}^M\, :\, 
\| u\| =1}\| \widehat Lu\| \, .
$$
The zero and the identity matrices and operators in various spaces are denoted by 
$\widehat 0$ and $\widehat I$, respectively.

Let $H^1({\mathbb R}^d;{\mathbb C}^M)$ be the Sobolev class (of order 1) 
of vector functions $\varphi :{\mathbb R}^d\to {\mathbb C}^M$. The operator
$$
\widehat {\mathcal D}=-i\, \sum\limits_{j=1}^d\widehat \alpha _j\, \frac 
{\partial}{\partial x_j}
$$
acts on the space $L^2({\mathbb R}^d;{\mathbb C}^M)$ and has the domain 
$D(\widehat {\mathcal D})=H^1({\mathbb R}^d;{\mathbb C}^M)$. For $a\geqslant 0$,
let ${\mathbb L}^{\Lambda}_M(d;a)$ be the set of $\Lambda $-periodic matrix 
functions $\widehat W\in L^2_{{\rm {loc}}}({\mathbb R}^d;{\mathcal M}
_M)$ which have bounds $b(\widehat W)\leqslant a$ relative to the operator
$\widehat {\mathcal D}$. If $\widehat W\in {\mathbb L}^{\Lambda}
_M(d;a)$ and $\varphi \in H^1({\mathbb R}^d;{\mathbb C}^M)$, then $\widehat W
\varphi \in L^2({\mathbb R}^d;{\mathbb C}^M)$ and for any $\varepsilon >0$ 
there is a number $C_{\varepsilon}(a,\widehat W)>0$ such that for all 
vector functions $\varphi \in H^1({\mathbb R}^d;{\mathbb C}^M)$ the estimate
$$
\| \widehat W\varphi \| \leqslant (a+\varepsilon )\, \| \widehat 
{\mathcal D}\varphi \| +C_{\varepsilon}(a,\widehat W)\, \| \varphi \| \, .  
\eqno (0.5)
$$
holds. In particular, the set ${\mathbb L}^{\Lambda}_M(d;0)$ (with $a=0$) contains 
$\Lambda $-periodic matrix functions $\widehat W:{\mathbb R}^d\to {\mathcal M}_M\, $
for which at least one of the following conditions is satisfied:

1) $d=2$, the function $\| \widehat W(.)\| ^2_{{\mathcal M}_M}$ belongs to the 
Kato class $K_2$ (see \cite{CFKS}); this condition is fulfilled for the functions 
$\widehat W$ from the Zygmund class $L^2\ln L(K;{\mathcal M}_M)$);

2) $d\geqslant 2$, $\widehat W\in L^2(K;{\mathcal M}_M)$ and  
$$
||| \, \widehat W \, |||_{\gamma ,\, M}\doteq
{\rm {ess}}\, \sup\limits_{\hskip -0.7cm x\, \in \, {\mathbb R}^d}\
\biggl(\ \int_0^1\| \widehat W(x-\xi \gamma )\| ^2_{{\mathcal M}_M}\, d\xi \,
\biggr) ^{\frac 12}<+\infty
$$
for some vector $\gamma \in \Lambda \backslash \{ 0\} $ (see, e.g., \cite{Diff});

3) $d\geqslant 3$, $\widehat W\in L^d(K;{\mathcal M}_M)$.

Let $L^d_w(K;{\mathcal M}_M)$ be the space of functions $\widehat W:K\to
{\mathcal M}_M$ for which
$$
\| \widehat W\| _{L^d_w(K;{\mathcal M}_M)}\doteq \sup\limits_{t\, >\, 0}\, t\, (v(\{
x\in K:\| \widehat W(x)\| _{{\mathcal M}_M}>t\} ))^{\frac 1d}<+\infty \, .
$$
For functions $\widehat W\in L^d_w(K;{\mathcal M}_M)$, we write
$$
\| \widehat W\| ^{(\infty )}_{L^d_w(K;{\mathcal M}_M)}\doteq 
\limsup\limits_{t\, \to \, +\infty }\, t\, (v(\{ x\in K:\| \widehat W(x)\| 
_{{\mathcal M}_M}>t\} ))^{\frac 1d}.
$$
For $d\geqslant 3$, the $\Lambda $-periodic function
$\widehat W\in L^d_w(K;{\mathcal M}_M)$ has the bound 
$$
b(\widehat W)\leqslant C\, \| \widehat W\| ^{(\infty )}
_{L^d_w(K;{\mathcal M}_M)}
$$
relative to the operator $\widehat {\mathcal D}$, where $C=C(d)>0$ (see, e.g., 
\cite{RSII}). From this one also derives the estimate
$$
b(\widehat W)\leqslant C\, \| \widehat W\| ^{(\infty ,\, {\rm 
{loc}})}_{L^d_w(K;{\mathcal M}_M)}\, ,  \eqno (0.6)
$$
where
$$
\| \widehat W\| ^{(\infty ,\, {\rm {loc}})}_{L^d_w(K;{\mathcal M}_M)}\doteq 
\lim\limits_{r\, \to \, +0}\, \sup\limits_{x\, \in \, {\mathbb R}^d}\, 
\limsup\limits_{t\, \to \, +\infty }\, t\, (v(\{ y\in B_r(x):\| \widehat W(x)\| 
_{{\mathcal M}_M}>t\} ))^{\frac 1d},
$$
$B_r(x)=\{ y\in {\mathbb R}^d: |x-y|\leqslant r\} $. If $\widehat W_1\, ,\, 
\widehat W_2\in L^d_w(K;{\mathcal M}_M)$, then
$$
\| \widehat W_1+\widehat W_2\| ^{(\infty ,\, {\rm {loc}})}_{L^d_w(K;{\mathcal M}_M)}
\leqslant 2\, \| \widehat W_1\| ^{(\infty ,\, {\rm {loc}})}_{L^d_w(K;{\mathcal M}
_M)}+2\, \| \widehat W_2\| ^{(\infty ,\, {\rm {loc}})}_{L^d_w(K;{\mathcal M}_M)}\, .
$$

Let $\{ E^*_j\} $ be the basis in the reciprocal lattice $\Lambda ^*\subset 
{\mathbb R}^d$, $(E_j,E^*_l)=\delta _{jl}\, $. We let
$$
\psi _N=v^{-1}(K)\, \int_K\psi (x)\, e^{-2\pi i\, (N,x)}dx\, ,\ N\in \Lambda ^*\, ,
$$
denote the Fourier coefficients of the functions $\psi \in L^1(K;{\mathcal U})$, 
where ${\mathcal U}$ is the space ${\mathbb C}^M$ or ${\mathbb R}^d$ or 
${\mathcal M}_M\, $.

If $\widehat W:{\mathbb R}^d\to {\mathcal S}_M$ is a Hermitian matrix function 
and $\widehat W\in {\mathbb L}^{\Lambda}_M(d;a)$ for some $a\in [0,1)$, 
then $\widehat {\mathcal D}+\widehat W$ is a self-adjoint operator on 
$L^2({\mathbb R}^d;{\mathbb C}^M)$ with the domain $D(\widehat {\mathcal D}+
\widehat W)=D(\widehat {\mathcal D})=H^1({\mathbb R}^d;{\mathbb C}^M)$ (see
\cite{RSII,Kato}). The singular spectrum of the operator $\widehat {\mathcal D}+
\widehat W$ is empty and the eigenvalues (if they exist) have an infinite
multiplicity and form a discrete set (see \cite{Dep96} and also \cite{FS}). 
Therefore, if there are no eigenvalues in the spectrum of the operator $\widehat 
{\mathcal D}+\widehat W$, then the spectrum is absolutely continuous (this
assertion is also a consequence of the results of \cite{K}).

The question on the absolute continuity of the spectrum of periodic operators
of mathematical physics (in particular, of the periodic Dirac operator) attracted
a lot of attention in the past decade. Two papers \cite{BSu99} and \cite{KL}
contain a survey of some early results. The assertions on the absolute
continuity of the spectrum of periodic Schr\" odinger operators (including
ones with variable metrics) can be found in \cite{Th} -- \cite{TF} (also see
references therein). The periodic Maxwell operator was considered in \cite{M2,Su}.

The first results on the absolute continuity of the spectrum of the periodic
Dirac operator were obtained in \cite{P87,P88,TMF90}. In \cite{TMF90,Dep91},
the absolute continuity of the spectrum of the operator (0.2), (0.4) was proved
for all $d\geqslant 2$ under the conditions
$V\in C({\mathbb R}^d)$, $A\in L^{\infty}({\mathbb R}^d;{\mathbb R}^d)$, and
$$
\| \, |A|\, \| _{L^{\infty}({\mathbb R}^d)}<\max\limits_{\gamma \, \in \, \Lambda 
\backslash \{ 0\} }\frac {\pi }{|\gamma |}\ .  \eqno (0.7)
$$

In subsequent papers, the restriction on the periodic electric potential
$V$ has been relaxed. The spectrum of the operator (0.2), (0.4) is absolutely
continuous if at least one of the following conditions is satisfied:

1) $d=2$, $V\in L^q(K)$, $q>2$, and the magnetic potential $A\in L^{\infty}
({\mathbb R}^2;{\mathbb R}^2)$ obeys condition (0.7) (see \cite{Dep92});

2) $d\geqslant 3$, $A\equiv 0$, and
$
\sum_{N\, \in \, {\Lambda }^*}|V_N|^p<+\infty ,
$
where $p\in [1,q_d(q_d-1)^{-1})$ and the numbers $q_d>d$ are found as the largest
roots of the algebraic equations
$$
q^4-(3d^2-4d-1)q^3+2(4d^2-6d-3)q^2-(9d^2-16d-4)q-4d(d-2)=0\, ,
$$
$q_3\simeq 11.645$, $d^{-2}q_d\to 3$ as $d\to +\infty $ (see \cite{Dep96} and also
\cite{P88,Dep91,Dep92});

3) $d=3$, $V\in L^q(K)$, $q>3$, and the magnetic potential $A\in L^{\infty}
({\mathbb R}^3;{\mathbb R}^3)$ satisfies (0.7) (see \cite{TMF95});

4) $d\geqslant 2$, $V\in L^2(K)$, $A\in L^{\infty}({\mathbb R}^d;{\mathbb R}^d)$, 
and there exists a vector $\gamma \in \Lambda \backslash \{ 0\} $ such that 
$\| \, |A|\, \| _{L^{\infty}({\mathbb R}^d)}<\pi |\gamma |^{-1}$ and the map
$$
{\mathbb R}^d\ni x\to \{ [0,1]\ni \xi \to V(x-\xi \gamma )\} \in L^2([0,1])
$$
is continuous (see \cite{Diff}).

In \cite{Izv06}, the absolute continuity of the spectrum of the operator (0.2), 
(0.4) was proved for $d=3$ under conditions: the matrix functions $\widehat V^{\, 
(s)}$, $s=0,1$, belong to the Zygmund class $L^3\ln ^{2+\delta }L(K;{\mathcal M}_M)$ 
for some $\delta >0$, and the magnetic potential $A\in L^{\infty}({\mathbb R}^3;
{\mathbb R}^3)$ satisfies (0.7).

In recent paper \cite{ShZh}, it was proved that the spectrum of the Dirac operator
(0.1) is absolutely continuous if
$$
\widehat \alpha _1\widehat W\widehat \alpha _1=\widehat \alpha _2\widehat W\widehat 
\alpha _2= \dots =\widehat \alpha _d\widehat W\widehat \alpha _d
$$
and for some $r\geqslant d$, $\alpha > (d-1)/(2r)$, we have $\widehat W\in L^r(K;
{\mathcal S}_M)$ and
$$
\| \, \widehat W(.+y)-\widehat W(.)\, \| _{\, L^r(K;{\mathcal S}_M)}\leqslant
\widetilde C\, \{ {\mathrm {dist}}\, (y,\Lambda )\} ^{\alpha }
$$
for any $y\in {\mathbb R}^d$, where $\widetilde C\geqslant 0$ and
$$
{\mathrm {dist}}\, (y,\Lambda )=\min\limits_{\gamma \, \in \, \Lambda }\, |y-
\gamma |\, .
$$

For $d=3$, one can set
$$
\widehat \beta=\left( \begin{matrix} \widehat I & \widehat 0 \\ \widehat 0 &
-\widehat I \end{matrix} \right) \, ,\ \ 
\widehat \alpha _j=\left( \begin{matrix} \widehat 0 & \widehat \sigma _j \\ 
\widehat \sigma _j & \widehat 0 \end{matrix} \right) \, ,\ j=1,2,3\, ,  \eqno (0.8)
$$
where $\widehat 0$ and $\widehat I$ are the zero and the identity $2\times 2$
matrices, and $\widehat \sigma _j$ are the Pauli matrices:
$$
\widehat \sigma _1=\left( \begin{matrix} 0 &1\\ 1 &0 \end{matrix} \right) \, ,\
\widehat \sigma _2=\left( \begin{matrix} 0 &-i\cr i &0 \end{matrix} \right) \, ,\
\widehat \sigma _3=\left( \begin{matrix} 1 &0\cr 0 &-1 \end{matrix} \right) \, .
$$
In this case, the matrix functions $\widehat V^{\, (s)}:{\mathbb R}^3\to {\mathcal 
S}^{(s)}_4\, $, $s=0,1$, can be chosen in the form
$$
\widehat V^{\, (0)}=V^{\, (0)}_1\, \widehat I-i\, V^{\, (0)}_2\, \widehat \alpha 
_1\widehat \alpha _2\widehat \alpha _3 \, ,\ \ \widehat V^{\, (1)}=V^{\, (1)}_1\,
\widehat \beta +V^{\, (1)}_2\, \widehat \alpha _1\widehat \alpha _2\widehat \alpha 
_3\widehat \beta \, ,
$$
where
$$
-i\, \widehat \alpha _1\widehat \alpha _2\widehat \alpha _3=\left( \begin{matrix} 
\widehat 0 & \widehat I \\ \widehat I & \widehat 0 \end{matrix} \right) \, ,\ \ 
\widehat \alpha _1\widehat \alpha _2\widehat \alpha _3\widehat \beta =\left(
\begin{matrix} \widehat 0 & -i\widehat I \\ i\widehat I & \widehat 0 \end{matrix}
\right) \, ,
$$
and $V^{(s)}_l$, $l=1,2$, are $\Lambda $-periodic real-valued functions.

For $d=2$, one can identify the matrices $\widehat \alpha _1\, $, $\widehat \alpha 
_2$, and $\widehat \beta $ with the Pauli matrices $\widehat \sigma _1\, $, 
$\widehat \sigma _2$, and $\widehat \sigma _3\, $, respectively.

The two-dimensional periodic Dirac operator (0.2), (0.3) with an unbounded
magnetic potential $A$ was studied in \cite{TMF99,BSuD}. In \cite{BSuD}, the
absolute continuity of the spectrum of the operator (0.2), (0.3) (with $d=2$)
was proved under the conditions $V,V_1\in L^q(K)$ and $A\in L^q(K;{\mathbb R}^2)$, 
$q>2$. A similar result was obtained in \cite{TMF99} (it was assumed, however, that
$V_1\equiv m={\mathrm {const}}$, but the proof carries over to functions
$V_1\in L^q(K)$, $q>2$, without essential modifications). The methods used in 
\cite{TMF99} were the same as in \cite{Dep92}. More general conditions on
$V$, $V_1\, $, and $A$ were obtained in \cite{L}: it suffices to require that
the functions $V^{\, 2}\ln (1+|V|)$, $V^{\, 2}_1\ln (1+|V_1|)$, and $|A|^2\ln 
^{1+\delta }(1+|A|)$ belong to the space $L^1(K)$ for some $\delta >0$.

In \cite{Izv04,AA}, it was proved that there are no eigenvalues in the spectrum
of a generalized two-dimensional periodic Dirac operator
$$
-i\sum\limits_{j=1}^2(h_{j1}\widehat \sigma _1+h_{j2}\widehat \sigma _2)\, \frac
{\partial}{\partial x_j}+\widehat W\, ,  \eqno (0.9)
$$
where $h_{jl}\in L^{\infty}({\mathbb R}^2;{\mathbb R})$, $j,l=1,2$, are 
$\Lambda $-periodic functions, for which
$$
0<\varepsilon \leqslant h_{11}(x)h_{22}(x)-h_{12}(x)h_{21}(x)
$$ 
for a.e. $x\in {\mathbb R}^2$, and $\widehat W\in {\mathbb L}^{\Lambda }
_2(2;0)$. If the operator (0.9) is self-adjoint, then its spectrum is absolutely
continuous. Some particular cases of the operator (0.9) were also considered in 
\cite{M1,Dep01,Izv02} (in \cite{Dep01}, the functions $h_{jl}$ were supposed to
obey the same conditions as in \cite{AA}, but it was assumed that $\widehat W\in 
L^q(K;{\mathcal M}_M)$, $q>2$).

In \cite{BSuD}, the absolute continuity of the spectrum of the d-dimensional
operator (0.2), (0.3) was proved for $d\geqslant 3$ under the conditions 
$V,V_1\in C({\mathbb R}^d;{\mathbb R})$ and $A\in C^{2d+3}({\mathbb 
R}^d;{\mathbb R}^d)$. The proof was based on Sobolev's paper \cite{S}, where
the absolute continuity of the spectrum was proved for the Schr\" odinger
operator with a periodic magnetic potential $A\in C^{2d+3}({\mathbb R}^d;
{\mathbb R}^d)$, $d\geqslant 3$. The last condition was relaxed by Kuchment and
Levendorski$\breve {\rm i}$ in \cite{KL}: it suffices to require that
$A\in H^q_{\mathrm {loc}}({\mathbb R}^d;{\mathbb R}^d)$, $2q>3d-2$, which makes
it possible to relax accordingly the constraint on the magnetic potential $A$ 
also for the periodic Dirac operator (see \cite{BSu99,BSuD}).

Let $S_{d-1}=\{ x\in {\mathbb R}^d:|x|=1\} $. For vectors $x\in {\mathbb 
R}^d\backslash \{ 0\} $, we shall use the notation
$$
S_{d-2}(x)=\{ \, \widetilde e\in S_{d-1}: (\widetilde e,x)=0\, \} \, .
$$
Let ${\mathfrak M}_{\mathfrak h}$, ${\mathfrak h}>0$, be the set of all even
Borel signed measures $\mu$ on ${\mathbb R}$ (with finite total variation) 
for which $\int_{{\mathbb R}}e^{\, ipt}d\mu (t)=1$ for every $p\in (-{\mathfrak h},
{\mathfrak h})$;
$$
\| \mu \| =\sup\limits_{{\mathcal O}\, \in \, {\mathcal B}({\mathbb R})}(|\mu 
({\mathcal O})|+|\mu ({\mathbb R}\backslash {\mathcal O})|)<+\infty \, ,\ \mu \in 
{\mathfrak M}_{\mathfrak h}\, ,
$$
where ${\mathcal B}({\mathbb R})$ is the collection of Borel subsets ${\mathcal O}
\subseteq {\mathbb R}$. In \cite{TMF00,Dep00}, it was proved that the spectrum of
the $\Lambda $-periodic Dirac operator (0.2) is absolutely continuous for 
$d\geqslant 3$ if the following conditions are fulfilled:

1) $\widehat V^{\, (s)}\in C({\mathbb R}^d;{\mathcal S}^{(s)}_M)$, $s=0,1$;

2) $A\in C({\mathbb R}^d;{\mathbb R}^d)$ and there exist a vector $\gamma \in 
\Lambda \backslash \{ 0\} $ and a measure $\mu \in {\mathfrak M}_{\mathfrak h}\, $, 
${\mathfrak h}>0$, such that for every $x\in {\mathbb R}^d$ and every unit
vector $\widetilde e\in S_{d-2}(\gamma )$ we have
$$
\biggl| A_0- \int_{\mathbb R}d\mu (t)\int_0^1A(x-\xi \gamma -t
\widetilde e)\, d\xi \, \biggr| <\frac {\pi}{|\gamma |}\, ,  \eqno (0.10)
$$
where $A_0=v^{-1}(K)\int_KA(x)\, dx\, $.

For the periodic magnetic potential $A\in C({\mathbb R}^d;{\mathbb R}^d)$, 
$d\geqslant 3$, condition (0.10) is fulfilled (under an appropriate choice
of $\gamma \in \Lambda \backslash \{ 0\} $ and $\mu \in {\mathfrak M}
_{\mathfrak h}\, $, ${\mathfrak h}>0$) whenever $A\in H^q_{\mathrm {loc}}
({\mathbb R}^d;{\mathbb R}^d)$, $2q>d-2$, and also in the case where 
$
\sum_{N\, \in \, \Lambda ^*}\| A_N\| _{{\mathbb C}^d}<+\infty 
$
(see \cite{TMF00,Dep00}).

Let a vector $\gamma \in \Lambda \backslash \{ 0\} $ and a measure $\mu \in 
{\mathfrak M}_{\mathfrak h}\, $, ${\mathfrak h}>0$, be fixed. Denote $e\doteq 
|\gamma |^{-1}\gamma \in S_{d-1}\, $. In this paper, we consider the Dirac 
operator (0.2) for $d\geqslant 3$ supposing that the $\Lambda $-periodic 
magnetic potential $A:{\mathbb R}^d\to {\mathbb R}^d$ satisfies the following
conditions $\bf (A_0)$, $\bf (A_1)$, $\bf (\widetilde A_1)$, and $\bf (A_2)$. 
\vskip 0.2cm

$\bf (A_0)$: $A_0=0$.
\vskip 0.2cm

Since one always can make the transformation
$$
\widehat {\mathcal D}+\widehat W\to e^{\, i\, (A_0,x)}(\widehat {\mathcal D}+
\widehat W)e^{-i\, (A_0,x)}\, ,
$$
without loss of generality we can assume that condition $\bf (A_0)$ holds.
\vskip 0.2cm

$\bf (A_1)$: {\it $A\in L^2_{\mathrm {loc}}({\mathbb R}^d;
{\mathbb R}^d)$ and the map
$$
{\mathbb R}^d\ni x\to \{ [0,1]\ni \xi \to A(x-\xi \gamma )\} \in L^2([0,1];{\mathbb
R}^d)
$$
is continuous $\mathrm ( $in particular, this means that for all $x\in {\mathbb 
R}^d$, the function $\xi \to A(x-\xi \gamma )$ is defined for a.e. $\xi \in {\mathbb 
R}$$\mathrm )$}.
\vskip 0.2cm

Since for any $\varepsilon >0$ there exists a number $C^{\, \prime}(\gamma ,
\varepsilon )>0$ such that for all $\Lambda $-periodic matrix functions 
$\widehat W\in L^2_{\mathrm {loc}}({\mathbb R}^d;{\mathcal M}_M)$ (for which 
$|||\, \widehat W\, ||| _{\gamma ,\, M}<+\infty $) and for all vector functions 
$\varphi \in H^1({\mathbb R}^d;{\mathbb C}^M)$ the inequality
$$
\| \widehat W\varphi \| \leqslant \, |||\, \widehat W\, ||| _{\gamma ,\, M}\, \bigl(
\varepsilon \, \bigl\| \, -i\, \frac {\partial \varphi }{\partial x^{\, \prime}_2}\,
\bigr\| +C^{\, \prime}(\gamma ,\varepsilon )\, \| \varphi \| \, \bigr) 
\eqno (0.11)
$$
holds, where $x^{\, \prime}_2\doteq (x,e)$, $x\in {\mathbb R}^d$ (we can put 
$C^{\, \prime}(\gamma ,\varepsilon )=2(1+2\varepsilon ^{-1}|\gamma |^2)$; see, e.g., 
\cite{Diff}), condition $\bf (A_1)$ implies that 
$$
\sum\limits_{j=1}^dA_j\widehat \alpha _j\in {\mathbb L}^{\Lambda }_M(d;0)\, .
$$

The following condition is a consequence of condition $\bf (A_1)$:
\vskip 0.2cm

$\bf (\widetilde A_1)$: {\it there is a constant $C^*>0$ such that 
$$
\sup\limits_{x\, \in \, {\mathbb R}^d}\ \sup\limits_{\widetilde e\, \in \,
S_{d-2}(\gamma )}\ \iint_{\, \xi ^2_1+\xi ^2_2\, \leqslant \, 1}|A(x-\xi _1\widetilde 
e-\xi _2e)|\, \frac {d\xi _1\, d\xi _2}{\sqrt {\xi ^2_1+\xi ^2_2}}\, \leqslant \, 
C^*\, .  \eqno (0.12)
$$
}
\vskip 0.2cm

Indeed, for all $x\in {\mathbb R}^d$ and all $\widetilde e \in S_{d-2}(\gamma )$,
$$
\iint_{\, \xi ^2_1+\xi ^2_2\, \leqslant \, 1\, }|A(x-\xi _1\widetilde e-\xi _2e)|\, 
\frac {d\xi _1\, d\xi _2}{\sqrt {\xi ^2_1+\xi ^2_2}}\, \leqslant
$$ $$
\leqslant \, \int_{\, |\xi _1|\, \leqslant \, 1}\biggl( \, \int_{-1}^1
|A(x-\xi _1\widetilde e-\xi _2e)|^2d\xi _2\biggr) ^{\frac 12}\, \sqrt {\frac {\pi}
{|\xi _1|}}\ d\xi _1\, \leqslant
$$ $$
\leqslant \, 4\sqrt {\pi}\, \biggl( -\biggl[ -\frac 2{|\gamma |} \biggr] \cdot
|\gamma |\biggr) ^{\frac 12} \biggl( \, \max\limits_{y\, \in \, {\mathbb R}^d}
\int_0^1|A(y-\xi \gamma )|^2\, d\xi \biggr) ^{\frac 12},
$$
where $[t]$ is the integral part of a number $t\in {\mathbb R}$. Therefore, we
can put
$$
C^*=4\sqrt {\pi}\, \biggl( -\biggl[ -\frac 2{|\gamma |} \biggr] \cdot
|\gamma |\biggr) ^{\frac 12} \biggl( \, \max\limits_{y\, \in \, {\mathbb R}^d}
\int_0^1|A(y-\xi \gamma )|^2\, d\xi \biggr) ^{\frac 12}.  
$$

Let us denote
$$
\widetilde A(\widetilde e;x)=\widetilde A(\gamma ,\mu ,\widetilde e;x)=
\int_{\mathbb R}d\mu(t)\, \int_0^1A(x-\xi \gamma -t\widetilde e)\, d\xi \, ,\ x\in 
{\mathbb R}^d\, ,\ \widetilde e\in S_{d-2}(\gamma )\, .
$$
From condition $\bf (A_1)$ it follows that the periodic function
$$
{\mathbb R}^d\ni x\to \int_0^1A(x-\xi \gamma )\, d\xi
$$
is continuous. Therefore the function $\widetilde A(.;.):S_{d-2}(\gamma )\times 
{\mathbb R}^d\to {\mathbb R}^d$ is also continuous, and $\Lambda $-periodic in the
second argument. Moreover, $(\widetilde A(\widetilde e;.))_0=A_0=0$ (for all 
$\widetilde e\in S_{d-2}(\gamma )$).
\vskip 0.2cm

$\bf (A_2)$: {\it there is a constant $\widetilde \theta \in [0,1)$ such that
$$
\max\limits_{x\, \in \, {\mathbb R}^d}\ \max\limits_{\widetilde e\, \in \, S_{d-2}
(\gamma )}\ |\widetilde A(\widetilde e;x)| \, \leqslant \, \frac {\widetilde \theta 
\pi }{|\gamma |}\, .  \eqno (0.13)
$$
}
\vskip 0.2cm

If we pick the Dirac measure $\mu =\delta $, then the function $\widetilde A
(\widetilde e;.)$ does not depend on the vector $\widetilde e$ and inequality 
(0.13) means that
$$
\max\limits_{x\, \in \, {\mathbb R}^d}\ \, \bigl| \, \int_0^1A(x-\xi \gamma )
\, d\xi \, \bigr| \, \leqslant \, \frac {\widetilde \theta \pi }{|\gamma |}\, .
$$
The last inequality is valid if
$$
\sum_{N\, \in \, \Lambda ^*\, :\, (N,\gamma )\, =\, 0}\| A_N\| _{{\mathbb C}^d}\,
\leqslant \, \frac {\widetilde \theta \pi }{|\gamma |}\, . 
$$

The following two theorems are the main results of this paper.

\begin{theorem} \label{th0.1}
Suppose $d\geqslant 3$, $\widehat V^{\, (s)}\in L^2(K;{\mathcal S}^{(s)}_M)$, 
$s=0,1$, $A\in L^2(K;{\mathbb R}^d)$, and there exist a vector $\gamma \in \Lambda
\backslash \{ 0\} $ and a measure $\mu \in {\mathfrak M}_{\mathfrak h}\, $, 
${\mathfrak h}>0$, such that conditions $\bf (A_0)$, $\bf (A_1)$, $\bf (A_2)$ 
are fulfilled for the magnetic potential $A$, and the maps
$$
{\mathbb R}^d\ni x\to \{ [0,1]\ni \xi \to \widehat V^{\, (s)}(x-\xi \gamma )\}
\in L^2([0,1];{\mathcal S}^{(s)}_M)\, ,\ s=0,1\, ,
$$
are continuous. Then the spectrum of the periodic Dirac operator (0.2) is absolutely
continuous.
\end{theorem}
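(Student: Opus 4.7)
The plan is to apply the Floquet--Bloch decomposition together with the Thomas complex-momentum method. Decomposing $\widehat{\mathcal D}+\widehat W$ into fibers $\widehat{\mathcal D}(k)+\widehat W$ on $L^2(K;{\mathbb C}^M)$ indexed by $k$ in a fundamental domain of $2\pi\Lambda^*$, absolute continuity will follow once one exhibits at least one complex $k$ (in the analytic family extended to a strip in the direction $e=|\gamma|^{-1}\gamma$) for which $\widehat{\mathcal D}(k)+\widehat W$ has trivial kernel. The natural choice dictated by the hypotheses is $k=k_*+(\tfrac12+i\tau)\tfrac{2\pi}{|\gamma|}e$ with real $k_*\perp e$ and $\tau\to+\infty$: on $L^2(K;{\mathbb C}^M)$ the operator $-i\,\partial/\partial x'_2$ has spectrum $\{2\pi n/|\gamma|\}$, so the shifted free $e$-momentum $-i\,\partial/\partial x'_2+(k\cdot e)$ is bounded away from the real axis by $\tau$ and from the spectrum of any real multiplier of size below $\pi/|\gamma|$ by at least $\sqrt{\tau^2+(\pi/|\gamma|)^2}$.

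The chief obstacle is that $A\in L^2(K;{\mathbb R}^d)$ is unbounded, so one cannot gauge it away by an ordinary $\Lambda$-periodic phase. I would handle this in two stages. First, by $\bf (A_1)$ the line-average $x\mapsto\int_0^1 A(x-\xi\gamma)\,d\xi$ is continuous and $\Lambda$-periodic, and the $\gamma$-directional fluctuation of $A$ about this average can be removed by a well-defined periodic gauge transformation whose phase is the corresponding $x'_2$-antiderivative; the residual transverse pieces of $A$ are then controlled via estimate (0.11), which converts the finite norm $|||\,\cdot\,|||_{\gamma,M}$ into relative boundedness with an arbitrarily small constant times $\|\partial\varphi/\partial x'_2\|$. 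Second, I would exploit the measure $\mu\in{\mathfrak M}_{\mathfrak h}$: since $\int e^{ipt}\,d\mu(t)=1$ on $(-{\mathfrak h},{\mathfrak h})$, averaging the line-integrated $A$ against $\mu$ in a transverse direction $\widetilde e\in S_{d-2}(\gamma)$ acts as the identity on low transverse Fourier modes and modifies only the high transverse modes, with the error quantitatively controlled by $\bf (\widetilde A_1)$ (the singular integral bound (0.12) provides the required operator estimate in the transverse plane). After these reductions the effective magnetic potential is replaced, up to a small error, by $\widetilde A(\widetilde e;x)$, which by $\bf (A_2)$ obeys $|\widetilde A|\leqslant\widetilde\theta\pi/|\gamma|$ with $\widetilde\theta<1$.

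With the effective magnetic potential brought below the threshold $\pi/|\gamma|$, the final step follows the pattern of \cite{TMF90,Dep91}: the shifted $e$-momentum $-i\,\partial/\partial x'_2-\widetilde A\cdot e+(k\cdot e)$ is invertible with inverse of norm $O(\tau^{-1})$; the matrix potential $\widehat V=\widehat V^{\,(0)}+\widehat V^{\,(1)}$ lies in ${\mathbb L}^{\Lambda}_M(d;0)$ by (0.11) together with the $L^2$-continuity assumption on $\xi\mapsto\widehat V^{\,(s)}(x-\xi\gamma)$, and is therefore dominated by $\varepsilon\|\widehat{\mathcal D}\varphi\|+C_\varepsilon\|\varphi\|$ with $\varepsilon$ as small as needed. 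A Neumann series then yields invertibility of $\widehat{\mathcal D}(k)+\widehat W$ for large $\tau$, and the Thomas argument concludes the proof of absolute continuity. The most delicate point is the second reduction above: making rigorous the $\mu$-averaging for a merely $L^2$ magnetic potential, uniformly over $\widetilde e\in S_{d-2}(\gamma)$, with an error that remains small compared to the $\tau$ gained from the complexification along $e$.
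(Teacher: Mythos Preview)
Your outline captures the Thomas strategy correctly, but the two-stage reduction of $A$ has a genuine gap. A single periodic scalar gauge $e^{i\Phi}$ shifts $A$ by $\nabla\Phi$, so it can remove at most one scalar component of $A$; your ``$x'_2$-antiderivative'' removes only the $e$-component fluctuation and in the process introduces transverse derivatives $\partial_j\Phi$ ($j\neq 2$) of an object built from an $L^2$ function, which are not controlled by (0.11). More importantly, the $\mu$-averaging produces a \emph{direction-dependent} function $\widetilde A(\widetilde e;x)$, and there is no single $\widetilde e\in S_{d-2}(\gamma)$ that works for all Fourier modes: for $N\in\Lambda^*$ with $k_\perp+2\pi N_\perp\neq 0$ the relevant transverse direction is $\widetilde e(k+2\pi N)=|k_\perp+2\pi N_\perp|^{-1}(k_\perp+2\pi N_\perp)$, which sweeps over all of $S_{d-2}(e)$. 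Your sketch never specifies how to choose $\widetilde e$, and a fixed choice cannot give an effective potential bounded by $\widetilde\theta\pi/|\gamma|$ on all modes simultaneously.

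The paper addresses exactly this by partitioning $S_{d-2}(e)$ into small caps $\widetilde\Omega_\lambda$ of diameter $O(R/\varkappa)$, localizing $\varphi$ to modes with $\widetilde e(k+2\pi N)\in\widetilde\Omega_\lambda$, and for each $\lambda$ applying a \emph{pair} of gauge transformations --- one scalar phase $e^{i\Phi^{(1,\lambda)}}$ and one matrix phase $e^{-i\widehat\alpha_1^{(\lambda)}\widehat\alpha_2^{(\lambda)}\Phi^{(2,\lambda)}}$ --- that together conjugate $\widehat{\mathcal D}^{(\lambda)}$ to $\widehat{\mathcal D}_*^{(\lambda)}$, in which $A_1^{(\lambda)},A_2^{(\lambda)}$ are replaced by the bounded $\widetilde A_1^{(\lambda)},\widetilde A_2^{(\lambda)}$ (identity (3.8)); the phases are built as inverse 2D Laplacians in the $(\widetilde e^\lambda,e)$-plane and are controlled via $\bf(\widetilde A_1)$ in Lemma~3.1. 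The remaining $d-2$ transverse directions are small on each cap by Lemma~3.3. This cap-by-cap, matrix-valued gauge reduction is the missing idea; once it is in place the estimate is assembled through the projections $\widehat P^{\pm}$ rather than a naive Neumann series, because the free resolvent $\widehat{\mathcal D}(k+i\varkappa e)^{-1}$ is $O(\varkappa^{-1})$ only on $\widehat P^{+}$-modes and merely $O(|\gamma|/\pi)$ on $\widehat P^{-}$-modes, so a straight perturbation expansion does not close.
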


Denote
$$
x_{\|}=(x,e)e\, ,\ x_{\perp}=x-(x,e)e\, ,\ \,  x\in {\mathbb R}^d.
$$
For a matrix function $\widehat W\in L^2(K;{\mathcal M}_M)$ and a number $\sigma
\in [0,2]$, we set
$$
\beta _{\gamma ,\, \sigma }(R;\widehat W)\doteq v(K)\, \sup\limits_{N\, \in \, 
\Lambda ^*\, :\, 2\pi |N_{\perp}|\, \geqslant \, R}(2\pi |N_{\perp}|)^{2-\sigma }\, 
(2\pi |N|)^{\sigma }\, \| \widehat W_N\| \, ,\ R\geqslant 0\, ,
$$ $$
\beta _{\gamma ,\, \sigma }(\widehat W)\doteq \lim\limits_{R\, \to \, +\infty }\beta 
_{\gamma ,\, \sigma }(R;\widehat W)\, .
$$

Let ${\mathrm {supp}}\, \widehat W$ be the essential support of a measurable
function $\widehat W: {\mathbb R}^d\to {\mathcal M}_M\, $, ${\mathrm {supp}}\, 
\widehat W={\mathbb R}^d\, \backslash \, \{ x\in {\mathbb R}^d:\widehat W(y)=\widehat 0$ 
for a.e. $y\in B_r(x)$ for some $r=r(x)>0\} $.

\begin{theorem} \label{th0.2}
Suppose $d=3$, $\widehat V^{\, (s)}\in L^2(K;{\mathcal S}^{(s)}_M)$, $s=0,1$, 
$A\in L^2(K;{\mathbb R}^3)$, and there exist a vector $\gamma \in \Lambda
\backslash \{ 0\} $ and a measure $\mu \in {\mathfrak M}_{\mathfrak h}\, $, 
${\mathfrak h}>0$, such that conditions $\bf (A_0)$, $\bf (A_1)$, $\bf (A_2)$
are satisfied for the magnetic potential $A$, and the functions 
$\widehat V^{\, (s)}$, $s=0,1$, can be represented in the form
$$
\widehat V^{\, (s)}=\widehat V^{\, (s)}_1+\widehat V^{\, (s)}_2+\widehat V^{\, 
(s)}_3\, ,
$$
where $\mathrm ($for $s=0,1$$\mathrm )$ the $\Lambda $-periodic matrix functions
$\widehat V^{\, (s)}_{\nu }$, $\nu =1,2,3$, obey the conditions$\mathrm :$

1$\mathrm )$ for the functions $\widehat V^{\, (s)}_1\in L^2(K;{\mathcal S}^{(s)}_M)$, 
the maps
$$
{\mathbb R}^d\ni x\to \{ [0,1]\ni \xi \to \widehat V^{\, (s)}_1(x-\xi \gamma )\}
\in L^2([0,1];{\mathcal S}^{(s)}_M)
$$
are continuous;

2$\mathrm )$ $\widehat V^{\, (s)}_2\in L^3\ln ^{1+\delta }L(K;{\mathcal S}^{(s)}_M)$
for some $\delta >0$;

3$\mathrm )$ $\widehat V^{\, (s)}_3=\sum_{q=1}^{Q_s}\widehat V^{\, (s)}_{3,\, q}\, $,
$\widehat V^{\, (s)}_{3,\, q}\in L^3_w(K;{\mathcal S}^{(s)}_M)$, and $\beta _{\gamma 
,\, \sigma }(0;\widehat V^{\, (s)}_{3,\, q})<+\infty $ for some $\sigma \in (0,2]$, 
$q=1,\dots ,Q_s\, $, moreover, the essential supports ${\mathrm 
{supp}}\, \widehat V^{\, (s)}_{3,\, q}$ of the functions $\widehat V^{\, (s)}_{3,\, 
q}$ $\mathrm ($considered as $\Lambda $-periodic functions defined on ${\mathbb 
R}^3$$\mathrm )$ do not intersect for different $q$, 
$$
\| \widehat V^{\, (s)}_3\| ^{(\infty , \, {\mathrm {loc}})}_{L^3_w(K;{\mathcal M}_M)}
=\max\limits_{q=1,\dots ,Q_s}\, \| 
\widehat V^{\, (s)}_{3,\, q}\| ^{(\infty , \, {\mathrm {loc}})}_{L^3_w(K;{\mathcal 
M}_M)}\leqslant c_1\, \ \text{and} \ \, 
\max\limits_{q=1,\dots ,Q_s}\, \beta _{\gamma ,\, \sigma }(\widehat V^{\, (s)}_{3,\, 
q})\leqslant c_1^{\, \prime}\, ,
$$
where $c_1=c_1({\mathfrak h},\mu ,\gamma ;A)\in (0,C^{-1})$ and $c_1^{\, 
\prime}=c_1^{\, \prime}({\mathfrak h},\mu ,\gamma ,\sigma ;A)>0$ $\mathrm 
($$C$ is the constant from ${\mathrm {(0.6))}}$.

Then the spectrum of the $\Lambda $-periodic Dirac operator (0.2) is absolutely
continuous.

\end{theorem}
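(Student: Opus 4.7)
The plan is to follow the Floquet--Bloch/Thomas scheme employed in Theorem~\ref{th0.1} and in \cite{TMF00,Dep00,Izv06,Diff}, and to reduce absolute continuity to invertibility of an analytically continued fibre operator. After a partial Fourier transform with respect to $\Lambda$, the operator (0.2) is unitarily equivalent to the direct integral over the Brillouin zone of fibre operators $\widehat{\mathcal D}(k)+\widehat W$ on $L^2(K;\mathbb C^M)$, where $\widehat{\mathcal D}(k)=\widehat{\mathcal D}+\sum_j k_j\widehat\alpha_j$. By the Thomas argument, as developed in the Dirac setting in \cite{TMF00,Dep00}, it suffices, for each fixed $\lambda\in\mathbb R$, to exhibit a complex quasi-momentum $k\in\mathbb C^d$ (with imaginary part along $e=|\gamma|^{-1}\gamma$) such that $\widehat{\mathcal D}(k)+\widehat W-\lambda\widehat I$ is boundedly invertible on $L^2(K;\mathbb C^M)$.

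First I would carry out the gauge-type preparation that underlies Theorem~\ref{th0.1}: using the measure $\mu\in\mathfrak M_{\mathfrak h}$ one replaces $A$ by its average $\widetilde A(\widetilde e;\cdot)$, and conditions $\bf(A_0)$, $\bf(A_1)$, $\bf(\widetilde A_1)$, $\bf(A_2)$ (with $\widetilde\theta<1$ in (0.13)) provide a resolvent estimate for the free fibre operator $\widehat{\mathcal D}(k)-\lambda\widehat I$ in which the magnetic term is absorbed. Against this resolvent, the components $\widehat V^{\,(s)}_1$ are controlled exactly as in Theorem~\ref{th0.1} via the inequality (0.11) and the continuity hypothesis in the direction of $\gamma$, so that this part of $\widehat W$ contributes an arbitrarily small perturbation once $|{\rm Im}\,k|$ is chosen large enough.

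The new work is to subordinate $\widehat V^{\,(s)}_2$ and $\widehat V^{\,(s)}_3$ to the same complexified resolvent. For the Zygmund part $\widehat V^{\,(s)}_2\in L^3\ln^{1+\delta}L(K;{\mathcal S}^{(s)}_M)$ I would interpolate the $d=3$ Sobolev embedding $H^1\hookrightarrow L^6$ with a logarithmic self-improvement of the multiplication estimate, so that multiplication by $\widehat V^{\,(s)}_2$ is subordinate to $\widehat{\mathcal D}$ with a quantitative remainder that still tends to zero when the free resolvent is tested on Floquet modes with large $|N_\perp|$; this is what makes the Zygmund part absorbable rather than merely $\widehat{\mathcal D}$-bounded. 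For the singular part $\widehat V^{\,(s)}_3=\sum_q\widehat V^{\,(s)}_{3,q}$ the disjointness of the essential supports lets one decouple the $q$-summands, and each summand is split into a local piece, controlled via (0.6) by the smallness condition $C\,c_1<1$, and a non-local piece controlled by the Fourier-side bound $\beta_{\gamma,\sigma}(\widehat V^{\,(s)}_{3,q})\leqslant c_1^{\,\prime}$. The role of $\beta_{\gamma,\sigma}$ is precisely to bound the $\widehat V^{\,(s)}_{3,q}$ on modes with large $|N_\perp|$, i.e.\ directions orthogonal to $\gamma$, which the Thomas shift along $e$ does \emph{not} improve and which therefore must be handled by hand.

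The principal obstacle will be to balance, in a single Neumann series, (i) the $\varepsilon$ gained from continuity of the $\widehat V^{\,(s)}_1$-piece and the bound (0.11), (ii) the logarithmic gain from the Zygmund class used for $\widehat V^{\,(s)}_2$, and (iii) the two small-constant gains $c_1$ and $c_1^{\,\prime}$ used for the support-disjoint singular summands $\widehat V^{\,(s)}_{3,q}$, all uniformly in $\lambda$ and in the residual real part of $k$. The thresholds in the statement are exactly what this balancing demands: $c_1<C^{-1}$ so that the local term from each Coulomb-type singularity beats the constant in (0.6), and $c_1^{\,\prime}=c_1^{\,\prime}(\mathfrak h,\mu,\gamma,\sigma;A)$ chosen small enough so that the $(N,\gamma)$-orthogonal Fourier tail can be dominated by the loss incurred from (0.11). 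Once all three perturbations are simultaneously subordinated to $\widehat{\mathcal D}(k)-\lambda\widehat I$ with total operator norm strictly less than $1$, the fibre is invertible for the chosen complex $k$, and the Thomas argument concludes that the spectrum of the $\Lambda$-periodic Dirac operator (0.2) is absolutely continuous.
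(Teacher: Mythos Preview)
Your overall outline (direct integral, Thomas complexification along $e=|\gamma|^{-1}\gamma$, then subordinate each piece $\widehat V^{\,(s)}_\nu$ separately) matches the paper's structure, and your treatment of $\widehat V^{\,(s)}_1$ via (0.11) is essentially right. But two essential mechanisms are missing, and without them the argument does not close.

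First, the estimates are not run as a Neumann series against a single resolvent norm. The paper works in a weighted half-order framework built from the operators $\widehat G_{\pm}^{\,\pm 1/2}$ and the spectral projections $\widehat P^{\,\pm}$, $\widehat P^{\,\pm}_*$ (which for each Fourier mode $N$ pick out the eigenspaces of $\widehat{\mathcal D}_N(k;\varkappa)$ with norm $G^{\pm}_N(k;\varkappa)$). The target estimate is of the asymmetric form
\[
\|\,\widehat G_-^{-\frac12}\widehat P^{\,+}_*\widehat V\varphi\,\|^2+\|\,\widehat G_+^{-\frac12}\widehat P^{\,-}_*\widehat V\varphi\,\|^2\leqslant (\varepsilon')^2\bigl(\|\,\widehat G_+^{\frac12}\widehat P^{\,+}\varphi\,\|^2+\|\,\widehat G_-^{\frac12}\widehat P^{\,-}\varphi\,\|^2\bigr),
\]
and the dangerous piece is the cross term $\widehat P^{(+)}\widehat V\,\widehat P^{(-)}$, for which the only smallness comes from $\|\widehat P^{+}_{\widetilde e(k+2\pi N)}\widehat P^{-}_{\widetilde e(k+2\pi(N-n))}\|=\tfrac12|\widetilde e(k+2\pi N)-\widetilde e(k+2\pi(N-n))|\leqslant 2\pi|n_\perp|/\varkappa$. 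This gain, not the ordinary $\widehat{\mathcal D}$-bound, is what both $\beta_{\gamma,\sigma}$ and the Zygmund hypothesis are designed to exploit; a plain relative-boundedness/Neumann argument does not see it.

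Second, your plan for $\widehat V^{\,(s)}_2$ (``interpolate $H^1\hookrightarrow L^6$ with a logarithmic self-improvement'') does not produce the needed bound. The paper's Theorem~1.5 is proved by a genuinely different device: one \emph{averages} the problematic cross-term estimate over $\varkappa\in[\varkappa_1,\Xi\varkappa_1]$ with $\Xi>1$ and uses a Chebyshev-type selection to find one good $\varkappa$ in this window. The Zygmund condition $L^3\ln^{1+\delta}L$ enters precisely to make the sum $\sum_j(\ln a_j)^{-1-\delta}$ over a sequence of truncation levels $a_j$ convergent. This is why the fibre estimate in Theorem~1.2 reads ``for every $\varkappa_1\geqslant\varkappa_0^{(\Xi)}$ there exists $\varkappa\in[\varkappa_1,\Xi\varkappa_1]$\,\ldots'' rather than ``for all sufficiently large $\varkappa$''. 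Your proposal assumes the latter for the $\widehat V^{\,(s)}_2$ piece, and that is where it breaks.
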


{\bf {Remark}}. 
For $d=3$, we can set $M=4$ (and take the matrices $\widehat \alpha _j$ and 
$\widehat \beta $ in the form (0.8)). The condition 3 in Theorem \ref{th0.2} admits
Coulomb singularities for the functions $\widehat V^{\, (s)}_3$, and therefore for
the functions $V,V_1$ in (0.3), and the function $V$ in (0.4). The condition 3 is 
fulfilled for the $\Lambda $-periodic functions $\widehat V^{\, (s)}_3$ if
$\widehat V^{\, (s)}_3\in C^{\infty }({\mathbb R}^3\backslash \bigcup_{j=1}^J(x_j+
\Lambda );{\mathcal S}^{(s)}_M)$, where $x_j\, $, $j=1,\dots ,J$, are different
points in $K$ such that $\widehat V^{\, (s)}_3(x)=\widehat 0$ for $x$ which do not 
belong to some disjoint closed $\Lambda $-periodic neighbourhoods of the sets $x_j+
\Lambda $, $j=1,\dots ,J$, and the functions $\widehat V^{\, (s)}_3(.)$ coincide with 
the functions $|\, .-x_j\, |^{-1}\, \widehat Q^{(s)}_j$ in certain neighbourhoods (in 
${\mathbb R}^3$) of points $x_j\, $, where $\widehat Q^{(s)}_j\in {\mathcal 
S}^{(s)}_4$ and 
$$
\| \widehat Q^{(s)}_j\| _{{\mathcal M}_4}\leqslant \min \, \bigl\{ \bigl( \, \frac 
43\, \pi \bigr) ^{-1/3}\, c_1\, ,\, (2\pi )^{-1}c_1^{\, \prime }\, \bigr\}            
$$ 
for all $j=1,\dots ,J$ and $s=0,1$ (here, the constant $c_1^{\, \prime }$ corresponds 
to the number $\sigma =2$). 

\section{Proof of Theorems \ref{th0.1} and \ref{th0.2}}

Let $\widetilde H^a(K;{\mathbb C}^M)$, $a>0$, be the set of vector functions
$\varphi :K\to {\mathbb C}^M$ whose $\Lambda $-periodic extensions belong to the
Sobolev class $H^a_{\mathrm {loc}}({\mathbb R}^d;{\mathbb C}^M)$ (of order $a$); 
$\widetilde H^0(K;{\mathbb C}^M)\doteq L^2(K;{\mathbb C}^M)$. For all $e\in S_{d-1}\, 
$, all $k\in {\mathbb R}^d$ ($e_j=(e,{\mathcal E}_j)$, $k_j=(k,{\mathcal E}_j)$, 
$j=1,\dots ,d$), and all $\varkappa \geqslant 0$ we introduce the operators
$$
\widehat {\mathcal D}(k+i\varkappa e)=-i\sum\limits_{j=1}^d\widehat \alpha _j 
\frac {\partial}{\partial x_j}+\sum\limits_{j=1}^d\, (k_j+i\varkappa e_j)\widehat 
\alpha _j
$$
acting on $L^2(K;{\mathbb C}^M)$, with the domain $D(\widehat {\mathcal D}
(k+i\varkappa e))=\widetilde H^1(K;{\mathbb C}^M)\subset L^2(K;{\mathbb C}^M)$. 

If $\widehat W\in {\mathbb L}^{\Lambda }_M(d;a)$, $a\geqslant 0$, then for all 
$\varepsilon >0$, all vectors $k\in {\mathbb R}^d$, and all vector functions 
$\varphi \in \widetilde H^1(K;{\mathbb C}^M)$ we have
$$
\| \widehat W\varphi \| \leqslant (a+\varepsilon )\, \| \widehat {\mathcal D}(k)
\varphi \| +C_{\varepsilon }(a,\widehat W)\, \| \varphi \| \, =  \eqno (1.1)
$$ $$
=\, (a+\varepsilon )\, \bigl( \, \sum\limits_{j=1}^d \, \bigl\| \bigl( k_j-i\, 
\frac {\partial}{\partial x_j}\, \bigr)\, \varphi \bigr\| ^2\, \bigr) ^{1/2}
+C_{\varepsilon }(a,\widehat W)\, \| \varphi \| \, ,
$$
where the operator $\widehat {\mathcal D}(k)$ is the operator $\widehat {\mathcal D}
(k+i\varkappa e)$ for $\varkappa =0$, $C_{\varepsilon }(a,\widehat W)$ is the
constant from (0.5), and the function $\widehat W$ is supposed to act on the space 
$L^2(K;{\mathbb C}^M)$ (here and henceforth, the norms and the scalar products are
related to the space $L^2(K;{\mathbb C}^M)$).

Under the conditions of Theorem \ref{th0.1}, we have
$$
\widehat W=\widehat V^{\, (0)}+\widehat V^{\, (1)}-\sum\limits_{j=1}^dA_j\widehat 
\alpha _j\in {\mathbb L}^{\Lambda }_M(d;0)\, ,
$$
furthermore, for any $\varepsilon >0$ there is a constant $C_{\varepsilon }^{\,
\prime }(0,\widehat W)>0$ such that for all $k\in {\mathbb R}^d$ and all
$\varphi \in \widetilde H^1(K;{\mathbb C}^M)$ the estimate
$$
\| \widehat W\varphi \| \leqslant \| \widehat V^{\, (0)}\varphi \| +
\| \widehat V^{\, (1)}\varphi \| + \| |A| \varphi \| \leqslant  
\varepsilon \, \| \widehat {\mathcal D}(k)\varphi \| +C_{\varepsilon }
^{\, \prime }(0,\widehat W)\, \| \varphi \|  \eqno (1.2)
$$
holds (we can set $C_{\varepsilon }^{\, \prime }(0,\widehat W)=C_{\varepsilon /3}
(0,\widehat V^{\, (0)})+C_{\varepsilon /3}(0,\widehat V^{\, (1)})+C_{\varepsilon /3}
(0,\sum_{j=1}^{\, d}A_j\widehat \alpha _j)$). If the conditions of Theorem \ref{th0.2}
are fulfilled, then $\widehat W\in {\mathbb L}^{\Lambda }_M(3;a)$ for some 
$a\in [0,1)$.

Suppose $\widehat W\in {\mathbb L}^{\Lambda }_M(d;a)$, $a\in [0,1)$. Then the
operators $\widehat {\mathcal D}(k+i\varkappa e)+\widehat W$, with the domain
$D(\widehat {\mathcal D}(k+i\varkappa e)+\widehat W)=\widetilde H^1(K;{\mathbb C}^M)
\subset L^2(K;{\mathbb C}^M)$, are closed and have compact resolvent for all 
$k+i\varkappa e\in {\mathbb C}^d$. Furthermore, the operators $\widehat {\mathcal 
D}(k)+\widehat W$, $k\in {\mathbb R}^d$, are self-adjoint and have a discrete
spectrum. Let $\lambda _{\nu}(k)$, $\nu \in {\mathbb Z}$, be the eigenvalues of
the operators $\widehat {\mathcal D}(k)+\widehat W$. We assume that they are
arranged in an increasing order (counting multiplicities). The eigenvalues can be
indexed for different $k\in {\mathbb R}^d$ such that the functions ${\mathbb R}^d\ni 
k\to \lambda _{\nu}(k)$ are continuous (see \cite{TMF90}). Denote by
$$
K^*=\{ y=\sum\limits_{j=1}^d\eta _jE^*_j:1\leqslant \eta _j<1\, ,\ j=1,\dots
,d\}
$$
the fundamental domain of the lattice  $\Lambda ^*$, $v(K^*)$ is the volume of $K^*$;
$v(K^*)v(K)=1$. The periodic Dirac operator (0.1) is unitarily equivalent to the
direct integral 
$$
\int_{2\pi K^*}^{\, \bigoplus}(\widehat {\mathcal D}(k)+\widehat W)\,
\frac {dk}{(2\pi )^d\, v(K^*)}\ .  \eqno (1.3)
$$
The unitary equivalence is established via the Gel'fand transformation (see
\cite{G,RSIV}). The spectrum of the operator (0.1) coincides with $\bigcup_{\nu \,
\in \, {\mathbb Z}}\{ \lambda _{\nu}(k):k\in 2\pi K^*\} $.

To prove the absolute continuity of the spectrum of the operator (0.1), it suffices 
to prove the absence of eigenvalues (of infinite multiplicities) in its spectrum
(see \cite{K}). But, on the other hand, if $\lambda $ is an eigenvalue of the
operator (0.1), then the decomposition of the operator (0.1) into the direct integral
(1.3) and the analytic Fredholm theorem  imply that $\lambda $ is an eigenvalue of 
$\widehat {\mathcal D}(k+i\varkappa e)+\widehat W$ for all $k+i\varkappa e\in 
{\mathbb C}^d$ (see \cite{K,KL}). Hence, it suffices to prove that every number
$\lambda \in {\mathbb R}$ is not an eigenvalue of the operator $\widehat {\mathcal 
D}(k+i\varkappa e)+\widehat W$ for some complex vector $k+i\varkappa e\in {\mathbb 
C}^d$ (dependent on $\lambda $). This method was used by Thomas in \cite{Th}. If,
under the conditions of Theorems \ref{th0.1} and \ref{th0.2}, we change $\widehat 
W-\lambda \widehat I$ to $\widehat W$ (change $\widehat V^{\, (0)}_1-\lambda \widehat 
I$ to $\widehat V^{(0)}_1$ in the case of the operator (0.2)), then the new matrix
function $\widehat W$ satisfies all conditions which are satisfied by the original
matrix function $\widehat W$. Therefore, to prove the absolute continuity of the 
spectrum of the operator (0.1), it suffices to prove the absence of the eigenvalue 
$\lambda =0$ in the spectrum of the operator $\widehat {\mathcal D}(k+i\varkappa e)
+\widehat W$ for some complex vector $k+i\varkappa e\in {\mathbb C}^d$. Thus, 
Theorems \ref{th0.1} and \ref{th0.2} are implied by Theorems \ref{th1.1} and 
\ref{th1.2}, respectively.

Let $\gamma \in \Lambda \backslash \{ 0\} $ be the vector fixed in
Theorems \ref{th0.1} and \ref{th0.2}. In what follows, we denote $e=|\gamma |^{-1}
\gamma $.

\begin{theorem} \label{th1.1}
Let $d\geqslant 3$. Suppose the matrix functions $\widehat V^{\, (s)}$, $s=0,1$, 
and the magnetic potential $A$ $\mathrm ($for the vector $\gamma \in \Lambda 
\backslash \{ 0\} $ and the measure $\mu \in {\mathfrak M}_{\mathfrak h}\, $, 
${\mathfrak h}>0$$\mathrm )$ satisfy the conditions of Theorem \ref{th0.1}. Then
there exist a number $C_1>0$ $\mathrm ($we can put $C_1=\frac 12\, \pi |\gamma |
^{-1}C_2\, $, where $C_2$ is the constant from $\mathrm ($1.10$\mathrm )$$\mathrm )$ 
and a number $\varkappa _0>0$ such that for all $\varkappa \geqslant \varkappa _0\, 
$, all vectors $k\in {\mathbb R}^d$ with $|(k,\gamma )|=\pi $, and all vector
functions $\varphi \in \widetilde H^1(K;{\mathbb C}^M)$ we have
$$
\| (\widehat {\mathcal D}(k+i\varkappa e)+\widehat W)\varphi \| \geqslant 
C_1\, \| \varphi \| \, .
$$
\end{theorem}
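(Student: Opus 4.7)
The plan is to apply the Thomas complex-momentum method, adapted to the Dirac operator along the lines of the author's earlier works \cite{TMF00,Dep00}: establish the uniform lower bound by combining a Fourier-based estimate on the free part $\widehat{\mathcal D}(k+i\varkappa e)$ with a gauge-type conjugation of $\widehat W$ that replaces the magnetic potential $A$ by its average $\widetilde A(\widetilde e;\cdot)$ from hypothesis $\mathbf{(A_2)}$.

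For the free operator I would diagonalize $\widehat{\mathcal D}(k+i\varkappa e)$ in the Fourier basis $\{e^{2\pi i(N,x)}\}_{N\in\Lambda^*}$, so that it acts on the mode $N$ by multiplication by the Clifford matrix $T_N=\sum_{j=1}^{d}(2\pi N_j+k_j+i\varkappa e_j)\widehat\alpha_j$. The three Clifford identities $T_N^{2}=(z_N,z_N)\widehat I$ (with $z_N=2\pi N+k+i\varkappa e\in\mathbb C^d$ and the complex bilinear form on $\mathbb C^d$), $T_N^{*}T_N+T_NT_N^{*}=2(|2\pi N+k|^2+\varkappa^2)\widehat I$, and $(T_N^{*}T_N)(T_NT_N^{*})=|(z_N,z_N)|^2\widehat I$ force $T_N^{*}T_N$ to have minimum eigenvalue $(|2\pi N_\perp+k_\perp|-\varkappa)^2+(2\pi N_\|+k_\|)^2$, where $N_\|=(N,e)$ and $N_\perp=N-N_\|e$. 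Since $(N,\gamma)\in\mathbb Z$ and $|(k,\gamma)|=\pi$ imply $|2\pi N_\|+k_\||\geq\pi/|\gamma|$ uniformly in $N$, this yields the uniform free bound $\|\widehat{\mathcal D}(k+i\varkappa e)\varphi\|\geq (\pi/|\gamma|)\|\varphi\|$ for all $\varkappa\geq 0$.

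Because this bound does not grow with $\varkappa$, it cannot absorb $\widehat W$ directly through (1.2). To deal with $\widehat W$ I would construct, for each $\widetilde e\in S_{d-2}(\gamma)$, a bounded Hermitian multiplication operator $\widehat \Phi(\widetilde e;\cdot)$ such that the conjugated operator takes the form
$$
e^{-i\widehat\Phi}(\widehat{\mathcal D}(k+i\varkappa e)+\widehat W)e^{i\widehat\Phi}=\widehat{\mathcal D}(k+i\varkappa e)+\widehat V^{(0)}+\widehat V^{(1)}-\sum_{j=1}^{d}\widetilde A_j(\widetilde e;\cdot)\widehat\alpha_j+R(\widetilde e;\varkappa),
$$
i.e.\ $\widehat\Phi$ is chosen as a ``primitive'' of $\sum_j(A_j-\widetilde A_j(\widetilde e;\cdot))\widehat\alpha_j$ in the $\gamma$ direction, smoothed in $\widetilde e$ against $\mu$. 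The Fourier selection rules $\int_0^1 e^{-2\pi i\xi(N,\gamma)}d\xi=\mathbf{1}_{(N,\gamma)=0}$ and $\int_{\mathbb R}e^{-itp}d\mu(t)=1$ for $|p|<\mathfrak h$ guarantee that the Fourier modes of $A-\widetilde A(\widetilde e;\cdot)$ vanish precisely on the set $\{(N,\gamma)=0,\ |2\pi(N,\widetilde e)|<\mathfrak h\}$, so $\widehat\Phi$ can be defined as an $L^\infty$-bounded operator. By $\mathbf{(A_2)}$ the averaged magnetic term has operator norm at most $\widetilde\theta\pi/|\gamma|$, while $\widehat V^{(s)}\in\mathbb L^\Lambda_M(d;0)$ contributes an arbitrarily small relative perturbation. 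Against the free bound $\pi/|\gamma|$ this leaves a spectral gap of size $(1-\widetilde\theta)\pi/|\gamma|$ minus $\|R(\widetilde e;\varkappa)\|$.

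The main technical obstacle is controlling $R(\widetilde e;\varkappa)$ uniformly as $\varkappa\to\infty$: writing $\widehat{\mathcal D}(k+i\varkappa e)=\widehat{\mathcal D}(k)+i\varkappa\widehat E$ with $\widehat E=\sum_j e_j\widehat\alpha_j$, the commutator $[i\varkappa\widehat E,\widehat\Phi]$ is linear in $\varkappa$, and its boundedness rests on the cancellation produced by the denominator $2\pi(N,\gamma)\in 2\pi\mathbb Z\setminus\{0\}$ appearing in the Fourier representation of $\widehat\Phi$, together with the cutoff $\widehat\mu(2\pi(N,\widetilde e))$ in the frequencies where $(N,\gamma)=0$. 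Hypothesis $\mathbf{(A_1)}$ and the two-dimensional singular-integral estimate $\mathbf{(\widetilde A_1)}$ (equation (0.12)) are exactly what is needed to bound $\widehat\Phi$ and its commutators in operator norm, via the embedding (0.11) applied in the $(e,\widetilde e)$-plane. Once $\|R(\widetilde e;\varkappa)\varphi\|=o_\varkappa(1)(\|\widehat{\mathcal D}(k+i\varkappa e)\varphi\|+\|\varphi\|)$ is established, averaging the resulting inequality over $\widetilde e\in S_{d-2}(\gamma)$ against a probability measure on the sphere removes the auxiliary dependence on $\widetilde e$, and choosing $\varkappa_0$ so large that the total perturbative losses account for at most half the gap yields the claimed inequality with $C_1=\frac{\pi}{2|\gamma|}C_2$, $C_2$ being the residual fraction.
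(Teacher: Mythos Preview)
Your free estimate and the intention to exploit $\mathbf{(A_2)}$ via a gauge are correct in spirit, but the key step---the claimed remainder bound $\|R(\widetilde e;\varkappa)\varphi\|=o_{\varkappa}(1)\bigl(\|\widehat{\mathcal D}(k+i\varkappa e)\varphi\|+\|\varphi\|\bigr)$---fails, and the averaging over $\widetilde e$ does not repair it. The difficulty is that to convert $A$ into $\widetilde A(\widetilde e;\cdot)$ you need a \emph{matrix-valued} phase (a scalar $\widehat\Phi$ can only absorb a gradient), and any such $\widehat\Phi$ fails to commute with $i\varkappa\,\widehat E=\sum_j i\varkappa e_j\widehat\alpha_j$: the Clifford relations force $[\widehat E,\widehat\alpha_j\widehat\alpha_l]\neq 0$ for $j,l\neq 2$, so conjugation produces a term of exact size $O(\varkappa)\|\widehat\Phi\|_{L^\infty}$. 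The Fourier denominators $2\pi(N,\gamma)$ you invoke control only the \emph{size} of $\widehat\Phi$ (this is the content of Lemma~3.1 in the paper); they do nothing for the matrix commutator with $\widehat E$, which is a pointwise algebraic identity. Since on a half-dimensional subspace (the $\widehat P^{-}$ range) one has $\|\widehat{\mathcal D}(k+i\varkappa e)\varphi\|\asymp\pi|\gamma|^{-1}\|\varphi\|$ uniformly in $\varkappa$, an $O(\varkappa)$ remainder cannot be absorbed there, and no averaging over $\widetilde e\in S_{d-2}(\gamma)$ changes this.

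The paper circumvents exactly this obstruction by never applying a global gauge. It introduces the mode-dependent projections $\widehat P^{\pm}_{\widetilde e(k+2\pi N)}$ and splits the analysis: on the $\widehat P^{+}$ range the free operator already has norm $G^{+}_N\geqslant\varkappa$, which absorbs $\widehat W$ trivially; on the $\widehat P^{-}$ range it partitions the Fourier support into clusters $\mathcal K_\lambda$ where the direction $\widetilde e(k+2\pi N)$ is nearly constant $\approx\widetilde e^{\lambda}$, and applies the gauge only to the \emph{two-dimensional} sub-operator $\widehat{\mathcal D}^{(\lambda)}$ in the $(e,\widetilde e^{\lambda})$-plane, where the identity (3.8) is exact and no $\varkappa$-dependent remainder appears. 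Note also that this ``gauge'' is a non-unitary similarity ($i\widehat\alpha_1^{(\lambda)}\widehat\alpha_2^{(\lambda)}$ is Hermitian, so $e^{-i\widehat\alpha_1^{(\lambda)}\widehat\alpha_2^{(\lambda)}\Phi^{(2,\lambda)}}$ is self-adjoint, not unitary); its norm distortion is bounded by $e^{\frac12\|\mu\|C^*(\mathfrak h)}$, which is precisely how $\mathbf{(\widetilde A_1)}$ and the constant $C^*(\mathfrak h)$ enter the final constant $C_2$ in (1.10). Your proposal is missing both the $\widehat P^{\pm}$ splitting and the directional clustering, and without them the argument does not close.
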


\begin{theorem} \label{th1.2}
Let $d=3$. Suppose the matrix functions $\widehat V^{\, (s)}$, $s=0,1$, and the
magnetic potential $A$ $\mathrm ($for the vector $\gamma \in \Lambda \backslash \{ 
0\} $ and the measure $\mu \in {\mathfrak M}_{\mathfrak h}\, $, ${\mathfrak 
h}>0$$\mathrm )$ satisfy the conditions of Theorem \ref{th0.2} $\mathrm ($the
constants $c_1\in (0,C^{-1})$ and $c_1^{\, \prime }>0$ will be specified later
in the course of the proof$\mathrm )$. Then there exists a number $C_1^{\, \prime 
}>0$ $\mathrm ($as in Theorem \ref{th1.1}, we can put $C_1^{\, \prime }=\frac 12\, 
\pi |\gamma |^{-1}C_2\, $, where $C_2$ is the constant from $\mathrm ($1.10$\mathrm 
)$$\mathrm )$ and, for every $\Xi >1$, there exists a number $\varkappa _0^{\, (\Xi 
)}>0$ $\mathrm ($which also depend on $\mathfrak h$, $\mu $, $\Lambda $, $\gamma $, 
$\sigma $, on the functions $\widehat V^{\, (s)}_{\nu }\, $, $s=0,1$, $\nu =1,2,3$, 
and on the magnetic potential $A$$\mathrm )$ such that for any $\Xi >1$ and any 
$\varkappa _1\geqslant \varkappa _0^{\, (\Xi )}$ there is a number $\varkappa \in 
[\varkappa _1,\Xi \varkappa _1]$ such that for all vectors $k\in {\mathbb R}^3$ with 
$|(k,\gamma )|=\pi $, and all vector functions $\varphi \in \widetilde H^1(K;
{\mathbb C}^M)$ the estimate
$$
\| (\widehat {\mathcal D}(k+i\varkappa e)+\widehat W)\varphi \| \geqslant 
C_1^{\, \prime }\, \| \varphi \| 
$$
holds.
\end{theorem}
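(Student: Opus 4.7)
The plan is to combine the conclusion of Theorem \ref{th1.1} with two perturbative arguments that absorb the singular pieces of $\widehat V^{\, (s)}$. Split $\widehat W = \widehat W_0 + \widehat W_1$ with
\[
\widehat W_0 = \widehat V^{\, (0)}_1 + \widehat V^{\, (1)}_1 - \sum_{j=1}^3 A_j \widehat\alpha_j, \qquad
\widehat W_1 = \sum_{s=0,1} \bigl( \widehat V^{\, (s)}_2 + \widehat V^{\, (s)}_3 \bigr).
\]
The pair $(A, \widehat V^{\, (s)}_1)$ meets the hypotheses of Theorem \ref{th1.1}, which gives $\|(\widehat{\mathcal D}(k+i\varkappa e) + \widehat W_0)\varphi\| \geq 2 C_1^{\, \prime} \|\varphi\|$ for all sufficiently large $\varkappa$, all $k$ with $|(k,\gamma)| = \pi$, and all $\varphi \in \widetilde H^1(K;{\mathbb C}^M)$. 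Consequently $R_{k,\varkappa} := (\widehat{\mathcal D}(k+i\varkappa e) + \widehat W_0)^{-1}$ exists on $L^2(K;{\mathbb C}^M)$ with $\|R_{k,\varkappa}\| \leq (2 C_1^{\, \prime})^{-1}$, and by the triangle/resolvent estimate it suffices to find some $\varkappa \in [\varkappa_1, \Xi \varkappa_1]$ at which $\|\widehat W_1 R_{k,\varkappa}\| \leq 1/2$, uniformly in admissible $k$.

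For the Zygmund pieces $\widehat V^{\, (s)}_2 \in L^3 \ln^{1+\delta} L(K; {\mathcal S}^{(s)}_M)$, the argument of \cite{Izv06} adapts directly: in dimension three, the non-self-adjoint resolvent $R_{k,\varkappa}$ maps $L^2$ into a logarithmically refined Sobolev--Lorentz space whose $L^2$-operator norm against $L^3 \ln^{1+\delta} L$ multiplication tends to zero as $\varkappa \to +\infty$, uniformly in $k$. Hence $\|\widehat V^{\, (s)}_2 R_{k,\varkappa}\| \leq 1/4$ for all $\varkappa \geq \varkappa_0^{(\Xi)}$ once $\varkappa_0^{(\Xi)}$ is chosen large, and no averaging over $\varkappa$ is needed for this piece.

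The main obstacle is the $L^3_w$ piece $\widehat V^{\, (s)}_3 = \sum_q \widehat V^{\, (s)}_{3,q}$. Coulomb-type singularities preclude a uniform-in-$\varkappa$ bound --- individual $\varkappa$ can be resonant --- which is precisely why the conclusion is phrased ``for some $\varkappa \in [\varkappa_1, \Xi\varkappa_1]$'' rather than ``for all large $\varkappa$''. The plan is to establish an averaged Fourier-side bound of the form
\[
\frac{1}{\log \Xi}\int_{\varkappa_1}^{\Xi\varkappa_1} \max_q \|\widehat V^{\, (s)}_{3,q} R_{k,\varkappa}\|^2 \, \frac{d\varkappa}{\varkappa} \leq C_{\sharp} \bigl( C c_1 + c_1^{\, \prime} \bigr)^2
\]
with $C_\sharp$ a universal constant, and then to extract a witnessing $\varkappa$ by the mean-value theorem. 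Three ingredients enter: the disjoint essential supports of the $\widehat V^{\, (s)}_{3,q}$ reduce the natural $\ell^1$ sum over $q$ to the $\ell^\infty$ maximum (explaining the $\max_q$ in the hypothesis); the bound (0.6) controls Fourier modes $N$ with $|N_\perp|$ small in terms of $c_1$; and the quantity $\beta_{\gamma,\sigma}(\widehat V^{\, (s)}_{3,q})$ --- encoding two-scale decay of Fourier coefficients in the $N_\perp$ versus $N_\parallel$ directions --- controls the modes with $|N_\perp|$ large in terms of $c_1^{\, \prime}$. A Schur-type test on the matrix elements of $\widehat V^{\, (s)}_{3,q} R_{k,\varkappa}$ in the Fourier basis, combined with $\varkappa$-averaging to smear out the resonant denominators of the free operator, yields the displayed bound.

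Choosing $c_1 \in (0, C^{-1})$ and $c_1^{\, \prime}$ small enough (depending on $C_1^{\, \prime}$ and $\Xi$) and $\varkappa_0^{(\Xi)}$ large gives $\|\widehat W_1 R_{k,\varkappa}\| \leq 1/2$ at the chosen $\varkappa$, and the theorem follows. The hardest step is the Fourier-averaged bound involving $\beta_{\gamma,\sigma}$: the parabolic geometry of the ``resonant'' set $\{N \in \Lambda^*: \widehat{\mathcal D}(k+i\varkappa e) \text{ is near-singular on the } N\text{-th mode}\}$ is asymmetric in $N_\perp$ versus $N_\parallel = (N,e)\,e$, which forces the two-scale form of $\beta_{\gamma,\sigma}$ and requires the Schur test to be calibrated accordingly.
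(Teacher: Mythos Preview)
Your reduction to the bound $\|\widehat W_1 R_{k,\varkappa}\|_{L^2\to L^2}\leqslant\tfrac12$ cannot succeed. Take $\widehat W_0=0$ for clarity and a vector function $\varphi$ with Fourier support in ${\mathcal K}(h)=\{N:G^-_N(k;\varkappa)\leqslant h\}$ and $\widehat P^{+}\varphi=0$. By (1.8) one has $\|\widehat{\mathcal D}(k+i\varkappa e)\varphi\|=\|\widehat G_-^{\,1}\varphi\|\leqslant h\,\|\varphi\|$, so such $\varphi$ lie in the range of $R_{k,\varkappa}$ with $\|R_{k,\varkappa}^{-1}\varphi\|\leqslant h\|\varphi\|$. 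But $\#\,{\mathcal K}(h)\sim c_2\varkappa h^2$ grows linearly in $\varkappa$, and Lemma~\ref{l5.2} gives only $\|\widehat V_2\,\widehat P^{\,{\mathcal K}(h)}\varphi\|\lesssim \varkappa^{1/3}f_{\widehat V_2}(\varkappa)\|\varphi\|$; for $\widehat V_2\in L^3\ln^{1+\delta}L$ the factor $f_{\widehat V_2}$ decays only logarithmically, so $\|\widehat V_2\varphi\|/\|(\widehat{\mathcal D}(k+i\varkappa e))\varphi\|$ blows up like $\varkappa^{1/3}$ (up to logs). The same obstruction applies to $\widehat V_3$. Averaging in $\varkappa$ does not help: the obstructing $\varphi$ can be re-chosen at each $\varkappa$. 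In short, the complex shift destroys the $L^2\to H^1$ smoothing you implicitly use when invoking ``$R_{k,\varkappa}$ maps into a refined Sobolev--Lorentz space''; the resolvent is only $O(1)$ in $L^2\to L^2$ on the resonant set, which is far too large.

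The paper circumvents this by never asking for $\|\widehat W_1 R_{k,\varkappa}\|$ to be small. Instead it replaces Theorem~\ref{th1.1} by the weighted estimate of Theorem~\ref{th1.4},
\[
\|(C_2^{-1/2}\widehat G_-^{-1/2}\widehat P_*^{+}+\widehat G_+^{-1/2}\widehat P_*^{-})(\widehat{\mathcal D}(k+i\varkappa e)+\widehat W_1)\varphi\|^2\geqslant (1-\delta)\bigl(C_2\|\widehat G_-^{1/2}\widehat P^{-}\varphi\|^2+\|\widehat G_+^{1/2}\widehat P^{+}\varphi\|^2\bigr),
\]
and then controls the perturbations $\widehat V_2,\widehat V_3$ in the \emph{same} twisted seminorm via Theorems~\ref{th1.5} and~\ref{th1.6}: one needs $\|\widehat G_-^{-1/2}\widehat P_*^{+}\widehat V_\nu\varphi\|^2+\|\widehat G_+^{-1/2}\widehat P_*^{-}\widehat V_\nu\varphi\|^2\leqslant(\varepsilon')^2(\|\widehat G_+^{1/2}\widehat P^{+}\varphi\|^2+\|\widehat G_-^{1/2}\widehat P^{-}\varphi\|^2)$, which \emph{is} attainable because the weights $\widehat G_\pm^{\pm1/2}$ and the projections $\widehat P^{\pm}$ compensate exactly for the anisotropy of the resonant set. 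Incidentally, you have the roles of the two singular pieces reversed: in the paper it is the Zygmund piece $\widehat V_2$ (Theorem~\ref{th1.5}) that requires selecting a good $\varkappa\in[\varkappa_1,\Xi\varkappa_1]$, while the $L^3_w$ piece $\widehat V_3$ with the $\beta_{\gamma,\sigma}$ hypothesis (Theorem~\ref{th1.6}) works for \emph{all} large $\varkappa$. Also, the constants $c_1,c_1'$ must be fixed before $\Xi$ is chosen (they depend on $A$ through $C_2$ and on $\sigma$, but not on $\Xi$), contrary to what your last paragraph suggests.
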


Theorems \ref{th1.1} and \ref{th1.2} are proved in the end of this section. Theorem
\ref{th1.1} is based on Theorem \ref{th1.3}. Theorem \ref{th1.2} is deduced from
Theorems \ref{th1.4}, \ref{th1.5}, and \ref{th1.6}.

For all vector functions $\varphi \in \widetilde H^1(K;{\mathbb C}^M)$, 
$$
\widehat {\mathcal D}(k+i\varkappa e)\varphi =\sum\limits_{N\, \in \, \Lambda ^*}
\widehat {\mathcal D}_N(k;\varkappa )\, \varphi _N\, e^{\, 2\pi i\, (N,x)},
$$
where 
$$
\widehat {\mathcal D}_N(k;\varkappa )=\sum\limits_{j=1}^d\, (k_j+2\pi N_j+i
\varkappa e_j)\, \widehat \alpha _j\, ,\ N_j=(N,{\mathcal E}_j)\, ,\ 
j=1,\dots ,d\, . 
$$
For $k\in {\mathbb R}^d$ and $\varkappa \geqslant 0$, we introduce the notation
$$
G^{\pm}_N(k;\varkappa )\doteq \bigl( |k_{\parallel}+2\pi N_{\parallel}|^2+
(\varkappa \pm |k_{\perp}+2\pi N_{\perp}|)^2\bigr) ^{1/2}\, ,\ N\in \Lambda
^*\, ;
$$ $$
G^+_N(k;\varkappa )\geqslant G^-_N(k;\varkappa )\, ,\ G^+_N(k;\varkappa )
\geqslant \varkappa \, .
$$
If $k\in {\mathbb R}^d$ and $|(k,\gamma )|=\pi $, then for all $\varkappa 
\geqslant 0$ and all $N\in \Lambda ^*$ we have $G_N(k;\varkappa )\geqslant \pi 
|\gamma |^{-1}$.

In the case where $|(k,\gamma )|=\pi $, we define the operators $\widehat G_{\pm}
^{\, \zeta }$, with $\zeta \in {\mathbb C}$, acting on the space $L^2(K;{\mathbb 
C}^M)$:
$$
\widehat G_{\pm}^{\, \zeta }\varphi =\sum\limits_{N\, \in \, \Lambda ^*}(G^{\pm}
_N(k;\varkappa ))^{\, \zeta }\, \varphi _N\, e^{\, 2\pi i\, (N,x)}\, ,
$$ $$
\varphi \in D(\widehat G_{\pm}^{\, \zeta })=\left\{
\begin{array}{ll}
\widetilde H^{\, {\mathrm {Re}}\, \zeta }(K;{\mathbb C}^M)& 
{\text {if}} \ \, {\mathrm {Re}}\, \zeta >0\, , \\ [0.2cm]
L^2(K;{\mathbb C}^M)  & {\text{if}} \ \, {\mathrm {Re}}\, \zeta \leqslant 0
\end{array}
\right.
$$
(the operators $\widehat G_{\pm}^{\, \zeta }$ depend on $k$ and $\varkappa $, but, 
in the above notation, this dependence is not indicated explicitely).

For all $k\in {\mathbb R}^d$, all $\varkappa \geqslant 0$, and all $N\in \Lambda ^*$, 
the inequalities
$$
G^-_N(k;\varkappa )\, \| u\| \leqslant \| \widehat {\mathcal D}_N(k;\varkappa ) u\| 
\leqslant G^+_N(k;\varkappa )\, \| u\| \, ,\ u\in {\mathbb C}^M\, ,
$$
hold. Hence, for all vector functions $\varphi \in \widetilde H^1(K;{\mathbb C}^M)$,
$$
\| \widehat G_-^{\, 1}\varphi \| \leqslant \| \widehat {\mathcal D}(k+i\varkappa 
e)\varphi \| \leqslant \| \widehat G_+^{\, 1}\varphi \| \, .
$$

Let $\widehat P^{\, \mathcal C}$, where ${\mathcal C}\subseteq \Lambda ^*$, denote
the orthogonal projection on $L^2(K;{\mathbb C}^M)$ that takes a vector function 
$\varphi \in L^2(K;{\mathbb C}^M)$ to the vector function
$$
\widehat P^{\, \mathcal C}\varphi \doteq \varphi ^{\, \mathcal C}=\sum\limits
_{N\, \in \, {\mathcal C}}\varphi _N\, e^{\, 2\pi i\, (N,x)}
$$
(in particular, $\widehat P^{\, \emptyset}=\widehat 0$). We write ${\mathcal H}
(\mathcal C)=\{ \varphi \in L^2(K;{\mathbb C}^M):\varphi _N=0$ for $N\in \Lambda ^*
\backslash \mathcal C\} $.

For vectors $\widetilde e\in S_{d-2}(e)$, we define the orthogonal projections on
${\mathbb C}^M$:
$$
\widehat P^{\, \pm}_{\widetilde e}=\frac 12\, \bigl( \widehat I\mp i\, \bigl( \,
\sum\limits_{j=1}^de_j\widehat \alpha _j\bigr) \bigl( \, \sum\limits_{j=1}^d
\widetilde e_j\widehat \alpha _j\bigr) \bigr) \, ;
$$ $$
\| \widehat P^{\, \pm}_{{\widetilde e}^{\, \prime}}\widehat P^{\, \mp}_{{\widetilde e}
^{\, \prime \prime}}\| =\| \widehat P^{\, \pm}_{{\widetilde e}^{\, \prime}}-
\widehat P^{\, \pm}
_{{\widetilde e}^{\, \prime \prime}}\| =\frac 12\ |\, {\widetilde e}^{\, \prime 
\prime}-{\widetilde e}^{\, \prime}\, |\, ,\ {\widetilde e}^{\, \prime},{\widetilde e}
^{\, \prime \prime}\in S_{d-2}(e)\, .  \eqno (1.4)
$$

We shall use the notation $\widetilde e(y)\doteq |y_{\perp}|^{-1}y_{\perp}\in 
S_{d-2}(e)$ for vectors $y\in {\mathbb R}^d$ with $y_{\perp}=y-(y,e)e\neq 0$.

If $k\in {\mathbb R}^d$, $N\in \Lambda ^*$, and $k_{\perp}+2\pi N_{\perp}\neq 0$,
then
$$
\widehat P^{\, \pm}_{\widetilde e(k+2\pi N)}\widehat {\mathcal D}_N(k;
\varkappa )\widehat P^{\, \pm}_{\widetilde e(k+2\pi N)}=\widehat 0  \eqno (1.5)
$$
and, for all vectors $u\in {\mathbb C}^M$ (and all $\varkappa \geqslant 0$),
$$
\| \widehat {\mathcal D}_N(k;\varkappa )\widehat P^{\, \pm}_{\widetilde 
e(k+2\pi N)}u\| =G^{\pm}_N(k;\varkappa )\| \widehat P^{\, \pm}_{\widetilde 
e(k+2\pi N)}u\| \, . \eqno (1.6)
$$
If $k_{\perp}+2\pi N_{\perp}=0$, then 
$$
G^+_N(k;\varkappa )=G^-_N(k;\varkappa )\, .  \eqno (1.7)
$$

Let denote by $\widehat P^{\, \pm}=\widehat P^{\, \pm}(k;e)$ and by $\widehat 
P^{\, \pm}_*=\widehat P^{\, \pm}_*(k;e)$ the orthogonal projections on 
$L^2(K;{\mathbb C}^M)$: 
$$
\widehat P^{\, +}\varphi =\sum\limits_{N\, \in \, \Lambda ^*\, :\, k_{\perp}+2\pi N
_{\perp}\, \neq \, 0}\widehat P^{\, +}_{\widetilde e(k+2\pi N)}\, \varphi _N\,
e^{\, 2\pi i\, (N,x)}\, ,
$$ $$
\widehat P^{\, +}_*\varphi =\widehat P^{\, +}\varphi +\sum\limits_{N\, \in \, 
\Lambda ^*\, :\, k_{\perp}+2\pi N_{\perp}\, =\, 0}\varphi _N\, e^{\, 
2\pi i\, (N,x)}\, ,
$$ $$
\widehat P^{\, -}_*\varphi =\sum\limits_{N\, \in \, \Lambda ^*\, :\, k_{\perp}+2\pi N
_{\perp}\, \neq \, 0}\widehat P^{\, -}_{\widetilde e(k+2\pi N)}\, \varphi _N\,
e^{\, 2\pi i\, (N,x)}\, , 
$$ $$
\widehat P^{\, -}\varphi =\widehat P^{\, -}_*\varphi +\sum\limits_{N\, \in \, 
\Lambda ^*\, :\, k_{\perp}+2\pi N_{\perp}\, =\, 0}\varphi _N\, e^{\, 2\pi i\, 
(N,x)}\, ,\ \varphi \in L^2(K;{\mathbb C}^M)
$$
(the operators $\widehat P^{\, \pm}$ and $\widehat P^{\, \pm}_*$ depend on $k\in 
{\mathbb R}^d$, but the dependence will not be indicated in the notation). Since
$\widehat P^{\, +}+\widehat P^{\, -}=\widehat I$, $\widehat P^{\, +}_*+
\widehat P^{\, -}_*=\widehat I$, from equalities (1.5), (1.6), and (1.7) it follows 
that 
$$
\| \widehat P^{\, +}_*\widehat {\mathcal D}(k+i\varkappa e)\varphi \| =
\| \widehat G_-^{\, 1}\widehat P^{\, -}\varphi \| \, ,\ \ \| \widehat P^{\, -}_*
\widehat {\mathcal D}(k+i\varkappa e)\varphi \| =\| \widehat G_+^{\, 1}\widehat 
P^{\, +}\varphi \| \, ,  \eqno (1.8)
$$ $$
\| \widehat {\mathcal D}(k+i\varkappa e)\varphi \| ^2=\| \widehat G_-^{\, 1}\widehat 
P^{\, -}\varphi\| ^2+\| \widehat G_+^{\, 1}\widehat P^{\, +}\varphi \| ^2
$$
for all vector functions $\varphi \in \widetilde H^1(K;{\mathbb C}^M)$.

Condition $\bf (A_1)$ and inequality (0.11) imply that for any $\tau \in
(0,1)$ there is a number $Q=Q(\gamma ,A;\tau )>0$ such that for all $k\in 
{\mathbb R}^d$ and all vector functions $\varphi \in \widetilde H^1(K;{\mathbb 
C}^M)$ the inequality
$$
\| |A| \varphi \| \leqslant \tau \, \bigl\| \, \bigl( \, k_2^{\, \prime }
-i\, \frac {\partial}{\partial x_2^{\, \prime }}\, \bigr) \, \varphi \, \bigr\|
+Q\, \| \varphi \|   \eqno (1.9)
$$
is fulfilled, where $k_2^{\, \prime }=(k,e)$, $x_2^{\, \prime }=(x,e)$, $x\in 
{\mathbb R}^d$ (we can put
$$
Q=\frac {\tau }{4|\gamma |^2}+\frac {8|\gamma |^2}{\tau }\ \max\limits_{x\, \in \, 
{\mathbb R}^d}\ \int_0^1| A(x-\xi \gamma )| ^2\, d\xi 
$$
(see (0.11))).

Let $C^*({\mathfrak h})>0$ be the constant defined in Lemma \ref{l3.1}. The
constant $C^*({\mathfrak h})$ depends on $C^*$ and ${\mathfrak h}$. (In what
follows, we shall assume that ${\mathfrak h}\leqslant |\gamma |^{-1}$. Therefore
the constant $C^*({\mathfrak h})$ will depend on the vector $\gamma $ as well.) Fix
a number $\tau \in (0,1)$ and the corresponding number $Q>0$. Denote
$$
C_2=(1-\tau )\bigl( 1+Q(1-\widetilde \theta )^{-1}\, \frac {|\gamma |}{\pi}\
e^{\, \| \mu \| \, C^*({\mathfrak h})}\bigr) ^{-1}\in (0,1)\, .  \eqno (1.10)
$$

\begin{theorem} \label{th1.3}
Let $d\geqslant 3$. Suppose $\widehat V^{\, (s)}\in L^2(K;{\mathcal S}
^{(s)}_M)$, $s=0,1$, $A\in L^2(K;{\mathbb R}^d)$ with $A_0=0$, $R>0$, and there are
a vector $\gamma \in \Lambda \backslash \{ 0\} $ and a measure $\mu \in {\mathfrak 
M}_{\mathfrak h}\, $, ${\mathfrak h}>0$, such that for the magnetic potential $A$,
conditions $\bf (A_1)$, $\bf (\widetilde A_1)$, $\bf (A_2)$ are satisfied and, 
moreover, $\widehat V^{\, (s)}_N=\widehat 0$, $s=0,1$, and $A_N=0$ for all 
$N \in \Lambda ^*$ with $2\pi |N_{\perp}|>R$. Then for any $\delta >0$ there exist 
numbers $\widetilde a=\widetilde a(C_2;\delta ,R)\in (0,C_2]$ and $\varkappa _0>0$ 
such that for all $\varkappa \geqslant \varkappa _0\, $, all vectors $k\in {\mathbb 
R}^d$ with $|(k,\gamma )|=\pi $, and all vector functions $\varphi \in \widetilde 
H^1(K;{\mathbb C}^M)$ the inequality
$$
\| (\widehat P^{\, +}_*+\widetilde a\widehat P^{\, -}_*)(\widehat {\mathcal D}
(k+i\varkappa e)+\widehat W)\varphi \| ^2=
$$ $$
=\| \widehat P^{\, +}_*(\widehat {\mathcal D}(k+i\varkappa e)
+\widehat W)\varphi \| ^2+{\widetilde a}^{\, 2}\| \widehat P^{\, -}_*(\widehat 
{\mathcal D}(k+i\varkappa e)+\widehat W)\varphi \| ^2\geqslant
$$ $$
\geqslant (1-\delta )\, (C^2_2\, \| \widehat G_-^{\, 1}\widehat P^{\, -}\varphi \| ^2+
{\widetilde a}^{\, 2}\, \| \widehat G_+^{\, 1}\widehat P^{\, +}\varphi \| ^2)
$$
holds.
\end{theorem}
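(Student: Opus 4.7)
The right-hand side of the target inequality equals, by (1.8), $(1-\delta)\bigl(C_2^{\,2}\|\widehat P^+_*\widehat{\mathcal D}(k+i\varkappa e)\varphi\|^2+\widetilde a^{\,2}\|\widehat P^-_*\widehat{\mathcal D}(k+i\varkappa e)\varphi\|^2\bigr)$. So the theorem is a perturbation estimate: adding $\widehat W$ to $\widehat{\mathcal D}(k+i\varkappa e)$ preserves the weighted projection norm up to the loss factor $C_2$ on the $\widehat P^+_*$ component (the ``critical'' one, since $G^-_N$ is the smaller of the two spectral quantities and degenerates when $|k_\perp+2\pi N_\perp|\approx\varkappa$). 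My plan is to decompose $\widehat W=\widehat V^{(0)}+\widehat V^{(1)}-\sum_{j=1}^d A_j\widehat\alpha_j$ and to control the electric and magnetic pieces by different mechanisms.

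For the magnetic term, I would apply a gauge-type reduction: conjugate the operator by a unitary multiplication $e^{i\Phi_{\widetilde e}(x)}\widehat I$ constructed from an antiderivative (along the $\gamma$-direction) of the difference $A\cdot\gamma-\widetilde A(\widetilde e;\cdot)\cdot\gamma$, with the measure $\mu$ used as the inverting Fourier multiplier (the identity $\int e^{ipt}d\mu(t)=1$ on $(-\mathfrak h,\mathfrak h)$ ensures that the relevant low-frequency modes of the magnetic potential can be removed). The operator norm of the conjugation is then controlled by $e^{\|\mu\|C^*(\mathfrak h)}$ through Lemma 3.1, explaining the corresponding factor in (1.10). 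After the conjugation, the residual magnetic potential equals $\widetilde A(\widetilde e;x)$, which by $\bf (A_2)$ is pointwise bounded by $\widetilde\theta\pi/|\gamma|$. Combined with estimate (1.9), the gap $G^-_N\geq\pi/|\gamma|$ arising from $|(k,\gamma)|=\pi$, and the relation between $\|\partial_{x_2'}\varphi\|$ and $\|\widehat G^1_{\pm}\widehat P^{\pm}\varphi\|$, this accounts for the constant $(1-\tau)(1+Q(1-\widetilde\theta)^{-1}|\gamma|\pi^{-1}e^{\|\mu\|C^*(\mathfrak h)})^{-1}=C_2$.

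For the electric part, I would exploit the finite-band hypothesis $\widehat V^{(s)}_N=\widehat 0$ for $2\pi|N_\perp|>R$, which turns $\widehat V=\widehat V^{(0)}+\widehat V^{(1)}$ into a finite-range convolution in the transverse Fourier variable. Since $G^{\pm}_N(k;\varkappa)\geq\varkappa$ for all $N$, the operator norm of $\widehat V\widehat G^{-1}_{\pm}$ is $O(\varkappa^{-1})$, so the electric contribution is absorbed into the $(1-\delta)$ loss once $\varkappa_0$ is taken large enough. The parameter $\widetilde a\in(0,C_2]$ is introduced precisely to balance the cross-terms $\widehat P^+_*\widehat W\widehat P^-_*$ and $\widehat P^-_*\widehat W\widehat P^+_*$ that appear when the squared norm $\|(\widehat P^+_*+\widetilde a\widehat P^-_*)(\widehat{\mathcal D}+\widehat W)\varphi\|^2$ is expanded: the $\widehat P^-_*\widehat W\widehat P^+_*$ term is safely bounded because $G^+_N$ is large, while the $\widehat P^+_*\widehat W\widehat P^-_*$ direction (where the loss $C_2$ is unavoidable) is forced to have weight $\widetilde a$ small.

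The main obstacle will be that $\widehat W$ does not commute with the projections $\widehat P^\pm_*$, since the direction $\widetilde e(k+2\pi N)$ in $\widehat P^\pm_{\widetilde e(k+2\pi N)}$ depends on $N$. The finite-support hypothesis on $\widehat V^{(s)}_N$ and $A_N$, together with (1.4), provides the quantitative control: the projections for Fourier modes coupled by $\widehat W$ differ by $O(|N_\perp-N'_\perp|/|N_\perp|)$, which is negligible in the large-$\varkappa$ regime because only finitely many transverse bands contribute. The hardest part will be arranging all constants so that the magnetic gauge reduction, the electric-potential smallness, and the projection mismatch combine to yield precisely the prefactor $(1-\delta)$ with $C_2$ as in (1.10), uniformly over $k\in\mathbb R^d$ with $|(k,\gamma)|=\pi$.
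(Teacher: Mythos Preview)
Your proposal captures several correct ingredients---the role of the gauge transformation governed by Lemma~3.1, the appearance of $\widetilde\theta$ via $\mathbf{(A_2)}$, the use of (1.9), and the fact that the finite Fourier support of $\widehat W$ controls the mismatch between the direction-dependent projections $\widehat P^\pm_{\widetilde e(k+2\pi N)}$. But there is a genuine gap at the electric step, and the overall architecture differs from the paper's in a way that matters.

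The sentence ``Since $G^{\pm}_N(k;\varkappa)\geq\varkappa$ for all $N$, the operator norm of $\widehat V\widehat G^{-1}_{\pm}$ is $O(\varkappa^{-1})$'' is false: only $G^+_N\geq\varkappa$, whereas $G^-_N$ is bounded below merely by $\pi|\gamma|^{-1}$ and does \emph{not} grow with $\varkappa$. You even note this degeneracy a few lines earlier (``$G^-_N$\dots degenerates when $|k_\perp+2\pi N_\perp|\approx\varkappa$''), so the electric term cannot be absorbed into $(1-\delta)$ by a crude $O(\varkappa^{-1})$ argument. This is precisely the hard part of the theorem. In the paper, Theorem~1.3 is proved by a two-stage reduction: first one establishes the inequality only for $\varphi$ with Fourier support in the critical shell $\mathcal C(\tfrac12)=\{N:|\varkappa-|k_\perp+2\pi N_\perp||<\tfrac12\varkappa\}$ (this is Theorem~2.1); then one extends to general $\varphi$ by showing (Lemma~2.1) that on the complement of the shell $\widehat W$ is genuinely small relative to $\widehat G^1_-$, because there $G^-_N\gtrsim|k+2\pi N|$.

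The shell-localized Theorem~2.1 is itself not proved by a single global gauge. The unit vector $\widetilde e(k+2\pi N)$ ranges over the whole sphere $S_{d-2}(e)$ as $N$ varies in $\mathcal C(\tfrac12)$, while the gauge identity (3.8) requires a \emph{fixed} direction $\widetilde e^{\lambda}$. The paper therefore covers $S_{d-2}(e)$ by finitely many patches $\widetilde\Omega_\lambda$ of angular size $\sim R/\varkappa$, decomposes $\varphi$ into pieces $\varphi^{\,\mathcal K_\lambda}$, and on each piece performs a two-dimensional reduction $\widehat{\mathcal D}(k+i\varkappa e)+\widehat W=\widehat{\mathcal D}^{(\lambda)}+\widehat{\mathcal D}^{(\lambda)}_\perp$. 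The gauge transformation is applied to the 2D operator $\widehat{\mathcal D}^{(\lambda)}$ and is \emph{matrix-valued}, namely $e^{-i\widehat\alpha_1^{(\lambda)}\widehat\alpha_2^{(\lambda)}\Phi^{(2,\lambda)}}e^{\,i\Phi^{(1,\lambda)}}$, not a scalar phase as you suggest. The key lower bound (Lemma~3.4) on $\|\widehat{\mathcal D}^{(\lambda)}\widetilde\varphi^{\,-}_\lambda\|$ then follows by combining (1.9), $\mathbf{(A_2)}$, Lemma~3.1, and the a~priori bound $\|\widehat{\mathcal D}_0^{(\lambda)}\chi_\lambda\|\geq\pi|\gamma|^{-1}\|\chi_\lambda\|$, which is exactly what produces the constant $C_2$ of (1.10). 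The transverse operator $\widehat{\mathcal D}^{(\lambda)}_\perp$ and the electric term are controlled on each patch by (3.22)--(3.24), but this control is available only \emph{after} the angular localization has forced $\bigl|\sum_{j\geq3}(k_j^{(\lambda)}+2\pi N_j^{(\lambda)})\mathcal E_j^{(\lambda)}\bigr|\lesssim R$ (Lemma~3.3). Your outline omits both the shell/off-shell splitting and the angular covering with the associated 2D reduction; without them the argument does not close.
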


Theorem \ref{th1.3} is proved in Section 2. The following Theorem \ref{th1.4},
which is proved in Section 4, is a consequence of Theorem \ref{th1.3}.

\begin{theorem} \label{th1.4}
Let $d\geqslant 3$. Suppose $\widehat V^{\, (s)}\in L^2(K;{\mathcal S}^{(s)}_M)$, 
$s=0,1$, $A\in L^2(K;{\mathbb R}^d)$ with $A_0=0$, and there are a vector $\gamma 
\in \Lambda \backslash \{ 0\} $ and a measure $\mu \in {\mathfrak M}_{\mathfrak h}\, 
$, ${\mathfrak h}>0$, such that the magnetic potential $A$ obeys conditions 
$\bf (A_1)$, $\bf (\widetilde A_1)$, $\bf (A_2)$ and, for the functions $\widehat 
V^{\, (s)}$, $s=0,1$, the maps
$$
{\mathbb R}^d\ni x\to \{ [0,1]\ni \xi \to \widehat V^{\, (s)}(x-\xi \gamma )\}
\in L^2([0,1];{\mathcal M}_M)
$$
are continuous. Then for any $\delta \in (0,1)$ there exists a number $\varkappa _0
^{\, \prime }>0$ such that for all $\varkappa \geqslant \varkappa _0^{\, \prime }\, 
$, all vectors $k\in {\mathbb R}^d$ with $|(k,\gamma )|=\pi $, and all vector
functions $\varphi \in \widetilde H^1(K;{\mathbb C}^M)$ the inequality
$$
\| (C_2^{-\frac 12}\, \widehat G_-^{-\frac 12}\, \widehat P^{\, +}_*+
\widehat G_+^{-\frac 12}\, \widehat P^{\, -}_*)(\widehat {\mathcal D}
(k+i\varkappa e)+\widehat W)\varphi \| ^2=
$$ $$
=C_2^{-1}\, \| \widehat G_-^{-\frac 12}\, \widehat P^{\, +}_*(\widehat 
{\mathcal D}(k+i\varkappa e)+\widehat W)\varphi \| ^2+\| \widehat G_+
^{-\frac 12}\,\widehat P^{\, -}_*(\widehat {\mathcal D}(k+i\varkappa e)+
\widehat W)\varphi \| ^2\geqslant
$$ $$
\geqslant (1-\delta )\, (C_2\, \| \widehat G_-^{\, \frac 12}\, \widehat P^{\, 
-}\varphi \| ^2+\| \widehat G_+^{\, \frac 12}\, \widehat P^{\, +}\varphi \| ^2)
$$
is fulfilled.
\end{theorem}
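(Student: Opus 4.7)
The plan is to reduce Theorem~\ref{th1.4} to Theorem~\ref{th1.3} in two stages: a Fourier truncation in the directions transverse to $\gamma$ (using the continuity hypotheses), followed by a specific weighted substitution that converts Theorem~\ref{th1.3}'s inequality into the $\widehat G_\pm^{\pm 1/2}$-weighted form required here.

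For the first stage, given $\varepsilon>0$, the continuity of the maps $x\mapsto\{\xi\mapsto\widehat V^{(s)}(x-\xi\gamma)\}$ in $L^2([0,1];{\mathcal S}^{(s)}_M)$ and condition $\bf(A_1)$, together with $\Lambda$-periodicity, provide $R>0$ and decompositions $\widehat V^{(s)}=\widehat V^{(s)}_R+\widehat V^{(s)}_{\mathrm{tail}}$, $A=A_R+A_{\mathrm{tail}}$, where $\widehat V^{(s)}_R$ and $A_R$ retain only the Fourier coefficients with $2\pi|N_\perp|\leqslant R$ and the tails satisfy $|||\widehat V^{(s)}_{\mathrm{tail}}|||_{\gamma,M}<\varepsilon$ and $|||A_{\mathrm{tail}}|||_{\gamma,1}<\varepsilon$ (a uniform Fourier-tail estimate in the transverse directions). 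For $\varepsilon$ small, $A_R$ still satisfies $\bf(A_2)$ with a slightly worsened constant $\widetilde\theta'<1$, so the truncated operator $\widehat W_R=\widehat V^{(0)}_R+\widehat V^{(1)}_R-\sum_j A_{R,j}\widehat\alpha_j$ satisfies the hypotheses of Theorem~\ref{th1.3}; for any chosen $\delta'\in(0,\delta)$ this yields $\widetilde a\in(0,C_2]$ and a threshold $\varkappa_0$ such that the inequality of Theorem~\ref{th1.3} holds for $\widehat{\mathcal D}(k+i\varkappa e)+\widehat W_R$ whenever $\varkappa\geqslant\varkappa_0$.

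For the second stage, in that inequality I would substitute
\[
\varphi=C_2^{-1/2}\,\widehat G_-^{-1/2}\widehat P^{-}\Phi+\widetilde a^{-1}\,\widehat G_+^{-1/2}\widehat P^{+}\Phi,\qquad \Phi\in\widetilde H^{1}(K;{\mathbb C}^M).
\]
Because $\widehat G_\pm^{\zeta}$ is diagonal in the Fourier basis and commutes with $\widehat{\mathcal D}(k+i\varkappa e)$, $\widehat P^\pm$ and $\widehat P^\pm_*$, and because the annihilation identities $\widehat P^\mp_*\widehat{\mathcal D}(k+i\varkappa e)\widehat P^\mp=\widehat 0$ follow from (1.5), a direct mode-by-mode computation shows that this substitution transforms the right-hand side of Theorem~\ref{th1.3} exactly into $(1-\delta')(C_2\|\widehat G_-^{1/2}\widehat P^-\Phi\|^2+\|\widehat G_+^{1/2}\widehat P^+\Phi\|^2)$, and transforms the $\widehat{\mathcal D}$-contribution to the left-hand side exactly into $C_2^{-1}\|\widehat G_-^{-1/2}\widehat P_*^+\widehat{\mathcal D}(k+i\varkappa e)\Phi\|^2+\|\widehat G_+^{-1/2}\widehat P_*^-\widehat{\mathcal D}(k+i\varkappa e)\Phi\|^2$. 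These coincide exactly with the right-hand side and the $\widehat{\mathcal D}$-part of the left-hand side of Theorem~\ref{th1.4}; the particular prefactors $C_2^{-1/2}$ and $\widetilde a^{-1}$ are chosen precisely so that the coefficients $C_2^2$ and $\widetilde a^2$ in Theorem~\ref{th1.3}'s right-hand side and the $\widetilde a$-weighting in its left-hand side all reassemble into the asymmetric $(C_2^{-1},1)$ and $(C_2,1)$ weightings of Theorem~\ref{th1.4}.

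The main obstacle is controlling the $\widehat W_R$-contribution to the substituted inequality: after substitution, $\widehat P^\pm_*\widehat W_R\varphi$ differs from $C_2^{\mp 1/2}\widehat G_\mp^{-1/2}\widehat P^\pm_*\widehat W_R\Phi$ by commutator terms $[\widehat P^\pm_*\widehat W_R,\widehat G_\pm^{-1/2}]$ and by mode-mismatch terms proportional to $\widehat G_+^{-1/2}-\widehat G_-^{-1/2}$. In the Fourier basis these involve $(\widehat W_R)_{M'}$ multiplied by scalar differences such as $(G_{N-M'}^\pm)^{-1/2}-(G_N^\pm)^{-1/2}$ or $(G_N^+)^{-1/2}-(G_N^-)^{-1/2}$, each of order $O(\varkappa^{-3/2})$ as $\varkappa\to+\infty$ for fixed modes with bounded transverse components. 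Since $\widehat W_R$ has only finitely many nonzero transverse Fourier modes, their summed contribution is $o(1)$ and can be absorbed into the gap $\delta-\delta'$ by enlarging $\varkappa_0$. The tail $\widehat W-\widehat W_R$, being an $\mathbb L^{\Lambda}_M(d;0)$-perturbation of arbitrarily small relative bound by construction, is controlled via (1.1) with arbitrarily small coefficient in front of $\|\widehat{\mathcal D}(k)\varphi\|$, and is again absorbed into $\delta$ by taking $\varepsilon$ small and $\varkappa$ large. The delicate technical work is the precise bookkeeping of these commutator and mode-mismatch remainders.
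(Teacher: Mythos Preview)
Your substitution idea fails at the commutator step, and the failure is not a matter of bookkeeping but of size. You claim the differences $(G^{-}_{N-n})^{-1/2}-(G^{-}_{N})^{-1/2}$ are $O(\varkappa^{-3/2})$, but this is wrong: on the critical shell $\{N:\,|\varkappa-|k_{\perp}+2\pi N_{\perp}||\ll 1\}$ one has $G^{-}_{N}\sim \pi|\gamma|^{-1}$, which is $O(1)$, not $O(\varkappa)$. A shift by any $n$ with $n_{\perp}$ bounded and $(n,\gamma)\neq 0$ changes $G^{-}_{N}$ by an amount of order $1$, so $(G^{-}_{N-n})^{-1/2}-(G^{-}_{N})^{-1/2}$ is $O(1)$ uniformly in $\varkappa$. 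Since the truncation $\widehat W_{R}$ only cuts off transverse Fourier modes and leaves the $\gamma$-direction unrestricted, such $n$ occur. Hence $[\widehat W_{R},\widehat G_{-}^{-1/2}]$ is not small, and your remainder terms cannot be absorbed. There is also a coefficient mismatch: your substitution carries a factor $\widetilde a^{-1}$ on $\widehat P^{+}\Phi$, whereas the target expression for Theorem~\ref{th1.4} has no $\widetilde a$ at all; since $\widetilde a=\widetilde a(C_{2};\delta,R)$ is a small constant with no algebraic relation to $C_{2}^{1/2}$, the cross terms $\widehat P^{+}_{*}\widehat W_{R}\widehat P^{+}\Phi$ and $\widehat P^{-}_{*}\widehat W_{R}\widehat P^{-}\Phi$ cannot be matched even modulo commutators.

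The paper's route is genuinely different and avoids these obstructions. It first applies Theorem~\ref{th1.3} to $\widehat W_{\{R\}}$, then applies it a second time with $\gamma$ replaced by $-\gamma$ (which swaps $\widehat P^{\pm}$ with $\widehat P^{\mp}_{*}$) to obtain the companion inequality for $\widehat{\mathcal D}(k-i\varkappa e)+\widehat W_{\{R\}}$. Together these two bounds show that $\widehat{\mathcal D}(k+i\varkappa e)+\widehat W_{\{R\}}$ is a bijection of $\widetilde H^{1}$ onto $L^{2}$, and yield \emph{resolvent} bounds of the form $\|(\ldots)^{\zeta}\,(\widehat{\mathcal D}+\widehat W_{\{R\}})^{-1}\,(\ldots)^{1-\zeta}\psi\|\leqslant C\|\psi\|$ at the endpoints $\zeta=0$ and $\zeta=1$. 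One then runs Stein complex interpolation in $\zeta$ on this analytic family; at $\zeta=\tfrac12$ the $\widetilde a$-powers cancel completely (they appear as $\widetilde a^{2\zeta-1}$ and $\widetilde a^{1-2\zeta}$), and one obtains exactly the $\widehat G_{\pm}^{1/2}$-weighted resolvent bound, which inverts to the desired forward inequality. The tail $\widehat W-\widehat W_{\{R\}}$ is then handled via Lemma~4.1. The point is that passing from integer weights $\widehat G_{\pm}^{1}$ to half-integer weights $\widehat G_{\pm}^{1/2}$ in the presence of a non-Fourier-multiplier perturbation requires an interpolation argument on the \emph{inverse} operator, not a substitution in the forward inequality.
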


Fix some nonnegative even function ${\mathbb R}^{d-1}\ni x^{\, \prime}
\to {\mathcal F}(x^{\, \prime})\in {\mathbb R}$ from the Schwartz space ${\mathcal S}
({\mathbb R}^{d-1};{\mathbb R})$ with the following property: the function
$$
{\mathbb R}^{d-1}\ni p\to {\mathcal F}\, \widehat {\phantom f}(p)=\int_{{\mathbb 
R}^{d-1}}{\mathcal F}(x^{\, \prime})\, e^{\, i(p,x^{\, \prime})}\, dx^{\, \prime}
$$
is also nonnegative and satisfies the conditions ${\mathcal F}\, {\widehat {\phantom 
f}}(0)=1$ and ${\mathcal F}\, {\widehat {\phantom f}}(p)=0$ for $|p|\geqslant 1$ 
(see, e.g., \cite{Dep00} and remarks after Chapter 1 in \cite{R}). Then we also
have ${\mathcal F}\, {\widehat {\phantom f}}\in {\mathcal S}({\mathbb R}^{d-1};
{\mathbb R})$ and $0\leqslant {\mathcal F}\, {\widehat {\phantom f}}(p)={\mathcal F}\,
{\widehat {\phantom f}}(-p)\leqslant 1$ for all $p\in {\mathbb R}^{d-1}$. 
Let $\{ {\mathcal E}^{\, \prime}_j\} $ be an orthogonal basis in ${\mathbb R}^d$
with ${\mathcal E}^{\, \prime}_2=e$. Denote $x^{\, \prime}_j=(x,{\mathcal E}^{\, 
\prime}_j)$, $x\in {\mathbb R}^d$, $j=1,\dots ,d$, $x^{\, \prime}\doteq (x^{\, 
\prime}_1,x^{\, \prime}_3,\dots ,x^{\, \prime}_d)\in {\mathbb R}^{d-1}$; 
$x_{\perp}=\sum\limits_{j\, \neq \, 2}x^{\, \prime}_j{\mathcal E}^{\, \prime}_j
\in {\mathbb R}^d$, $x_{\|}=x^{\, \prime}_2e\in {\mathbb R}^d$. For $R>0$, we set
$$
\widehat V^{\, (s)}_{\{ R\} }(x)=R^{d-1}\, \int_{{\mathbb R}^{d-1}}\widehat V^{(\, 
s)}\bigl( x-\sum\limits_{j\, \neq \, 2}x^{\, \prime}_j{\mathcal E}^{\, \prime}_j
\bigr) {\mathcal F}(Rx^{\, \prime}_1,Rx^{\, \prime}_3,\dots ,Rx^{\, \prime}_d)\, 
dx^{\, \prime}\, ,\ s=0,1,  \eqno (1.11)
$$ $$
\widehat V_{\{ R\} }(x)=\widehat V^{\, (0)}_{\{ R\} }(x)+\widehat V^{\, (1)}
_{\{ R\} }(x)\, ,
$$ $$
A_{\{ R\} }(x)=R^{d-1}\, \int_{{\mathbb R}^{d-1}}A\bigl( x-\sum\limits_{j\, \neq \,
2}x^{\, \prime}_j{\mathcal E}^{\, \prime}_j\bigr) {\mathcal F}(Rx^{\, \prime}_1,R
x^{\, \prime}_3,\dots ,Rx^{\, \prime}_d)\, dx^{\, \prime}\, ,\
x\in {\mathbb R}^d\, .  \eqno (1.12)
$$ 
Since
$$
(\widehat V^{\, (s)}_{\{ R\} })_N={\mathcal F}\, {\widehat {\phantom f}} \bigl( 
\frac {2\pi N^{\, \prime}}R\bigr) \, \widehat V^{(s)}_N\, ,\ s=0,1\, ,\ \ 
(A_{\{ R\} })_N={\mathcal F}\, {\widehat {\phantom f}} \bigl( \frac {2\pi N^{\, 
\prime}}R\bigr) \, A_N\, ,
$$
where $N^{\, \prime}=(N^{\, \prime}_1,N^{\, \prime}_3,\dots ,N^{\, \prime}_d)\in 
{\mathbb R}^{d-1}$, $N\in \Lambda ^*$, we have $(\widehat V^{\, (s)}_{\{ R\} })_N=0$ 
and $(A_{\{ R\} })_N=0$ for all $N\in \Lambda ^*$ with $2\pi |N_{\perp}|>R$. By the 
definition of the functions $\widehat V^{\, (s)}_{\{ R\} }\in L^2(K;{\mathcal 
S}^{(s)}_M)$ and $A_{\{ R\} }\in L^2(K;{\mathbb R}^d)$ (in the form of a convolution) 
and by the choice of the function ${\mathcal F}$, the function $A_{\{ R\} }$ (as well 
as the function $A$) satisfies conditions $\bf (A_0)$, $\bf (\widetilde A_1)$, 
$\bf (A_2)$ for all $R>0$, with the constants $C_{\varepsilon }(0,\widehat W_{\{ R\} 
})=C_{\varepsilon }(0,\widehat W)$, $\varepsilon >0$ (and inequalities (1.2) are 
satisfied with the constants $C_{\varepsilon }^{\, \prime }(0,\widehat W_{\{ R\} })
=C_{\varepsilon }^{\, \prime }(0,\widehat W)$, $\varepsilon >0$), moreover, the maps
$$
{\mathbb R}^d\ni x\to \{ [0,1]\ni \xi \to \widehat V^{\, (s)}_{\{ R\} }
(x-\xi \gamma )\} \in L^2([0,1];{\mathcal S}^{(s)}_M)\, ,\ s=0,1\, ,
$$ $$
{\mathbb R}^d\ni x\to \{ [0,1]\ni \xi \to A_{\{ R\} }
(x-\xi \gamma )\} \in L^2([0,1];{\mathbb R}^d)
$$ 
are continuous (the function $A_{\{ R\} }$ obeys condition $\bf (A_1)$), and
$$
||| \, \widehat W-\widehat W_{\{ R\} }\, ||| _{\gamma ,\, M}\to 0
$$ 
as $R\to +\infty $. The last relation and inequality (0.11) imply that for any
$\widetilde \varepsilon >0$ there is a number $R=R(\widetilde \varepsilon
)>0$ (dependent also on $\gamma $, $\mu $, and functions $\widehat V^{\, (s)}$,
$s=0,1$, and $A$) such that for all $\varepsilon >0$ and all vector functions
$\varphi \in H^1({\mathbb R}^d;{\mathbb C}^M)$ the inequality
$$
\| (\widehat W-\widehat W_{\{ R\} })\varphi \| _{L^2({\mathbb R}^d;{\mathbb C}^M)}
\leqslant \widetilde {\varepsilon }\, \bigl( \varepsilon \, \bigl\| -i\, \frac
{\partial \varphi }{\partial x^{\, \prime}_2}\, \bigr\| _{L^2({\mathbb R}^d;
{\mathbb C}^M)}+C^{\, \prime }(\gamma ,\varepsilon )\, \| \varphi \| _{L^2({\mathbb 
R}^d;{\mathbb C}^M)}\bigr)  \eqno (1.13)
$$
holds, where $x^{\, \prime}_2=(x,e)$, $x\in {\mathbb R}^d$, and $C^{\, \prime }(\gamma 
,\varepsilon )$ is the constant from inequality (0.11). From (1.13) it follows that
$$
\| (\widehat W-\widehat W_{\{ R\} })\varphi \| 
\leqslant \widetilde {\varepsilon }\, \bigl( \varepsilon \, \bigl\| \bigl( 
k^{\, \prime}_2-i\, \frac {\partial }{\partial x^{\, \prime}_2}\, \bigr)\, \varphi \,
\bigr\| +C^{\, \prime }(\gamma ,\varepsilon )\, \| \varphi \| \bigr)
\eqno (1.14)
$$
for all $\widetilde \varepsilon >0$ ($R=R(\widetilde \varepsilon 
)$), all $\varepsilon >0$, all $\varphi \in \widetilde H^1(K;{\mathbb C}^M)$, and
all $k\in {\mathbb R}^d$, where $k^{\, \prime}_2=(k,e)$.
\vskip 0.2cm

{\it Proof} of Theorem \ref{th1.1}. Given the vector $\gamma \in \Lambda \backslash 
\{ 0\} $ and the measure $\mu \in {\mathfrak M}_{\mathfrak h}\, $, suppose the 
functions $\widehat V^{\, (s)}$, $s=0,1$, and $A$ satisfy the conditions of 
Theorem \ref{th0.1}. For the number  
$$
\widetilde \varepsilon \doteq \frac 1{4\sqrt 3}\, C_2\bigl( 1+\, \frac {|\gamma 
|^2}{{\pi }^2}\, (C^{\, \prime }(\gamma ,1))^2\bigr) ^{-1/2}\, ,
$$
there is a number $R=R(\widetilde \varepsilon )>0$ such that inequality (1.14) holds 
for all $\varepsilon >0$, all vectors $k\in {\mathbb R}^d$, and all vector functions 
$\varphi \in \widetilde H^1(K;{\mathbb C}^M)$. Furthermore, $(\widehat W_{\{ R\} })_N
=\widehat 0$ for all $N\in \Lambda ^*$ with $2\pi |N_{\perp}|>R$, the function 
$A_{\{ R\} }$ obeys conditions $\bf (A_0)$, $\bf (A_1)$, $\bf (\widetilde A_1)$, 
$\bf (A_2)$ (with the vector $\gamma \in \Lambda \backslash \{ 0\} $, the measure 
$\mu \in {\mathfrak M}_{\mathfrak h}\, $, and the constants $C^*$, $\widetilde \theta 
$), and for the function $A_{\{ R\} }\, $, estimates (1.9) are fulfilled with the 
constants $Q(\gamma , A_{\{ R\} };\tau )=Q(\gamma , A;\tau )$, $\tau \in (0,1)$ 
(including the chosen number $\tau $). Therefore, Theorem \ref{th1.3} applied to the 
functions $\widehat V^{\, (s)}_{\{ R\} }$, $s=0,1$, $A_{\{ R\} }$ and the number 
$\delta =\frac 13\, $, implies that there exist a number $\widetilde a=\widetilde 
a(C_2;\frac 13\, ,R(\widetilde \varepsilon ))\in (0,C_2]$ and a number $\varkappa 
_0>0$ such that the inequality 
$$
\| (\widehat {\mathcal D}(k+i\varkappa e)+\widehat W_{\{ R\} })\varphi \| ^2\geqslant
\| (\widehat P^{\, +}_*+\widetilde a\, \widehat P^{\, -}_*)(\widehat {\mathcal D}(k+
i\varkappa e)+\widehat W_{\{ R\} })\varphi \| ^2\geqslant
$$ $$
\geqslant \frac 23\, (C^{\, 2}_2\, \| \widehat G_-^{\, 1}\widehat P^{\, -}\varphi \| 
^2+{\widetilde a}^{\, 2}\, \| \widehat G_+^{\, 1}\widehat P^{\, +}\varphi \| ^2)
$$
holds for all $\varkappa \geqslant \varkappa _0\, $, all vectors $k\in {\mathbb R}^d$,
with $|(k,\gamma )|=\pi $, and all vector functions $\varphi \in \widetilde H^1(K;
{\mathbb C}^M)$. Set $\varepsilon =\widetilde a\, (4\sqrt 6\, \widetilde \varepsilon 
\, )^{-1}$ and assume that $\widetilde a\varkappa _0\geqslant \pi |\gamma |^{-1}\, 
C_2$ and $\varepsilon \varkappa _0\geqslant C^{\, \prime }(\gamma ,\varepsilon )$. 
Then, for $\varkappa \geqslant \varkappa _0\, $, from (1.14) it follows that
$$
\| (\widehat W-\widehat W_{\{ R\} })\widehat P^{\, -}\varphi \| ^2\leqslant
2{\widetilde \varepsilon }^{\, 2}\, \bigl( \, \| \widehat G_-^{\, 1}
\widehat P^{\, -}\varphi \| ^2+(C^{\, \prime }(\gamma ,1))^2\, \| \widehat P^{\, 
-}\varphi \| ^2\, \bigr) \leqslant \frac 1{24}\, C^{\, 2}_2\, \| \widehat G_-^{\, 1}
\widehat P^{\, -}\varphi \| ^2\, ,
$$ $$
\| (\widehat W-\widehat W_{\{ R\} })\widehat P^{\, +}\varphi \| ^2\leqslant 
2{\widetilde \varepsilon }^{\, 2}\,
\bigl( \, \varepsilon ^2\, \| \widehat G_-^{\, 1}\widehat P^{\, +}\varphi \| ^2+
(C^{\, \prime }(\gamma ,\varepsilon ))^2\| \widehat P^{\, +}\varphi \| ^2\, \bigr) 
\leqslant 
$$ $$
\leqslant 2{\widetilde \varepsilon }^{\, 2}\,
\bigl( \, \varepsilon ^2+{\varkappa }^{-2}\, (C^{\, \prime }(\gamma ,\varepsilon 
))^2\, \bigr) \, \| \widehat G_+^{\, 1}\widehat P^{\, +}\varphi \| ^2 \leqslant
\frac 1{24}\, {\widetilde a}^{\, 2}\, \| \widehat G_+^{\, 1}\widehat P^{\, +}
\varphi \| ^2\, .
$$
Hence, for $\varkappa \geqslant \varkappa _0$ (and for all $k\in {\mathbb R}^d$ with
$|(k,\gamma )|=\pi $, and all $\varphi \in \widetilde H^1(K;{\mathbb C}^M)$) we
have
$$
\| (\widehat {\mathcal D}(k+i\varkappa e)+\widehat W)\varphi \| ^2\geqslant
$$ $$
\geqslant \frac 12\, \| (\widehat {\mathcal D}(k+i\varkappa e)+\widehat W_{\{ R\} })
\varphi \| ^2-2\, \| (\widehat W-\widehat W_{\{ R\} })\widehat P^{\, -}\varphi \| 
^2-2\, \| (\widehat W-\widehat W_{\{ R\} })\widehat P^{\, +}\varphi \| ^2\geqslant
$$ $$
\geqslant \frac 14\, \bigl( C^{\, 2}_2\, \| \widehat G_-^{\, 1}\widehat P^{\, -}
\varphi \| ^2+{\widetilde a}^{\, 2}\, \| \widehat G_+^{\, 1}\widehat P^{\, +}
\varphi \| ^2 \bigr) \geqslant
$$ $$
\geqslant \frac 14\, \bigl( C^{\, 2}_2\, \frac {{\pi }^2}{|\gamma |^2}\, \| 
\widehat P^{\, -}\varphi \| ^2+{\widetilde a}^{\, 2}\varkappa ^2\, \| 
\widehat P^{\, +}\varphi \| ^2 \bigr) \geqslant C_1^{\, 2}\, \| \varphi \| ^2\, ,
$$
where $C_1=\frac 12\, \pi |\gamma |^{-1}\, C_2\, $.  \hfill $\square$
\vskip 0.2cm

{\bf Remark}. Theorem \ref{th1.1} can be also proved using Theorem \ref{th1.4} (see
the proof of Theorem \ref{th1.2}).

\begin{theorem} \label{th1.5}
Let $d=3$, $\gamma \in \Lambda \backslash \{ 0\} $. Suppose 
$\widehat V^{\, (s)}\in L^3\ln ^{1+\delta }L(K;{\mathcal S}^{(s)}_M)$,
$s=0,1$, for some $\delta >0$. Then for any $\varepsilon ^{\, \prime }>0$ and any 
$\Xi >1$ there is a number $\varkappa _0=\varkappa _0(\varepsilon ^{\, \prime },
\Xi )>0$ $\mathrm ($dependent also on $\gamma $ and the functions $\widehat V^{\, 
(s)}$$\mathrm )$ such that for every $\varkappa _1\geqslant \varkappa _0$ there
exists a number $\varkappa \in [\varkappa _1,\Xi \varkappa _1]$ such that for all 
vectors $k\in {\mathbb R}^3$ with $|(k,\gamma )|=\pi $, and all vector functions 
$\varphi \in \widetilde H^1(K;{\mathbb C}^M)$ the inequality
$$
\| \, \widehat G_-^{\, -\frac 12}\, \widehat P^{\, +}_*\, \widehat V\varphi \, \| ^2+
\| \, \widehat G_+^{\, -\frac 12}\, \widehat P^{\, -}_*\, \widehat V\varphi \, \| ^2
\leqslant (\varepsilon ^{\, \prime })^2\, \bigl( \, \| \, \widehat G_+^{\, 
\frac 12}\, \widehat P^{\, +}\varphi \, \| ^2+\| \, \widehat G_-^{\, \frac 12}\, 
\widehat P^{\, -}\varphi \, \| ^2
$$
holds, where $\widehat V=\widehat V^{\, (0)}+\widehat V^{\, (1)}$.
\end{theorem}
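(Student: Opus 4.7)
The inequality is equivalent, via the substitution $\psi=\widehat G_+^{\,1/2}\widehat P^{\,+}\varphi+\widehat G_-^{\,1/2}\widehat P^{\,-}\varphi$, to a uniform $L^2\to L^2$ operator-norm bound on the four sandwiched operators
$$
T^{\,\epsilon_1,\,\epsilon_2}(k;\varkappa)\,=\,\widehat G_{-\epsilon_1}^{\,-1/2}\,\widehat P^{\,\epsilon_1}_*\,\widehat V\,\widehat G_{\epsilon_2}^{\,-1/2}\,\widehat P^{\,\epsilon_2},\quad \epsilon_1,\epsilon_2\in\{+,-\}
$$
(where in the subscript $-\epsilon$ denotes the opposite sign). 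It suffices to show that each $\|T^{\,\epsilon_1,\,\epsilon_2}\|_{L^2\to L^2}<\varepsilon^{\,\prime}/2$ for some $\varkappa\in[\varkappa_1,\Xi\varkappa_1]$. All four cases are handled similarly; the most singular is $T^{\,+,\,-}$, in which $\widehat G_-^{\,-1/2}$ appears on both sides.

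The main tool would be the truncation $\widehat V=\widehat V^{(b)}_R+\widehat V^{(t)}_R$ at level $R\gg 1$, where $\widehat V^{(b)}_R=\widehat V\cdot \mathbf{1}_{\|\widehat V\|_{{\mathcal M}_M}\leqslant R}\in L^\infty$ is bounded by $R$, while the Zygmund tail $\widehat V^{(t)}_R$ satisfies $\|\widehat V^{(t)}_R\|_{L^3\ln^{1+\delta}L(K;\,{\mathcal M}_M)}\to 0$ as $R\to\infty$, by dominated convergence applied to $\|\widehat V\|^{\,3}\ln^{1+\delta}(2+\|\widehat V\|)\,\mathbf{1}_{\|\widehat V\|>R}$. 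For the bounded piece, Plancherel reduces the sandwich to a weighted Fourier-series estimate on $\ell^{\,2}(\Lambda^*)$: the multipliers $G_\pm(N;\varkappa)^{\,-1/2}$ are of size $\asymp\varkappa^{\,-1/2}$ off the resonant shell $\bigl|\,|k_\perp+2\pi N_\perp|-\varkappa\,\bigr|\leqslant 1$, which in dimension $d=3$ contains $O(\varkappa)$ lattice points. For exceptional $\varkappa$, $G_-(N;\varkappa)$ can be arbitrarily small on this shell, but a Chebyshev argument on $[\varkappa_1,\Xi\varkappa_1]$ with the logarithmic Haar measure $d\varkappa/\varkappa$ excludes the bad set and selects a $\varkappa$ for which the bounded-piece contribution is at most $CR\,\varkappa_1^{\,-1/2}(\log\varkappa_1)^{\,c}$.

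For the Zygmund tail I would establish a sharp multiplier inequality of the form
$$
\|\widehat G_-^{\,-1/2}\,\widehat V^{(t)}_R\,\widehat G_\pm^{\,-1/2}\|_{L^2\to L^2}\,\leqslant\, C\,(\log\varkappa)^{\,c^{\,\prime}}\,\|\widehat V^{(t)}_R\|_{L^3\ln^{1+\delta}L(K;\,{\mathcal M}_M)}.
$$
At $\varkappa=0$ this is the classical Sobolev multiplier bound stemming from $H^{\,1/2}(K)\hookrightarrow L^3(K)$ together with H\"older's inequality; for large $\varkappa$ one needs a restriction-type analysis on the two-dimensional sphere $|k_\perp+2\pi N_\perp|=\varkappa$, with the $\ln^{1+\delta}L$ regularity absorbing the logarithmic concentration of Fourier modes close to this sphere. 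Choosing $R=R(\varkappa_1)\to\infty$ slowly enough that both the bounded and tail contributions fall below $\varepsilon^{\,\prime}/\sqrt{8}$ for $\varkappa_1\geqslant\varkappa_0(\varepsilon^{\,\prime},\Xi)$, the averaging over $\varkappa\in[\varkappa_1,\Xi\varkappa_1]$ then extracts the desired $\varkappa$.

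The main obstacle is the sharp Zygmund multiplier estimate above: in $d=3$ the plain $L^3$ bound is merely critical and only yields a bounded (not small) operator norm at $\varkappa=0$, so the logarithmic gain coming from $\ln^{1+\delta}L$ with $\delta>0$ must precisely offset the logarithmic loss $(\log\varkappa)^{\,c^{\,\prime}}$ produced by the resonant shell. This delicate matching of the two logarithms is the technical heart of the argument and is why the strict inequality $\delta>0$ (and not merely $\widehat V\in L^3$) is essential.
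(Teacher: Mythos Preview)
Your proposal has the right overall architecture---truncation at a level, averaging in $\varkappa$ over $[\varkappa_1,\Xi\varkappa_1]$, and using the $\ln^{1+\delta}$ improvement over bare $L^3$---but it misses the structural ingredient that actually produces smallness in the hard block $T^{+,-}$, and as a result the bounded-piece estimate you claim is false.

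The point you overlook is that $\widehat P^{\,\pm}$ are not Fourier-support projections but \emph{spinor} projections: for each $N$ with $k_\perp+2\pi N_\perp\neq 0$ they act on $\mathbb C^M$ by $\widehat P^{\,\pm}_{\widetilde e(k+2\pi N)}$. Because $\widehat V^{(s)}\in\mathcal S_M^{(s)}$, every Fourier coefficient $\widehat V_n$ commutes with every $\widehat P^{\,\pm}_{\widetilde e}$, so the $(N,N-n)$ matrix element of $T^{+,-}$ carries a factor
\[
\widehat P^{\,+}_{\widetilde e(k+2\pi N)}\,\widehat V_n\,\widehat P^{\,-}_{\widetilde e(k+2\pi(N-n))}
=\widehat V_n\,\widehat P^{\,+}_{\widetilde e(k+2\pi N)}\,\widehat P^{\,-}_{\widetilde e(k+2\pi(N-n))},
\]
and by (1.4) the norm of $\widehat P^{\,+}_{\widetilde e}\widehat P^{\,-}_{\widetilde e'}$ equals $\tfrac12|\widetilde e-\widetilde e'|\lesssim |n_\perp|/\varkappa$ when $N,N-n$ lie in the resonant region (see (5.2) and (5.28) in the paper). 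This angular gain of order $\varkappa^{-1}$ is the entire source of decay for $T^{+,-}$: the weights $(G^-_N)^{-1/2}$ by themselves give nothing, since $G^-_N(k;\varkappa)\geqslant \pi|\gamma|^{-1}$ for \emph{all} $N$ once $|(k,\gamma)|=\pi$ (so your sentence ``$G_-(N;\varkappa)$ can be arbitrarily small'' is incorrect). In particular, for your bounded truncation $\widehat V_R^{(b)}$ one has only $\|\widehat G_-^{-1/2}\widehat P^{+}_*\widehat V_R^{(b)}\widehat G_-^{-1/2}\widehat P^{-}\|\leqslant (|\gamma|/\pi)\,R$, with no $\varkappa$-dependence to improve by averaging; your claimed $CR\,\varkappa_1^{-1/2}(\log\varkappa_1)^c$ does not follow.

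In the paper the proof of the hard block (Theorem~5.1) couples this angular gain with a dyadic decomposition of the resonant region $\mathcal K(h^l)=\bigcup_\mu\mathcal K_\mu$ according to the size of $G^-_N$, a lattice-point count $S_{\mu\nu}(n)$ for pairs $N\in\mathcal K_\mu$, $N-n\in\mathcal K_\nu$, a level-dependent truncation $\widehat V_{[a_j]}$ with $a_j=h^{j/3}r_j$ tuned to the block index $j=\mu+\nu+\min\{\mu,\nu\}$, the Zygmund bound $\|\widehat V_{[a]}\|_{L^2}^2\leqslant (a\ln^{1+\delta}a)^{-1}Y_\delta(\widehat V;a)$, and finally the $\varkappa$-average (5.32)--(5.33) which turns the singular factor $(\varkappa+2h^{\max\{\mu,\nu\}}-\pi|n_\perp|)^{-1/2}$ into a convergent quantity. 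The summability $\sum_j(\ln a_j)^{-1-\delta}<\infty$ is precisely where $\delta>0$ enters. The remaining eight blocks (your $T^{\pm,+}$, $T^{-,-}$, and the pieces involving $\Lambda^*\setminus\mathcal K(h^l)$) are handled without the angular gain, via Lemma~5.3 and duality/interpolation, using only that $\|\widehat V\|^{(\infty,\mathrm{loc})}_{L^3_w}=0$ for Zygmund-class $\widehat V$. Your proposed ``restriction-type'' multiplier inequality for the tail is not the mechanism used and, as stated, is not sharp enough to close the argument without the spinor cancellation.
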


\begin{theorem} \label{th1.6}
Let $d=3$, $\gamma \in \Lambda \backslash \{ 0\} $, $\sigma \in (0,2]$. Then there  
exist a universal constant $\widetilde c_1\in (0,C^{-1})$ and a constant
$\widetilde c_1^{\, \prime }>0$, dependent only on $\sigma $, such that for any 
$\varepsilon ^{\, \prime }>0$ and for all matrix functions 
$$
\widehat V^{\, (s)}=\sum\limits_{q=1}^{Q_s}\widehat V^{\, (s)}_q\, ,\ s=0,1\, ,
$$ 
that satisfy ${\mathrm (}$for $s=0$ and $s=1$${\mathrm )}$ the following conditions:

1${\mathrm )}$ $\widehat V^{\, (s)}_q\in L^3_w(K;{\mathcal S}^{(s)}_M)$ and $\beta 
_{\gamma ,\, \sigma }(0;\widehat V^{\, (s)}_q)<+\infty $, $q=1,\dots ,Q_s\, $,

2${\mathrm )}$ the essential supports ${\mathrm {supp}}\, \widehat V^{\, (s)}_q$    
of functions $\widehat V^{\, (s)}_q$ $\mathrm ($assumed to be $\Lambda $-periodic 
functions defined on space ${\mathbb R}^3$$\mathrm )$ do not intersect for different 
$q$,

3${\mathrm )}$ $\| \widehat V^{\, (s)}\| ^{(\infty , \, {\mathrm {loc}})}_{L^3_w(K;
{\mathcal M}_M)}=\max\limits_{q=1,\dots ,Q_s}\, \| \widehat V^{\, (s)}_q\| ^{(\infty 
, \, {\mathrm {loc}})}_{L^3_w(K;{\mathcal M}_M)}\leqslant \widetilde c_1\,
\varepsilon ^{\, \prime }$,

4${\mathrm )}$ $\max\limits_{q\, =\, 1,\dots ,Q_s}\, \beta _{\gamma ,\, \sigma }
(\widehat V^{\, (s)}_q)\leqslant \widetilde c_1^{\, \prime}\, \varepsilon 
^{\, \prime }\, ,$ \\
there exists a number $\varkappa _0^{\, \prime \prime \prime }>0$ such that for all
$\varkappa \geqslant \varkappa _0^{\, \prime \prime \prime }$, all vectors $k\in 
{\mathbb R}^3$ with $|(k,\gamma )|=\pi $, and all vector functions $\varphi \in 
\widetilde H^1(K;{\mathbb C}^M)$ the inequality
$$
\| \, \widehat G_-^{\, -\frac 12}\, \widehat P^{\, +}_*\, \widehat V\varphi \, \| ^2+
\| \, \widehat G_+^{\, -\frac 12}\, \widehat P^{\, -}_*\, \widehat V\varphi \, \| ^2
\leqslant (\varepsilon ^{\, \prime })^2\, \bigl( \, \| \, \widehat G_+^{\, 
\frac 12}\, \widehat P^{\, +}\varphi \, \| ^2+\| \, \widehat G_-^{\, \frac 12}\, 
\widehat P^{\, -}\varphi \, \| ^2\, \bigr)  \eqno (1.15)
$$
holds $\mathrm ($where $\widehat V=\widehat V^{\, (0)}+\widehat V^{\, (1)}$$\mathrm )$.
\end{theorem}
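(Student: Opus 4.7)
The plan is to establish (1.15) as an operator-norm estimate and to control each building block $\widehat V^{(s)}_q$ by a splitting into a low- and a high-transverse-frequency part. The $L^3_w$-local smallness in condition 3, strengthened by the disjoint-support property 2 to $\|\widehat V^{(s)}\|^{(\infty ,\,\mathrm{loc})}_{L^3_w(K;{\mathcal M}_M)}=\max_q\|\widehat V^{(s)}_q\|^{(\infty ,\,\mathrm{loc})}_{L^3_w(K;{\mathcal M}_M)}\le \widetilde c_1\varepsilon^{\,\prime}$, will control the low-frequency (smooth) piece through (0.6), while the Fourier decay $\beta_{\gamma,\sigma}(\widehat V^{(s)}_q)\le \widetilde c_1^{\,\prime}\varepsilon^{\,\prime}$ in condition 4 will control the high-frequency remainder via a Schur estimate on the Fourier matrix.

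As in the proof of Theorem \ref{th1.1}, decomposing $\varphi=\widehat P^{\,+}\varphi+\widehat P^{\,-}\varphi$ reduces (1.15) to bounding each of four mixed operators of the form $\widehat G_{\mp}^{\,-1/2}\widehat P^{\,\pm}_*\widehat V\widehat G_{*}^{\,-1/2}\widehat P^{\,*}$ in the $L^2\to L^2$ norm by a constant of order $\varepsilon^{\,\prime}$. I would pick a parameter $R_0>0$ (to be chosen large) and split $\widehat V^{(s)}_q=\widehat U^{(s)}_{q,R_0}+\widehat W^{(s)}_{q,R_0}$ by convolving with a Schwartz kernel in the $x_{\perp}$-directions of the type (1.11), so that $\widehat U^{(s)}_{q,R_0}$ has Fourier support in $\{N\in\Lambda^*:2\pi|N_{\perp}|\le R_0\}$ and the remainder $\widehat W^{(s)}_{q,R_0}$ obeys, for $2\pi|N_{\perp}|\ge R_0$, the pointwise bound $\|(\widehat W^{(s)}_{q,R_0})_N\|\le (v(K))^{-1}\widetilde c_1^{\,\prime}\varepsilon^{\,\prime}(2\pi|N_{\perp}|)^{\sigma-2}(2\pi|N|)^{-\sigma}$ (and vanishes for $2\pi|N_{\perp}|<R_0$). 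The smooth sum $\widehat U_{R_0}=\sum_{s,q}\widehat U^{(s)}_{q,R_0}$ inherits the disjoint-support structure and the small $L^3_w$-local norm; then (0.6) gives it a small relative $\widehat{\mathcal D}$-bound, and an argument modelled on Theorem \ref{th1.4} (combining the identity $\|\widehat{\mathcal D}(k+i\varkappa e)\varphi\|^2=\|\widehat G_-^{\,1}\widehat P^{\,-}\varphi\|^2+\|\widehat G_+^{\,1}\widehat P^{\,+}\varphi\|^2$ with the trivial $L^2\to L^2$ bound that is available because $\widehat U_{R_0}$ is bounded for fixed $R_0$) yields the required contribution of size $O(\widetilde c_1\varepsilon^{\,\prime})$, up to a $\varkappa$-dependent term that tends to zero as $\varkappa\to +\infty$.

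For each high-frequency remainder $\widehat W^{(s)}_{q,R_0}$, I would estimate the Fourier matrix of the operator $\widehat G_{\mp}^{\,-1/2}\widehat P^{\,\pm}_*\widehat W^{(s)}_{q,R_0}\widehat G_{*}^{\,-1/2}\widehat P^{\,*}$ at the pair $(N,N^{\,\prime})$ by $(G^{\mp}_N(k;\varkappa))^{-1/2}(G^{*}_{N^{\,\prime}}(k;\varkappa))^{-1/2}\|(\widehat W^{(s)}_{q,R_0})_{N-N^{\,\prime}}\|$ times a matrix factor coming from the projections. Schur's test, together with the above pointwise decay of $\|(\widehat W^{(s)}_{q,R_0})_{N-N^{\,\prime}}\|$ and the inequalities $G^+_N(k;\varkappa)\ge\varkappa$ and $G^{\pm}_N(k;\varkappa)\ge\pi|\gamma|^{-1}$, gives a bound of the form $c(\sigma)\widetilde c_1^{\,\prime}\varepsilon^{\,\prime}$ for $\varkappa\ge\varkappa_0^{\,\prime\prime\prime}$ large. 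Summing over $q$ is unproblematic because the per-block bound is uniform and, after the disjoint-support reduction of Step 2, the number of blocks is absorbed by the $\max_q$ structure rather than by a naive triangle inequality.

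The principal obstacle is the contribution from the resonance set $\{N\in\Lambda^*:|k_{\perp}+2\pi N_{\perp}|\approx\varkappa\}$, where $G_N^{\,-}(k;\varkappa)$ is small and the naive Schur kernel diverges. Here I would exploit (1.4)-(1.6): in the resonance annulus $\widehat P^{\,+}_*$ is close to $\widehat P^{\,+}_{\widetilde e(k+2\pi N)}$, so the off-diagonal action $\widehat P^{\,+}_{\widetilde e(k+2\pi N)}\widehat V\widehat P^{\,+}_{\widetilde e(k+2\pi N^{\,\prime})}$ acquires a factor $\frac 12|\widetilde e(k+2\pi N)-\widetilde e(k+2\pi N^{\,\prime})|$ by (1.4), providing the missing smallness. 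Carrying out the resulting double sum over $\Lambda^*$ against the $\beta_{\gamma,\sigma}$-decay (summable precisely when $\sigma>0$) is the most delicate part, and calibrating the two contributions (smooth and high-frequency) fixes the universal constant $\widetilde c_1<C^{-1}$ and the $\sigma$-dependent constant $\widetilde c_1^{\,\prime}=\widetilde c_1^{\,\prime}(\sigma)$.
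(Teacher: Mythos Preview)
Your outline captures the two essential mechanisms---the $L^3_w$ smallness via (0.6) and the Fourier decay via $\beta_{\gamma,\sigma}$---and your use of (1.4) to gain the factor $|\widetilde e(k+2\pi N)-\widetilde e(k+2\pi N')|$ in the resonance annulus is exactly the device the paper exploits (see (5.28)). However, there is a genuine gap in the passage from per-block estimates on $\widehat V^{(s)}_q$ to the estimate for $\widehat V^{(s)}=\sum_q\widehat V^{(s)}_q$.

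You say summing over $q$ is ``unproblematic'' and that disjoint supports let you replace $\sum_q$ by $\max_q$. But the operators $\widehat G_{\mp}^{-1/2}\widehat P^{\pm}_*$ are nonlocal in $x$, so disjointness of the $\mathrm{supp}\,\widehat V^{(s)}_q$ does not by itself give orthogonality of the contributions; a naive triangle inequality would produce a factor $Q_s$. The paper resolves this by introducing smooth cutoffs $\mathcal F^{(s)}_q$ with $\mathcal F^{(s)}_q\equiv 1$ on $\mathrm{supp}\,\widehat V^{(s)}_q$ and pairwise disjoint supports, so that $\sum_q\|\mathcal F^{(s)}_q\psi\|^2\le\|\psi\|^2$, and then moving these cutoffs past the Fourier multipliers via a commutator estimate (Lemma 6.1). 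The crucial point is that to make the commutator $[\widehat G_-^{-1/2}\widehat P^{\pm}\widehat\Theta_{l'},\,\mathcal F^{(s,1)}_q]$ small one must first replace the cutoff $\widehat\Theta$ by a further-truncated $\widehat\Theta_{l'}$ supported away from the deep resonance zone $G^-_N\lesssim h^{l_1}$; without this truncation the gradient of $(G^-_N)^{-1/2}$ near the resonance set blows up and the commutator is not controlled. Theorem 6.3 then shows that passing from $\widehat\Theta$ to $\widehat\Theta_{l'}$ costs only $\widetilde\varepsilon$. This two-step mechanism (Theorems 6.1--6.3 plus Lemma 6.1) is absent from your sketch and is not replaceable by the disjoint-support assertion alone.

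A secondary point: for the smooth piece $\widehat U_{R_0}$, (0.6) gives a relative $\widehat{\mathcal D}$-bound, but what is needed are the asymmetric estimates involving $\widehat G_{\pm}^{\pm 1/2}\widehat P^{\pm}$. The paper obtains these via the dyadic decomposition into $\mathcal K_\mu=\{h^{\mu-1}<G^-_N\le h^\mu\}$ combined with lattice-point counting (Lemma 5.3 and its consequences (5.14)--(5.22), (5.26)); Theorem 1.4 is not the right model here, since it is a lower bound for the full operator. Finally, note that the paper needs a scaling argument at the end of Section~6 to make $\widetilde c_1$ and $\widetilde c_1^{\,\prime}$ independent of $\Lambda$ and $\gamma$.
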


Theorems \ref{th1.5} and \ref{th1.6} are proved in Sections 5 and 6, respectively.

{\it Proof} of Theorem \ref{th1.2}. Suppose the functions $\widehat V^{\,
(s)}_{\nu }$, $s=0,1$, $\nu =1,2,3$, and $A$ satisfy the conditions of Theorem 
\ref{th0.2}, and the constants $c_1\in (0,C^{-1})$ and $c_1^{\, \prime }>0$
are chosen in accordance with Theorem \ref{th1.6} (see the conditions of Theorem 
\ref{th0.2} for the functions $\widehat V^{\, (s)}_3$): $c_1=\widetilde c_1\varepsilon 
^{\, \prime }$, $c_1^{\, \prime }=\widetilde c_1^{\, \prime }\varepsilon ^{\, \prime 
}$, here $\varepsilon ^{\, \prime }\doteq \frac 1{4\sqrt 3}\, C_2$ (the constant 
$C_2\in (0,1)$ (see (1.10)) is determined by the magnetic potential $A$ and by the
choice of the vector $\gamma \in \Lambda \backslash \{ 0\} $ and the measure $\mu 
\in {\mathfrak M}_{\mathfrak h}\, $, ${\mathfrak h}>0$; the choice of the vector 
$\gamma $ depends also on matrix functions $\widehat V^{\, (s)}_1$, $s=0,1$). Denote
$$
\widehat W_1=\widehat V_1-\sum\limits_{j=1}^3A_j\widehat \alpha _j\, .
$$
Given $\delta =\frac 13\, $, from Theorem \ref{th1.4} it follows that there is a
number $\varkappa _0^{\, \prime }>0$ such that for all $\varkappa \geqslant
\varkappa _0^{\, \prime }\, $, all vectors $k\in {\mathbb R}^3$ with $|(k,\gamma )|
=\pi $, and all vector functions $\varphi \in \widetilde H^1(K;{\mathbb C}^M)$
the following inequality is valid:
$$
\| \, \bigl( \, C_2^{-\frac 12}\, \widehat G_-^{\, -\frac 12}\, \widehat P^{\, 
+}_*+\widehat G_+^{\, -\frac 12}\, \widehat P^{\, -}_*\, \bigr) (\widehat 
{\mathcal D}(k+i\varkappa e)+\widehat W_1)\varphi \, \| ^2\geqslant  \eqno (1.16)
$$ $$
\geqslant \frac 23\, (\, C_2\, \| \, \widehat G_-^{\, \frac 12}\, \widehat P^{\, 
-}\varphi \, \| ^2+\| \, \widehat G_+^{\, \frac 12}\, \widehat P^{\, +}\varphi \,
\| ^2)\, .  
$$
Fix a number $\Xi >1$. From Theorems \ref{th1.5} and \ref{th1.6} (applied to the
matrix functions $\widehat V^{\, (s)}_2$ and $\widehat V^{\, (s)}_3\, $, respectively) 
it follows that there exists a number $\varkappa _0^{(\Xi )}\geqslant \max \, \{
\varkappa _0^{\, \prime },\pi |\gamma |^{-1}C_2\} $ such that for any $\varkappa _1
\geqslant \varkappa _0^{(\Xi )}$ there is a number $\varkappa \in [\varkappa _1,
\Xi \varkappa _1]$ such that for all $k\in {\mathbb R}^3$ with $|(k,\gamma )|
=\pi $, and all $\varphi \in \widetilde H^1(K;{\mathbb C}^M)$ we have
$$
\| \, \bigl( \, \widehat G_-^{\, -\frac 12}\, \widehat P^{\, 
+}_*+\widehat G_+^{\, -\frac 12}\, \widehat P^{\, -}_*\, \bigr) \, 
\widehat V_{\nu }\varphi \, \| ^2\leqslant  
(\varepsilon ^{\, \prime })^2\, \bigl( \, \| \, \widehat G_+^{\, \frac 12}\, 
\widehat P^{\, +}\varphi \, \| ^2+\| \, \widehat G_-^{\, \frac 12}\, 
\widehat P^{\, -}\varphi \, \| ^2\, \bigr) \, ,\ \nu =2,3\, ,  \eqno (1.17)
$$
where $\widehat V_{\nu }=\widehat V_{\nu }^{\, (0)}+\widehat V_{\nu }^{\, (1)}\, $.
Now, inequalities (1.16) and (1.17) (and also the relations $C_2\in (0,1)$, $G^{\, +}
_N(k;\varkappa )\geqslant \varkappa $ and $G^{\, -}_N(k;\varkappa )\leqslant G^{\, +}
_N(k;\varkappa )$, $N\in \Lambda ^*$) imply that for every $\varkappa _1\geqslant 
\varkappa _0^{(\Xi )}\geqslant \pi |\gamma |^{-1}C_2$ and for the number $\varkappa 
\in [\varkappa _1,\Xi \varkappa _1]$ chosen as above, the following estimates are
hold for all $k\in {\mathbb R}^3$ with $|(k,\gamma )|=\pi $, and all $\varphi \in 
\widetilde H^1(K;{\mathbb C}^M)$:
$$
C_2^{-1}\, \frac {|\gamma |}{\pi }\ \| \, (\widehat {\mathcal D}(k+i\varkappa 
e)+\widehat W)\varphi \| ^2\geqslant
$$ $$
\geqslant \, \| \, \bigl( \, C_2^{-\frac 12}\, \widehat G_-^{\, -\frac 12}\, 
\widehat P^{\, +}_*+\widehat G_+^{\, -\frac 12}\, \widehat P^{\, -}_*\, \bigr) 
(\widehat {\mathcal D}(k+i\varkappa e)+\widehat W)\varphi \, \| ^2\geqslant
$$ $$
\geqslant \, \frac 12\, \| \, \bigl( \, C_2^{-\frac 12}\, \widehat G_-^{\, 
-\frac 12}\, \widehat P^{\, +}_*+\widehat G_+^{\, -\frac 12}\, \widehat 
P^{\, -}_*\, \bigr) (\widehat {\mathcal D}(k+i\varkappa e)+\widehat W_1)\varphi 
\, \| ^2-
$$ $$
-\, 2\, \sum\limits_{\nu \, =\, 2}^3\| \, \bigl( \, C_2^{-\frac 12}\, \widehat 
G_-^{\, -\frac 12}\, \widehat P^{\, +}_*+\widehat G_+^{\, -\frac 12}\, \widehat 
P^{\, -}_*\, \bigr) \, \widehat V_{\nu }\varphi \, \| ^2\geqslant
$$ $$
\geqslant \frac 13\, \bigl( C_2\, \| \, \widehat G_-^{\, \frac 12}\, \widehat P^{\, -}
\varphi \, \| ^2+\, \| \, \widehat G_+^{\, \frac 12}\, \widehat P^{\, +}
\varphi \, \| ^2 \bigr) -
\, 2C_2^{-1}\sum\limits_{\nu \, =\, 2}^3\| \, \bigl( \, \widehat 
G_-^{\, -\frac 12}\, \widehat P^{\, +}_*+\widehat G_+^{\, -\frac 12}\, \widehat 
P^{\, -}_*\, \bigr) \, \widehat V_{\nu }\varphi \, \| ^2\geqslant
$$ $$
\geqslant \frac 14\, \bigl( C_2\, \| G_-^{\, \frac 12}\, \widehat P^{\, -}
\varphi \, \| ^2+ \| G_+^{\, \frac 12}\, \widehat P^{\, +}\varphi \, \| ^2 
\bigr) \geqslant 
$$ $$
\geqslant \,\frac 14\, \bigl( C_2\, \frac {\pi }{|\gamma |}\, \| \,
\widehat P^{\, -}\varphi \, \| ^2+\varkappa \, \| \, \widehat P^{\, +}\varphi 
\, \| ^2 \bigr) \geqslant C_2^{-1}\, \frac {|\gamma |}{\pi }\, C_1^{\, 2}\, \| 
\, \varphi \, \| ^2\, ,
$$
where $C_1=\frac 12\, \pi |\gamma |^{-1}\, C_2\, $.  \hfill $\square$

\section{Proof of Theorem \ref{th1.3}}

Given the vector $\gamma \in \Lambda \backslash \{ 0\} $ ($e=|\gamma
|^{-1}\gamma $), for all $\varepsilon \in (0,1)$, define the sets
$$
{\mathcal C}(\varepsilon )={\mathcal C}(k,\varkappa ;\varepsilon )=\{ N\in
\Lambda ^*: |\varkappa -|k_{\perp}+2\pi N_{\perp}||<\varepsilon \varkappa \} \, ,
$$
$k\in {\mathbb R}^d$, $\varkappa >0$.

In this Section, Theorem \ref{th1.3} is deduced from Theorem \ref{th2.1} which is 
a weakened variant of Theorem \ref{th1.3}. Theorem \ref{th2.1} is proved in Section 3.

\begin{theorem} \label{th2.1}
Let $d\geqslant 3$. Suppose $\widehat V^{\, (s)}\in L^2(K;{\mathcal S}^{(s)}_M)$, 
$s=0,1$, $A\in L^2(K;{\mathbb R}^d)$ with $A_0=0$, $R>0$, and there are a vector  
$\gamma \in \Lambda \backslash \{ 0\} $ and a measure $\mu \in {\mathfrak M}
_{\mathfrak h}\, $, ${\mathfrak h}>0$, such that for the magnetic potential $A$,
conditions $\bf (A_1)$, $\bf (\widetilde A_1)$, $\bf (A_2)$ are fulfilled, and, 
moreover, $\widehat V^{\, (s)}_N=\widehat 0$, $s=0,1$, and $A_N=0$ for all $N \in 
\Lambda ^*$ with $2\pi |N_{\perp}|>R$. Then for any $\delta \in (0,1)$ there exist 
numbers $a=a(C_2;\delta ,R)\in (0,C_2]$ and $\varkappa _0>2R$ $\mathrm ($the number 
$\varkappa _0$ depends on $\delta $, $|\gamma |$, ${\mathfrak h}$, $\| \mu \| $, $R$ 
and on the constants $C_{\varepsilon }^{\, \prime }(0,\widehat W)$, $C^*$, $\tau $, 
$Q$, $\widetilde \theta $$\mathrm )$ such that for all $\varkappa \geqslant \varkappa 
_0\, $, all vectors $k\in {\mathbb R}^d$ with $|(k,\gamma )|=\pi $, and all vector
functions $\varphi \in \widetilde H^1(K;{\mathbb C}^M)\cap {\mathcal H}({\mathcal C}
(\frac 12))$ the inequality
$$
\| \, \widehat P^{\, +}\, (\widehat {\mathcal D}(k+i\varkappa e)
+\widehat W)\varphi \, \| ^2+ a^{\, 2}\, \| \, \widehat P^{\, -}\, (\widehat 
{\mathcal D}(k+i\varkappa e)+\widehat W)\varphi \| ^2\geqslant  \eqno (2.1)
$$ $$
\geqslant (1-\delta )\, (\, C^2_2\, \| \, \widehat G_-^{\, 1}\, \widehat P^{\, -}
\varphi \, \| ^2+a^{\, 2}\, \| \, \widehat G_+^{\, 1}\, \widehat P^{\, +}\varphi \, 
\| ^2\, )\, ,
$$
holds, where the constant $C_2$ is defined in (1.10).
\end{theorem}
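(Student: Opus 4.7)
The plan is to expand the squared norms on the left-hand side, exploit the spectral-projection structure of $\widehat{\mathcal D}(k+i\varkappa e)$, and absorb the resulting cross-terms using the $\mu$-averaging of $A$ together with condition $(\mathbf{A_2})$. First observe that on $\widetilde H^1(K;\mathbb{C}^M)\cap\mathcal{H}(\mathcal{C}(\tfrac12))$ all Fourier modes satisfy $|k_\perp+2\pi N_\perp|>\varkappa/2>0$, so $\widehat P^{\pm}=\widehat P^{\pm}_*$ on this subspace. Identities (1.5)--(1.8) then imply $\widehat P^+\widehat{\mathcal D}\varphi=\widehat{\mathcal D}\widehat P^-\varphi$ sits in the $\widehat P^+$-subspace with norm $\|\widehat G_-^{1}\widehat P^-\varphi\|$, and dually for the $-$ direction. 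Orthogonality of the $\widehat P^{\pm}$-subspaces eliminates the ``easy'' cross pairings and the expansion collapses to
\[
\|\widehat G_-^{1}\widehat P^-\varphi\|^2+a^2\|\widehat G_+^{1}\widehat P^+\varphi\|^2+\|\widehat P^+\widehat W\varphi\|^2+a^2\|\widehat P^-\widehat W\varphi\|^2+2\operatorname{Re}\bigl(\langle\widehat{\mathcal D}\widehat P^-\varphi,\widehat W\varphi\rangle+a^2\langle\widehat{\mathcal D}\widehat P^+\varphi,\widehat W\varphi\rangle\bigr).
\]
Proving (2.1) then reduces to controlling the remaining genuine cross-term so that, combined with the positive $\|\widehat P^\pm\widehat W\varphi\|^2$ pieces, only the factor $C_2^{2}$ is lost in front of $\|\widehat G_-^{1}\widehat P^-\varphi\|^2$.

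Next I would decompose $\widehat W=\widehat V^{(0)}+\widehat V^{(1)}-\sum_jA_j\widehat\alpha_j$. The Hermitian potentials $\widehat V^{(s)}$ interact with the spinor structure through the clean algebraic rule $\widehat V^{(s)}\widehat\alpha_j=(-1)^s\widehat\alpha_j\widehat V^{(s)}$, which prescribes exactly how they permute $\widehat P^{+}$- and $\widehat P^{-}$-subspaces; combined with the $R$-cutoff and the spectral lower bound $G_N^{-}\geq\pi/|\gamma|$ (available since $|(k,\gamma)|=\pi$) their contribution to the cross-term is $O(\varkappa^{-1})$ for $\varkappa\gg R$. The real work is with the magnetic piece. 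Following the Thomas--Birman--Suslina paradigm adapted with the measure $\mu$, I would introduce $\widetilde A(\widetilde e;\cdot)=\int d\mu(t)\int_0^1 A(\,\cdot-\xi\gamma-t\widetilde e)\,d\xi$, whose sup-norm is at most $\widetilde\theta\pi/|\gamma|$ by $(\mathbf{A_2})$, write $A-\widetilde A(\widetilde e;\cdot)$ as a line-integral $\gamma$-primitive, and conjugate $\widehat{\mathcal D}+\widehat W$ by a unitary phase built from this primitive. The $\mu$-convolution estimate of Lemma~3.1 supplies the factor $e^{\|\mu\|C^*(\mathfrak h)}$; the residual potential $\widetilde A(\widetilde e;\cdot)$ supplies the factor $(1-\widetilde\theta)^{-1}$; and the commutator errors are controlled by (1.9), supplying the factors $(1-\tau)$ and $Q$ in the definition (1.10) of $C_2$.

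The direction-dependence $\widetilde e(k+2\pi N)$ of the projections is handled by (1.4): because $\widehat W_N=\widehat 0$ for $2\pi|N_\perp|>R$, the operator $\widehat W$ shifts transverse momenta by at most $R/(2\pi)$, and for $N,N'$ within that distance of the ``sphere'' $\mathcal{C}(\tfrac12)$ we have $|\widetilde e(k+2\pi N)-\widetilde e(k+2\pi N')|=O(R/\varkappa)$, so $\|\widehat P^{\pm}_{\widetilde e(k+2\pi N)}-\widehat P^{\pm}_{\widetilde e(k+2\pi N')}\|=O(R/\varkappa)$. For $\varkappa\geq\varkappa_0$ sufficiently large (depending on $\delta,|\gamma|,\mathfrak h,\|\mu\|,R,C^*,\tau,Q,\widetilde\theta$ and $C_\varepsilon'(0,\widehat W)$), this error is $\leq\delta$. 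The threshold $\varkappa_0>2R$ is needed precisely to make the perpendicular-momentum shift negligible on the characteristic sphere. A final optimization of the auxiliary parameter $a\in(0,C_2]$ balances the ``bad'' cross-term through the ``good'' $\widehat P^+$-direction, where $G_N^+\gtrsim\varkappa$ is available.

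The main obstacle is the magnetic term's incompatibility with the $N$-dependent spinor projections: unlike $\widehat V^{(s)}$, the multiplication $-\sum_jA_j\widehat\alpha_j$ mixes Fourier modes \emph{and} spinor components, and the projection $\widehat P^{\pm}_{\widetilde e(k+2\pi N)}$ depends non-trivially on $N$. Taming this requires assembling three ingredients simultaneously: the direction comparison (1.4), the $\mu$-gauge transformation (Lemma~3.1) which converts $A$ into the small residual $\widetilde A(\widetilde e;\cdot)$, and the smallness $\widetilde\theta<1$ from $(\mathbf{A_2})$. It is precisely the coupling of these three, together with the relative bound (1.9), that produces the explicit constant $C_2$ in (1.10) and allows the cross-term to be absorbed with the correct $(1-\delta)C_2^{2}$ loss.
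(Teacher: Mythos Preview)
Your sketch identifies the right building blocks (the gauge identity behind Lemma~3.1, condition~$(\mathbf{A_2})$, the relative bound~(1.9), and the direction comparison~(1.4)), but it skips the one structural step without which the argument cannot close: the \emph{directional localization} of $\varphi$ on the sphere $S_{d-2}(e)$.

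The gauge transformation you invoke is \emph{direction-dependent}: the phases $\Phi^{(s,\lambda)}$ are constructed so that conjugation converts the ``2D magnetic Dirac'' piece $\widehat{\mathcal D}^{(\lambda)}$ (built from the $e$--$\widetilde e^{\,\lambda}$ plane) into the operator with the small residual $\widetilde A(\widetilde e^{\,\lambda};\cdot)$. This is identity~(3.8), and it is what produces the constant $C_2$ via Lemma~3.4. But a single gauge works only for one direction $\widetilde e^{\,\lambda}$; applied to Fourier modes whose natural direction $\widetilde e(k+2\pi N)$ is far from $\widetilde e^{\,\lambda}$, the conjugation leaves behind a 2D residual that is \emph{not} controlled by $(\mathbf{A_2})$, and the key lower bound $\|\widehat{\mathcal D}_0^{(\lambda)}\chi_\lambda\|\geq \tfrac{\pi}{|\gamma|}\|\chi_\lambda\|$ in~(3.29) no longer combines with $(\mathbf{A_2})$ to give the factor $(1-\widetilde\theta)$. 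Your use of~(1.4) only says that $\widehat W$ couples nearby directions (shift $\leq R$ in $N_\perp$, so $|\widetilde e-\widetilde e'|=O(R/\varkappa)$); it does \emph{not} say that the directions present in $\varphi\in\mathcal H(\mathcal C(\tfrac12))$ are themselves close to one another---they fill the whole sphere.

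The paper's remedy is to cover $S_{d-2}(e)$ by small patches $\widetilde\Omega_\lambda$ of radius $\widetilde\rho=\widetilde cR/\varkappa$, choose $\widetilde e^{\,\lambda}\in\widetilde\Omega_\lambda$, split $\varphi=\sum_\lambda\varphi^{\widetilde{\mathcal K}_\lambda}+\text{(small remainder)}$ using the averaging property~3b, and run the gauge argument patch-by-patch (Lemma~3.4, then~(3.25) and~(3.33)). Lemma~3.2 converts the patchwise $\widehat G_{\lambda,\pm}^{\,1}\widehat P^{\,\pm}_\lambda$ back to the global $\widehat G_\pm^{\,1}\widehat P^{\,\pm}$ at the cost of $O(R/\varkappa)$ errors, and the sums over $\lambda$ reassemble via disjointness of the $\Omega_\lambda^{\,\prime}$. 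Your expansion of the squared norm is a legitimate starting point, but the cross-term $2\operatorname{Re}\langle\widehat{\mathcal D}\widehat P^{-}\varphi,\widehat P^{+}\sum_jA_j\widehat\alpha_j\widehat P^{-}\varphi\rangle$ cannot be absorbed into $(1-C_2^2)\|\widehat G_-^{\,1}\widehat P^{-}\varphi\|^2$ without first performing this decomposition; a global conjugation simply does not exist. Also, your claim that the $\widehat V^{(s)}$ cross-terms are $O(\varkappa^{-1})$ is too optimistic: in the paper they are handled by~(3.22)--(3.24), which give relative smallness of order $\varepsilon_1$ plus $O(\varkappa^{-1})$ terms, and this is precisely why $a$ must be chosen small (to suppress the $C_4\|\widehat G_-^{\,1}\varphi_\lambda^{-}\|$ contribution from $\widehat{\mathcal D}_\perp^{(\lambda)}$), not merely ``optimized'' at the end.
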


{\bf Remark}. Since $\varkappa _0>2R$, we see that for all $\varkappa \geqslant 
\varkappa _0$ and all vector functions $\varphi \in \widetilde H^1(K;{\mathbb C}^M)
\cap {\mathcal H}({\mathcal C}(\frac 12))$ the equality $(\widehat W\varphi )_N=0$ 
holds for $N \in \Lambda ^*$ with $|k_{\perp}+2\pi N_{\perp}|\leqslant \frac 
{\varkappa}2-R$ (in particular, $(\widehat W\varphi )_N=0$ for $N \in \Lambda ^*$ with 
$k_{\perp}+2\pi N_{\perp}=0$). Hence, in the left hand side of inequality (2.1), the 
orthogonal projections $\widehat P^{\, \pm }$ may be replaced by the orthogonal 
projections $\widehat P^{\, \pm }_*\, $.

\begin{lemma} \label{l2.1}
Under the conditions of Theorem \ref{th1.2}, for any $\varepsilon >0$ there is 
a number $\widetilde \varkappa _0=\widetilde \varkappa _0(\varepsilon )>0$ such that
for all $\varkappa \geqslant \widetilde \varkappa _0\, $, all vectors $k\in {\mathbb 
R}^d$, and all vector functions $\varphi \in \widetilde H^1(K;{\mathbb C}^M)\cap 
{\mathcal H}(\Lambda ^*\backslash {\mathcal C}(\frac 14))$ the estimate
$$
\| \widehat W\varphi \| \leqslant \varepsilon \, \| \widehat G_-^{\, 1}\varphi \| 
$$
holds.
\end{lemma}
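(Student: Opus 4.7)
The plan is a Fourier-side comparison on the subspace $\mathcal H(\Lambda^*\setminus\mathcal C(\tfrac14))$: there the weight $\widehat G_-^{\,1}$ dominates both $\|\varphi\|$ and $\|\widehat{\mathcal D}(k)\varphi\|$ up to absolute constants, so the standard relative Dirac bound for $\widehat W$ translates into the desired inequality as soon as $\varkappa$ is large.

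First I would establish two pointwise Fourier estimates for every $N\in\Lambda^*\setminus\mathcal C(\tfrac14)$. The defining inequality $|\varkappa-|k_\perp+2\pi N_\perp||\geq\varkappa/4$ yields $G_N^-(k;\varkappa)\geq\varkappa/4$, hence by Parseval
\[
\|\varphi\|\leq\frac{4}{\varkappa}\,\|\widehat G_-^{\,1}\varphi\|\qquad\text{for }\varphi\in\widetilde H^1(K;\mathbb C^M)\cap\mathcal H(\Lambda^*\setminus\mathcal C(\tfrac14)).
\]
Separating the sub-cases $|k_\perp+2\pi N_\perp|\leq 3\varkappa/4$ and $|k_\perp+2\pi N_\perp|\geq 5\varkappa/4$ and maximising the pointwise ratio $(|k_\|+2\pi N_\||^2+|k_\perp+2\pi N_\perp|^2)/(|k_\|+2\pi N_\||^2+(\varkappa-|k_\perp+2\pi N_\perp|)^2)$ one finds the extremal configuration $|k_\perp+2\pi N_\perp|=5\varkappa/4$, $k_\|+2\pi N_\|=0$ with value $25$; hence $|k+2\pi N|\leq 5\,G_N^-(k;\varkappa)$ and therefore $\|\widehat{\mathcal D}(k)\varphi\|\leq 5\,\|\widehat G_-^{\,1}\varphi\|$ on the same subspace.

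Under the conditions of Theorem~\ref{th0.2} the function $\widehat W$ lies in $\mathbb L^\Lambda_M(3;a)$ for some $a\in[0,1)$, so applying (1.1) with an auxiliary $\varepsilon'>0$ and inserting both estimates above gives
\[
\|\widehat W\varphi\|\leq\bigl[5(a+\varepsilon')+4C_{\varepsilon'}(a,\widehat W)/\varkappa\bigr]\,\|\widehat G_-^{\,1}\varphi\|.
\]
For the parts $\widehat V_1^{(s)}$, $\widehat V_2^{(s)}$ and $\sum_jA_j\widehat\alpha_j$ of $\widehat W$ -- all belonging to $\mathbb L^\Lambda_M(3;0)$ -- the sharper bound (1.2) permits $\varepsilon'$ arbitrarily small. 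For the residual $\widehat V^{(s)}_3$, whose generic form-bound is only $\leq Cc_1<1$ by~(0.6), I would use the Fourier-decay hypothesis $\beta_{\gamma,\sigma}(\widehat V^{(s)}_{3,q})\leq c_1'$ of item~3 of Theorem~\ref{th0.2}: the bandlimited truncation (1.11) writes $\widehat V^{(s)}_3=\widehat V^{(s)}_{3,\{R\}}+(\widehat V^{(s)}_3-\widehat V^{(s)}_{3,\{R\}})$, with the tail having $|||\cdot|||_{\gamma,M}$-norm vanishing as $R\to\infty$ and controlled via (0.11), while the smooth truncation reduces to the $\mathbb L^\Lambda_M(3;0)$-case. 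Choosing $\varepsilon'$, then $R$, and finally $\widetilde\varkappa_0$ in succession yields the stated inequality.

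\textbf{Main obstacle.} The constant $5$ above is sharp (attained at the extremal Fourier configuration identified in the first step), so a direct application of (1.1) with the generic form bound $Cc_1$ of $\widehat V^{(s)}_3$ would leave an irreducible term of order $5Cc_1$ and prevent taking $\varepsilon$ arbitrarily small. The only technical subtlety is therefore the reduction of $\widehat V^{(s)}_3$ via the $\beta_{\gamma,\sigma}$-decay to an $\mathbb L^\Lambda_M(3;0)$-type perturbation with arbitrarily small prefactor, achieved by the bandlimited mollification (1.11) combined with the $|||\cdot|||_{\gamma,M}$-estimate (0.11).
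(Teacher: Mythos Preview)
Your Fourier-side estimates $G_N^-(k;\varkappa)\geq\varkappa/4$ and $|k+2\pi N|\leq 5\,G_N^-(k;\varkappa)$ for $N\in\Lambda^*\setminus\mathcal C(\tfrac14)$, together with the relative bound~(1.1), are exactly the paper's argument: it applies~(1.1) with $a=0$ and parameter $\varepsilon/10$, sets $\widetilde\varkappa_0=8\varepsilon^{-1}C_{\varepsilon/10}(0,\widehat W)$, and obtains $\|\widehat W\varphi\|\leq\bigl(\tfrac{\varepsilon}{2}+4\varkappa^{-1}C_{\varepsilon/10}(0,\widehat W)\bigr)\|\widehat G_-^{\,1}\varphi\|$ in one line. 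The reference to Theorem~\ref{th1.2} in the lemma's hypothesis is evidently a misprint for Theorem~\ref{th1.3} (equivalently Theorem~\ref{th2.1}): the lemma is invoked only in Section~2 to pass from Theorem~\ref{th2.1} to Theorem~\ref{th1.3}, and both the paper's own proof of the lemma (which writes $C_{\varepsilon/10}(0,\widehat W)$) and the statement of Theorem~\ref{th2.1} (which lists $C'_\varepsilon(0,\widehat W)$ among its data) presuppose $\widehat W\in\mathbb L^\Lambda_M(d;0)$. So your concern about $\widehat V^{(s)}_3$ stems from a typo; no such term is present in the $\widehat W$ to which the lemma is actually applied.

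Your proposed treatment of $\widehat V^{(s)}_3$ would in any case fail. For a Coulomb term $|x-x_0|^{-1}\widehat Q$ (explicitly admitted in the remark after Theorem~\ref{th0.2}) one has $|||\widehat V^{(s)}_3|||_{\gamma,M}=+\infty$: along a line parallel to $\gamma$ at perpendicular distance $r$ from $x_0$, the integral $\int_0^1\|\widehat V^{(s)}_3(x-\xi\gamma)\|^2\,d\xi$ is of order $r^{-1}$, and the essential supremum over $x$ diverges. The mollification~(1.11) smooths only in the perpendicular variables, so the tail $\widehat V^{(s)}_3-\widehat V^{(s)}_{3,\{R\}}$ retains the same pointwise singularity and has $|||\cdot|||_{\gamma,M}=+\infty$ for every $R$; it certainly does not tend to zero. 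The hypothesis $\beta_{\gamma,\sigma}(0;\widehat V^{(s)}_{3,q})<+\infty$ bounds individual Fourier coefficients, not $|||\cdot|||_{\gamma,M}$, which is precisely why the paper handles $\widehat V^{(s)}_3$ through the separate machinery of Theorems~\ref{th1.6} and~\ref{th6.2} rather than through Lemma~\ref{l2.1}.
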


\begin{proof}
Indeed, set $\widetilde \varkappa _0=8{\varepsilon}^{-1}C_{\varepsilon
/10}(0,\widehat W)$ (here $C_{\varepsilon /10}(0,\widehat W)$ is the constant from 
(0.5)). Since $G^-_N(k;\varkappa )\geqslant \frac {\varkappa}4$ and $G^-_N(k;
\varkappa )\geqslant \frac 15\, |k+2\pi N|$ for all $N\in \Lambda ^*\backslash 
{\mathcal C}(\frac 14)$, we obtain (see (1.1))
$$
\| \widehat W\varphi \| \leqslant \frac {\varepsilon}{10}\, \bigl\| \, \sum\limits
_{j\, =\, 1}^d\, \bigl( k_j-i\, \frac {\partial}{\partial x_j}\, \bigr) \, \widehat
\alpha _j\varphi \, \bigr\| +C_{\varepsilon /10}(0,\widehat W)\, \| \varphi \| =
$$ $$
=\, \frac {\varepsilon}{10}\ v^{1/2}(K)\, \bigl( \, \sum\limits_{N\, \in 
\, \Lambda ^*\backslash {\mathcal C}(\frac 14)}|k+2\pi N|^2\, \| \varphi _N 
\| ^2\, \bigr) ^{1/2}+C_{\varepsilon /10}(0,\widehat W)\, \| \varphi \| \leqslant
$$ $$
\leqslant \bigl(\, \frac {\varepsilon}2+\frac 4{\varkappa}\, C_{\varepsilon /10}(0,
\widehat W)\, \bigr) \, \| \widehat G_-^{\, 1}\varphi \| \leqslant \varepsilon \, \| 
\widehat G_-^{\, 1}\varphi \| 
$$
for all $\varkappa \geqslant \widetilde \varkappa _0$ (all $k\in {\mathbb R}^d$ and 
all $\varphi \in \widetilde H^1(K;{\mathbb C}^M)\cap {\mathcal H}(\Lambda ^*
\backslash {\mathcal C}(\frac 14))$).
\end{proof}

{\it Proof} of Theorem \ref{th1.3}. We write $\delta _1=\frac {\delta}3\, $. Let 
$\widetilde a=a(C_2\, ;\delta _1 ,R)\in (0,C_2]$ and $\varkappa _0$ be the numbers
defined in Theorem \ref{th2.1}. Denote
$$
\varepsilon =\frac {\delta _1}{\sqrt{6(1-\delta _1)}}\, \min \, \{ C_2,
\widetilde a\} \, .
$$
Without loss of generality we assume that $\varkappa _0\geqslant 4R$ and $\varkappa 
_0\geqslant \widetilde \varkappa _0(\varepsilon )$, where $\widetilde \varkappa 
_0(\varepsilon )$ is the number from Lemma \ref{l2.1}. In what follows, we also 
assume that $\varkappa \geqslant \varkappa _0$ and the vector $k\in {\mathbb R}^d$
satisfies the equality $|(k,\gamma )|=\pi $. For the vector function $\varphi \in 
\widetilde H^1(K;{\mathbb C}^M)$, the equality 
$$
\| (\widehat P^{\, +}+\widetilde a\widehat P^{\, -})(\widehat {\mathcal D}
(k+i\varkappa e)+\widehat W)\, \varphi \| ^2=  \eqno (2.2)
$$ $$
=\, \| \widehat P^{\, {\mathcal C}(\frac 12)}\, (\widehat P^{\, +}+\widetilde 
a\widehat P^{\, -})(\widehat {\mathcal D}(k+i\varkappa e)+\widehat W)\, \varphi 
\| ^2+
$$ $$
+\, \| \widehat P^{\, \Lambda ^*\, \backslash \, {\mathcal C}(\frac 12)}\, 
(\widehat P^{\, +}+\widetilde a\widehat P^{\, -})(\widehat {\mathcal D}
(k+i\varkappa e)+\widehat W)\, \varphi \| ^2
$$
holds. We shall estimate the summands in the right hand side of (2.2). Since $\varkappa 
\geqslant \varkappa _0\geqslant 4R$ and $\widehat W_N=0$ for all $N\in \Lambda ^*$
with $2\pi |N_{\perp}|>R$, we see that $\widehat W_N=0$ for all $N\in \Lambda ^*$
with $2\pi |N_{\perp}|>\frac {\varkappa}4\, $. The last assertion will be used below
to obtain necessary estimates. We have
$$
\| \widehat P^{\, {\mathcal C}(\frac 12)}\, (\widehat P^{\, +}+\widetilde a
\widehat P^{\, -})(\widehat {\mathcal D}(k+i\varkappa e)+\widehat W)\, \varphi \| 
=  \eqno (2.3)
$$ $$
=\, \| \widehat P^{\, {\mathcal C}(\frac 12)}\, (\widehat P^{\, +}+\widetilde a
\widehat P^{\, -})(\widehat {\mathcal D}(k+i\varkappa e)+\widehat W)(\varphi 
^{\, {\mathcal C}(\frac 12)}+\varphi ^{\, {\mathcal C}(\frac 34)\, \backslash 
\, {\mathcal C}(\frac 12)}) \|
\geqslant
$$ $$
\geqslant \| (\widehat P^{\, +}+\widetilde a\widehat P^{\, -})(\widehat {\mathcal D}
(k+i\varkappa e)+\widehat W)\, \varphi ^{\, {\mathcal C}(\frac 12)}\, \| -
$$ $$
-\, \| \widehat P^{\, \Lambda ^*\, \backslash \, {\mathcal C}(\frac 12)}\, 
(\widehat P^{\, +}+\widetilde a\widehat P^{\, -})(\widehat {\mathcal D}
(k+i\varkappa e)+\widehat W)\, \varphi ^{\, {\mathcal C}(\frac 12)}\| -
$$ $$
-\, \| \widehat P^{\, {\mathcal C}(\frac 12)}\, (\widehat P^{\, +}+\widetilde a
\widehat P^{\, -})(\widehat {\mathcal D}(k+i\varkappa e)+\widehat W)\, \varphi 
^{\, {\mathcal C}(\frac 34)\, \backslash \, {\mathcal C}(\frac 12)}\, \| \, .
$$ 
Using Lemma \ref{l2.1}, we get
$$
\| \widehat P^{\, \Lambda ^*\, \backslash \, {\mathcal C}(\frac 12)}\, 
(\widehat P^{\, +}+\widetilde a\widehat P^{\, -})(\widehat {\mathcal D}
(k+i\varkappa e)+\widehat W)\, \varphi ^{\, {\mathcal C}(\frac 12)}\, 
\| =  \eqno (2.4)
$$ $$
=\, \| \, \widehat P^{\, \Lambda ^*\, \backslash \, {\mathcal C}(\frac 12)}\, 
(\widehat P^{\, +}+\widetilde a\widehat P^{\, -})\widehat W \varphi ^{\, 
{\mathcal C}(\frac 12)\, \backslash \, {\mathcal C}(\frac 14)}\, \| \leqslant
\| \widehat W \varphi ^{\, {\mathcal C}(\frac 12)\, \backslash \, {\mathcal C}(\frac 
14)}\, \| \leqslant \varepsilon \, \| \widehat G_-^{\, 1}\varphi ^{\, {\mathcal 
C}(\frac 12)\, \backslash \, {\mathcal C}(\frac 14)}\, \| \, ,
$$ $$
\| \widehat P^{\, {\mathcal C}(\frac 12)}\, (\widehat P^{\, +}+\widetilde a
\widehat P^{\, -})(\widehat {\mathcal D}(k+i\varkappa e)+\widehat W)\, \varphi 
^{\, {\mathcal C}(\frac 34)\, \backslash \, {\mathcal C}(\frac 12)}\, 
\| =  \eqno (2.5)
$$ $$
=\, \| \widehat P^{\, {\mathcal C}(\frac 12)}\, (\widehat P^{\, +}+\widetilde a
\widehat P^{\, -})\widehat W \varphi ^{\, {\mathcal C}(\frac 34)\, \backslash 
\, {\mathcal C}(\frac 12)}\, \|
\leqslant
\| \widehat W \varphi ^{\, {\mathcal C}(\frac 34)\, \backslash \, {\mathcal C}(\frac 
12)}\, \| \leqslant \varepsilon \, \| \widehat G_-^{\, 1}\varphi 
^{\, {\mathcal C}(\frac 34)\, \backslash \, {\mathcal C}(\frac 12)}\, \| \, .
$$
On the other hand, from Theorem \ref{th2.1} we derive
$$
\| (\widehat P^{\, +}+\widetilde a\widehat P^{\, -})(\widehat {\mathcal D}
(k+i\varkappa e)+\widehat W)\, \varphi ^{\, {\mathcal C}(\frac 12)}\, \| ^2\geqslant
$$ $$
\geqslant (1-\delta _1)\, (C^2_2\, \| \widehat G_-^{\, 1}\widehat P^{\, -}
\varphi ^{\, {\mathcal C}(\frac 12)}\, \| ^2+\widetilde a^{\, 2}\, \| 
\widehat G_+^{\, 1}\widehat P^{\, +}\varphi ^{\, {\mathcal C}(\frac 12)}\, \| ^2 )\, .
$$
Consequently, from (2.3), (2.4), and (2.5) we obtain
$$
\| \widehat P^{\, {\mathcal C}(\frac 12)}\, (\widehat P^{\, +}+\widetilde a
\widehat P^{\, -})(\widehat {\mathcal D}(k+i\varkappa e)+\widehat W)\, \varphi 
\| ^2 \geqslant  \eqno (2.6)
$$ $$
\geqslant (1-\delta _1)\, \|  (\widehat P^{\, +}+\widetilde a\widehat P^{\, -})
(\widehat {\mathcal D}(k+i\varkappa e)+\widehat W)\, \varphi ^{\, 
{\mathcal C}(\frac 12)}\, \| ^2-
$$ $$
-\ \frac {2(1-\delta _1)}{\delta _1}\ \| \widehat P^{\, \Lambda ^*\, \backslash \, 
{\mathcal C}(\frac 12)}\, (\widehat P^{\, +}+\widetilde a\widehat P^{\, -})
(\widehat {\mathcal D}(k+i\varkappa e)+\widehat W)\, \varphi ^{\, {\mathcal C}
(\frac 12)}\, \| ^2 -
$$ $$
-\ \frac {2(1-\delta _1)}{\delta _1}\ \| \widehat P^{\, {\mathcal C}(\frac 12)}\, 
(\widehat P^{\, +}+\widetilde a\widehat P^{\, -})(\widehat {\mathcal D}
(k+i\varkappa e)+\widehat W)\, \varphi ^{\, {\mathcal C}(\frac 34)\, \backslash 
\, {\mathcal C}(\frac 12)}\, \| ^2\geqslant
$$ $$
\geqslant (1-\delta _1)^2\, (C^2_2\, \| \widehat G_-^{\, 1}\widehat P^{\, -}
\varphi ^{\, {\mathcal C}(\frac 12)}\, \| ^2+\widetilde a^{\, 2}\, \| \widehat 
G_+^{\, 1}\widehat P^{\, +}\varphi ^{\, {\mathcal C}(\frac 12)}\, \| ^2 )\, -
$$ $$
-\ \frac {2(1-\delta _1)}{\delta _1}\ \varepsilon ^2\, \bigl( \, \| \widehat 
G_-^{\, 1}\varphi ^{\, {\mathcal C}(\frac 12)\, \backslash \, {\mathcal C}
(\frac 14)}\, \| ^2+\| \widehat G_-^{\, 1}\varphi ^{\, {\mathcal C}(\frac 34)\, 
\backslash \, {\mathcal C}(\frac 12)}\, \| ^2 \bigr) \geqslant
$$ $$
\geqslant (1-2\delta _1)\, (C^2_2\, \| \widehat G_-^{\, 1}\widehat P^{\, -}
\varphi ^{\, {\mathcal C}(\frac 12)}\, \| ^2+\widetilde a^{\, 2}\, \| \widehat 
G_+^{\, 1}\widehat P^{\, +}\varphi ^{\, {\mathcal C}(\frac 12)}\, \| ^2 )\, -
$$ $$
-\ \frac {2(1-\delta _1)}{\delta _1}\ \varepsilon ^2\, \| \widehat G_-^{\, 1}
\varphi ^{\, {\mathcal C}(\frac 34)\, \backslash \, {\mathcal C}(\frac 14)}\,
\| ^2\, .
$$
Let us estimate the second summand in the right hand side of (2.2). By Lemma 
\ref{l2.1}, we have
$$
\, \| \widehat P^{\, \Lambda ^*\, \backslash \, {\mathcal C}(\frac 12)}\, 
(\widehat P^{\, +}+\widetilde a\widehat P^{\, -})(\widehat {\mathcal D}
(k+i\varkappa e)+\widehat W)\, \varphi \| =
$$ $$
=\, \| \widehat P^{\, \Lambda ^*\, \backslash \, {\mathcal C}(\frac 12)}\, 
(\widehat P^{\, +}+\widetilde a\widehat P^{\, -})(\widehat {\mathcal D}
(k+i\varkappa e)+\widehat W)\, (\varphi ^{\, \Lambda ^*\, \backslash \, 
{\mathcal C}(\frac 12)}+\varphi ^{\, {\mathcal C}(\frac 12)\, \backslash \, 
{\mathcal C}(\frac 14)}\, )\| \geqslant
$$ $$
\geqslant \| (\widehat P^{\, +}+\widetilde a\widehat P^{\, -})(\widehat 
{\mathcal D}(k+i\varkappa e)+\widehat W)\, \varphi ^{\, \Lambda ^*\, \backslash 
\, {\mathcal C}(\frac 12)}\, \| \, -
$$ $$
-\, \| \widehat P^{\, {\mathcal C}(\frac 12)\, \backslash \, {\mathcal C}(\frac 14)}\,
(\widehat P^{\, +}+\widetilde a\widehat P^{\, -})(\widehat {\mathcal D}
(k+i\varkappa e)+\widehat W)\, \varphi ^{\, \Lambda ^*\, \backslash \, 
{\mathcal C}(\frac 12)}\, \| \, -
$$ $$
-\, \| \widehat P^{\, \Lambda ^*\, \backslash \, {\mathcal C}(\frac 12)}\, (\widehat 
P^{\, +}+\widetilde a\widehat P^{\, -})(\widehat
{\mathcal D}(k+i\varkappa e)+\widehat W)\, \varphi ^{\, {\mathcal C}(\frac 12)\, 
\backslash \, {\mathcal C}(\frac 14)}\, \| =
$$ $$
=\, \| (\widehat P^{\, +}+\widetilde a\widehat P^{\, -})(\widehat 
{\mathcal D}(k+i\varkappa e)+\widehat W)\, \varphi ^{\, \Lambda ^*\, 
\backslash \, {\mathcal C}(\frac 12)}\, \| \, -
$$ $$
-\, \| \widehat P^{\, {\mathcal C}(\frac 12)\, \backslash \, {\mathcal C}(\frac 14)}\,
(\widehat P^{\, +}+\widetilde a\widehat P^{\, -})\widehat W \varphi ^{\, 
{\mathcal C}(\frac 34)\, \backslash \, {\mathcal C}(\frac 12)}\, \| \, -
$$ $$
-\, \| \widehat P^{\, \Lambda ^*\, \backslash \, {\mathcal C}(\frac 12)}\, 
(\widehat P^{\, +}+\widetilde a\widehat P^{\, -})\widehat W \varphi ^{\, {\mathcal 
C}(\frac 12)\, \backslash \, {\mathcal C}(\frac 14)}\, \| \geqslant
$$ $$
\geqslant \| (\widehat P^{\, +}+\widetilde a\widehat P^{\, -})\widehat 
{\mathcal D}(k+i\varkappa e)\, \varphi ^{\, \Lambda ^*\, \backslash \, {\mathcal 
C}(\frac 12)}\, \| \, -
$$ $$
-\, \| \widehat W\varphi ^{\, \Lambda ^*\, \backslash \, {\mathcal C}(\frac 12)}\, 
\| -\| \widehat W\varphi ^{\, {\mathcal C}(\frac 34)\, \backslash \, {\mathcal 
C}(\frac 12)}\, \| - \| \widehat W\varphi ^{\, {\mathcal C}(\frac 12)\, \backslash 
\, {\mathcal C}(\frac 14)}\, \| \geqslant
$$ $$
\geqslant \| (\widehat P^{\, +}+\widetilde a\widehat P^{\, -})\widehat 
{\mathcal D}(k+i\varkappa e)\, \varphi ^{\, \Lambda ^*\, \backslash \, 
{\mathcal C}(\frac 12)}\, \| \, -
$$ $$
-\, \varepsilon \, \bigl( \, \| \widehat G_-^{\, 1}\varphi ^{\, \Lambda ^*\, 
\backslash \, {\mathcal C}(\frac 12)}\, \| + \| \widehat G_-^{\, 1}\varphi ^{\, 
{\mathcal C}(\frac 34)\, \backslash \, {\mathcal C}(\frac 12)}\, \| + \| 
\widehat G_-^{\, 1}\varphi ^{\, {\mathcal C}(\frac 12)\, \backslash \, 
{\mathcal C}(\frac 14)}\, \| \, \bigr) \, .
$$
Therefore (taking into account inequalities (1.8) and the choice of the number
$C_2\in (0,1)$),
$$
\, \| \widehat P^{\, \Lambda ^*\, \backslash \, {\mathcal C}(\frac 12)}\, 
(\widehat P^{\, +}+\widetilde a\widehat P^{\, -})(\widehat {\mathcal D}
(k+i\varkappa e)+\widehat W)\, \varphi \| ^2 \geqslant
$$ $$
\geqslant (1-2\delta _1)\, \| (\widehat P^{\, +}+\widetilde a\widehat P^{\, -})
\widehat {\mathcal D}(k+i\varkappa e)\, \varphi ^{\, \Lambda ^*\, \backslash \, 
{\mathcal C}(\frac 12)}\, \| ^2 -
$$ $$
-\ \frac {3(1-2\delta _1)}{2\delta _1}\ \varepsilon ^2\, \| \widehat G_-^{\, 1}
\varphi ^{\, \Lambda ^*\, \backslash \, {\mathcal C}(\frac 12)}\, \| ^2 -
$$ $$
-\ \frac {3(1-2\delta _1)}{2\delta _1}\ \varepsilon ^2\, \bigl( \, \| \widehat 
G_-^{\, 1}\varphi ^{\, {\mathcal C}(\frac 34)\, \backslash \, {\mathcal C}
(\frac 12)}\, \| ^2+ \| \widehat G_-^{\, 1}\varphi ^{\, {\mathcal C}(\frac 12)\, 
\backslash \, {\mathcal C}(\frac 14)}\, \| ^2\, \bigr) =
$$ $$
=\, (1-2\delta _1)\, \bigl( \, \| \widehat G_-^{\, 1}\widehat P^{\, -}\varphi 
^{\, \Lambda ^*\, \backslash \, {\mathcal C}(\frac 12)}\, \| ^2 +\widetilde a^{\, 
2}\, \| \widehat G_+^{\, 1}\widehat P^{\, +}\varphi ^{\, \Lambda ^*\, \backslash 
\, {\mathcal C}(\frac 12)}\, \| ^2\, \bigr) \, -
$$ $$
-\ \frac {3(1-2\delta _1)}{2\delta _1}\ \varepsilon ^2\, \bigl( \, \| \widehat 
G_-^{\, 1}\varphi ^{\, \Lambda ^*\, \backslash \, {\mathcal C}(\frac 12)}\, \| 
^2 + \| \widehat G_-^{\, 1}\varphi ^{\, {\mathcal C}(\frac 34)\, \backslash \, 
{\mathcal C}(\frac 14)}\, \| ^2\, \bigr) \geqslant
$$ $$
\geqslant (1-2\delta _1)\, \bigl( \, C_2^2\, \| \widehat G_-^{\, 1}\widehat P^{\, 
-}\varphi ^{\, \Lambda ^*\, \backslash \, {\mathcal C}(\frac 12)}\, \| ^2 +
\widetilde a^{\, 2}\, \| \widehat G_+^{\, 1}\widehat P^{\, +}\varphi ^{\, \Lambda 
^*\, \backslash \, {\mathcal C}(\frac 12)}\, \| ^2\, \bigr) \, -
$$ $$
-\ \frac {2(1-\delta _1)}{\delta _1}\ \varepsilon ^2\, \bigl( \, \| \widehat 
G_-^{\, 1}\varphi ^{\, \Lambda ^*\, \backslash \, {\mathcal C}(\frac 12)}\, \| ^2 + 
\| \widehat G_-^{\, 1}\varphi ^{\, {\mathcal C}(\frac 34)\, \backslash \, 
{\mathcal C}(\frac 14)}\, \| ^2\, \bigr)
\, .
$$
From the last inequality and from (2.2), (2.6), it follows that the estimate
$$
\| (\widehat P^{\, +}+\widetilde a\widehat P^{\, -})(\widehat {\mathcal D}
(k+i\varkappa e)+\widehat W)\, \varphi \| ^2 \geqslant  \eqno (2.7)
$$ $$
\geqslant (1-2\delta _1)\, \bigl( \, C_2^2\, \| \widehat G_-^{\, 1}\widehat 
P^{\, -}\varphi  \| ^2 +\widetilde a^{\, 2}\, \| \widehat G_+^{\, 1}\widehat P^{\, 
+}\varphi \| ^2\, \bigr) \, -\frac {6(1-\delta _1)}{\delta _1}\ \varepsilon ^2\, 
\| \widehat G_-^{\, 1}\varphi \| ^2 
$$
holds. Finally, estimate (2.7) and the estimate
$$
\frac {6(1-\delta _1)}{\delta _1}\ \varepsilon ^2\, \| \widehat G_-^{\, 1}\varphi 
\| ^2=\frac {6(1-\delta _1)}{\delta _1}\ \varepsilon ^2\, \bigl( \, \| \widehat 
G_-^{\, 1}\widehat P^{\, -}\varphi \| ^2+\| \widehat G_-^{\, 1}\widehat P^{\, +}
\varphi \| ^2\, \bigr) \leqslant
$$ $$
\leqslant \delta _1\, \bigl( \, C_2^2\, \| \widehat G_-^{\, 1}\widehat P^{\, -}
\varphi \| ^2+\widetilde a^{\, 2}\, \| \widehat G_-^{\, 1}\widehat P^{\, +}\varphi 
\| ^2\, \bigr) \leqslant \delta _1\, \bigl( \, C_2^2\, \| \widehat G_-^{\, 1}
\widehat P^{\, -}\varphi \| ^2+\widetilde a^{\, 2}\, \| \widehat G_+^{\, 1}
\widehat P^{\, +}\varphi \| ^2\, \bigr)
$$
imply inequality (2.1). This completes the proof.

\section{Proof of Theorem \ref{th2.1}}

First we shall show that it suffices to assume that the magnetic potential $A$ is a
trigonometric polynomial. Indeed, suppose the functions $\widehat V^{\, (s)}$, $s=0,
1$, and $A$ satisfy the conditions of Theorem \ref{th2.1} (in particular, for the 
vector $\gamma \in \Lambda \backslash \{ 0\} $ and the measure $\mu \in {\mathfrak M}
_{\mathfrak h}\, $, ${\mathfrak h}>0$, conditions $\bf (A_1)$, $\bf (\widetilde A_1)$, 
$\bf (A_2)$ are fulfilled, moreover, $A_0=0$ and $\widehat V^{\, (s)}_N=\widehat 0$, 
$s=0,1$, $A_N=0$ for all $N\in \Lambda ^*$ with $2\pi |N_{\perp}|>R$). Let 
${\mathcal G}:{\mathbb R}^d\to {\mathbb R}$ be a nonnegative function from the
Schwartz space ${\mathcal S}({\mathbb R}^d;{\mathbb R})$ such that for the function
$$
{\mathbb R}^d\ni p\to {\mathcal G}\, \widehat {\phantom f}(p)=
\int_{{\mathbb R}^d}{\mathcal G}(x)\, e^{\, i(p,x)}\, dx
$$
we have $\mathcal G\, \widehat {\phantom f}(0)=1$ and $\mathcal 
G\, \widehat {\phantom f}(p)=0$ for $|p|\geqslant 1$ (we may assume that the
function ${\mathcal G}$ coincides with the function ${\mathcal F}:{\mathbb R}^{d-1}
\to {\mathbb R}$ considered in Section 1 if we change ${\mathbb R}^{d-1}$ to 
${\mathbb R}^d$). Let us denote
$$
A^{\{ r\} }(x)=r^d \int_{{\mathbb R}^d}A(x-y)\, {\mathcal G}
(ry)\, dy\, ,\ r>0\, ,\ x\in {\mathbb R}^d\, .
$$
For any $r>0$, the functions $\widehat V^{\, (s)}$ and $A^{\{ r\} }$ (as well as the
functions $\widehat V^{\, (s)}$ and $A$) satisfy all conditions of Theorem \ref{th2.1} 
(with the vector $\gamma $, the measure $\mu $ and the constants $C_{\varepsilon 
}^{\, \prime }(0,\widehat V-\sum_{j=1}^dA_j^{\{ r\} }\widehat \alpha _j)=
C_{\varepsilon }^{\, \prime }(0,\widehat W)$, $C^*$, $\tau $, $Q$, $\widetilde 
\theta $). Furthermore,
$$
(A^{\{ r\} })_N={\mathcal G}\, \widehat {\phantom f}\bigl( \, -\, \frac 
{2\pi N}r\, \bigr) \, A_N\, ,\ N\in \Lambda ^*\, ,
$$
hence, $(A^{\{ r\} })_N=0$ for $2\pi |N|\geqslant r\, $. Now, if we suppose that
Theorem \ref{th2.1} is already proved for the matrix functions $\widehat V^{\, 
(s)}$, $s=0,1$, and the magnetic potentials $A^{\{ r\} }$ (which are trigonometric
polynomials), $r>0$, then for any $\delta \in (0,1)$ there exist numbers $\varkappa 
_0>2R$ and $a\in (0,1)$ (independent of the number $r>0$) such that for all 
$\varkappa \geqslant \varkappa _0\, $, all vectors $k\in {\mathbb R}^d$ with $|(k,
\gamma )|=\pi $, and all vector functions $\varphi \in \widetilde H^1(K;{\mathbb C}
^M)\cap {\mathcal H}({\mathcal C}(\frac 12))$ the inequality
$$
\| (\widehat P^{\, +}+a\widehat P^{\, -})\, (\widehat {\mathcal D}(k+i\varkappa e)
+\widehat V - \sum\limits_{j\, =\, 1}^dA^{\{ r\} }_j\widehat \alpha _j\, )\phi \| ^2
\geqslant  \eqno (3.1)
$$ $$
\geqslant (1-\delta )\, (\, C^2_2\, \| \widehat G_-^{\, 1}\widehat P^{\, -}\varphi \| 
^2+a^{\, 2}\, \| \widehat G_+^{\, 1}\widehat P^{\, +}\varphi \| ^2\, )
$$
holds. On the other hand, $A\in L^2(K;{\mathbb R}^d)$ and $\| A-A^{\{ r\} }\| _{L^2(K;
{\mathbb R}^d)}\to 0$ as $r\to +\infty $. Hence, assuming that the vector function
$\varphi $ is a trigonometric polynomial and taking the limit in
(3.1) as $r\to +\infty $, we obtain inequality (2.1). Since trigonometric
polynomials from the set ${\mathcal H}({\mathcal C}(\frac 12))$ are dense in 
$\widetilde H^1(K;{\mathbb C}^M)\cap {\mathcal H}({\mathcal C}(\frac 12))$ (with
respect to the norm of the space $\widetilde H^1(K;{\mathbb C}^M)$) and the magnetic 
potential $A$ obeys condition $\bf (A_1)$ (therefore, $\sum_{j=1}^dA_j\widehat 
\alpha _j\in {\mathbb L}^{\Lambda }_M(d;0)$), we see that inequality (2.1) holds for 
all $\varkappa \geqslant \varkappa _0\, $, all $k\in {\mathbb R}^d$ with $|(k,\gamma 
)|=\pi $, and all vector functions $\varphi \in \widetilde H^1(K;{\mathbb C}^M)\cap 
{\mathcal H}({\mathcal C}(\frac 12))$. Thus, without loss of generality we shall 
assume that the magnetic potential $A:{\mathbb R}^d\to {\mathbb R}^d$ is a 
trigonometric polynomial.

Since the number ${\mathfrak h}>0$ can be chosen arbitrarily small, we shall also
assume that ${\mathfrak h}\leqslant |\gamma |^{-1}\, $.

In this Section, we use the method suggested in \cite{VUU} (also see \cite{TMF00}). 
Lower bounds for the number $\varkappa _0>2R$ are specified in the course of the 
proof. Adding new lower bounds, we assume that all previous bounds are valid as well. 
Let $\delta \in (0,1)$. We write $\delta _1=\frac {\delta}8\, $, $\delta _2=\frac 
{\delta}4\, $. Let us denote
$$
C_3=1+\tau +\frac {|\gamma |}{\pi}\, Q\, .
$$
Suppose a number $\widetilde \varepsilon \in (0,\frac {\delta}8\, ]$ satisfies the
inequality
$$
\widetilde \varepsilon \, C_3^2<\frac1{400}\ \delta ^2\, (1-\widetilde \varepsilon )
\, C_2^2\, .
$$

If $\varkappa \geqslant \varkappa _0>2R$, $k\in {\mathbb R}^d\, $, $N\in
{\mathcal C}(\, \frac 12\, )$ and $N^{\, \prime}\in \Lambda ^*$ with $2\pi |N^{\, 
\prime}_{\perp}|\leqslant R$, then $|k_{\perp}+2\pi (N_{\perp}+N^{\, \prime}
_{\perp})|>\frac {\varkappa}2-R>0$ and
$$
|\, \widetilde e(k+2\pi (N+N^{\, \prime}))-\widetilde e(k+2\pi N)\, |<\frac {2R}
{\varkappa }\ .  \eqno (3.2)
$$

There are numbers $\widetilde c=\widetilde c(\widetilde \varepsilon )>0$ and 
$\varkappa _0^{\, \prime}>(\widetilde c+4)R$ such that for all $\varkappa \geqslant 
\varkappa _0 \geqslant \varkappa _0^{\, \prime}\, $, there are nonintersecting
(nonempty) open (in $S_{d-2}(e)$) sets $\widetilde \Omega _{\lambda}\subset S_{d-2}
(e)$ and vectors $\widetilde e^{\, \lambda}\in \widetilde \Omega _{\lambda}\, $, 
$\lambda \in {\mathcal L}=\{ 1,\dots ,\lambda _0(d,\widetilde \varepsilon ,R;
\varkappa )\} $, such that

1) $|\, \widetilde e-\widetilde e^{\, \lambda}\, |<\widetilde \rho = \widetilde
c\, \frac R{\varkappa}\ $ for all $\widetilde e\in \widetilde \Omega _{\lambda}\, $;

2) $|\, \widetilde e^{\, \prime}-\widetilde e^{\, \prime \prime }\, |> \frac
{8R}{\varkappa}\ $ for all $\widetilde e^{\, \prime}\in \widetilde \Omega _{{\lambda 
}_1}\, $, $\widetilde e^{\, \prime \prime }\in \widetilde \Omega _{{\lambda }_2}\, $,        
${\lambda }_1\neq {\lambda }_2\, $;

3a) ${\rm {meas}}\, S_{d-2}(e)\backslash \bigcup_{\lambda}\widetilde \Omega 
_{\lambda}<\frac 12\ \widetilde \varepsilon \ {\rm {meas}}\, S_{d-2}(e)$, where
${\it {meas}}$ stands for the (invariant) surface measure on the $(d-2)$-dimensional 
sphere $S_{d-2}(e)$.

The sets $\widetilde \Omega _{\lambda}$ and the vectors $\widetilde e^{\, \lambda}$
may be substituted by the sets $\widehat \Theta \, \widetilde \Omega _{\lambda}$ and 
the vectors $\widehat \Theta \, \widetilde e^{\, \lambda}$ (which also have
properties 1, 2, and 3a), where $\widehat \Theta $ is any orthogonal transformation
of the subspace $\{ x\in {\mathbb R}^d:(x,\gamma )=0\} $. Therefore, choosing an
appropriate orthogonal transformation $\widehat \Theta $ of the subspace $\{ x\in 
{\mathbb R}^d:(x,\gamma )=0\} $ and using the notation $\widetilde \Omega _{\lambda}$ 
and $\widetilde e^{\, \lambda}$ instead of $\widehat \Theta \widetilde \Omega 
_{\lambda}$ and $\widehat \Theta \, \widetilde e^{\, \lambda}$ respectively, we
can also assume that for the vector $k\in {\mathbb R}^d$ with $|(k,\gamma )|=\pi $, 
and for the vector function $\varphi \in \widetilde H^1(K;{\mathbb C}^M)\cap {\mathcal 
H}({\mathcal C}(\frac 12))$, the following property is fulfilled (for both signs $+$
and $-$):
\vskip 0.1cm

3b)
$$
\sum\limits_{N\, \in \, {\mathcal C}\, (\frac 12)\, :\ \widetilde e(k+2\pi N)\, 
\not\in \, \bigcup\limits_{\lambda}\, \widetilde \Omega _{\lambda}}\| \, \widehat 
G^{\pm}_N(k;\varkappa )\, \widehat P^{\pm}_{\widetilde e(k+2\pi N)}\, \varphi 
_N\, \| ^2\leqslant \widetilde \varepsilon \, v^{-1}(K)\, \| \, \widehat G_{\pm}^{\,
1}\widehat P^{\pm}\varphi \, \| ^2\, .
$$ 

The sets $\widetilde \Omega _{\lambda}$ and the vectors $\widetilde e^{\, \lambda}$
depend on $d$, $\gamma $, $\widetilde \varepsilon $, $R$, $\varkappa $, and also on
the chosen vector $k\in {\mathbb R}^d$ with $|(k,\gamma )|=\pi $, and on the vector
function $\varphi \in \widetilde H^1(K;{\mathbb C}^M)\cap {\mathcal H}({\mathcal C}
(\frac 12))$.
 
We introduce the notation $\rho =\widetilde \rho +\frac {2R}{\varkappa}\ $, $\rho 
^{\, \prime}=\widetilde \rho +\frac {4R}{\varkappa}\ $. Let
$$
\Omega _{\lambda}=\{ \widetilde e\in S_{n-2}(e): |\,  \widetilde e-\widetilde e
^{\, \prime }\, |< \frac {2R}{\varkappa}\ \ \text{for\ some} \ \ \widetilde 
e^{\, \prime }\in \widetilde \Omega _{\lambda}\} \, ,
$$ $$
\Omega _{\lambda}^{\, \prime}=\{ \widetilde e\in S_{n-2}(e): |\,  \widetilde e-
\widetilde e^{\, \prime }\, |< \frac {4R}{\varkappa}\ \ \text{for\ some} \
\ \widetilde e^{\, \prime }\in \widetilde \Omega _{\lambda}\} \, ;
$$
$\widetilde \Omega _{\lambda}\subset \Omega _{\lambda}\subset \Omega _{\lambda}^{\, 
\prime}\, $. The sets $\Omega _{\lambda}^{\, \prime}$ do not intersect for
different $\lambda \in {\mathcal L}$. Moreover, $|\, \widetilde e^{\, \prime}-
\widetilde e^{\, \prime \prime }\, |> \frac {4R}{\varkappa}\ $ for all 
$\widetilde e^{\, \prime}\in \Omega _{{\lambda }_1}\, $, $\widetilde e^{\, \prime 
\prime }\in \Omega _{{\lambda }_2}\, $, ${\lambda }_1\neq {\lambda }_2\, $.

We write
$$
\widetilde {\mathcal K}_{\lambda}=\widetilde {\mathcal K}_{\lambda}(k,\varkappa ;
\varphi )=\{ N\in {\mathcal C}(\, \frac 12\, ): \widetilde e(k+2\pi N)\in 
\widetilde \Omega _{\lambda}\} \, ,
$$ $$
{\mathcal K}_{\lambda}={\mathcal K}_{\lambda}(k,\varkappa ;
\varphi )=\{ N\in {\mathcal C}(\, \frac 12\, ): \widetilde e(k+2\pi N)\in 
\Omega _{\lambda}\} \, ,
$$ $$
{\mathcal K}_{\lambda}^{\, \prime}={\mathcal K}_{\lambda}^{\, \prime}(k,\varkappa ;
\varphi )=\{ N\in {\mathcal C}(\, \frac 12\, ): \widetilde e(k+2\pi N)\in 
\Omega _{\lambda}^{\, \prime}\} \, .
$$
Property 3b implies that for the vector function $\varphi \in \widetilde H^1(K;
{\mathbb C}^M)\cap {\mathcal H}({\mathcal C}(\frac 12)$, we have (for each 
sign)
$$
\| \, \widehat G_{\pm}^{\, 1}\widehat P^{\pm}\varphi ^{\, {\mathcal C}(\frac 12)
\, \backslash \, \bigcup\limits_{\lambda} \, \widetilde {\mathcal K}_{\lambda}}\, 
\| ^2\leqslant \widetilde \varepsilon \, \| \, \widehat G_{\pm}^{\, 1}\widehat 
P^{\pm}\varphi \, \| ^2\, .  \eqno (3.3)
$$ 

Without loss of generality we assume that ${\mathcal E}_2=e$. For each index 
$\lambda \in {\mathcal L}$ (and for already chosen $k$, $\varkappa $, and
$\varphi $), we take an orthogonal system of vectors ${\mathcal E}_j^{(\lambda )}
\in S_{d-1}\subset {\mathbb R}^d$, $j=1,\dots ,d$, such that ${\mathcal E}_1
^{(\lambda )}=\widetilde e^{\, \lambda }$, ${\mathcal E}_2^{(\lambda )}={\mathcal E}
_2=e$. We let $x_j^{(\lambda )}=(x,{\mathcal E}_j^{(\lambda )})$ denote the 
coordinates of the vectors $x=\sum_{j=1}^dx_j{\mathcal E}_j\in {\mathbb R}^d$ ($k_j
^{(\lambda )}$, $N_j^{(\lambda )}$, $A_j^{(\lambda )}$, and $\widetilde A_j^{(\lambda 
)}$ are the coordinates of the vectors $k\in {\mathbb R}^d$, $N\in \Lambda ^*$, $A$, 
and $\widetilde A(\widetilde e^{\, \lambda };.)$). Let ${\mathcal E}_j^{(\lambda )}
=\sum_{l=1}^dT^{(\lambda )}_{lj}{\mathcal E}_l\, $, $j=1,\dots ,d$. Then $A_j
^{(\lambda )}=\sum_{l=1}^dT^{(\lambda )}_{lj}A_l$ and $\widetilde 
A_j^{(\lambda )}(.)=\sum_{l=1}^dT^{(\lambda )}_{lj}\widetilde A_l(\widetilde 
e^{\, \lambda };.)$. We introduce the notation $\widehat \alpha _j^{(\lambda )}=
\sum_{l=1}^dT^{(\lambda )}_{lj}\widehat \alpha _l\, $, $j=1,\dots ,d$ (the matrices 
$\widehat \alpha _j^{(\lambda )}\in {\mathcal S}_M$ satisfy the same commutation
relations as the matrices $\widehat \alpha _j$). The following equality is valid:
$$
\widehat {\mathcal D}(k+i\varkappa e)-\sum\limits_{j\, =\, 1}^dA_j\widehat \alpha _j=
\sum\limits_{j\, =\, 1}^d(k_j^{(\lambda )}+i\varkappa e_j^{(\lambda )}-i\, \frac 
{\partial}{\partial x_j^{(\lambda )}}-A_j^{(\lambda )})\, \widehat \alpha _j
^{(\lambda )}\, ,
$$
where $e_j^{(\lambda )}=1$ for $j=2$ and $e_j^{(\lambda )}=0$ for $j\neq 2$.

For the Fourier coefficients $(\widetilde A_j^{(\lambda )})_N$ of the functions 
$\widetilde A_j^{(\lambda )}$, $j=1,\dots ,d$, we have $(\widetilde A_j^{(\lambda 
)})_N=\widehat \mu (2\pi N_1^{(\lambda )})(A_j^{(\lambda )})_N\, $ if $N_2^{(\lambda 
)}=N_2=0$ and $(\widetilde A_j^{(\lambda )})_N=0$ if $N_2^{(\lambda )}\neq 0$. (Here,
$(A_j^{(\lambda )})_N$ are the Fourier coefficients of $A_j^{(\lambda )}$, $N\in
\Lambda ^*$).

For $s=1,2$ and $\lambda \in {\mathcal L}$, let $\Phi ^{(s,\lambda )}:{\mathbb
R}^d\to {\mathbb R}$ be the $\Lambda $-periodic trigonometric polynomials with the 
Fourier coefficients $\Phi ^{(1,\lambda )}_N=\Phi ^{(2,\lambda )}_N=0$ if $N_1
^{(\lambda )}=N_2^{(\lambda )}=0$ and
$$
\Phi ^{(1,\lambda )}_N=\bigl( \, 2\pi i\, ((N_1^{(\lambda )})^2+(N_2^{(\lambda )})^2
)\, \bigr) ^{-1}\, (\, N_1^{(\lambda )}(A_1^{(\lambda )}-\widetilde A_1^{(\lambda )})
_N+N_2^{(\lambda )}(A_2^{(\lambda )}-\widetilde A_2^{(\lambda )})_N\, )\, ,
$$ $$
\Phi ^{(2,\lambda )}_N=-\bigl( \, 2\pi i\, ((N_1^{(\lambda )})^2+(N_2^{(\lambda )})^2
)\, \bigr) ^{-1}\, (\, N_2^{(\lambda )}(A_1^{(\lambda )}-\widetilde A_1^{(\lambda )})
_N-N_1^{(\lambda )}(A_2^{(\lambda )}-\widetilde A_2^{(\lambda )})_N\, )
$$
otherwise.

\begin{lemma} \label{l3.1}
There is a constant $C^*({\mathfrak h})>0$ such that 
$$
\| \Phi ^{(s,\lambda )}\| _{L^{\infty}({\mathbb R}^d)}\leqslant \frac 14\ \| \mu
\| \, C^*({\mathfrak h})\, ,\ \, s=1,2\, ,\ \, \lambda \in {\mathcal L}\, .
$$
\end{lemma}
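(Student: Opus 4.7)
The plan is to reduce the $L^\infty$ bound on $\Phi^{(s,\lambda)}$ to an $L^\infty$ estimate for a solution of a $\bar{\partial}$-type equation on the two-dimensional plane spanned by $\widetilde{e}^{\lambda}$ and $e$. Form the complex-valued function $\Phi := \Phi^{(1,\lambda)} + i\Phi^{(2,\lambda)}$ and set $B := (A_1^{(\lambda)} - \widetilde{A}_1^{(\lambda)}) + i(A_2^{(\lambda)} - \widetilde{A}_2^{(\lambda)})$. A direct computation with the stated Fourier coefficients gives $\Phi_N = B_N/(2\pi i n)$ for $n := N_1^{(\lambda)} + iN_2^{(\lambda)} \neq 0$, which is equivalent to the Cauchy--Riemann-type equation $(\partial_1 + i\partial_2)\Phi = B$ in the $(\widetilde{e}^{\lambda}, e)$-plane. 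Since $|\Phi^{(s,\lambda)}(x)| \leq |\Phi(x)|$ for $s = 1, 2$, it suffices to bound $\|\Phi\|_{L^\infty}$; and as justified at the start of Section~3, I may assume $A$ is a trigonometric polynomial so that all integrals that appear converge absolutely.

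First I would write down an integral representation exposing the cancellation in $B$. Starting from $\widetilde{A}^{(\lambda)}(x) = \int d\mu(t)\int_0^1 A^{(\lambda)}(x - \xi\gamma - t\widetilde{e}^{\lambda})\,d\xi$ and the Cauchy--Pompeiu representation with the fundamental solution $1/(\pi z)$ of $\bar{\partial}$ on $\mathbb{R}^2$, a change of variables in the $\widetilde{A}^{(\lambda)}$ term yields
$$\Phi(x) = \frac{1}{2\pi}\int_{\mathbb{R}} d\mu(t) \int_0^1 d\xi \iint_{\mathbb{R}^2} A^{(\lambda)}(x - y_1\widetilde{e}^{\lambda} - y_2 e)\Bigl[\frac{1}{z} - \frac{1}{z - \tau}\Bigr] dy_1\,dy_2,$$
where $z := y_1 + iy_2$, $\tau := t + i\xi|\gamma|$, and $A^{(\lambda)} := A_1^{(\lambda)} + iA_2^{(\lambda)}$. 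The kernel satisfies $|1/z - 1/(z-\tau)| \leq \min\{1/|z| + 1/|z-\tau|,\ |\tau|/(|z|\,|z-\tau|)\}$, exhibiting local $1/|z|$-singularities at $z \in \{0, \tau\}$ and $|\tau|/|z|^2$ decay at infinity. I would then split $\mathbb{R}^2$ into the two unit discs $\{|z| \leq 1\}$ and $\{|z - \tau| \leq 1\}$ and their complement. On the disc at the origin, condition $\bf(\widetilde{A}_1)$ gives $\iint_{|z|\leq 1}|A^{(\lambda)}(x - y_1\widetilde{e}^{\lambda} - y_2 e)|/|z|\,dy \leq C^*$ directly; the disc centered at $\tau$ is handled by an analogous estimate after a shift using the $\Lambda$-periodicity of $A$ in the $e$-direction.

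The main obstacle will be the tail estimate on the complement of the two discs, where the kernel's $|\tau|/|z|^2$ decay is borderline non-integrable against the bounded $\Lambda$-periodic function $|A^{(\lambda)}|$. Here the spectral-gap property $\widehat{\mu} \equiv 1$ on $(-\mathfrak{h}, \mathfrak{h})$ must be used nontrivially: equivalently, the multiplier $(1 - \widehat{\mu}(2\pi N_1^{(\lambda)}))/(2\pi i N_1^{(\lambda)})$ vanishes for $|2\pi N_1^{(\lambda)}| < \mathfrak{h}$, effectively removing the problematic low-frequency modes from the $N_2 = 0$ part of $B$ and producing a multiplicative factor of order $\mathfrak{h}^{-1}$. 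Combining the contributions from the three regions and pulling out $\int |d\mu(t)| \leq \|\mu\|$ then yields $\|\Phi\|_{L^\infty} \leq \tfrac14\|\mu\|\,C^*(\mathfrak{h})$ with $C^*(\mathfrak{h})$ depending only on $C^*$ and $\mathfrak{h}$, and the claim follows since $|\Phi^{(s,\lambda)}| \leq |\Phi|$.
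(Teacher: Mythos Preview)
Your integral representation is formally plausible, but the assertion that ``all integrals converge absolutely'' once $A$ is a trigonometric polynomial is incorrect, and this is where the argument breaks down. For bounded periodic $A^{(\lambda)}$ and fixed $\tau$, the kernel satisfies $|1/z - 1/(z-\tau)| \sim |\tau|/|z|^2$ as $|z|\to\infty$, so
\[
\iint_{\{|z|>R\}} |A^{(\lambda)}(x-y)|\,\Bigl|\frac1z-\frac1{z-\tau}\Bigr|\,dy_1\,dy_2
\]
diverges logarithmically. Thus the very formula you write for $\Phi(x)$ is not a convergent integral, and the subsequent splitting into two unit discs and a complement is not a valid decomposition of a finite quantity. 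Your remark that the spectral-gap property ``must be used nontrivially'' to fix the tail is correct in spirit but is not implemented: the representation you chose is in terms of $A^{(\lambda)}$, not $B$, so the low-frequency Fourier modes of $A^{(\lambda)}$ (which are not zero) are still present, and it is unclear how the vanishing of $B_N$ for small $|n|$ would enter the $y$-integral as written. There is also no control on $\int |t|\,d|\mu|(t)$, so even a tail bound proportional to $|\tau|$ would not yield $\|\mu\|$.

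The paper's argument avoids exactly this divergence by modifying the Fourier multiplier rather than subtracting translated fundamental solutions. Since $(A-\widetilde A(\widetilde e^{\lambda};\cdot))_N=0$ whenever $|N_1^{(\lambda)}|+|N_2^{(\lambda)}|<\mathfrak h$ (using both $\widehat\mu\equiv 1$ on $(-\mathfrak h,\mathfrak h)$ and $|N_2^{(\lambda)}|\ge|\gamma|^{-1}\ge\mathfrak h$ when $N_2^{(\lambda)}\neq 0$), one may replace the bare multiplier $N_s^{(\lambda)}/|n|^2$ by $\eta(2\pi\mathfrak h^{-1}|n|)\,N_s^{(\lambda)}/|n|^2$ for a smooth cutoff $\eta$ vanishing near the origin. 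The inverse Fourier transform of this modified multiplier is a kernel $G_s(\mathfrak h^{-1};\cdot,\cdot)$ that still obeys the pointwise bound $|G_s|\le |z|^{-1}$ near the origin (so condition $(\widetilde A_1)$ controls the local piece) but now has rapid, in fact Schwartz-class, decay at infinity. The convolution of this kernel with the bounded functions $A_j^{(\lambda)}$ and $\widetilde A_j^{(\lambda)}$ is then absolutely convergent and bounded by a constant depending only on $C^*$ and $\mathfrak h$. If you want to salvage your outline, the fix is precisely this: insert a high-pass cutoff into the multiplier before passing to the spatial side, rather than trying to extract decay from the difference $1/z-1/(z-\tau)$.
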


\begin{proof}
Let $\eta (.)\in C^{\infty}({\mathbb R};{\mathbb R})$, $\eta (\xi )=0$ for $\xi 
\leqslant \pi $, $0\leqslant \eta (\xi )\leqslant 1$ for $\pi <\xi \leqslant 
2\pi $, and $\eta (\xi )=1$ for $\xi >2\pi $. For $\xi _1\, ,\xi _2 \in {\mathbb 
R}$ (and $\xi _1^2+\xi _2^2>0$), we set
$$
G(\xi _1,\xi _2)=\frac {\xi _1}{\xi _1^2+\xi _2^2}\ \int_0^{+\infty}
\frac {\partial \eta (\xi )}{\partial \xi }\, J_0\, (\xi \sqrt {\xi _1^2+\xi _2^2}
\, )\, d\xi \, ,
$$
where $J_0(.)$ is the Bessel function of the first kind of order zero. The estimate
$$
|G(\xi _1,\xi _2)|\leqslant \frac 1{\sqrt {\xi _1^2+\xi _2^2}}  \eqno (3.4)
$$
holds. The choice of the function $\eta (.)$ implies that
$$
|G(\xi _1,\xi _2)|\cdot (\xi _1^2+\xi _2^2)^{\beta}\to 0  \eqno (3.5)
$$
as $\xi _1^2+\xi _2^2\to +\infty $ for all $\beta \geqslant 0$ (whence $G(.,.)\in 
L^q({\mathbb R}^2)$, $q\in [1,2)$). We write
$$
G_1(t;\xi _1,\xi _2)=t^{-1}G(t^{-1}\xi _1\, ,t^{-1}\xi _2)\, ,\ t>0\, ,
$$
and $G_2(t;\xi _1,\xi _2)\doteq G_1(t;\xi _2,\xi _1)$. For an arbitrary continuous
$\Lambda $-periodic function $F:{\mathbb R}^d\to {\mathbb R}$, we set
$$
(F*_{\lambda}G_s(t;.,.))(x)=
\iint_{{\mathbb R}^2}G_s(t;\xi _1,\xi _2)\, F(x-\xi _1\widetilde e
^{\, \lambda }-\xi _2e)\, d\xi _1d\xi _2\, ,\ x\in {\mathbb R}^d\, ,\
s=1,2\, .
$$
In this case, $(F*_{\lambda}G_s(t;.,.))_N=0$ if $N_1^{(\lambda )}=N_2^{(\lambda )}
=0$ and
$$
(F*_{\lambda}G_s(t;.,.))_N=-\, \frac {iN^{(\lambda )}_s}{(N^{(\lambda )}_1)^2+
(N^{(\lambda )}_2)^2}\ \eta \, \bigl( 2\pi t\sqrt {(N^{(\lambda )}_1)^2+(N^{(\lambda 
)}_2)^2}\ \bigr) \, F_N
$$
otherwise (here $N\in \Lambda ^*$). If $N_2^{(\lambda )}=0$ and $|N_1^{(\lambda )}|
\leqslant {\mathfrak h}$, then $(A(.)-\widetilde A(\widetilde e^{\, \lambda };.))
_N=0$. On the other hand, if $N_2^{(\lambda )}\neq 0$, then $|N_2^{(\lambda )}|=
|\gamma |^{-1}|(N,\gamma )|\geqslant |\gamma |^{-1}\geqslant {\mathfrak h}$. 
Therefore,
$$
2\pi \, \Phi ^{(1,\lambda )}=(A_1^{(\lambda )}-\widetilde A_1^{(\lambda )})
*_{\lambda }G_1({\mathfrak h}^{-1};.,.)+(A_2^{(\lambda )}-\widetilde A_2^{(\lambda )})
*_{\lambda }G_2({\mathfrak h}^{-1};.,.)\, ,  \eqno (3.6)
$$ $$
2\pi \, \Phi ^{(2,\lambda )}= -\, (A_1^{(\lambda )}-\widetilde A_1^{(\lambda )})
*_{\lambda }G_2({\mathfrak h}^{-1};.,.)+(A_2^{(\lambda )}-\widetilde A_2^{(\lambda )})
*_{\lambda }G_1({\mathfrak h}^{-1};.,.)\, .  \eqno (3.7)
$$
Estimate (3.4) yields
$$
|G_s({\mathfrak h}^{-1};\xi _1,\xi _2)|\leqslant \frac 1{\sqrt {\xi _1^2+\xi _2^2}}\ \, ,\
s=1,2\, .  
$$
Hence, from (0.12), for all $x\in {\mathbb R}^d$, we obtain
$$
\iint_{\xi ^2_1+\xi ^2_2\, \leqslant \, 1} |G_s({\mathfrak h}^{-1};\xi _1,\xi _2)|
\cdot |A(x-\xi _1\widetilde e^{\, \lambda }-\xi _2e)|\, d\xi _1d\xi _2
\leqslant C^*\, .  
$$
The last inequality and (3.5) imply that there exists a constant $C^*({\mathfrak h})
>0$ (dependent on ${\mathfrak h}$ and the constant $C^*$) such that for all $x\in 
{\mathbb R}^d$ (and all $s\in \{ 1,2\} $, $\lambda \in {\mathcal L}$), we have
$$
\iint_{{\mathbb R}^2} |G_s({\mathfrak h}^{-1};\xi _1,\xi _2)|
\cdot |A(x-\xi _1\widetilde e^{\, \lambda }-\xi _2e)|\, d\xi _1d\xi _2
\leqslant \frac {\pi}8\ C^*({\mathfrak h})\, .
$$
Consequently, we also have
$$
\iint_{{\mathbb R}^2} |G_s({\mathfrak h}^{-1};\xi _1,\xi _2)| \cdot |\widetilde A(
\widetilde e^{\, \lambda };x-\xi _1\widetilde e^{\, \lambda }-\xi _2e)|\, 
d\xi _1d\xi _2\leqslant \frac {\pi}8\ C^*({\mathfrak h})\, .  
$$
Finally, using (3.6), (3.7), and the inequality $\| \mu \| \geqslant \widehat \mu (0)
=1$, for $s=1,2$ and $\lambda \in {\mathcal L}$, we derive the claimed estimate.
\end{proof}

Let us denote 
$$
\widehat {\mathcal D}^{(\lambda )}_0=\bigl( k^{(\lambda )}_1-i\, \frac {\partial}
{\partial x^{(\lambda )}_1}\, \bigr) \, \widehat \alpha ^{(\lambda )}_1+
\bigl( k^{(\lambda )}_2+i\varkappa -i\, \frac {\partial}{\partial x^{(\lambda )}_2}
\, \bigr) \, \widehat \alpha ^{(\lambda )}_2\, ,
$$ $$
\widehat {\mathcal D}^{(\lambda )}=\widehat {\mathcal D}^{(\lambda )}_0-A^{(\lambda )}_1
\widehat \alpha ^{(\lambda )}_1-A^{(\lambda )}_2\widehat \alpha ^{(\lambda )}_2\, ,
$$ $$
\widehat {\mathcal D}^{(\lambda )}_{\perp}=\sum\limits_{j=3}^n\, (k^{(\lambda )}_j-i\, 
\frac {\partial}{\partial x^{(\lambda )}_j}-A^{(\lambda )}_j)\, \widehat \alpha ^
{(\lambda )}_j+\widehat V\, .
$$
We have
$$
\widehat {\mathcal D}(k+i\varkappa e)+\widehat W=\widehat {\mathcal D}^{(\lambda )}+
\widehat {\mathcal D}^{(\lambda )}_{\perp}\, .
$$
We also introduce the notation
$$
\widehat {\mathcal D}^{(\lambda )}_*=\widehat {\mathcal D}^{(\lambda )}_0-\widetilde
A^{(\lambda )}_1\widehat \alpha ^{(\lambda )}_1-\widetilde A^{(\lambda )}_2\widehat 
\alpha ^{(\lambda )}_2\, .
$$
Since
$$
\frac {\partial \Phi ^{\, (1,\lambda )}}{\partial x^{(\lambda )}_1}-
\frac {\partial \Phi ^{\, (2,\lambda )}}{\partial x^{(\lambda )}_2}=A_1^{(\lambda 
)}-\widetilde A_1^{(\lambda )}\, ,\ \
\frac {\partial \Phi ^{\, (1,\lambda )}}{\partial x^{(\lambda )}_2}+
\frac {\partial \Phi ^{\, (2,\lambda )}}{\partial x^{(\lambda )}_1}=A_2^{(\lambda 
)}-\widetilde A_2^{(\lambda )}\, ,
$$
the following identity is true:
$$
\widehat {\mathcal D}^{(\lambda )}=e^{-i\, \widehat \alpha ^{(\lambda )}_1
\widehat \alpha ^{(\lambda )}_2\, \Phi ^{\, (2,\lambda )}}\, e^{\, i\, \Phi ^{\, 
(1,\lambda )}}\, \widehat {\mathcal D}_*^{(\lambda )}\, e^{-i\, \Phi ^{\, (1,
\lambda )}}\, e^{-i\, \widehat \alpha ^{(\lambda )}_1\widehat \alpha 
^{(\lambda )}_2\, \Phi ^{\, (2,\lambda )}}\, .  \eqno (3.8)
$$
We shall use the brief notation
$$
\widehat P^{\, \pm }_{\lambda }\doteq \widehat P^{\, \pm }_{\widetilde e^{\, 
\lambda }}=\frac 12\ (\widehat I\pm i\widehat \alpha ^{(\lambda )}_1\widehat 
\alpha ^{(\lambda )}_2)\, .
$$ 
The relations
$$
\widehat {\mathcal D}^{(\lambda )}_{\perp}\widehat P^{\, \pm }_{\lambda }=
\widehat P^{\, \pm }_{\lambda }\, \widehat {\mathcal D}^{(\lambda )}_{\perp}\, 
,\ \ \widehat {\mathcal D}^{(\lambda )}\widehat P^{\, \pm }_{\lambda }=\widehat 
P^{\, \mp }_{\lambda }\, \widehat {\mathcal D}^{(\lambda )}
$$
hold (these relations are important for the sequel). For all vector functions
$\psi \in \widetilde H^1(K;{\mathbb C}^M)$, we get
$$
\widehat {\mathcal D}^{(\lambda )}_0\widehat P^{\, \pm }_{\lambda }\, 
\psi = 
\sum\limits_{N\, \in \, \Lambda ^*}\bigl( (k^{(\lambda )}_2+2\pi N^{(\lambda )}_2)
+i(\varkappa \pm (k^{(\lambda )}_1+2\pi N^{(\lambda )}_1))\bigr) \, \widehat \alpha 
^{(\lambda )}_2\widehat P^{\, \pm }_{\lambda }\, \psi _N\, e^{\, 2\pi i\, (N,x)}\, .
\eqno (3.9)
$$
Let us introduce the operators (acting on the space $L^2(K;{\mathbb C}^M)$) 
$$
\widehat G_{\lambda ,\, \pm }^{\, 1}\, \psi = 
\sum\limits_{N\, \in \, \Lambda ^*}\bigl( (k^{(\lambda )}_2+2\pi N^{(\lambda )}
_2)^2+(\varkappa \pm |k^{(\lambda )}_1+2\pi N^{(\lambda )}_1|)^2\bigr) ^{1/2}\, 
\psi _N\, e^{\, 2\pi i\, (N,x)}\, ,  \eqno (3.10)
$$
$\psi \in D(\widehat G_{\lambda ,\, \pm }^{\, 1})=\widetilde H^1(K;{\mathbb C}^M)
\subset L^2(K;{\mathbb C}^M)$, $\lambda \in {\mathcal L}$. Since $\varkappa 
\geqslant \varkappa _0>(\widetilde c+4)R$, we see that for all $N\in {\mathcal K}
^{\, \prime}_{\lambda}$ the condition
$$
|\, \widetilde e(k+2\pi N)-\widetilde e^{\, \lambda }\, | <\rho ^{\, \prime}=
\frac {(\widetilde c+4)R}{\varkappa }<1  \eqno (3.11)
$$
is fulfilled. Hence, $k^{(\lambda )}_1+2\pi N^{(\lambda )}_1>0$ and from (3.9),
(3.10), for all vector functions $\psi \in \widetilde H^1(K;{\mathbb C}^M)\cap
{\mathcal H}({\mathcal K}^{\, \prime}_{\lambda})$, we derive
$$
\| \, \widehat {\mathcal D}^{(\lambda )}_0\widehat P^{\, \pm }_{\lambda }\, \psi \, \|
= \| \, \widehat G_{\lambda ,\, \pm }^{\, 1}\widehat P^{\, \pm }_{\lambda }\, \psi 
\, \| \, .  \eqno (3.12)
$$

Denote
$$
b_1=\frac 12\, (\widetilde c+2)R+\frac 34\, \frac {|\gamma |}{\pi}\ (\widetilde 
c+2)^2R^2\, ,\ \
b_2=\frac 12\, (\widetilde c+4)R+\frac 34\, \frac {|\gamma |}{\pi}\ (\widetilde 
c+4)^2R^2\, .
$$

\begin{lemma} \label{l3.2}
Given a number $\varkappa \geqslant \varkappa _0\, $, a vector $k\in {\mathbb R}^d$,
and a vector function $\varphi \in \widetilde H^1(K;{\mathbb C}^M)\cap {\mathcal H}
({\mathcal C}(\frac 12))$, the estimates
$$
\| (\widehat G_{\pm }^{\, 1}\widehat P^{\, \pm }-\widehat G_{\lambda ,\, \pm }^{\, 1}
\widehat P^{\, \pm }_{\lambda })\, \psi \, \| \leqslant
\frac {b_1}{\varkappa }\ \| \, \widehat G_{\pm }^{\, 1}\psi \, \| \, ,\ \, \psi \in 
{\mathcal H}({\mathcal K}_{\lambda})\, ,  \eqno (3.13)
$$ $$
\| (\widehat G_{\pm }^{\, 1}\widehat P^{\, \pm }-\widehat G_{\lambda ,\, \pm }^{\, 1}
\widehat P^{\, \pm }_{\lambda })\, \psi ^{\, \prime}\, \| \leqslant \frac {b_2}
{\varkappa }\ \| \, \widehat G_{\pm }^{\, 1}\psi ^{\, \prime}\, \| \, ,\ \, \psi ^{\, 
\prime}\in {\mathcal H}({\mathcal K}^{\, \prime}_{\lambda})\, ,  \eqno (3.14)
$$
hold for all $\lambda \in {\mathcal L}$.
\end{lemma}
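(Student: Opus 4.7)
The plan is to pass to the Fourier representation and reduce the two operator inequalities to pointwise bounds on each Fourier mode, exploiting the $L^2(K)$-orthogonality of the exponentials $e^{\,2\pi i(N,x)}$. For $\psi \in \mathcal{H}(\mathcal{K}_\lambda)$ (resp. $\psi' \in \mathcal{H}(\mathcal{K}^{\,\prime}_\lambda)$), one has
$$
\| (\widehat G_{\pm}^{\,1}\widehat P^{\,\pm}-\widehat G_{\lambda,\pm}^{\,1}\widehat P^{\,\pm}_{\lambda})\psi \|^{2}=v(K)\sum_{N\in \mathcal K_\lambda}\| \bigl(G^{\pm}_N(k;\varkappa)\,\widehat P^{\,\pm}_{\widetilde e(k+2\pi N)}-G^{1}_{\lambda,\pm,N}\,\widehat P^{\,\pm}_{\lambda}\bigr)\psi_N\|^{2},
$$
where $G^{1}_{\lambda,\pm,N}=((k_2^{(\lambda)}+2\pi N_2^{(\lambda)})^2+(\varkappa \pm |k_1^{(\lambda)}+2\pi N_1^{(\lambda)}|)^2)^{1/2}$. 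So it suffices to show that for every $N\in \mathcal{K}_\lambda$ one has a pointwise bound of the form $\|(\cdots)\psi_N\|\le (b_1/\varkappa)\, G^{\pm}_N(k;\varkappa)\|\psi_N\|$ (and similarly for $\mathcal{K}^{\,\prime}_\lambda$).

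Next I would decompose
$$
G^{\pm}_N\,\widehat P^{\,\pm}_{\widetilde e(k+2\pi N)}-G^{1}_{\lambda,\pm,N}\,\widehat P^{\,\pm}_{\lambda}\;=\;(G^{\pm}_N-G^{1}_{\lambda,\pm,N})\,\widehat P^{\,\pm}_{\lambda}\;+\;G^{\pm}_N\bigl(\widehat P^{\,\pm}_{\widetilde e(k+2\pi N)}-\widehat P^{\,\pm}_{\lambda}\bigr).
$$
The second term is handled by (1.4): $\|\widehat P^{\,\pm}_{\widetilde e(k+2\pi N)}-\widehat P^{\,\pm}_{\lambda}\|=\tfrac12|\widetilde e(k+2\pi N)-\widetilde e^{\,\lambda}|<\rho/2$ for $N\in\mathcal{K}_\lambda$ (resp. $\rho^{\,\prime}/2$ for $N\in\mathcal{K}^{\,\prime}_\lambda$). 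For the first term, I use that $\mathcal E_2=e$ forces $k_2^{(\lambda)}+2\pi N_2^{(\lambda)}=k_\|+2\pi N_\|$, while writing $\widetilde e'=\widetilde e(k+2\pi N)$ and $A=|k_\perp+2\pi N_\perp|$ gives $k_1^{(\lambda)}+2\pi N_1^{(\lambda)}=A(\widetilde e',\widetilde e^{\,\lambda})>0$ (by (3.11)). The inequality $|\sqrt{u^2+v^2}-\sqrt{u^2+w^2}|\le |v-w|$ then yields
$$
|G^{\pm}_N-G^{1}_{\lambda,\pm,N}|\le A\,|1-(\widetilde e',\widetilde e^{\,\lambda})|=\tfrac12 A\,|\widetilde e'-\widetilde e^{\,\lambda}|^{2}\le \tfrac12\cdot\tfrac{3\varkappa}{2}\cdot\rho^{2},
$$
using $N\in\mathcal{C}(\tfrac12)$ to bound $A\le 3\varkappa/2$.

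Combining these two pieces gives
$$
\|(\cdots)\psi_N\|\;\le\;\bigl(\tfrac{\rho}{2}\,G^{\pm}_N+\tfrac{3\varkappa\rho^2}{4}\bigr)\|\psi_N\|,
$$
and this must be compared with $(b_1/\varkappa)G^{\pm}_N\|\psi_N\|$. The first summand contributes $\rho/2=(\widetilde c+2)R/(2\varkappa)$ to $b_1$. For the second summand I split into the $+$ case and $-$ case: for $+$ use $G^{+}_N\ge\varkappa$, which absorbs $3\varkappa\rho^2/4$ into $(3\rho^2/4)G^{+}_N\le (3(\widetilde c+2)^2R^2|\gamma|/(4\pi\varkappa))G^{+}_N$ whenever $\varkappa\ge\pi/|\gamma|$; for $-$ use $G^{-}_N\ge\pi/|\gamma|$ (which is where the hypothesis $|(k,\gamma)|=\pi$ enters) to write $3\varkappa\rho^2/4\le (3\varkappa\rho^2|\gamma|/(4\pi\varkappa))\cdot G^{-}_N\cdot(\varkappa/\varkappa)=(3(\widetilde c+2)^2R^2|\gamma|/(4\pi\varkappa))G^{-}_N$. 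Both cases give exactly the constant $b_1=\tfrac12(\widetilde c+2)R+\tfrac34\,\tfrac{|\gamma|}{\pi}(\widetilde c+2)^2R^2$ stated in the lemma, and summing over $N$ and using $\|\widehat G_{\pm}^{\,1}\psi\|^2=v(K)\sum_N (G^{\pm}_N)^2\|\psi_N\|^2$ yields (3.13). The estimate (3.14) follows by the same argument with $\rho$ replaced by $\rho^{\,\prime}=(\widetilde c+4)R/\varkappa$.

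There is no real obstacle: the content is the elementary inequality $|\sqrt{u^2+v^2}-\sqrt{u^2+w^2}|\le|v-w|$ together with the geometric observation that $1-(\widetilde e',\widetilde e^{\,\lambda})=\tfrac12|\widetilde e'-\widetilde e^{\,\lambda}|^2$. The only care needed is in the bookkeeping of the constant $b_1$, specifically in choosing to bound the $+$ case via $G^{+}_N\ge \varkappa$ and the $-$ case via $G^{-}_N\ge \pi/|\gamma|$ so that both contributions fit under the same $b_1/\varkappa$ factor. This is where the normalization $|(k,\gamma)|=\pi$ is used.
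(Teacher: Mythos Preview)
Your argument is correct and follows essentially the same route as the paper: the same decomposition $G^{\pm}_N\widehat P^{\pm}_{\widetilde e(k+2\pi N)}-G^{1}_{\lambda,\pm,N}\widehat P^{\pm}_{\lambda}=(G^{\pm}_N-G^{1}_{\lambda,\pm,N})\widehat P^{\pm}_{\lambda}+G^{\pm}_N(\widehat P^{\pm}_{\widetilde e(k+2\pi N)}-\widehat P^{\pm}_{\lambda})$, the same use of (1.4) and of $1-(\widetilde e',\widetilde e^{\,\lambda})=\tfrac12|\widetilde e'-\widetilde e^{\,\lambda}|^2$, and the same final absorption via $G^{-}_N\ge\pi/|\gamma|$. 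The only cosmetic difference is that the paper handles both signs at once by using $G^{\pm}_N\ge G^{-}_N\ge\pi/|\gamma|$, whereas you treat the $+$ case separately via $G^{+}_N\ge\varkappa$ and then weaken (introducing the harmless side condition $\varkappa\ge\pi/|\gamma|$); the paper's uniform bound is slightly cleaner and avoids that extra condition.
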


\begin{proof}
For vectors $N\in {\mathcal K}_{\lambda}$, we have
$$
|\, \widetilde e(k+2\pi N)-\widetilde e^{\, \lambda }\, | <\rho =
\frac {(\widetilde c+2)R}{\varkappa }<1\, . 
$$
Hence, from (1.4), we obtain
$$
\| \, (\widehat P^{\, \pm }-\widehat P^{\, \pm }_{\lambda })\, \psi \, \| \leqslant \,
\rho \, \| \psi \| \, ,\ \psi \in {\mathcal H}({\mathcal K}_{\lambda})\, .  
\eqno (3.15)
$$
The estimate 
$$
\| \, (\widehat P^{\, \pm }-\widehat P^{\, \pm }_{\lambda })\, \psi ^{\, \prime}\, \| 
\leqslant \, \rho ^{\, \prime}\, \| \psi ^{\, \prime}\| \, ,\ \psi ^{\, \prime}
\in {\mathcal H}({\mathcal K}^{\, \prime}_{\lambda})  \eqno (3.16)
$$
follows from (3.11) and (1.4).

If $\widetilde e\in \Omega _{\lambda}\, $, then $|\widetilde e-\widetilde e
^{\, \lambda }|<\rho $ and $1-(\widetilde e,\widetilde e^{\, \lambda })=\frac 12\, 
|\widetilde e-\widetilde e^{\, \lambda }|<\frac 12\, \rho ^2$. Therefore, for all 
vectors $N\in {\mathcal K}_{\lambda}$ (for which $\widetilde e(k+2\pi N)\in \Omega 
_{\lambda}$), we get
$$
\bigl| \, |\varkappa \pm |k_{\perp}+2\pi N_{\perp}|\, |-|\varkappa \pm (k_1
+2\pi N_1)|\, \bigr| \leqslant |k_{\perp}+2\pi N_{\perp}|-
(k_1+2\pi N_1)=
$$ $$
=\, (1-(\widetilde e(k+2\pi N),\widetilde e^{\, \lambda }))\, |k_{\perp}+2\pi 
N_{\perp}|< \frac 12\ \rho ^2\, |k_{\perp}+2\pi N_{\perp}|<\frac 34\, \rho ^2
\varkappa \, .
$$
Whence
$$
\| \, (\widehat G_{\pm }^{\, 1}-\widehat G_{\lambda ,\, \pm }^{\, 1})\, \psi \| 
\leqslant \frac 34\, \rho ^2\varkappa \, \| \psi \| = \frac 34\, (\widetilde c+2)
^2\, \frac {R^2}{\varkappa }\ \| \psi \| \, ,\ \  
\psi \in \widetilde H^1(K;{\mathbb C}^M)\cap {\mathcal H}({\mathcal K}_{\lambda})
\, .  \eqno (3.17)
$$
Analogously, it follows that
$$
\| \, (\widehat G_{\pm }^{\, 1}-\widehat G_{\lambda ,\, \pm }^{\, 1})\, \psi ^{\, 
\prime}\| \leqslant \frac 34\, (\widetilde c+4)^2\, \frac {R^2}{\varkappa }\ \| \psi 
^{\, \prime}\| \, ,\ \psi ^{\, \prime}\in \widetilde H^1(K;{\mathbb C}^M)\cap 
{\mathcal H}({\mathcal K}^{\, \prime }_{\lambda})\, .  \eqno (3.18)
$$
Now, from (3.15) and (3.17), for the vector function $\psi \in \widetilde H^1(K;
{\mathbb C}^M)\cap {\mathcal H}({\mathcal K}_{\lambda})$), we obtain the inequality
(3.13):
$$
\| (\widehat G_{\pm }^{\, 1}\widehat P^{\, \pm }-\widehat G_{\lambda ,\, \pm }^{\, 1}
\widehat P^{\, \pm }_{\lambda })\, \psi \, \| \leqslant
\| (\widehat P^{\, \pm }-\widehat P^{\, \pm }_{\lambda })\, \widehat 
G_{\pm }^{\, 1}\, \psi \, \| +\| (\widehat G_{\pm }^{\, 1}-\widehat G_{\lambda ,\, 
\pm }^{\, 1})\widehat P^{\, \pm }_{\lambda }\, \psi \, \| \leqslant
$$ $$
\leqslant \frac 12\, \rho \, \| \, \widehat G_{\pm }^{\, 1}\, \psi \, \| +\frac 34\, 
\rho ^2\varkappa \, \| \, \widehat P^{\, \pm }_{\lambda }\, \psi \, \| \leqslant
$$ $$
\leqslant \frac 12\ \frac {(\widetilde c+2)R}{\varkappa}\ \| \, \widehat G_{\pm }
^{\, 1}\, \psi \, \| +\frac 34\ \frac {(\widetilde c+2)^2R^2}{\varkappa}\ \| \, \psi 
\, \| \leqslant \frac {b_1}{\varkappa}\ \| \, \widehat G_{\pm }^{\, 1}\, \psi 
\, \| \, .
$$
Inequality (3.14) is proved similarly (using estimates (3.16) and (3.18)).
\end{proof}

In the sequel, we shall use the notation
$$
\widehat P^{\, \pm }_{\lambda }\widehat P^{\, {\mathcal K}_{\lambda}}\, \varphi =
\widehat P^{\, \pm }_{\lambda }\, \varphi ^{\, {\mathcal K}_{\lambda}}=\varphi
^{\, \pm}_{\lambda}\, ,\ \, \widehat P^{\, \pm }_{\lambda }\widehat P^{\, \widetilde 
{\mathcal K}_{\lambda}}\, \varphi =\widehat P^{\, \pm }_{\lambda }\, \varphi ^{\, 
\widetilde {\mathcal K}_{\lambda}}=\widetilde \varphi ^{\, \pm}_{\lambda}\, .
$$

On the space $L^2(K;{\mathbb C}^M)$, we define the orthogonal projections $\widehat 
P^{\, \{ \Omega _{\lambda}\} }$, $\lambda \in {\mathcal L}$, that take a vector
function $\psi \in L^2(K;{\mathbb C}^M)$ to the vector function $\widehat P^{\, \{ 
\Omega _{\lambda}\} }\, \psi $ for which $(\widehat P^{\, \{ \Omega _{\lambda}\} }\, 
\psi )_N=\psi _N\, $ if $\widetilde e(k+2\pi N)\in \Omega _{\lambda}\, $ and 
$(\widehat P^{\, \{ \Omega _{\lambda}\} }\, \psi )_N=0$ if either the vector 
$\widetilde e(k+2\pi N)$ is not defined (for $k_{\perp}+2\pi N_{\perp}=0$) or
$\widetilde e(k+2\pi N)\not\in \Omega _{\lambda}\, $. By analogy with the orthogonal 
projections $\widehat P^{\, \{ \Omega _{\lambda}\} }$, define the orthogonal 
projections $\widehat P^{\, \{ \widetilde \Omega _{\lambda}\} }$ (replacing the
sets $\Omega _{\lambda}\subset S_{n-2}(e)$ by the sets $\widetilde \Omega 
_{\lambda}$).

For all vector functions $\psi \in L^2(K;{\mathbb C}^M)$, we have
$$
\| \, (\widehat P^{\, \pm }-\widehat P^{\, \pm }_{\lambda})\, \widehat P^{\, \{ \Omega 
_{\lambda}\} }\, \psi \, \| \leqslant \frac 12\, \rho \ \| \, \widehat P^{\, \{ 
\Omega _{\lambda}\} }\, \psi \, \| \, .  \eqno (3.19)
$$

Since $\varkappa \geqslant \varkappa _0>4R$ (and $\varphi \in \widetilde H^1(K;
{\mathbb C}^M)\cap {\mathcal H}({\mathcal C}(\frac 12))$), we see that in the case, 
where $(\widehat {\mathcal D}(k+i\varkappa e)+\widehat W)_N\neq 0$, $N\in \Lambda 
^*$, the estimate
$$
|\, k_{\perp}+2\pi N_{\perp}|>\frac {\varkappa}2-R>\frac {\varkappa}4
$$
holds.

\begin{lemma} \label{l3.3}
If $N\in {\mathcal K}_{\lambda}\, $, then
$$
\bigl| \, \sum_{j=3}^d\, (k_j^{\, (\lambda )}+2\pi N_j^{\, (\lambda )})\,
{\mathcal E}_j^{\, (\lambda )}\, \bigr| <\frac 32\, (\widetilde c+2)R\, .
\eqno (3.20)
$$
\end{lemma}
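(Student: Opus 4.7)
The plan is purely geometric. Since ${\mathcal E}_1^{(\lambda)}=\widetilde e^{\,\lambda}$ and ${\mathcal E}_2^{(\lambda)}=e$, the vector $w\doteq \sum_{j=3}^{d}(k_j^{(\lambda)}+2\pi N_j^{(\lambda)}){\mathcal E}_j^{(\lambda)}$ is exactly the component of $k+2\pi N$ in the $(d-2)$-dimensional subspace orthogonal to both $e$ and $\widetilde e^{\,\lambda}$. Since that subspace lies in $\{x\in{\mathbb R}^d:(x,e)=0\}$, I would first write
$$w = (k+2\pi N)_{\perp} - ((k+2\pi N)_{\perp},\widetilde e^{\,\lambda})\,\widetilde e^{\,\lambda},$$
so the Pythagorean identity, applied with $v\doteq (k+2\pi N)_{\perp}=|v|\,\widetilde e(k+2\pi N)$, yields
$$|w|^{2}=|v|^{2}-(v,\widetilde e^{\,\lambda})^{2}=|v|^{2}\bigl(1-(\widetilde e(k+2\pi N),\widetilde e^{\,\lambda})^{2}\bigr).$$

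To bound the second factor I would use the elementary inequality $1-(u,u')^{2}\leqslant |u-u'|^{2}$ valid for unit vectors $u,u'$. For $N\in{\mathcal K}_{\lambda}$ one has $\widetilde e(k+2\pi N)\in \Omega_{\lambda}$, so by definition of $\Omega_{\lambda}$ there exists $\widetilde e^{\,\prime}\in\widetilde\Omega_{\lambda}$ with $|\widetilde e(k+2\pi N)-\widetilde e^{\,\prime}|<2R/\varkappa$, while property~1 of the sets $\widetilde\Omega_{\lambda}$ gives $|\widetilde e^{\,\prime}-\widetilde e^{\,\lambda}|<\widetilde\rho=\widetilde cR/\varkappa$. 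The triangle inequality then produces $|\widetilde e(k+2\pi N)-\widetilde e^{\,\lambda}|<\rho=(\widetilde c+2)R/\varkappa$. To bound $|v|$, I would use that ${\mathcal K}_{\lambda}\subset{\mathcal C}(\tfrac12)$, hence $|v|=|k_{\perp}+2\pi N_{\perp}|<\tfrac{3}{2}\varkappa$.

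Multiplying the two estimates,
$$|w|<|v|\cdot\rho<\frac{3\varkappa}{2}\cdot\frac{(\widetilde c+2)R}{\varkappa}=\frac{3}{2}\,(\widetilde c+2)R,$$
which is exactly $(3.20)$. No obstacle arises: the lemma is a clean consequence of the orthogonal decomposition and of the geometric definitions of ${\mathcal C}(\tfrac12)$, $\Omega_{\lambda}$, and $\widetilde\Omega_{\lambda}$, and presumably serves later to control the transverse operator $\widehat{{\mathcal D}}_{\perp}^{(\lambda)}$ restricted to ${\mathcal H}({\mathcal K}_{\lambda})$ by a constant independent of $\varkappa$.
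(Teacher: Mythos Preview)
Your proof is correct and follows essentially the same route as the paper. The paper writes the key geometric step directly as the norm inequality $|v-(v,\widetilde e^{\,\lambda})\widetilde e^{\,\lambda}|\leqslant |\widetilde e(k+2\pi N)-\widetilde e^{\,\lambda}|\cdot|v|$, which is exactly your Pythagorean identity combined with the unit-vector inequality $1-(u,u')^2\leqslant |u-u'|^2$; it then uses the same two bounds $|\widetilde e(k+2\pi N)-\widetilde e^{\,\lambda}|<\rho$ and $|k_{\perp}+2\pi N_{\perp}|<\tfrac32\varkappa$ to conclude.
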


\begin{proof}
Indeed,
$$
\bigl| \, \sum_{j=3}^d\, (k_j^{\, (\lambda )}+2\pi N_j^{\, (\lambda )})\,
{\mathcal E}_j^{\, (\lambda )}\, \bigr| =|\, k_{\perp}+2\pi N_{\perp}-(k_{\perp}
+2\pi N_{\perp}\, ,\widetilde e^{\, \lambda })\, \widetilde e^{\, \lambda }\, |
\leqslant  \eqno (3.21)
$$ $$
\leqslant |\, \widetilde e(k+2\pi N)-\widetilde e^{\, \lambda }\, | \cdot
|\, k_{\perp}+2\pi N_{\perp}|<\rho \, |\, k_{\perp}+2\pi N_{\perp}|=\frac
{(\widetilde c+2)R}{\varkappa}\ |\, k_{\perp}+2\pi N_{\perp}|\, .
$$
At the same time, the definition of the set ${\mathcal C}(\frac 12)$ implies that
$|\, k_{\perp}+2\pi N_{\perp}|<\frac 32\, \varkappa $. Therefore inequality (3.20)
follows from (3.21).
\end{proof}

From estimate (1.2) (under the change $k\to k-\varkappa \widetilde e^{\, \lambda}$) 
and Lemma \ref{l3.3}, for all $\varepsilon >0$, we obtain
$$
\| \, \widehat W \varphi ^{\, \pm }_{\lambda}\, \| \leqslant \| \, \widehat V 
\varphi ^{\, \pm }_{\lambda}\, \| + \| \, |A|\, \varphi ^{\, \pm }_{\lambda}\, \|
\leqslant  \eqno (3.22)
$$ $$
\leqslant 
\varepsilon \, \bigl\| \, \sum\limits_{j\, =\, 1}^d\, (k_j-\varkappa \widetilde e^{\, 
\lambda}_j-i\, \frac {\partial}{\partial x_j}\, )\, \widehat \alpha _j\varphi ^{\, 
\pm }_{\lambda}\, \bigr\| +C_{\varepsilon }^{\, \prime }(0,\widehat W)\, \| \, 
\varphi ^{\, \pm }_{\lambda}\, \| \leqslant  
$$ $$
\leqslant \varepsilon \, \bigl\| \, ((k_1^{\, (\lambda )}-\varkappa -i\, \frac 
{\partial}{\partial x_1^{(\lambda )}}\, )\, \widehat \alpha _1^{(\lambda )}+
(k_2^{\, (\lambda )}-i\, \frac {\partial}{\partial x_2^{(\lambda )}}\, )\, \widehat 
\alpha _2^{(\lambda )}\, )\, \varphi ^{\, \pm }_{\lambda}\, \bigr\| \, +
$$ $$
+\, \varepsilon \, \bigl\| \, \sum_{j\, =\, 3}^d\, (k_j^{\, (\lambda )}-i\, \frac 
{\partial}{\partial x_j^{\, (\lambda )}}\, )\, \widehat \alpha _j^{\, (\lambda )}\varphi 
^{\, \pm }_{\lambda}\, \bigr\| +C_{\varepsilon }^{\, \prime }(0,\widehat W)\, \| 
\, \varphi ^{\, \pm }_{\lambda}\, \| \leqslant 
$$ $$
\leqslant \varepsilon \, \| \, \widehat G^{\, 1}_{\lambda ,\, -}\varphi ^{\, \pm }
_{\lambda}\, \| +\bigl( \, \frac {3\varepsilon }2\, (\widetilde c+2)R+
C_{\varepsilon }^{\, \prime }(0,\widehat W)\bigr) \, \| \, \varphi ^{\, \pm }
_{\lambda}\, \| \, .
$$
Whence
$$
\| \, \widehat {\mathcal D}^{\, (\lambda )}_{\perp}\, \varphi ^{\, \pm }_{\lambda}\, \|
\leqslant \| \, \widehat W \varphi ^{\, \pm }_{\lambda}\, \| +\bigl\| \, \sum_{j\, 
=\, 3}^d\, (k_j^{\, (\lambda )}-i\, \frac {\partial}{\partial x_j^{(\lambda )}}\, 
)\, \widehat \alpha _j^{(\lambda )}\varphi ^{\, \pm }_{\lambda}\, \bigr\|
\leqslant  \eqno (3.23)
$$ $$
\leqslant \varepsilon \, \| \, \widehat G^{\, 1}_{\lambda ,\, -}\varphi ^{\, \pm }
_{\lambda}\, \| +\bigl( \, \frac {3(\varepsilon +1)}2\, (\widetilde c+2)R+
C_{\varepsilon }^{\, \prime }(0,\widehat W) \bigr) \, \| \, \varphi ^{\, \pm }
_{\lambda}\, \| \, .
$$

Given $N\in {\mathcal K}_{\lambda}^{\, \prime }\, $, the following inequality is
proved in the same way as inequality (3.20) (see the proof of Lemma \ref{l3.3}): 
$$
\bigl| \, \sum_{j\, =\, 3}^d\, (k_j^{\, (\lambda )}+2\pi N_j^{\, (\lambda )})\,
{\mathcal E}_j^{\, (\lambda )}\, \bigr| <\frac 32\, (\widetilde c+4)R\, .
$$
Therefore, by analogy with inequality (3.23), for all $\varepsilon >0$ and all
vector functions $\psi \in \widetilde H^1(K;{\mathbb C}^M)\cap {\mathcal H}
({\mathcal K}^{\, \prime }_{\lambda})$, we also derive
$$
\| \, \widehat {\mathcal D}^{\, (\lambda )}_{\perp}\, \psi \, \| \leqslant 
\varepsilon \, \| \, \widehat G^{\, 1}_{\lambda ,\, -}\psi \, \| +C^{\, \sharp }_3\,
(\varepsilon )\, \| \psi \| \, ,  \eqno (3.24)
$$
where $C^{\, \sharp }_3\, (\varepsilon )=\frac {3(\varepsilon +1)}2\, (\widetilde 
c+4)R+C_{\varepsilon }^{\, \prime }(0,\widehat W)$.

We write 
$$
C^{\, \prime}_3=3(\widetilde c+2)R+C_1^{\, \prime }(0,\widehat W)\, ,\ \, C_4=1+\frac {|\gamma |}{\pi}\
C^{\, \prime}_3\, .
$$
Choose a number $a\in (0,1)$ such that
$$
a^2\, \max \, \{ \, C_4^2\, ,\frac 94\, \frac {|\gamma |^2}{\pi ^2}\ b_2^2\, \}
< \frac {\delta \delta _1}{50}\ (1-\widetilde \varepsilon )\, C_2^2\, .
$$
Suppose a number $\varepsilon _1\in (0,\frac 12\, ]$ satisfies the inequalities
$$
(2\varepsilon _1)^2<\frac {\delta \delta _1}{40}\, (1-\widetilde \varepsilon )\, a^2
\, ,\ \
\frac 92\, \frac {|\gamma |^2}{\pi ^2}\ b_2^2\, (2\varepsilon _1)^2<
\frac {\delta \delta _1}{50}\, (1-\widetilde \varepsilon )\, C_2^2\, .
$$
Let us denote
$$
C^{\, \prime }_3\, (\varepsilon _1)=\frac {3\varepsilon _1}2\, (\widetilde c+2)R
+C_{\varepsilon _1}^{\, \prime }(0,\widehat W)\, .
$$

From (3.16), (3.22) (for $\varepsilon =\varepsilon _1$), and (3.23) (for 
$\varepsilon =1$), it follows that
$$
\| \, \widehat P^{\, \{ \widetilde \Omega _{\lambda}\} }\, \widehat P^{\, -}_{\lambda}\,
(\widehat {\mathcal D}(k+i\varkappa e)+\widehat W)\, \varphi ^{\, {\mathcal K}_{\lambda}}
\, \| =  \eqno (3.25)
$$ $$
=\, \| \, \widehat P^{\, \{ \widetilde \Omega _{\lambda}\} }\, \widehat P^{\, -}_{\lambda}\,
(\widehat {\mathcal D}^{\, (\lambda )}+\widehat {\mathcal D}^{\, (\lambda )}_{\perp})\, 
\varphi ^{\, {\mathcal K}_{\lambda}}\, \| \geqslant
\| \, \widehat P^{\, \{ \widetilde \Omega _{\lambda}\} }\, \widehat {\mathcal 
D}^{\, (\lambda )}\varphi ^{\, +}_{\lambda}\, \| - \| \, \widehat P^{\, \{ \widetilde 
\Omega _{\lambda}\} }\, \widehat {\mathcal D}^{\, (\lambda )}_{\perp}\varphi ^{\, -}
_{\lambda}\, \| \geqslant
$$ $$
\geqslant \| \, \widehat P^{\, \{ \widetilde \Omega _{\lambda}\} }\, \widehat {\mathcal 
D}^{\, (\lambda )}_0\, \varphi ^{\, +}_{\lambda}\, \| - \| \, (A^{(\lambda )}_1
\widehat \alpha ^{(\lambda )}_1+A^{(\lambda )}_2\widehat \alpha ^{(\lambda )}_2)\, 
\varphi ^{\, +}_{\lambda}\, \| - \| \, \widehat {\mathcal D}^{\, (\lambda )}_{\perp}
\varphi ^{\, -}_{\lambda}\, \| \geqslant
$$ $$
\geqslant  \| \, \widehat {\mathcal D}^{\, (\lambda )}_0\, \widetilde \varphi ^{\, +}
_{\lambda}\, \| - \| \, |A|\, \varphi ^{\, +}_{\lambda}\, \| - \| \, \widehat 
{\mathcal D}^{\, (\lambda )}_{\perp}\varphi ^{\, -}_{\lambda}\, \| \geqslant
$$ $$
\geqslant \| \, \widehat G^{\, 1}_{\lambda ,\, +}\, \widetilde \varphi ^{\, +}
_{\lambda}\, \| -\bigl( \, \varepsilon _1\, \| \, \widehat G^{\, 1}_{\lambda ,\, -}
\, \varphi ^{\, +}_{\lambda}\, \| +C_3^{\, \prime}\, (\varepsilon _1)\, \| \, 
\varphi ^{\, +}_{\lambda}\, \| \, \bigr) - \bigl( \, \| \, \widehat G^{\, 1}_{\lambda
,\, -}\, \varphi ^{\, -}_{\lambda}\, \| +C_3^{\, \prime}\, \| \, \varphi ^{\, -} 
_{\lambda}\, \| \, \bigr) \, .
$$

\begin{lemma} \label{l3.4}
For all $\lambda \in {\mathcal L}$,
$$
\| \, \widehat {\mathcal D}^{\, (\lambda )}\, \widetilde \varphi ^{\, -}
_{\lambda}\, \| \geqslant C_2\, \| \, \widehat G^{\, 1}_{\lambda ,\, -}\, \widetilde 
\varphi ^{\, -}_{\lambda}\, \| \, .
$$
\end{lemma}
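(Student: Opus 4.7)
The plan is to use the gauge identity (3.8) to convert the required lower bound for $\widehat{\mathcal{D}}^{(\lambda)}$ (which carries the full magnetic potential $A^{(\lambda)}$) into one for $\widehat{\mathcal{D}}_{*}^{(\lambda)}$ (which carries only the $L^{\infty}$-small averaged potential $\widetilde A^{(\lambda)}$, by $\mathbf{(A_2)}$). Since $\widetilde\varphi^{-}_{\lambda}\in\mathrm{ran}(\widehat P^{-}_{\lambda})$, on which $\widehat\alpha_{1}^{(\lambda)}\widehat\alpha_{2}^{(\lambda)}=iI$, the matrix exponential $e^{-i\widehat\alpha_{1}^{(\lambda)}\widehat\alpha_{2}^{(\lambda)}\Phi^{(2,\lambda)}}$ acts on $\mathrm{ran}(\widehat P^{-}_{\lambda})$ as scalar multiplication by $e^{\Phi^{(2,\lambda)}}$ and on $\mathrm{ran}(\widehat P^{+}_{\lambda})$ as $e^{-\Phi^{(2,\lambda)}}$; since $\widehat{\mathcal{D}}_{*}^{(\lambda)}$ intertwines these subspaces, (3.8) applied to $\widetilde\varphi^{-}_{\lambda}$ becomes
$$\widehat{\mathcal{D}}^{(\lambda)}\widetilde\varphi^{-}_{\lambda}=e^{-\Phi^{(2,\lambda)}+i\Phi^{(1,\lambda)}}\,\widehat{\mathcal{D}}_{*}^{(\lambda)}\psi,\qquad \psi:=e^{\Phi^{(2,\lambda)}-i\Phi^{(1,\lambda)}}\widetilde\varphi^{-}_{\lambda}\in\mathrm{ran}(\widehat P^{-}_{\lambda}).$$
Lemma \ref{l3.1} gives $\|\Phi^{(2,\lambda)}\|_{\infty}\leq\tfrac{1}{4}\|\mu\|C^{*}(\mathfrak{h})$, from which $\|\widehat{\mathcal{D}}^{(\lambda)}\widetilde\varphi^{-}_{\lambda}\|\geq e^{-\|\Phi^{(2,\lambda)}\|_{\infty}}\|\widehat{\mathcal{D}}_{*}^{(\lambda)}\psi\|$ and $\|\psi\|\geq e^{-\|\Phi^{(2,\lambda)}\|_{\infty}}\|\widetilde\varphi^{-}_{\lambda}\|$.

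Next I would bound $\|\widehat{\mathcal{D}}_{*}^{(\lambda)}\psi\|$ from below using three ingredients. Anticommutation of the $\widehat\alpha_{j}^{(\lambda)}$'s together with $\mathbf{(A_2)}$ yields the pointwise estimate $\|(\widetilde A_{1}^{(\lambda)}\widehat\alpha_{1}^{(\lambda)}+\widetilde A_{2}^{(\lambda)}\widehat\alpha_{2}^{(\lambda)})\psi(x)\|_{\mathbb{C}^{M}}\leq(\widetilde\theta\pi/|\gamma|)\|\psi(x)\|_{\mathbb{C}^{M}}$. The condition $|(k,\gamma)|=\pi$ combined with $(N,\gamma)\in\mathbb{Z}$ forces $|k_{2}^{(\lambda)}+2\pi N_{2}^{(\lambda)}|\geq\pi/|\gamma|$ for every $N\in\Lambda^{*}$, so $\|\widehat G^{1}_{\lambda,-}\psi\|\geq(\pi/|\gamma|)\|\psi\|$. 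Finally, formula (3.9) on $\mathrm{ran}(\widehat P^{-}_{\lambda})$ combined with the elementary $(\varkappa-t)^{2}\geq(\varkappa-|t|)^{2}$ yields $\|\widehat{\mathcal{D}}_{0}^{(\lambda)}\psi\|\geq\|\widehat G^{1}_{\lambda,-}\psi\|$ \emph{without any frequency-support hypothesis} on $\psi$---this is crucial, since $\psi$ no longer has Fourier support in $\widetilde{\mathcal K}_{\lambda}$. Splitting $\widehat{\mathcal{D}}_{*}^{(\lambda)}=\widehat{\mathcal{D}}_{0}^{(\lambda)}-\widetilde A_{1}^{(\lambda)}\widehat\alpha_{1}^{(\lambda)}-\widetilde A_{2}^{(\lambda)}\widehat\alpha_{2}^{(\lambda)}$ and applying the triangle inequality gives $\|\widehat{\mathcal{D}}_{*}^{(\lambda)}\psi\|\geq(1-\widetilde\theta)\|\widehat G^{1}_{\lambda,-}\psi\|\geq(1-\widetilde\theta)(\pi/|\gamma|)\|\psi\|$, and chaining with the two gauge inequalities from the first paragraph produces
$$\|\widetilde\varphi^{-}_{\lambda}\|\leq\frac{|\gamma|}{\pi}\,(1-\widetilde\theta)^{-1}e^{2\|\Phi^{(2,\lambda)}\|_{\infty}}\|\widehat{\mathcal{D}}^{(\lambda)}\widetilde\varphi^{-}_{\lambda}\|.$$

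For the final step I would work with $\widetilde\varphi^{-}_{\lambda}$ directly: the triangle inequality and (3.12) (applicable since $\widetilde\varphi^{-}_{\lambda}\in\mathcal{H}(\widetilde{\mathcal K}_{\lambda})\subset\mathcal{H}(\mathcal K_{\lambda}^{\,\prime})$) yield $\|\widehat{\mathcal{D}}^{(\lambda)}\widetilde\varphi^{-}_{\lambda}\|\geq\|\widehat G^{1}_{\lambda,-}\widetilde\varphi^{-}_{\lambda}\|-\||A|\widetilde\varphi^{-}_{\lambda}\|$, while (1.9) with $k_{2}^{\,\prime}=k_{2}^{(\lambda)}$, $x_{2}^{\,\prime}=x_{2}^{(\lambda)}$, together with $\|(k_{2}^{\,\prime}-i\partial_{x_{2}^{\,\prime}})\widetilde\varphi^{-}_{\lambda}\|\leq\|\widehat G^{1}_{\lambda,-}\widetilde\varphi^{-}_{\lambda}\|$, gives $\||A|\widetilde\varphi^{-}_{\lambda}\|\leq\tau\|\widehat G^{1}_{\lambda,-}\widetilde\varphi^{-}_{\lambda}\|+Q\|\widetilde\varphi^{-}_{\lambda}\|$. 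Substituting the bound on $\|\widetilde\varphi^{-}_{\lambda}\|$ from the previous paragraph and using $e^{2\|\Phi^{(2,\lambda)}\|_{\infty}}\leq e^{\|\mu\|C^{*}(\mathfrak{h})}$ produces
$$\bigl(1+Q(1-\widetilde\theta)^{-1}(|\gamma|/\pi)e^{\|\mu\|C^{*}(\mathfrak{h})}\bigr)\|\widehat{\mathcal{D}}^{(\lambda)}\widetilde\varphi^{-}_{\lambda}\|\geq(1-\tau)\|\widehat G^{1}_{\lambda,-}\widetilde\varphi^{-}_{\lambda}\|,$$
which is precisely $\|\widehat{\mathcal{D}}^{(\lambda)}\widetilde\varphi^{-}_{\lambda}\|\geq C_{2}\|\widehat G^{1}_{\lambda,-}\widetilde\varphi^{-}_{\lambda}\|$ in view of the definition (1.10) of $C_{2}$. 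The main obstacle is the non-unitarity of the gauge in (3.8)---$e^{-i\widehat\alpha_{1}^{(\lambda)}\widehat\alpha_{2}^{(\lambda)}\Phi^{(2,\lambda)}}$ acts as a real exponential, not a phase, on each $\mathrm{ran}(\widehat P^{\pm}_{\lambda})$---so one must correctly account for the loss factor $e^{\|\mu\|C^{*}(\mathfrak{h})}$ via Lemma \ref{l3.1} while arranging the argument so that $\|\widetilde\varphi^{-}_{\lambda}\|$ is controlled by $\|\widehat{\mathcal{D}}^{(\lambda)}\widetilde\varphi^{-}_{\lambda}\|$ directly, sidestepping the need to compare $\widehat G^{1}_{\lambda,-}$-norms of $\psi$ and $\widetilde\varphi^{-}_{\lambda}$ across the non-local exponential multiplication.
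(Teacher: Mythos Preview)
Your proof is correct and follows essentially the same route as the paper's: both arguments combine the gauge identity (3.8) with Lemma~\ref{l3.1} and condition $\mathbf{(A_2)}$ to obtain the control $\|\widetilde\varphi^{-}_{\lambda}\|\leqslant (|\gamma|/\pi)(1-\widetilde\theta)^{-1}e^{\|\mu\|C^{*}(\mathfrak h)}\|\widehat{\mathcal D}^{(\lambda)}\widetilde\varphi^{-}_{\lambda}\|$, and then feed this into the inequality $\|\widehat{\mathcal D}^{(\lambda)}\widetilde\varphi^{-}_{\lambda}\|\geqslant(1-\tau)\|\widehat G^{1}_{\lambda,-}\widetilde\varphi^{-}_{\lambda}\|-Q\|\widetilde\varphi^{-}_{\lambda}\|$ coming from (1.9) and (3.12). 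Your $\psi$ coincides with the paper's $\chi_{\lambda}$ (since $e^{-i\widehat\alpha_{1}^{(\lambda)}\widehat\alpha_{2}^{(\lambda)}\Phi^{(2,\lambda)}}$ acts as $e^{\Phi^{(2,\lambda)}}$ on $\mathrm{ran}(\widehat P^{-}_{\lambda})$), and the only cosmetic differences are the order of the two main steps and your more explicit tracking of the $\widehat P^{\pm}_{\lambda}$-eigenspace action of the gauge factor.
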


\begin{proof}
From (3.12) it follows that
$$
\| \, \widehat {\mathcal D}^{\, (\lambda )}_0\, \widetilde \varphi ^{\, -}_{\lambda}\, 
\| = \| \, \widehat G^{\, 1}_{\lambda ,\, -}\, \widetilde \varphi ^{\, -}_{\lambda}
\, \| \, .
\eqno (3.26)
$$
By (1.9), (3.9), and the equality (3.26), we obtain
$$
\| \, (A^{(\lambda )}_1\widehat \alpha ^{(\lambda )}_1+A^{(\lambda )}_2\widehat \alpha 
^{(\lambda )}_2)\, \widetilde \varphi ^{\, -}_{\lambda}\, \| \leqslant \| \, |A|
\, \widetilde \varphi ^{\, -}_{\lambda}\, \| \leqslant  \eqno (3.27)
$$ $$
\leqslant \tau \, \bigl( \, \bigl\| \, \bigl( k_1^{\, (\lambda )}-\varkappa -i\, \frac 
{\partial}{\partial x_1^{(\lambda )}}\, \bigr) \, \widetilde \varphi ^{\, -}_{\lambda}
\, \bigr\| ^2+\bigl\| \bigl( k_2^{\, (\lambda )}-i\, \frac {\partial}{\partial x_2
^{(\lambda )}}\, \bigr) \, \widetilde \varphi ^{\, -}_{\lambda}\, \bigr\| ^2\, \bigr)
^{1/2}+Q\, \| \, \widetilde \varphi ^{\, -}_{\lambda}\, \| \leqslant
$$ $$
\leqslant \tau \, \| \, \widehat {\mathcal D}^{\, (\lambda )}_0\, \widetilde \varphi 
^{\, -}_{\lambda}\, \| + Q\, \| \, \widetilde \varphi ^{\, -}_{\lambda}\, \| =
\tau \, \| \, \widehat G^{\, 1}_{\lambda ,\, -}\, \widetilde \varphi ^{\, -}
_{\lambda}\, \| + Q\, \| \, \widetilde \varphi ^{\, -}_{\lambda}\, \| \, .
$$
Whence
$$
\| \, \widehat {\mathcal D}^{\, (\lambda )}\, \widetilde \varphi ^{\, -}_{\lambda}\, \|
\geqslant \| \, \widehat {\mathcal D}^{\, (\lambda )}_0\, \widetilde \varphi ^{\, -}
_{\lambda}\, \| - \| \, (A^{(\lambda )}_1\widehat \alpha ^{(\lambda )}_1+A^{(\lambda )}
_2\widehat \alpha ^{(\lambda )}_2)\, \widetilde \varphi ^{\, -}_{\lambda}\, \|
\geqslant  \eqno (3.28)
$$ $$
\geqslant (1-\tau )\, \| \, \widehat G^{\, 1}_{\lambda ,\, -}\, \widetilde \varphi 
^{\, -}_{\lambda}\, \| - Q\, \| \, \widetilde \varphi ^{\, -}_{\lambda}\, \| \, .
$$
Now, we use identity (3.8). Denote
$$
\chi _{\lambda}=e^{-i\, \Phi ^{\, (1,\lambda )}}\, e^{-i\, \widehat \alpha ^{(\lambda 
)}_1\widehat \alpha ^{(\lambda )}_2\, \Phi ^{\, (2,\lambda )}}\, \widetilde \varphi 
^{\, -}_{\lambda}\, .
$$
Since the functions $\Phi ^{\, (s,\lambda )}$, $s=1,2$, $\lambda \in {\mathcal L}$,
are trigonometric polynomials (and $\Phi ^{\, (s,\lambda )}_N=0$ for all vectors
$N\in \Lambda ^*$ with $A_N=0$), we have $\chi _{\lambda}\in \widetilde H^1(K;
{\mathbb C}^M)$. Furthermore, the operator $\widehat P^{\, -}_{\lambda}$
commutes with the operators of multiplication by the function $e^{-i\, \Phi ^{\, 
(1,\lambda )}}$ and by the matrix function $e^{-i\, \widehat \alpha ^{(\lambda )}
_1\widehat \alpha ^{(\lambda )}_2\, \Phi ^{\, (2,\lambda )}}$. Using 
Lemma \ref{l3.1}, inequality (0.13) (condition $\bf (A_2)$), and inequality
$$
\| \, \widehat {\mathcal D}^{\, (\lambda )}_0\, \chi _{\lambda}\, \| \geqslant
\frac {\pi}{|\gamma |}\ \| \, \chi _{\lambda}\, \| \, ,  \eqno (3.29)
$$
which is a consequence of the choice of the vector $k\in {\mathbb R}^d$ with $|(k,
\gamma )|=\pi $ (see (3.9)), we get
$$
\| \, \widehat {\mathcal D}^{\, (\lambda )}\, \widetilde \varphi ^{\, -}_{\lambda}\, \|
\geqslant e^{\, -\, \frac 12\, \| \mu \| \, C^*\, (h)}\, \| \, \widehat {\mathcal D}
^{\, (\lambda )}_*\, \chi _{\lambda}\, \| \geqslant
$$ $$
\geqslant e^{\, -\, \frac 12\, \| \mu \| \, C^*\, (h)}\, \bigl( \, \| \, \widehat 
{\mathcal D}^{\, (\lambda )}_0\, \chi _{\lambda}\, \| - \| \, (\widetilde A
^{(\lambda )}_1\widehat \alpha ^{(\lambda )}_1+\widetilde A^{(\lambda )}_2\widehat \alpha 
^{(\lambda )}_2)\, \chi _{\lambda}\, \| \, \bigr) \geqslant
$$ $$
\geqslant e^{\, -\, \frac 12\, \| \mu \| \, C^*\, (h)}\, \bigl( \, \| \, \widehat 
{\mathcal D}^{\, (\lambda )}_0\, \chi _{\lambda}\, \| - \| \, |\widetilde A
(\widetilde e^{\, \lambda };.)|\, \chi _{\lambda}(.)\, \| \, \bigr) \geqslant
$$ $$
\geqslant e^{\, -\, \frac 12\, \| \mu \| \, C^*\, (h)}\, (1-\widetilde \theta )\,
\frac {\pi}{|\gamma |}\ \| \chi _{\lambda} \| \geqslant
(1-\widetilde \theta )\, \frac {\pi}{|\gamma |}\ e^{\, - \| \mu \| \, C^*\, (h)}\,
\| \, \widetilde \varphi ^{\, -}_{\lambda}\, \| \, .
$$
Multiplying inequality (3.28) by $(1-\widetilde \theta )\, \frac {\pi}{|\gamma 
|}\, e^{\, - \| \mu \| \, C^*\, ({\mathfrak h})}\, $, multiplying inequality (3.29) by 
$Q$, and adding them together, we derive the claimed estimate.
\end{proof}

Now, let us get estimate (3.33), which complements estimate (3.25). By (3.2), (3.12),
and Lemma \ref{l3.4}, we have 
$$
\| \, \widehat P^{\, \{ \Omega _{\lambda}\} }\, \widehat {\mathcal D}^{\, (\lambda )}
\, \widehat P^{\, -}_{\lambda}\, \varphi ^{\, {\mathcal K}^{\, \prime}_{\lambda}}\, 
\| =  \eqno (3.30)
$$ $$
=\, \| \, \widehat P^{\, \{ \Omega _{\lambda}\} }\, (\widehat {\mathcal D}^{\, (\lambda )}
_0-A_1^{(\lambda )}\widehat \alpha ^{(\lambda )}_1-A_2^{(\lambda )}\widehat \alpha 
^{(\lambda )}_2)\, \widehat P^{\, -}_{\lambda}\, \varphi ^{\, {\mathcal K}^{\, \prime}
_{\lambda}}\, \| = 
$$ $$
=\, \| \, \widehat {\mathcal D}^{\, (\lambda )}_0\, \widehat P^{\, -}_{\lambda}\,
(\varphi ^{\, {\mathcal K}_{\lambda}\, \backslash \, \widetilde {\mathcal K}
_{\lambda}}+\varphi ^{\, \widetilde {\mathcal K}_{\lambda}}\, )-
\widehat P^{\, \{ \Omega _{\lambda}\} }\, (A_1^{(\lambda )}\widehat \alpha ^{(\lambda )}_1
+A_2^{(\lambda )}\widehat \alpha ^{(\lambda )}_2\, )\, \widehat P^{\, -}_{\lambda}\,
(\varphi ^{\, {\mathcal K}^{\, \prime }_{\lambda}\, \backslash \, \widetilde 
{\mathcal K}_{\lambda}}+\varphi ^{\, \widetilde {\mathcal K}_{\lambda}}\, )\,
\| =
$$ $$
=\, \| \, \widehat {\mathcal D}^{\, (\lambda )}\, \widetilde \varphi ^{\, -}
_{\lambda}+\widehat {\mathcal D}^{\, (\lambda )}_0\, \widehat P^{\, -}_{\lambda}\,
\varphi ^{\, {\mathcal K}_{\lambda}\, \backslash \, \widetilde {\mathcal K}
_{\lambda}}-\widehat P^{\, \{ \Omega _{\lambda}\} }\, (A_1^{(\lambda )}\widehat \alpha 
^{(\lambda )}_1+A_2^{(\lambda )}\widehat \alpha ^{(\lambda )}_2\, )\, \widehat P^{\, -}
_{\lambda}\, \varphi ^{\, {\mathcal K}^{\, \prime }_{\lambda}\, \backslash \, 
\widetilde {\mathcal K}_{\lambda}}\, \| \geqslant
$$ $$
\geqslant \| \, \widehat {\mathcal D}^{\, (\lambda )}\, \widetilde \varphi ^{\, -}
_{\lambda}\, \| -\| \, \widehat {\mathcal D}^{\, (\lambda )}_0\, \widehat P^{\, -}
_{\lambda}\, \varphi ^{\, {\mathcal K}_{\lambda}\, \backslash \, \widetilde 
{\mathcal K}_{\lambda}}\, \| - \| \, (A_1^{(\lambda )}\widehat \alpha 
^{(\lambda )}_1+A_2^{(\lambda )}\widehat \alpha ^{(\lambda )}_2\, )\, \widehat P^{\, -}
_{\lambda}\, \varphi ^{\, {\mathcal K}^{\, \prime }_{\lambda}\, \backslash \, 
\widetilde {\mathcal K}_{\lambda}}\, \| \geqslant
$$ $$
\geqslant C_2\, \| \, \widehat G_{\lambda ,\, -}^{\, 1}\, \widetilde \varphi ^{\, -}
_{\lambda}\, \| -\| \, \widehat G^{\, 1}_{\lambda ,\, -}\, \widehat P^{\, -}
_{\lambda}\, \varphi ^{\, {\mathcal K}_{\lambda}\, \backslash \, \widetilde {\mathcal 
K}_{\lambda}}\, \| - \| \, |A|\, \widehat P^{\, -}_{\lambda}\, \varphi ^{\, {\mathcal 
K}^{\, \prime }_{\lambda}\, \backslash \, \widetilde {\mathcal K}_{\lambda}}\, \|
\, .
$$
At the same time, from (1.9), (3.9) and (3.12) (by analogy with the estimate (3.27)) 
it follows that
$$
\| \, |A|\, \widehat P^{\, -}_{\lambda}\, \varphi ^{\, {\mathcal 
K}^{\, \prime }_{\lambda}\, \backslash \, \widetilde {\mathcal K}_{\lambda}}\, \|
\leqslant \tau \, \| \, \widehat G^{\, 1}_{\lambda ,\, -}\, \widehat P^{\, -}
_{\lambda}\, \varphi ^{\, {\mathcal K}^{\, \prime }_{\lambda}\, \backslash \, 
\widetilde {\mathcal K}_{\lambda}}\, \| + Q\, \| \, \widehat P^{\, -}_{\lambda}\, 
\varphi ^{\, {\mathcal K}^{\, \prime }_{\lambda}\, \backslash \, \widetilde 
{\mathcal K}_{\lambda}}\, \| \, ,  \eqno (3.31)
$$
and (for $\varepsilon =\varepsilon _1$ and $\psi =\widehat P^{\, +}_{\lambda}\, 
\varphi ^{\, {\mathcal K}^{\, \prime }_{\lambda}}\, $) inequality (3.24) implies
$$
\| \, \widehat {\mathcal D}^{\, (\lambda )}_{\perp}\, \widehat P^{\, +}_{\lambda}\, 
\varphi ^{\, {\mathcal K}^{\, \prime }_{\lambda}}\, \| \leqslant
\varepsilon _1\, \| \, \widehat G_{\lambda ,\, -}^{\, 1}\, \widehat P^{\, +}_{\lambda}
\, \varphi ^{\, {\mathcal K}^{\, \prime }_{\lambda}}\, \| +C_3^{\, \sharp }\,
(\varepsilon _1)\, \| \, \widehat P^{\, +}_{\lambda}\, \varphi ^{\, {\mathcal K}^{\, 
\prime }_{\lambda}}\, \| \, .  \eqno (3.32)
$$
Whence (see (3.30), (3.31), (3.32))
$$
\| \, \widehat P^{\, \{ \Omega _{\lambda}\} }\, \widehat P^{\, +}_{\lambda}\,
(\widehat {\mathcal D}(k+i\varkappa e)+\widehat W)\, \varphi ^{\, {\mathcal K}^{\, 
\prime}_{\lambda}}\, \| =  \eqno (3.33)
$$ $$
=\, \| \, \widehat P^{\, \{ \Omega _{\lambda}\} }\, \widehat P^{\, +}_{\lambda}\,
(\widehat {\mathcal D}^{\, (\lambda )}+\widehat {\mathcal D}^{\, (\lambda )}_{\perp})\, 
\varphi ^{\, {\mathcal K}^{\, \prime}_{\lambda}}\, \| \geqslant
\| \, \widehat P^{\, \{ \Omega _{\lambda}\} }\, \widehat {\mathcal D}^{\, (\lambda )}\,
\widehat P^{\, -}_{\lambda}\, \varphi ^{\, {\mathcal K}^{\, \prime}_{\lambda}}\, \|
- \| \, \widehat {\mathcal D}^{\, (\lambda )}_{\perp}\, \widehat P^{\, +}_{\lambda}\, 
\varphi ^{\, {\mathcal K}^{\, \prime}_{\lambda}}\, \| \geqslant
$$ $$
\geqslant C_2\, \| \, \widehat G_{\lambda ,\, -}^{\, 1}\, \widetilde \varphi 
_{\lambda}^{\, -}\, \| - \| \, \widehat G_{\lambda ,\, -}^{\, 1}\, \widehat P^{\, 
-}_{\lambda}\, \varphi ^{\, {\mathcal K}_{\lambda}\, \backslash \, \widetilde 
{\mathcal K}_{\lambda}}\, \| -
\bigl( \tau \, \| \, \widehat G^{\, 1}_{\lambda ,\, -}\, \widehat P^{\, -}
_{\lambda}\, \varphi ^{\, {\mathcal K}^{\, \prime }_{\lambda}\, \backslash \, 
\widetilde {\mathcal K}_{\lambda}}\, \| + Q\, \| \, \widehat P^{\, -}_{\lambda}\, 
\varphi ^{\, {\mathcal K}^{\, \prime }_{\lambda}\, \backslash \, \widetilde 
{\mathcal K}_{\lambda}}\, \| \, \bigr) -
$$ $$
-\, \bigl( \, \varepsilon _1\, \| \, \widehat G_{\lambda ,\, -}^{\, 1}\, \widehat P
^{\, +}_{\lambda}\, \varphi ^{\, {\mathcal K}^{\, \prime }_{\lambda}}\, \| +C_3^{\, 
\sharp }\, (\varepsilon _1)\, \| \, \widehat P^{\, +}_{\lambda}\, \varphi 
^{\, {\mathcal K}^{\, \prime }_{\lambda}}\, \| \, \bigr) \geqslant
$$ $$
\geqslant C_2\, \| \, \widehat G_{\lambda ,\, -}^{\, 1}\, \widetilde \varphi 
_{\lambda}^{\, -}\, \| - C_3\, \| \, \widehat G^{\, 1}_{\lambda ,\, -}\, 
\widehat P^{\, -}_{\lambda}\, \varphi ^{\, {\mathcal K}^{\, \prime }_{\lambda}\, 
\backslash \, \widetilde {\mathcal K}_{\lambda}}\, \| - 
\varepsilon _1\, \| \, \widehat G_{\lambda ,\, -}^{\, 1}\, 
\widehat P^{\, +}_{\lambda}\, \varphi ^{\, {\mathcal K}^{\, \prime }_{\lambda}}\, \| 
- C_3^{\, \sharp }\, (\varepsilon _1)\, \| \, \widehat P^{\, +}_{\lambda}\, \varphi 
^{\, {\mathcal K}^{\, \prime }_{\lambda}}\, \| \, .
$$
In what follows, we assume that $C_3^{\, \sharp }\, (\varepsilon _1)\leqslant 
\varepsilon _1 \varkappa _0\, $. Then $C_3^{\, \prime }\, (\varepsilon 
_1)\leqslant \varepsilon _1 \varkappa _0\, $ as well. Since for all vector functions
$\psi \in \widetilde H^1(K;{\mathbb C}^M)\cap {\mathcal H}({\mathcal K}^{\, 
\prime }_{\lambda})$ the inequalities $\| \widehat G^{\, 1}_{\lambda ,\, -}\, \psi 
\| \leqslant \| \widehat G^{\, 1}_{\lambda ,\, +}\, \psi \| $ and $\varkappa \, \| 
\psi \| \leqslant \| \widehat G^{\, 1}_{\lambda ,\, +}\, \psi \| $ hold, we derive
(for all $\varkappa \geqslant \varkappa _0$ and all $\lambda \in {\mathcal L}$)
$$
\varepsilon _1\, \| \, \widehat G_{\lambda ,\, -}^{\, 1}\, \varphi _{\lambda}^{\, +}
\, \| + C_3^{\, \prime }\, (\varepsilon _1)\, \| \, \varphi _{\lambda}^{\, +}\, \|
\leqslant 2\varepsilon _1\, \| \, \widehat G_{\lambda ,\, +}^{\, 1}\, \varphi 
_{\lambda}^{\, +}\, \| \, ,
$$ $$
\varepsilon _1\, \| \, \widehat G_{\lambda ,\, -}^{\, 1}\, \widehat P^{\, +}_{\lambda}
\, \varphi ^{\, {\mathcal K}^{\, \prime }_{\lambda}}\, \| +C_3^{\, \sharp }\,
(\varepsilon _1)\, \| \, \widehat P^{\, +}_{\lambda}\, \varphi ^{\, {\mathcal K}^{\, 
\prime }_{\lambda}}\, \| \leqslant 2\varepsilon _1\, \| \, \widehat G_{\lambda ,\, +}
^{\, 1}\, \widehat P^{\, +}_{\lambda}\, \varphi ^{\, {\mathcal K}^{\, \prime }
_{\lambda}}\, \| \, .
$$
Hence, from (3.25) and (3.31) it follows that
$$
\| \, \widehat P^{\, \{ \widetilde \Omega _{\lambda}\} }\, \widehat P^{\, -}_{\lambda}\,
(\widehat {\mathcal D}(k+i\varkappa e)+\widehat W)\, \varphi ^{\, {\mathcal K}
_{\lambda}}\, \| \geqslant
\| \, \widehat G_{\lambda ,\, +}^{\, 1}\, \widetilde \varphi _{\lambda}
^{\, +}\, \| - 2\varepsilon _1\, \| \, \widehat G_{\lambda ,\, +}^{\, 1}\, \varphi 
_{\lambda}^{\, +}\, \| - C_4\, \| \, \widehat G_{\lambda ,\, -}^{\, 1}\, \varphi 
_{\lambda}^{\, -}\, \| \, ,
$$ $$
\| \, \widehat P^{\, \{ \Omega _{\lambda}\} }\, \widehat P^{\, +}_{\lambda}\,
(\widehat {\mathcal D}(k+i\varkappa e)+\widehat W)\, \varphi ^{\, {\mathcal K}^{\, 
\prime }_{\lambda}}\, \| \geqslant
$$ $$
\geqslant C_2\, \| \, \widehat G_{\lambda ,\, -}^{\, 1}\, \widetilde \varphi 
_{\lambda}^{\, -}\, \| - 2\varepsilon _1\, \| \, \widehat G_{\lambda ,\, +}^{\, 1}\, 
\widehat P^{\, +}_{\lambda}\, \varphi ^{\, {\mathcal K}^{\, \prime }_{\lambda}}\, \|
- C_3\, \| \, \widehat G^{\, 1}_{\lambda ,\, -}\, \widehat P^{\, -}_{\lambda}\, 
\varphi ^{\, {\mathcal K}^{\, \prime }_{\lambda}\, \backslash \, \widetilde 
{\mathcal K}_{\lambda}}\, \| \, .
$$
Using Lemma \ref{l3.2} (inequalities (3.13) and (3.14)), from the last estimate we get
$$
\| \, \widehat P^{\, \{ \widetilde \Omega _{\lambda}\} }\, \widehat P^{\, -}
_{\lambda}\, (\widehat {\mathcal D}(k+i\varkappa e)+\widehat W)\, \varphi ^{\, 
{\mathcal K}_{\lambda}}\, \| \geqslant  \eqno (3.34)
$$ $$
\geqslant \| \, \widehat G^{\, 1}_+ \widehat P^{\, +} \varphi ^{\, \widetilde
{\mathcal K}_{\lambda}}\, \| - 2\varepsilon _1\, \| \, \widehat G^{\, 1}_+ 
\widehat P^{\, +} \varphi ^{\, {\mathcal K}_{\lambda}}\, \| - C_4\, \| \, 
\widehat G^{\, 1}_- \widehat P^{\, -} \varphi ^{\, {\mathcal K}_{\lambda}}\, \| -
$$ $$
-\ \frac {b_1}{\varkappa }\, \bigl( \, \| \, \widehat G^{\, 1}_+ \varphi ^{\, 
\widetilde {\mathcal K}_{\lambda}}\, \| + 2\varepsilon _1\, \| \, \widehat G^{\, 1}_+ 
\varphi ^{\, {\mathcal K}_{\lambda}}\, \| + C_4\, \| \, \widehat G^{\, 1}_- \varphi 
^{\, {\mathcal K}_{\lambda}}\, \| \, \bigr) \, ,
$$ $$
\| \, \widehat P^{\, \{ \Omega _{\lambda}\} }\, \widehat P^{\, +}_{\lambda}\,
(\widehat {\mathcal D}(k+i\varkappa e)+\widehat W)\, \varphi ^{\, {\mathcal K}^{\, 
\prime }_{\lambda}}\, \| \geqslant  \eqno (3.35)
$$ $$
\geqslant C_2\, \| \, \widehat G^{\, 1}_- \widehat P^{\, -} \varphi ^{\, \widetilde
{\mathcal K}_{\lambda}}\, \| - 2\varepsilon _1\, \| \, \widehat G^{\, 1}_+ 
\widehat P^{\, +} \varphi ^{\, {\mathcal K}^{\, \prime }_{\lambda}}\, \|
- C_3\, \| \, \widehat G^{\, 1}_- \widehat P^{\, -} 
\varphi ^{\, {\mathcal K}^{\, \prime }_{\lambda}\, \backslash \, \widetilde 
{\mathcal K}_{\lambda}}\, \| -
$$ $$
-\ \frac {b_2}{\varkappa }\, \bigl( \, C_2\, \| \, \widehat G^{\, 1}_- \varphi ^{\, 
\widetilde {\mathcal K}_{\lambda}}\, \| + 2\varepsilon _1\, \| \, \widehat G^{\, 1}_+ 
\varphi ^{\, {\mathcal K}^{\, \prime }_{\lambda}}\, \|
+ C_3\, \| \, \widehat G^{\, 1}_- \varphi ^{\, {\mathcal K}^{\, \prime }_{\lambda}\, 
\backslash \, \widetilde {\mathcal K}_{\lambda}}\, \| \, \bigr) \, .
$$ 
From (3.2) and (3.19) (for a number $\varkappa \geqslant \varkappa _0\, $,
a vector $k\in {\mathbb R}^d$ with $|(k,\gamma )|=\pi $, and a vector function 
$\varphi \in \widetilde H^1(K;{\mathbb C}^M)\cap {\mathcal H}({\mathcal C}(\frac 
12))$) we obtain 
$$
\| \, \widehat P^{\, \{ \widetilde \Omega _{\lambda}\} }\, \widehat P^{\, -}\,
(\widehat {\mathcal D}(k+i\varkappa e)+\widehat W)\, \varphi \, \| \geqslant  
$$ $$
\geqslant \| \, \widehat P^{\, \{ \widetilde \Omega _{\lambda}\} }\, \widehat P^{\, -}
_{\lambda}\, (\widehat {\mathcal D}(k+i\varkappa e)+\widehat W)\, \varphi ^{\, 
{\mathcal K}_{\lambda}}\, \| - \frac {\rho}2\ \| \, \widehat P^{\, \{ \widetilde 
\Omega _{\lambda}\} }\, (\widehat {\mathcal D}(k+i\varkappa e)+\widehat W)\, 
\varphi \, \| \, ,
$$ $$
\| \, \widehat P^{\, \{ \Omega _{\lambda}\} }\, \widehat P^{\, +}
(\widehat {\mathcal D}(k+i\varkappa e)+\widehat W)\, \varphi \, \| \geqslant
$$ $$
\geqslant \| \, \widehat P^{\, \{ \Omega _{\lambda}\} }\, \widehat P^{\, +}_{\lambda}\,
(\widehat {\mathcal D}(k+i\varkappa e)+\widehat W)\, \varphi ^{\, {\mathcal K}^{\, 
\prime }_{\lambda}}\, \| - \frac {\rho}2\ \| \, \widehat P^{\, \{ \Omega _{\lambda}\} 
}\, (\widehat {\mathcal D}(k+i\varkappa e)+\widehat W)\, \varphi \, \| \, .
$$
Whence 
$$
\| \, \widehat P^{\, +}(\widehat {\mathcal D}(k+i\varkappa e)+\widehat W)\, \varphi \, \| ^2+
a^{\, 2}\, \| \, \widehat P^{\, -}(\widehat {\mathcal D}(k+i\varkappa e)+\widehat W)\, 
\varphi \, \| ^2 \geqslant  \eqno (3.36)
$$ $$
\geqslant \sum\limits_{\lambda}\, \| \, \widehat P^{\, \{ \Omega _{\lambda}\} }\, 
\widehat P^{\, +}(\widehat {\mathcal D}(k+i\varkappa e)+\widehat W)\, \varphi \, \| ^2+
$$ $$
+\, a^{\, 2}\, \sum\limits_{\lambda}\, \| \, \widehat P^{\, \{ \widetilde \Omega 
_{\lambda}\} }\, \widehat P^{\, -}(\widehat {\mathcal D}(k+i\varkappa e)+\widehat W)\, 
\varphi \, \| ^2\geqslant
$$ $$
\geqslant (1-\delta _1)\, \sum\limits_{\lambda}\, \| \, \widehat P^{\, \{ \Omega 
_{\lambda}\} }\, \widehat P^{\, +}_{\lambda}\, (\widehat {\mathcal D}(k+i\varkappa e)+\widehat W)\, 
\varphi ^{\, {\mathcal K}^{\, \prime }_{\lambda}}\, \, \| ^2+
$$ $$
+\, (1-\delta _1)\, a^{\, 2}\, \sum\limits_{\lambda}\, \| \, \widehat P^{\, \{ \widetilde
\Omega _{\lambda}\} }\, \widehat P^{\, -}_{\lambda}\, (\widehat {\mathcal D}(k+i\varkappa 
e)+\widehat W)\, \varphi ^{\, {\mathcal K}_{\lambda}}\, \, \| ^2\, -
$$ $$
-\ \frac {(1-\delta _1)}{\delta _1}\ \frac {1+a^2}4\ \rho ^2\, \sum\limits
_{\lambda}\, \| \, \widehat P^{\, \{ \Omega _{\lambda}\} }\, (\widehat {\mathcal D}(k+i
\varkappa e)+\widehat W)\, \varphi \, \| ^2\, .
$$
Furthermore,
$$
\sum\limits_{\lambda}\, \| \, \widehat P^{\, \{ \Omega _{\lambda}\} }\, (\widehat 
{\mathcal D}(k+i\varkappa e)+\widehat W)\, \varphi \, \| ^2 \leqslant \| \, (\widehat 
{\mathcal D}(k+i\varkappa e)+\widehat W)\, \varphi \, \| ^2 \leqslant  \eqno (3.37)
$$ $$
\leqslant \frac 1{a^2}\ \bigl( \, \| \, \widehat P^{\, +}(\widehat {\mathcal D}(k+i
\varkappa e)+\widehat W)\, \varphi \, \| ^2+a^{\, 2}\, \| \, \widehat P^{\, -}(\widehat 
{\mathcal D}(k+i\varkappa e)+\widehat W)\, \varphi \, \| ^2 \, \bigr) \, .
$$
Next, suppose the number $\varkappa _0$ satisfies the condition
$$
\frac {(1-\delta _1)}{\delta _1}\ \frac {1+a^2}{4a^2}\ (\widetilde c+2)^2
R^2\leqslant \frac {\delta _2-\delta _1}{1-\delta _2}\ \varkappa _0^2
$$
and $\varkappa \geqslant \varkappa _0\, $, then (3.36) and (3.37) yield
$$
(1-\delta _2)^{-1}\, \bigl( \, \| \, \widehat P^{\, +}(\widehat {\mathcal D}(k+i
\varkappa e)+\widehat W)\, \varphi \, \| ^2+  
a^{\, 2}\, \| \, \widehat P^{\, -}(\widehat 
{\mathcal D}(k+i\varkappa e)+\widehat W)\, \varphi \, \| ^2 \, \bigr) \geqslant  
\eqno (3.38)
$$ $$
\geqslant \sum\limits_{\lambda}\, \| \, \widehat P^{\, \{ \Omega _{\lambda}\} }\, 
\widehat P^{\, +}_{\lambda}\, (\widehat {\mathcal D}(k+i\varkappa e)+\widehat W)\, 
\varphi ^{\, {\mathcal K}^{\, \prime }_{\lambda}}\, \, \| ^2+
a^{\, 2}\, \sum\limits_{\lambda}\, \| \, \widehat P^{\, \{ \widetilde
\Omega _{\lambda}\} }\, \widehat P^{\, -}_{\lambda}\, (\widehat {\mathcal D}(k+i\varkappa 
e)+\widehat W)\, \varphi ^{\, {\mathcal K}_{\lambda}}\, \, \| ^2\, .
$$ 
On the other hand, from estimates (3.34) and (3.35) it follows that the right hand
side of previous inequality (3.38) is greater than or equal to
$$
(1-\delta _1)\, \biggl( \, C_2^2\, \sum\limits_{\lambda}\, \| \, \widehat G^{\, 1}_- 
\widehat P^{\, -} \varphi ^{\, \widetilde {\mathcal K}_{\lambda}}\, \| ^2+
a^{\, 2}\, \sum\limits_{\lambda}\, \| \, \widehat G^{\, 1}_+ \widehat P^{\, +} \varphi 
^{\, \widetilde {\mathcal K}_{\lambda}}\, \| ^2\, \biggr) \, -  \eqno (3.39)
$$ $$
-\ \frac {5(1-\delta _1)}{\delta _1}\ \biggl( \, (2\varepsilon _1)^2\,
\sum\limits_{\lambda}\, \| \, \widehat G^{\, 1}_+ \widehat P^{\, +} \varphi ^{\, 
{\mathcal K}^{\, \prime }_{\lambda}}\, \| ^2+C_3^2\, \sum\limits_{\lambda}\,
\| \, \widehat G^{\, 1}_- \widehat P^{\, -} \varphi ^{\, {\mathcal K}^{\, \prime }
_{\lambda}\, \backslash \, \widetilde {\mathcal K}_{\lambda}}\, \| ^2+
$$ $$
+\, a^2\, (2\varepsilon _1)^2\, \sum\limits_{\lambda}\, \| \, \widehat G^{\, 1}_+ 
\widehat P^{\, +} \varphi ^{\, {\mathcal K}_{\lambda}}\, \| ^2+a^2\, C_4^2\,
\sum\limits_{\lambda}\, \| \, \widehat G^{\, 1}_- \widehat P^{\, -} \varphi ^{\, 
{\mathcal K}_{\lambda}}\, \| ^2+
$$ $$
+\ \frac {b_2^2}{\varkappa ^2}\ \biggl( C_2^2\, \sum\limits_{\lambda}\, \| \, 
\widehat G^{\, 1}_- \varphi ^{\, \widetilde {\mathcal K}_{\lambda}}\, \| ^2 + (2
\varepsilon _1)^2\, \sum\limits_{\lambda}\, \| \, \widehat G^{\, 1}_+ \varphi ^{\, 
{\mathcal K}^{\, \prime }_{\lambda}}\, \| ^2+ C_3^2\, \sum\limits_{\lambda}\,
\| \, \widehat G^{\, 1}_- \varphi ^{\, {\mathcal K}^{\, \prime }_{\lambda}\, 
\backslash \, \widetilde {\mathcal K}_{\lambda}}\, \| ^2 \, \biggr) \, +
$$ $$
+\ \frac {a^2b_1^2}{\varkappa ^2}\ \biggl( \, \sum\limits_{\lambda}\, \| \, \widehat 
G^{\, 1}_+ \varphi ^{\, \widetilde {\mathcal K}_{\lambda}}\, \| ^2 + (2
\varepsilon _1)^2\, \sum\limits_{\lambda}\, \| \, \widehat G^{\, 1}_+ \varphi ^{\, 
{\mathcal K}_{\lambda}}\, \| ^2 + C_4^2\, \sum\limits_{\lambda}\, \| \, \widehat 
G^{\, 1}_- \varphi ^{\, {\mathcal K}_{\lambda}}\, \| ^2 \, \biggr) \, \biggr) 
\geqslant
$$ $$
\geqslant
(1-\delta _1)\, \biggl( C_2^2\, \sum\limits_{\lambda}\, \| \, \widehat G^{\, 1}_- 
\widehat P^{\, -} \varphi ^{\, \widetilde {\mathcal K}_{\lambda}}\, \| ^2+
a^{\, 2}\, \sum\limits_{\lambda}\, \| \, \widehat G^{\, 1}_+ \widehat P^{\, +} \varphi 
^{\, \widetilde {\mathcal K}_{\lambda}}\, \| ^2\, \biggr) \, -  
$$ $$
-\ \frac {5(1-\delta _1)}{\delta _1}\ \biggl( \, (1+a^2)(2\varepsilon _1)^2\,
\| \, \widehat G^{\, 1}_+ \widehat P^{\, +} \varphi \| ^2 + C_3^2\, 
\sum\limits_{\lambda}\, \| \, \widehat G^{\, 1}_- \widehat P^{\, -} \varphi ^{\, 
{\mathcal K}^{\, \prime }_{\lambda}\, \backslash \, \widetilde {\mathcal K}_{\lambda}}
\, \| ^2 +
$$ $$
+\, a^2\, C_4^2\, \| \, \widehat G^{\, 1}_- \widehat P^{\, -} \varphi \, \| ^2+
+ \frac 1{\varkappa ^2}\ \bigl( \, b_2^2\, C_2^2+b_2^2\, C_3^2+a^2\, b_1^2\, C_4^2\, 
\bigr) \, \| \, \widehat G^{\, 1}_- \varphi \, \| ^2 + 
$$ $$
+\, \frac 1{\varkappa ^2}\ \bigl( \, b_2^2\, (2\varepsilon _1)^2+a^2\, b_1^2+a^2\,
b_1^2\, (2\varepsilon _1)^2\, \bigr) \, \| \, \widehat G^{\, 1}_+ \varphi \, \| ^2\, 
\biggr) \, .
$$
Let us now use condition 3b (also see (3.3)) which yields
$$
\sum\limits_{\lambda}\, \| \, \widehat G_{\pm }^{\, 1}\widehat P^{\, \pm } \varphi ^{\, 
\widetilde {\mathcal K}_{\lambda}}\, \| ^2\geqslant (1-\widetilde \varepsilon )\,
\| \, \widehat G^{\, 1}_{\pm}\widehat P^{\, \pm } \varphi \, \| ^2 \, ,
$$ $$
\sum\limits_{\lambda}\, \| \, \widehat G^{\, 1}_- \widehat P^{\, -} \varphi ^{\, 
{\mathcal K}^{\, \prime }_{\lambda}\, \backslash \, \widetilde {\mathcal K}
_{\lambda}}\, \| ^2 \leqslant \widetilde \varepsilon \, \| \, \widehat G^{\, 1}_- 
\widehat P^{\, -} \varphi \, \| ^2\, .
$$
Since
$$
\| \, \widehat G^{\, 1}_- \varphi \, \| ^2=\| \, \widehat G^{\, 1}_- \widehat P
^{\, -} \varphi \, \| ^2+\| \, \widehat G^{\, 1}_- \widehat P^{\, +} \varphi \, \| 
^2\leqslant \| \, \widehat G^{\, 1}_- \widehat P^{\, -} \varphi \, \| ^2+\| \, 
\widehat G^{\, 1}_+ \widehat P^{\, -} \varphi \, \| 
^2 \, ,
$$ $$
\| \, \widehat G^{\, 1}_+ \varphi \, \| ^2=\| \, \widehat G^{\, 1}_+ \widehat P^{\, 
+} \varphi \, \| ^2+\| \, \widehat G^{\, 1}_+ \widehat P^{\, -} \varphi \, \| ^2
\leqslant \| \, \widehat G^{\, 1}_+ \widehat P^{\, +} \varphi \, \| ^2+\frac 94\, 
\varkappa ^2\, \| \, \widehat P^{\, 1}_-\varphi \, \| ^2\leqslant 
$$ $$
\leqslant \| \, \widehat G^{\, 1}_+ \widehat P^{\, +} \varphi \, \| ^2 + \, \frac 94\ 
\frac {|\gamma |^2}{\pi ^2}\ \varkappa ^2\, \| \, \widehat G^{\, 1}_- \widehat P^{\, 
-} \varphi \, \| ^2\, , 
$$
and $C_2\in (0,1)$, $a\in (0,1)$, $\varepsilon _1\in (0,\frac 12\, ]$, $b_1<b_2\, $,
using (3.38) and (3.39) we deduce the estimate
$$
(1-\delta _2)^{-1}\, \bigl( \bigl( \, \| \, \widehat P^{\, +}(\widehat {\mathcal 
D}(k+i\varkappa e)+\widehat W)\, \varphi \, \| ^2+  
a^{\, 2}\, \| \, \widehat P^{\, -}(\widehat 
{\mathcal D}(k+i\varkappa e)+\widehat W)\, \varphi \, \| ^2 \, \bigr) \geqslant  
\eqno (3.40)
$$ $$
\geqslant
(1-\delta _1)\, (1-\widetilde \varepsilon )\, \bigl( C_2^2\, \| \, \widehat G^{\, 1}_- 
\widehat P^{\, -} \varphi \, \| ^2+a^{\, 2}\, \| \, \widehat G^{\, 1}_+ 
\widehat P^{\, +} \varphi \, \| ^2\, \bigr) \, -  
$$ $$
-\ \frac {5(1-\delta _1)}{\delta _1}\ \biggl( \, \bigl( \, \widetilde \varepsilon
\, C_3^2\,+a^2 C_4^2+\frac 94\, \frac {|\gamma |^2}{\pi ^2}\, (2(2\varepsilon 
_1)^2+a^2)\, b_2^2\, +
$$ $$
+\, (1+C_3^2+C_4^2)\, \frac {b_2^2}{\varkappa ^2}\, \bigr) \,
\| \, \widehat G^{\, 1}_- \widehat P^{\, -} \varphi \, \| ^2 +
\bigl( 2(2\varepsilon _1)^2+(4+C_3^2+C_4^2)\, \frac {b_2^2}{\varkappa ^2}\, 
\bigr) \, \| \, \widehat G^{\, 1}_+ \widehat P^{\, +} \varphi \, \| ^2\, \biggr) \, .
$$
Finally, suppose that the number $\varkappa _0\, $ also satisfies the conditions
$$
(1+C_3^2+C_4^2)\, b_2^2\leqslant \frac {\delta _1}5\, (1-\widetilde \varepsilon )\,
C_2^2\, \frac {\delta}{10}\, \varkappa _0^2\, ,\ \
(4+C_3^2+C_4^2)\, b_2^2\leqslant \frac {\delta _1}5\, (1-\widetilde \varepsilon )\,
a^{\, 2}\, \frac {\delta}{10}\, \varkappa _0^2\, .
$$
From the choice of the numbers $\widetilde \varepsilon $, $a$, and $\varepsilon _1$ 
it follows that
$$
2(2\varepsilon _1)^2<\frac {\delta _1}5\, (1-\widetilde \varepsilon )\,
a^{\, 2}\, \frac {\delta}4\ ,
$$ $$
\max \bigl\{ \, \widetilde \varepsilon \, C_3^2\, ,\ a^2\, C_4^2\, ,\ \frac 92\, 
\frac {|\gamma |^2}{\pi ^2}\ (2\varepsilon _1)^2\, b_2^2\, ,\
\frac 94\, \frac {|\gamma |^2}{\pi ^2}\ a^{\, 2}\, b_2^2 \, \bigr\} <
\frac {\delta _1}5\, (1-\widetilde \varepsilon )\, C_2^2\, \frac {\delta}{10}
$$
and $\widetilde \varepsilon \leqslant \frac {\delta}8\, $. Therefore, from (3.40) for
all $\varkappa \geqslant \varkappa _0$ (all $k\in {\mathbb R}^d$ with $|(k,\gamma )|
=\pi $, and all vector functions $\varphi \in \widetilde H^1(K;{\mathbb C}^M)\cap 
{\mathcal H}({\mathcal C}(\frac 12))$) we get the inequality
$$
\| \widehat P^{\, +}(\widehat {\mathcal D}(k+i\varkappa e)
+\widehat W)\varphi \| ^2+ a^{\, 2}\| \widehat P^{\, -}(\widehat {\mathcal D}
(k+i\varkappa e)+\widehat W)\varphi \| ^2\geqslant  
$$ $$
\geqslant (1-\delta _2)\, (1-\delta _1)\, (1-\widetilde \varepsilon )\, (1-\frac
{\delta}2\, )\, (C^2_2\, \| \widehat G_-^{\, 1}\widehat P^{\, -}\varphi \| ^2+
a^{\, 2}\, \| \widehat G_+^{\, 1}\widehat P^{\, +}\varphi \| ^2)\geqslant
$$ $$
\geqslant (1-\delta )\, (C^2_2\, \| \widehat G_-^{\, 1}\widehat P^{\, -}\varphi \| ^2+
a^{\, 2}\, \| \widehat G_+^{\, 1}\widehat P^{\, +}\varphi \| ^2)\, .
$$
This completes the proof of Theorem \ref{th2.1}.

\section{Proof of Theorem \ref{th1.4}}

\begin{lemma} \label{l4.1}
Let $d\geqslant 3$. Suppose $\gamma \in \Lambda \backslash \{ 0\} $, $\widehat W\in
L^2(K;{\mathcal M}_M)$, and $||| \, \widehat W\, ||| _{\gamma ,\, M}<+\infty $.
Then there is a number $c^{\, *}=c^{\, *}(\gamma )>0$ such that for all
$\varkappa \geqslant 0$, all vectors $k\in {\mathbb R}^d$ with $|(k,\gamma )|=\pi $,
and all vector functions $\varphi \in \widetilde H^1(K;{\mathbb C}^M)$ the inequality
$$
\| \, \widehat G_-^{\, -\frac 12}\, \widehat W\varphi \, \| \leqslant  c^{\, *}\, 
||| \, \widehat W\, ||| _{\gamma ,\, M}\, \| \, \widehat G_-^{\, \frac 12}
\varphi \, \| 
$$
holds.
\end{lemma}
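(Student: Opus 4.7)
\textbf{Proof plan for Lemma \ref{l4.1}.}
The plan is to rewrite the claim as the operator-norm estimate
\[
\|\widehat G_-^{-1/2}\,\widehat W\,\widehat G_-^{-1/2}\|_{L^2(K;\mathbb C^M)\to L^2(K;\mathbb C^M)}\leq c^{\,*}(\gamma)\,|||\,\widehat W\,|||_{\gamma,M}
\]
(via the substitution $\varphi=\widehat G_-^{-1/2}\widetilde\varphi$), and then to prove this operator bound by complex interpolation on the strip $0\leq \operatorname{Re} z\leq 1$ applied to the analytic family
\[
T_z\doteq \widehat G_-^{-z}\,\widehat W\,\widehat G_-^{z-1}.
\]

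First I collect two uniform multiplier bounds that exploit the hypothesis $|(k,\gamma)|=\pi$. Since $(N,\gamma)\in\mathbb Z$ for $N\in\Lambda^{*}$ while $k_\parallel=(k,\gamma)/|\gamma|=\pm\pi/|\gamma|$, the quantity $k_\parallel+2\pi N_\parallel$ lies in $(\pi/|\gamma|)(2\mathbb Z\pm 1)$; hence $|k_\parallel+2\pi N_\parallel|\geq \pi/|\gamma|$ and therefore $G_N^{-}(k;\varkappa)\geq \pi/|\gamma|$ for every $N\in\Lambda^{*}$. Consequently the Fourier multiplier $\widehat G_-^{-1}$ is bounded on $L^2(K;\mathbb C^M)$ with norm $\leq |\gamma|/\pi$, and the elementary estimate $|2\pi N_\parallel|\leq |k_\parallel+2\pi N_\parallel|+|k_\parallel|\leq G_N^{-}+\pi/|\gamma|\leq 2G_N^{-}$ gives $\|(-i\partial/\partial x^{\,\prime}_2)\,\widehat G_-^{-1}\|_{L^2\to L^2}\leq 2$.

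Next I handle the two endpoints. Because $\widehat G_-^{\pm is}$ is a unitary Fourier multiplier on $L^2$, I have $\|T_{is}\|_{L^2\to L^2}=\|\widehat W\,\widehat G_-^{-1}\|_{L^2\to L^2}$ and $\|T_{1+is}\|_{L^2\to L^2}=\|\widehat G_-^{-1}\,\widehat W\|_{L^2\to L^2}$. Applying inequality (0.11) with $\varepsilon=1$ to $\varphi=\widehat G_-^{-1}\psi$ and inserting the two multiplier bounds of the previous step yields
\[
\|\widehat W\,\widehat G_-^{-1}\psi\|\leq |||\,\widehat W\,|||_{\gamma,M}\,\bigl(2+C^{\,\prime}(\gamma,1)\,|\gamma|/\pi\bigr)\,\|\psi\|.
\]
The companion bound for $T_{1+is}$ follows because $\|\widehat G_-^{-1}\widehat W\|_{L^2\to L^2}=\|\widehat W^{*}\,\widehat G_-^{-1}\|_{L^2\to L^2}$ and $|||\,\widehat W^{*}\,|||_{\gamma,M}=|||\,\widehat W\,|||_{\gamma,M}$ (pointwise $\|\widehat W^{*}(x)\|_{\mathcal M_M}=\|\widehat W(x)\|_{\mathcal M_M}$), so the same argument applied to $\widehat W^{*}$ produces an identical constant.

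Finally I close the argument by Hadamard's three-lines theorem. For trigonometric polynomials $\psi,\chi$ with finitely many nonzero Fourier modes, $F(z)\doteq \langle T_z\psi,\chi\rangle$ is entire in $z$, and it is uniformly bounded on the closed strip $0\leq\operatorname{Re} z\leq 1$: on the finitely many active modes the factors $|(G_N^{-})^{-z}|=(G_N^{-})^{-\operatorname{Re} z}$ and $|(G_N^{-})^{z-1}|=(G_N^{-})^{\operatorname{Re} z-1}$ are bounded, and $\widehat W\in L^2$ handles the middle factor. Three lines then gives $|F(1/2)|\leq c^{\,*}(\gamma)\,|||\,\widehat W\,|||_{\gamma,M}\,\|\psi\|\|\chi\|$, which is the required operator-norm bound on $T_{1/2}=\widehat G_-^{-1/2}\widehat W\widehat G_-^{-1/2}$; density of trigonometric polynomials in $\widetilde H^1(K;\mathbb C^M)$ extends this to all vector functions in the statement. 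The principal technical obstacle is the verification that $F(z)$ is genuinely bounded (not merely analytic) on the strip, which is why the argument is first carried out on a dense class of finitely-supported trigonometric polynomials where all operators involved collapse to multiplication on a fixed finite-dimensional subspace.
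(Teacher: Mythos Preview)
Your proof is correct and follows essentially the same approach as the paper: both establish endpoint bounds $\|\widehat W\widehat G_-^{-1}\|$ and $\|\widehat G_-^{-1}\widehat W\|$ from inequality~(0.11) together with the adjoint trick for $\widehat W^{*}$, and then interpolate via an analytic family (your $T_z=\widehat G_-^{-z}\widehat W\widehat G_-^{z-1}$ on $0\le\operatorname{Re}z\le 1$ is the paper's $\widehat G_-^{\zeta}\widehat W\widehat G_-^{-1-\zeta}$ on $-1\le\operatorname{Re}\zeta\le 0$ under $\zeta=-z$). The only minor difference is that the paper uses the $k$-shifted version of~(0.11), bounding $|k_\parallel+2\pi N_\parallel|\le G_N^{-}$ directly and obtaining $c^{*}=1+\pi^{-1}|\gamma|C'(\gamma,1)$, whereas you bound $|2\pi N_\parallel|\le 2G_N^{-}$ and get $c^{*}=2+\pi^{-1}|\gamma|C'(\gamma,1)$; both are acceptable constants depending only on $\gamma$.
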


\begin{proof}
From (0.11) (for $\varepsilon =1$) it follows that for all $\varkappa \geqslant 0$, 
all $k\in {\mathbb R}^d$, and all $\varphi \in \widetilde H^1(K;{\mathbb C}^M)$ the
estimate
$$
\| \widehat W\varphi \| \leqslant  
||| \, \widehat W\, ||| _{\gamma ,\, M}\, \bigl( v^{\, \frac 12}\, (K) 
\bigl( \, \sum\limits_{N\, \in \, \Lambda ^*}|k_{\| }+2\pi N_{\| }|^2\, \| \varphi
_N\| ^2\, \bigr) ^{\frac 12}+C^{\, \prime }(\gamma ,1)\, \| \varphi \| \,
\bigr) \leqslant  \eqno (4.1) 
$$ $$
\leqslant ||| \, \widehat W\, ||| _{\gamma ,\, M}\, \bigl( \, \| \, \widehat G_-
^{\, 1}\varphi \, \| +C^{\, \prime }(\gamma ,1)\, \| \varphi \| \, \bigr) 
$$
holds. For any vector function $\psi \in L^2(K;{\mathbb C}^M)$, we have $\widehat 
G_-^{\, -1}\psi \in \widetilde H^1(K;{\mathbb C}^M)$. Therefore, from (4.1)
(taking into account that $G^{\, -}_N(k;\varkappa )\geqslant \pi |\gamma |^{-1}$, 
$N\in \Lambda ^*$, if $|(k,\gamma )|=\pi $) for all $\varkappa \geqslant 0$, all 
$k\in {\mathbb R}^d$ with $|(k,\gamma )|=\pi $, and all $\psi \in L^2(K;{\mathbb 
C}^M)$ we obtain
$$
\| \,  \widehat W\, \widehat G_-^{\, -1}\, \psi \, \| \leqslant  c^{\, *}\, 
||| \, \widehat W\, ||| _{\gamma ,\, M}\, \| \psi \| \, ,  \eqno (4.2)
$$
where $c^{\, *}=1+\pi ^{-1}|\gamma |C^{\, \prime }(\gamma ,1)$. The same inequality
is fulfilled for the adjoint matrix function $\widehat W^{\, *}\in L^2(K;{\mathcal 
M}_M)$ (for which $||| \, \widehat W^{\, *}\, ||| _{\gamma ,\, M}=||| \, \widehat 
W\, ||| _{\gamma ,\, M}<+\infty $). Hence, for the operator $(\widehat W^{\, *}
\widehat G_-^{\, -1})^{\, *}$ which is adjoint to the operator $\widehat W^{\, *}
\widehat G_-^{\, -1}$, estimate (4.2) is also satisfied. Since $(\widehat W^{\, *}
\widehat G_-^{\, -1})^{\, *}\psi =\widehat G_-^{\, -1}\widehat W\psi $ for all
vector functions $\psi \in \widetilde H^1(K;{\mathbb C}^M)$, we conclude that for
such vector functions $\psi $ the following inequality is also valid:
$$
\| \, \widehat G_-^{\, -1}\, \widehat W\, \psi \, \| \leqslant  c^{\, *}\, 
||| \, \widehat W\, ||| _{\gamma ,\, M}\, \| \psi \| \, .  \eqno (4.3)
$$
Let now use the interpolation of the operators $\widehat W\widehat G_-^{\, -1}$ and
$\widehat G_-^{\, -1}\widehat W$ (see \cite{BL,RSII}). Consider the analytic
operator function $\zeta \to \widehat G_-^{\, \zeta }\widehat W\widehat G_-^{\, 
-1-\zeta }$ defined for $\zeta \in {\mathbb C}$ with $-1<{\mathrm {Re}}\ \zeta <0$. 
For a vector function $\psi \in \widetilde H^1(K;{\mathbb C}^M)$, the function 
$\zeta \to \widehat G_-^{\, \zeta }\widehat W\widehat G_-^{\, -1-\zeta }\psi 
\in L^2(K;{\mathbb C}^M)$ is also analytic. Moreover, it has continuous bounded
extension to the closed set $\{ \zeta \in {\mathbb C}:-1\leqslant {\mathrm 
{Re}}\ \zeta \leqslant 0\} $. If either ${\mathrm {Re}}\ \zeta =0$ or ${\mathrm 
{Re}}\ \zeta =-1$, then from (4.2) and (4.3) we get
$$
\| \, \widehat G_-^{\, \zeta }\, \widehat W\, \widehat G_-^{\, -1-\zeta }\psi  
\, \| \leqslant  c^{\, *}\, ||| \, \widehat W\, ||| _{\gamma ,\, M}\, \| \psi \| 
\, .  \eqno (4.4)
$$ 
Therefore estimate (4.4) is true for all $\zeta \in {\mathbb C}$ with $-1\leqslant 
{\mathrm {Re}}\ \zeta \leqslant 0$. In particular, for $\zeta =-\frac 12 $ (and 
for $\psi \in \widetilde H^1(K;{\mathbb C}^M)$) we obtain
$$
\| \, \widehat G_-^{\, -\frac 12}\, \widehat W\, \widehat G_-^{\, -\frac 12}\psi  
\, \| \leqslant  c^{\, *}\, ||| \, \widehat W\, ||| _{\gamma ,\, M}\, \| \psi \| 
\, .  \eqno (4.5)
$$
By continuity, inequality (4.5) holds for all vector functions $\psi \in \widetilde 
H^{\, \frac 12}(K;{\mathbb C}^M)$. Finally, since any vector function $\varphi \in 
\widetilde H^1(K;{\mathbb C}^M)$ can be represented in the form $\varphi =\widehat 
G_-^{\, -\frac 12}\psi $, where $\psi \in \widetilde H^{\, \frac 12}(K;{\mathbb 
C}^M)$, the inequality claimed in Lemma \ref{l4.1} is an immediate consequence of
inequality (4.5).
\end{proof}

{\it Proof} of Theorem \ref{th1.4}. Fix a number $\delta \in (0,1)$.
Let us choose a sufficiently large number $R>0$ such that
$$
||| \, \widehat W-\widehat W_{\{ R\} }\, ||| _{\gamma ,\, M}^{\, 2}\leqslant
\frac 1{18}\, \delta ^{\, 2}\, (c^{\, *})^{-2}\, C_2^{\, 2}\, ,
$$
where $c^{\, *}=c^{\, *}(\gamma )>0$ is the constant from Lemma \ref{l4.1}. The
functions $\widehat V^{\, (s)}_{\{ R\} }\, $, $s=0,1$, and $A_{\{ R\} }$ satisfy
the conditions of Theorem \ref{th1.3}, therefore there are numbers $\widetilde a
=\widetilde a(C_2;\frac {\delta }2,R)\in (0,C_2]$ and $\varkappa _0^{\, \prime }>0$ 
such that for all $\varkappa \geqslant \varkappa _0^{\, \prime }\, $, all vectors 
$k\in {\mathbb R}^d$ with $|(k,\gamma )|=\pi $, and all vector functions $\varphi 
\in \widetilde H^1(K;{\mathbb C}^M)$ the inequality  
$$
\| (\widehat P^{\, +}_*+\widetilde a\widehat P^{\, -}_*)(\widehat {\mathcal D}
(k+i\varkappa e)+\widehat W_{\{ R\} })\varphi \| ^2\geqslant  
(1-\frac {\delta }2\, )\, (C^{\, 2}_2\, \| \widehat G_-^{\, 1}\widehat P^{\, 
-}\varphi \| ^2+{\widetilde a}^{\, 2}\, \| \widehat G_+^{\, 1}\widehat P^{\, +}
\varphi \| ^2)  \eqno (4.6)
$$
holds. Instead of the vector $\gamma \in \Lambda \backslash \{ 0\} $, in conditions 
$\bf (A_1)$, $\bf (\widetilde A_1)$, and $\bf (A_2)$ we can pick the vector $-\gamma 
$. Under the replacement of the vector $\gamma $ by the vector $-\gamma $, the measure 
$\mu $ and the constants $C_2$ and $\widetilde a$ do not change. The number $\varkappa 
_0^{\, \prime }\, $ (which coincides with the number $\varkappa _0$ from Theorem 
\ref{th1.3} and is determined by the number $\frac {\delta }2$ and the function 
$\widehat W_{\{ R\} }$) do not change as well. Nevertheless, the orthogonal
projections $\widehat P^{\, +}$ and $\widehat P^{\, -}$ are replaced by the orthogonal
projections $\widehat P^{\, -}_*\, $ and $\widehat P^{\, +}_*\, $, respectively. 
Therefore, along with inequality (4.6), the inequality
$$
\| (\widehat P^{\, -}+\widetilde a\widehat P^{\, +})(\widehat {\mathcal D}
(k-i\varkappa e)+\widehat W_{\{ R\} })\varphi \| ^2\geqslant  
(1-\frac {\delta }2\, )\, (C^{\, 2}_2\, \| \widehat G_-^{\, 1}\widehat P^{\, 
+}_*\varphi \| ^2+{\widetilde a}^{\, 2}\, \| \widehat G_+^{\, 1}\widehat P^{\, -}_*
\varphi \| ^2)  \eqno (4.7)
$$
holds. Since $\widehat W\in {\mathbb L}^{\Lambda }_M(d;0)$ (hence, also $\widehat 
W_{\{ R\} }\in {\mathbb L}^{\Lambda }_M(d;0)$), the operators $\widehat {\mathcal 
D}(k\pm i\varkappa e)+\widehat W_{\{ R\} }$ are closed and from (4.6), (4.7) it
follows that their ranges $R(\widehat {\mathcal D}(k\pm i\varkappa e)+
\widehat W_{\{ R\} })$ are closed subspaces in $L^2(K;{\mathbb C}^M)$. We have
$$
(\widehat {\mathcal D}(k+i\varkappa e)+\widehat W_{\{ R\} })^{\, *}=\widehat 
{\mathcal D}(k-i\varkappa e)+\widehat W_{\{ R\} }\, .
$$
Consequently, from (4.6), (4.7) we see that
$$
{\mathrm {Ker}}\, (\widehat {\mathcal D}(k\pm i\varkappa e)+\widehat W_{\{ R\} }) 
={\mathrm {Coker}}\, (\widehat {\mathcal D}(k\pm i\varkappa e)+\widehat W_{\{ R\} 
})=\{ 0\} \, .
$$
The last equalities mean that the operators $\widehat {\mathcal D}(k\pm i\varkappa 
e)+\widehat W_{\{ R\} }$ (for $\varkappa \geqslant \varkappa _0^{\, \prime }\, $)
are bijective maps of $D(\widehat {\mathcal D}(k\pm i\varkappa e)+\widehat W_{\{ R\} 
})=\widetilde H^1(K;{\mathbb C}^M)$ onto $L^2(K;{\mathbb C}^M)$. From this, using 
(4.6), (4.7), for all $\varkappa \geqslant \varkappa _0^{\, \prime }\, $, all $k\in 
{\mathbb R}^d$ with $|(k,\gamma )|=\pi $, and all $\psi \in L^2(K;{\mathbb C}^M)$
we obtain  
$$
\| \, \bigl( C_2\, \widehat G_-^{\, 1}\, \widehat P^{\, -}+\widetilde a\, 
\widehat G_+^{\, 1}\, \widehat P^{\, +}\bigr) (\widehat {\mathcal D}
(k+i\varkappa e)+\widehat W_{\{ R\} })^{-1}\bigl( \widehat P^{\, +}_*+
\widetilde a^{-1}\, \widehat P^{\, -}_*\bigr) \psi \, \| ^2 \leqslant  \eqno (4.8)
$$ $$
\leqslant \bigl( \, 1-\frac {\delta }2\, \bigr) ^{-1}\, \| \psi \| ^2\, ,   
$$ $$
\| \, \bigl( C_2\, \widehat G_-^{\, 1}\, \widehat P^{\, +}_*+\widetilde a\, 
\widehat G_+^{\, 1}\, \widehat P^{\, -}_*\bigr) (\widehat {\mathcal D}
(k-i\varkappa e)+\widehat W_{\{ R\} })^{-1}\bigl( \widehat P^{\, -}+
\widetilde a^{-1}\, \widehat P^{\, +}\bigr) \psi \, \| ^2 \leqslant  \eqno (4.9)
$$ $$
\leqslant \bigl( \, 1-\frac {\delta }2\, \bigr) ^{-1}\, \| \psi \| ^2\, .   
$$ 
Estimate (4.9) is also valid for the adjoint operator
$$
\bigl( \bigl( C_2\, \widehat G_-^{\, 1}\, \widehat P^{\, +}_*+\widetilde a\, 
\widehat G_+^{\, 1}\, \widehat P^{\, -}_*\bigr) (\widehat {\mathcal D}
(k-i\varkappa e)+\widehat W_{\{ R\} })^{-1}\bigl( \widehat P^{\, -}+
\widetilde a^{-1}\, \widehat P^{\, +}\bigr) \bigr) ^{\, *}\, ,
$$
hence, for all $\psi \in \widetilde H^1(K;{\mathbb C}^M)$,
$$
\| \, \bigl( \widehat P^{\, -}+\widetilde a^{-1}\, \widehat P^{\, +}\bigr) 
(\widehat {\mathcal D}(k+i\varkappa e)+\widehat W_{\{ R\} })^{-1}\bigl( C_2\, 
\widehat G_-^{\, 1}\, \widehat P^{\, +}_*+\widetilde a\, \widehat G_+^{\, 1}\, 
\widehat P^{\, -}_*\bigr) \psi \, \| ^2 \leqslant  \eqno (4.10)
$$ $$
\leqslant \bigl( \, 1-\frac {\delta }2\, \bigr) ^{-1}\, \| \psi \| ^2\, .
$$ 
Let us now use the interpolation of operators (see \cite{BL,RSII} and also the 
proof of Lemma \ref{l4.1}). From (4.8) and (4.10) it follows that for all $\zeta \in 
{\mathbb C}$ with $0\leqslant {\mathrm {Re}}\ \zeta \leqslant 1$ (and all
$\psi \in \widetilde H^1(K;{\mathbb C}^M)$) the inequality
$$
\| \, \bigl( C_2^{\, \zeta }\, \widehat G_-^{\, \zeta }\, \widehat P^{\, -}+
\widetilde a^{\, 2\zeta -1}\, \widehat G_+^{\, \zeta }\, \widehat P^{\, +}\bigr) 
(\widehat {\mathcal D}(k+i\varkappa e)+\widehat W_{\{ R\} })^{-1}\bigl( C_2^{\, 1-
\zeta }\, \widehat G_-^{\, 1-\zeta }\, \widehat P^{\, +}_*+\widetilde a^{\, 1-
2\zeta }\, \widehat G_+^{\, 1-\zeta }\, \widehat P^{\, -}_*\bigr) \psi \, \| ^2 
\leqslant  
$$ $$
\leqslant \bigl( \, 1-\frac {\delta }2\, \bigr) ^{-1}\, \| \psi \| ^2
$$ 
is fulfilled (we also use the uniform boundedness on the closed set $\{ \zeta \in 
{\mathbb C}:0\leqslant {\mathrm {Re}}\ \zeta \leqslant 1\} $ of the vector function
from the left hand side of the last inequality; it follows easily from (4.8)). For 
$\zeta =\frac 12\, $, the last inequality has the form
$$
\| \, \bigl( C_2^{\, \frac 12}\, \widehat G_-^{\, \frac 12}\, \widehat P^{\, -}+
\widehat G_+^{\, \frac 12}\, \widehat P^{\, +}\bigr) 
(\widehat {\mathcal D}(k+i\varkappa e)+\widehat W_{\{ R\} })^{-1}\bigl( C_2
^{\, \frac 12}\, \widehat G_-^{\, \frac 12}\, \widehat P^{\, +}_*+\widehat G_+
^{\, \frac 12}\, \widehat P^{\, -}_*\bigr) \psi \, \| ^2 
\leqslant  \eqno (4.11)
$$ $$
\leqslant \bigl( \, 1-\frac {\delta }2\, \bigr) ^{-1}\, \| \psi \| ^2\, .
$$
By continuity (see (4.8)), estimate (4.11) is true for all vector functions
$\psi \in \widetilde H^{\, \frac 12}(K;{\mathbb C}^M)$. Since any vector function
$\varphi \in \widetilde H^1(K;{\mathbb C}^M)$ can be represented in the form
$$
\varphi =(\widehat {\mathcal D}(k+i\varkappa e)+\widehat W_{\{ R\} })^{-1}
\bigl( C_2^{\, \frac 12}\, \widehat G_-^{\, \frac 12}\, \widehat P^{\, +}_*+
\widehat G_+^{\, \frac 12}\, \widehat P^{\, -}_*\bigr) \psi \, ,
$$
where $\psi \in \widetilde H^{\, \frac 12}(K;{\mathbb C}^M)$, from (4.11) it follows  
that (for all $\varkappa \geqslant \varkappa _0^{\, \prime }\, $, all $k\in {\mathbb 
R}^d$ with $|(k,\gamma )|=\pi $, and all $\varphi \in \widetilde H^1(K;{\mathbb 
C}^M)$) the inequality  
$$
\| \, \bigl( C_2^{\, -\frac 12}\, \widehat G_-^{\, -\frac 12}\, \widehat P^{\, +}_*+
\widehat G_+^{\, -\frac 12}\, \widehat P^{\, -}_*\bigr) 
(\widehat {\mathcal D}(k+i\varkappa e)+\widehat W_{\{ R\} })\varphi \, \| ^2
\geqslant
$$ $$
\geqslant \bigl( \, 1-\frac {\delta }2\, \bigr) \, \| \bigl( C_2
^{\, \frac 12}\, \widehat G_-^{\, \frac 12}\, \widehat P^{\, -}+\widehat G_+
^{\, \frac 12}\, \widehat P^{\, +}\bigr) \varphi \, \| ^2 
$$
holds. Finally, using the estimate
$$
\| \, \widehat G_-^{\, -\frac 12}\, (\widehat W-\widehat W_{\{ R\} })\varphi \, 
\| ^2 \leqslant  (c^{\, *})^2\, ||| \, \widehat W-\widehat W_{\{ R\} }\, ||| 
_{\gamma ,\, M}^{\, 2}\, \| \, \widehat G_-^{\, \frac 12}\varphi \, \| ^2
\leqslant
$$ $$
\leqslant \frac 1{18}\ \delta ^{\, 2}\, C_2^{\, 2}\, \| \, \widehat G_-^{\, 
\frac 12}\varphi \, \| ^2 \leqslant \frac 1{18}\ \delta ^{\, 2}\, C_2^{\, 2}\, 
\bigl( \, \| \, \widehat G_-^{\, \frac 12}\, \widehat P^{\, -}\varphi \, \| ^2+ 
\| \, \widehat G_+^{\, \frac 12}\, \widehat P^{\, +}\varphi \, \| ^2\, \bigr)
$$
which is a consequence of Lemma \ref{l4.1}, for all $\varkappa \geqslant \varkappa 
_0^{\, \prime }\, $, all vectors $k\in {\mathbb R}^d$ with $|(k,\gamma )|=\pi $,
and all vector functions $\varphi \in \widetilde H^1(K;{\mathbb C}^M)$ we obtain
the inequality
$$
\| \, \bigl( C_2^{\, -\frac 12}\, \widehat G_-^{\, -\frac 12}\, \widehat P^{\, +}_*+
\widehat G_+^{\, -\frac 12}\, \widehat P^{\, -}_*\bigr) 
(\widehat {\mathcal D}(k+i\varkappa e)+\widehat W)\varphi \, \| ^2
\geqslant 
$$ $$
\geqslant \bigl( \, 1-\frac {\delta }3\, \bigr) \, \| \, \bigl( C_2^{\, -\frac 12}\, 
\widehat G_-^{\, -\frac 12}\, \widehat P^{\, +}_*+\widehat G_+^{\, -\frac 12}\, 
\widehat P^{\, -}_*\bigr) (\widehat {\mathcal D}(k+i\varkappa e)+\widehat W_{\{ R\} 
})\varphi \, \| ^2-
$$ $$
-\, \frac 3{\delta }\, \| \, \bigl( C_2^{\, -\frac 12}\, \widehat G_-^{\, -\frac 12}
\, \widehat P^{\, +}_*+\widehat G_+^{\, -\frac 12}\, \widehat P^{\, -}_*\bigr) 
(\widehat W-\widehat W_{\{ R\} })\varphi \, \| ^2 \geqslant
$$ $$
\geqslant \bigl( \, 1-\frac {5\delta }6\, \bigr) \, \| \, \bigl( C_2
^{\, \frac 12}\, \widehat G_-^{\, \frac 12}\, \widehat P^{\, -}+\widehat G_+
^{\, \frac 12}\, \widehat P^{\, +}\bigr) \varphi \, \| ^2 - \, \frac 3{\delta }\, 
C_2^{-1}\, \| \, \widehat G_-^{\, -\frac 12}\, (\widehat W-\widehat W_{\{ R\} })
\varphi \, \| ^2 \geqslant
$$ $$
\geqslant \bigl( \, 1-\frac {5\delta }6\, \bigr) \, \bigl( \, C_2\, \| \,
\widehat G_-^{\, \frac 12}\, \widehat P^{\, -}\varphi \, \| ^2+\| \, \widehat G_+
^{\, \frac 12}\, \widehat P^{\, +}\varphi \, \| ^2 \, \bigr) - \, \frac {\delta }6\, 
C_2\, \bigl( \, \| \, \widehat G_-^{\, \frac 12}\, \widehat P^{\, -}\varphi \, \| 
^2+\| \, \widehat G_+^{\, \frac 12}\, \widehat P^{\, +}\varphi \, \| ^2\, \bigr)
\geqslant
$$ $$
\geqslant (1-\delta )\, \bigl( \, C_2\, \| \, \widehat G_-^{\, \frac 12}\, 
\widehat P^{\, -}\varphi \, \| ^2+\| \, \widehat G_+^{\, \frac 12}\, \widehat 
P^{\, +}\varphi \, \| ^2 \, \bigr) \, .
$$
Theorem \ref{th1.4} is proved.

\section{Proof of Theorem \ref{th1.5}}

In this Section we shall use the modification of the method suggested in \cite{Izv06}.

Let $\# \, {\mathcal O}$ denote the number of elements of a finite set ${\mathcal 
O}$. By $d(K^*)$ denote the diameter of the fundamental domain $K^*$. For a
measurable matrix function $\widehat W:K\to {\mathcal M}_M$ and a number $a\geqslant 
0$, we write 
$$
\widehat W_{[a]}(x)=
\left\{
\begin{array}{ll}
\widehat W(x) & \text {if}\ \| \widehat W(x)\| \geqslant a\, , \\
\widehat 0\ & \text {otherwise}.
\end{array}
\right.
$$
If $\widehat W=(W_{pq})_{p,q=1,\dots ,M}\in L^2(K;{\mathcal M}_M)$, then
$$
\sum\limits_{N\, \in \, \Lambda ^*}\| \widehat W_N\| ^2\leqslant
\sum\limits_{N\, \in \, \Lambda ^*}\sum\limits_{p,\, q\, =\, 1}^M|(W_{pq})
_N| ^2=  \eqno (5.1)
$$ $$
=v^{-1}(K)\int_K\biggl( \, \sum\limits_{p,\, q\, =\, 1}^M|W_{pq}(x)| ^2 
\biggr) \, dx\leqslant
M\, v^{-1}(K)\int_K\| \widehat W(x)\| ^2\, dx\, .
$$

First we shall obtain some auxiliary statements which will be used in the proof
of Theorems \ref{th1.5} and \ref{th1.6}.

As above we suppose that the vector $\gamma \in \Lambda \backslash \{ 0\} $ is fixed, 
$e=|\gamma |^{-1}\gamma $. Choose a number $\Xi \geqslant 1$ (we set $\Xi >1$ in this
Section and $\Xi =1$ in Section 6). Let $\widehat V=\widehat V^{\, (0)}+\widehat 
V^{\, (1)}$, where $\widehat V^{\, (s)}\in L^3_w(K;{\mathcal S}_M^{(s)})$, $s=0,1$. 
For the Fourier coefficients $\widehat V^{\, (s)}_N\, $, $N\in \Lambda ^*$, of the 
matrix functions $\widehat V^{\, (s)}$, $s=0,1$, we have $\widehat V^{\, (s)}_N \in 
{\mathcal S}_M^{(s)}\, $. Hence, the Fourier coefficients $\widehat V^{\, (s)}_N$ (as
operators acting on ${\mathbb C}^M$) commute with all orthogonal projections $\widehat 
P^{\, \pm}_{\widetilde e}\, $, $\widetilde e\in S_1(e)$. The same assertion is true 
for the Fourier coefficients $\widehat V_N\, $, $N\in \Lambda ^*$.

Let us choose (and fix) a number $h\in {\mathbb R}$ for which $h\geqslant 64$ and 
$h>2\pi d(K^*)$. The number $\varkappa _0>0$ is chosen sufficiently large (lower
estimates for the number $\varkappa _0$ are specified in the sequel). To start with, 
we assume that $\varkappa _0>2h$. Let $\varkappa _0\leqslant \varkappa _1\leqslant 
\varkappa \leqslant \Xi \, \varkappa _1$ (if we set $\Xi =1$, then $\varkappa _1
=\varkappa $), $l=l(h,\varkappa _1)\in {\mathbb N}$ is the largest integer with 
$2h^{\, l}\leqslant \varkappa _1$ (hence, $\varkappa _1<2h^{\, l+1}$ and $2h^{\, l}
\leqslant \varkappa <2\, \Xi \, h^{\, l+1}$).

In what follows, the vector $k\in {\mathbb R}^3$ is assumed to satisfy the condition 
$|(k,\gamma )|=\pi $. For any $b\in [0,\varkappa )$, we write
$$
{\mathcal K}(b)=\{ N\in \Lambda ^*:G_N(k;\varkappa )\leqslant b\} 
$$
(the sets ${\mathcal K}(b)$ depend also on $k$ and $\varkappa $). For $b>2\pi d(K^*)$ 
(and $b<\varkappa $) the following estimate holds: 
$$
\# \, {\mathcal K}(b)\leqslant c_2\, v^{-1}(K^*)\, \varkappa b^2\, ,
$$
where $c_2>0$ is a universal constant. If $N\in {\mathcal K}(b)$, then $k_{\perp}+
2\pi N_{\perp} \neq 0$. If $N\in {\mathcal K}(h^{\, l})$, then 
$$
|\varkappa -|k_{\perp}+2\pi N_{\perp}||\leqslant h^{\, l}\leqslant \frac 
{\varkappa}2\, , 
$$
hence, $|k_{\perp}+2\pi N_{\perp}|\geqslant \frac {\varkappa}2>1$. From this it 
follows that for all $N, N^{\, \prime}\in {\mathcal K}(h^{\, l})$, 
$$
|\widetilde e(k+2\pi N)-\widetilde e(k+2\pi N^{\, \prime})|\leqslant \frac {4\pi }{\varkappa }
\, |N_{\perp}-N^{\, \prime}_{\perp}|\, .  \eqno (5.2)
$$

We introduce the notation
$$
{\mathcal K}_1\doteq {\mathcal K}(h)\, ,\ \ \ 
{\mathcal K}_{\mu}\doteq \{ N\in \Lambda ^*:h^{\, \mu -1}<G_N(k;\varkappa )
\leqslant h^{\, \mu }\} \, ,\ \mu =2,\dots ,l\, .
$$
For any vector function $\varphi \in L^2(K;{\mathbb C}^M)$,
$$
\sqrt {\frac {\pi}{|\gamma |}}\ \| \, \widehat P^{\, {\mathcal K}_1}\varphi \, 
\| \leqslant \| \, \widehat G_-^{\, \frac 12}\, \widehat P^{\, {\mathcal K}_1}
\varphi \, \| \leqslant h^{\, \frac 12}\, \| \, \widehat P^{\, {\mathcal K}_1}
\varphi \, \| \, ,  \eqno (5.3)
$$ $$
h^{\, \frac {\mu -1}2}\, \| \, \widehat P^{\, {\mathcal K}_{\mu}}\varphi \, \| 
\leqslant \| \, \widehat G_-^{\, \frac 12}\, \widehat P^{\, {\mathcal K}_{\mu}}\varphi 
\, \| \leqslant h^{\, \frac {\mu}2}\, \| \, \widehat P^{\, {\mathcal K}_{\mu}}
\varphi \, \| \, ,\ \mu =2,\dots ,l\, .  
\eqno (5.4)
$$

Let us denote
$$
\widehat P^{\, (\pm )}_{\mu}\doteq \widehat P^{\, \pm }(k;e)\, \widehat P^{\, 
{\mathcal K}_{\mu}}\, ,\ \mu =1,\dots ,l\, ,\ \ \ \widehat P^{\, (\pm )}\doteq 
\widehat P^{\, \pm }(k;e)\, \widehat P^{\, {\mathcal K}(h^{\, l})}=
\sum\limits_{\mu \, =\, 1}^l\widehat P^{\, (\pm )}_{\mu}\, .
$$
The equality $\widehat P^{\, (+)}+\widehat P^{\, (-)}=\widehat P^{\, {\mathcal K}
(h^{\, l})}$ holds. 

For all $\varkappa >0$ and all $N\in \Lambda ^*$, we get
$$
G^+_N(k;\varkappa )>|k+2\pi N|\geqslant |k_{\parallel}+2\pi N_{\parallel}|\geqslant
\pi |\gamma |^{-1}\, .
$$
Using the inequality $\varkappa <2\, \Xi \, h^{\, l+1}$, for all $N\in \Lambda ^*\, 
\backslash \, {\mathcal K}(h^{\, l})$, we obtain 
$$
G^-_N(k;\varkappa )>h^{\, l}(2\varkappa +h^{\, l})^{-1}G^+_N(k;\varkappa )>
(4\, \Xi \, h+1)^{-1}G^+_N(k;\varkappa )\, ,  \eqno (5.5)
$$ $$
G^-_N(k;\varkappa )>h^{\, l}(\varkappa +h^{\, l})^{-1}|k+2\pi N|>
(2\, \Xi \, h+1)^{-1}|k+2\pi N|\, .  \eqno (5.6)
$$

\begin{lemma} \label{l5.1}
Suppose $\widehat W\in L^2(K;{\mathcal M}_M)$. Then for any finite set ${\mathcal O}
\subset \Lambda ^*$ and any vector function $\varphi \in L^2(K;{\mathbb C}^M)$ we 
have
$$
\| \widehat W\widehat P^{\, {\mathcal O}}\varphi \| \leqslant v^{-\frac 12}(K)\,
(\# \, {\mathcal O})^{\frac 12}\, \| \widehat W\| _{L^2(K;{\mathcal M}_M)}
\, \| \widehat P^{\, {\mathcal O}}\varphi \| \, .
$$
\end{lemma}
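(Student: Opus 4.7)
The plan is to reduce the claim to a standard $L^\infty$ bound on a trigonometric polynomial with finitely many Fourier modes, combined with the pointwise matrix-vector inequality.

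First, I would observe that $\psi \doteq \widehat P^{\mathcal O}\varphi = \sum_{N\in\mathcal O}\varphi_N\, e^{2\pi i(N,x)}$ is a finite trigonometric polynomial, so it is in particular bounded. By the triangle inequality followed by Cauchy--Schwarz in $\ell^2(\mathcal O)$,
$$
\|\psi(x)\|_{\mathbb C^M} \leqslant \sum_{N\in\mathcal O}\|\varphi_N\|_{\mathbb C^M}
\leqslant (\#\,\mathcal O)^{1/2}\Bigl(\sum_{N\in\mathcal O}\|\varphi_N\|^2_{\mathbb C^M}\Bigr)^{1/2}
$$
for every $x\in K$. Parseval's identity on $L^2(K;\mathbb C^M)$ (with the normalization of Fourier coefficients used in the paper) gives $\sum_{N\in\mathcal O}\|\varphi_N\|^2 = v^{-1}(K)\|\psi\|^2$, whence
$$
\|\psi\|_{L^\infty(K;\mathbb C^M)} \leqslant v^{-1/2}(K)\,(\#\,\mathcal O)^{1/2}\,\|\psi\|_{L^2(K;\mathbb C^M)}.
$$

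Next, I would use the pointwise bound $\|\widehat W(x)\psi(x)\|_{\mathbb C^M}\leqslant \|\widehat W(x)\|_{\mathcal M_M}\|\psi(x)\|_{\mathbb C^M}$, which is just the definition of the operator norm on $\mathcal M_M$. Squaring, integrating over $K$, and pulling out the $L^\infty$ norm of $\psi$ yields
$$
\|\widehat W\psi\|^2 \leqslant \|\psi\|^2_{L^\infty(K;\mathbb C^M)}\,\int_K\|\widehat W(x)\|^2_{\mathcal M_M}\,dx
= \|\psi\|^2_{L^\infty(K;\mathbb C^M)}\,\|\widehat W\|^2_{L^2(K;\mathcal M_M)}.
$$
Substituting the $L^\infty$ estimate from the previous step and taking square roots gives exactly the claimed inequality.

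Nothing here is really an obstacle; the argument is a routine consequence of Parseval plus the trivial supremum bound on a trigonometric polynomial with $\#\mathcal O$ nonzero coefficients. The only point worth double-checking is the normalization constant $v(K)$, which enters because the Fourier coefficients in this paper carry the prefactor $v^{-1}(K)$ in their definition, so Parseval reads $\|\psi\|^2 = v(K)\sum_N\|\psi_N\|^2$ rather than $\sum_N\|\psi_N\|^2$; tracking this correctly produces the factor $v^{-1/2}(K)$ in the stated bound.
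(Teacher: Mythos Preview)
Your proof is correct and follows essentially the same approach as the paper: both arguments bound $\|\widehat W\widehat P^{\mathcal O}\varphi\|$ by $\|\widehat W\|_{L^2}\|\widehat P^{\mathcal O}\varphi\|_{L^\infty}$, then estimate the $L^\infty$ norm of the trigonometric polynomial via the triangle inequality, Cauchy--Schwarz over $\mathcal O$, and Parseval. Your tracking of the $v(K)$ normalization is correct.
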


\begin{proof}
Indeed,
$$
\| \widehat W\widehat P^{\, {\mathcal O}}\varphi \| \leqslant \| \widehat  
W\| _{L^2(K;{\mathcal M}_M)}\, \| \widehat P^{\, {\mathcal O}}\varphi \| 
_{L^{\infty}(K;{\mathbb C}^M)}\leqslant
\| \widehat W\| _{L^2(K;{\mathcal M}_M)}\, \bigl( \, 
\sum\limits_{N\, \in \, {\mathcal O}}\| \varphi _N\| \, \bigr) \leqslant
$$ $$
\leqslant (\# \, {\mathcal O})^{\frac 12}\, \| \widehat W\| _{L^2(K;{\mathcal 
M}_M)}\, \bigl( \, \sum\limits_{N\, \in \, {\mathcal O}}\| \varphi _N\| ^2\, \bigr)
^{\frac 12}=v^{-\frac 12}(K)\, (\# \, {\mathcal O})^{\frac 12}\, \| \widehat 
W\| _{L^2(K;{\mathcal M}_M)}\, \| \widehat P^{\, {\mathcal O}}\varphi \| \, .
$$
\end{proof}

\begin{lemma} \label{l5.2}
Suppose $\widehat W\in L^3_w(K;{\mathcal M}_M)$. Then there exists a nonincreasing
function $[0,+\infty )\ni t\to f_{\widehat W}(t)\in [0,+\infty )$ such that 
$$
f_{\widehat W}(t)\to \| \widehat W \| ^{(\infty ,\, {\mathrm {loc}})}_{L^3_w(K;
{\mathcal M}_M)}
$$ 
as $t\to +\infty $, and for any finite set ${\mathcal O}\subset \Lambda ^*$ and any
vector function $\varphi \in L^2(K;{\mathbb C}^M)$ the inequality
$$
\| \widehat W\widehat P^{\, {\mathcal O}}\varphi \| \leqslant 4v^{-\frac 13}(K)\,
(\# \, {\mathcal O})^{\frac 13}\, f_{\widehat W}(\# \, {\mathcal O})\, 
\| \widehat P^{\, {\mathcal O}}\varphi \| 
$$
holds.
\end{lemma}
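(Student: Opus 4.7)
The plan is to establish the bound by truncating $\widehat W$ at a large threshold $a$, controlling the truncated part via Lemma \ref{l5.1} together with a weak-$L^3$ $L^2$-estimate, and optimizing in $a$ as a function of $\#\mathcal{O}$. Concretely, for $a>0$ write $\widehat W=\widehat W_{[a]}+(\widehat W-\widehat W_{[a]})$. The pointwise bound $\|\widehat W(x)-\widehat W_{[a]}(x)\|_{\mathcal M_M}\leqslant a$ gives $\|(\widehat W-\widehat W_{[a]})\widehat P^{\mathcal O}\varphi\|\leqslant a\,\|\widehat P^{\mathcal O}\varphi\|$. Since $\widehat W_{[a]}\in L^2(K;\mathcal M_M)$, Lemma \ref{l5.1} yields $\|\widehat W_{[a]}\widehat P^{\mathcal O}\varphi\|\leqslant v^{-1/2}(K)(\#\mathcal O)^{1/2}\|\widehat W_{[a]}\|_{L^2(K;\mathcal M_M)}\|\widehat P^{\mathcal O}\varphi\|$.

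The crucial $L^2$ control of $\widehat W_{[a]}$ comes from the layer-cake identity combined with the weak-$L^3$ tail bound: setting $\beta(a)\doteq\sup_{s\geqslant a}s\,v(\{x\in K:\|\widehat W(x)\|_{\mathcal M_M}>s\})^{1/3}$, a direct computation splitting the integral at $s=a$ gives $\|\widehat W_{[a]}\|_{L^2(K;\mathcal M_M)}^2\leqslant 3\beta(a)^3/a$. Substituting this and choosing $a=a(t)$ (with $t=\#\mathcal O$) so that $a^3\sim v^{-1}(K)\,t\,\beta(a)^3$ balances the two summands, one obtains $\|\widehat W\widehat P^{\mathcal O}\varphi\|\leqslant C\,v^{-1/3}(K)\,t^{1/3}\beta(a(t))\,\|\widehat P^{\mathcal O}\varphi\|$ with a constant $C<4$. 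Because $\beta(a)\to\|\widehat W\|^{(\infty)}_{L^3_w(K;\mathcal M_M)}$ as $a\to+\infty$, this already yields the lemma's bound with the \emph{global} $\limsup$-constant in place of the (generally smaller) $(\infty,\mathrm{loc})$-constant.

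To sharpen the constant to $\|\widehat W\|^{(\infty,\mathrm{loc})}_{L^3_w}$, I would repeat the truncation argument cell-by-cell on a partition $K=\bigsqcup_j K_j$ with $K_j\subset B_{r(t)}(x_j)$, the radius $r(t)\to 0$ being tuned to $a(t)\to+\infty$. The localized estimate $\|\mathbf 1_{K_j}\widehat W_{[a]}\|_{L^2(K;\mathcal M_M)}^2\leqslant 3\beta_{j,r(t)}(a)^3/a$ uses only the distribution function of $\widehat W$ on $B_{r(t)}(x_j)$, which by the very definition of the $(\infty,\mathrm{loc})$-norm is, for large $a$ and small $r$, essentially controlled by $\|\widehat W\|^{(\infty,\mathrm{loc})}_{L^3_w(K;\mathcal M_M)}$. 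One then defines $f_{\widehat W}(t)$ to be the nonincreasing envelope of the optimized constant obtained over admissible pairs $(r(t),a(t))$; monotonicity of $\beta_{j,r}(a)$ in $a$ and $r$, together with the three-limit structure $\lim_{r\to +0}\sup_x\limsup_{t\to+\infty}$ in the definition of the loc-norm, then delivers $f_{\widehat W}(t)\to\|\widehat W\|^{(\infty,\mathrm{loc})}_{L^3_w(K;\mathcal M_M)}$. The main obstacle is the last step: one must choose $r(t)$ and $a(t)$ so that, asymptotically, only a bounded number of cells contribute nontrivially to the aggregate $\sum_j\beta_{j,r(t)}(a(t))^3$, so that this sum is controlled by $\|\widehat W\|^{(\infty,\mathrm{loc})}_{L^3_w}$ instead of by a multiple (number of cells) of that quantity — this is precisely where the local nature of the $(\infty,\mathrm{loc})$-norm must be exploited.
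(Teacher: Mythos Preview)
Your outline matches the paper's proof almost exactly: truncate $\widehat W=\widehat W_{[a]}+(\widehat W-\widehat W_{[a]})$, bound the bounded part pointwise, bound the tail via Lemma~\ref{l5.1} together with the dyadic estimate $\|\widehat W_{[a]}\|_{L^2}^2\leqslant \tfrac{8}{a}\,\|\widehat W_{[a]}\|_{L^3_w}^3$, and then localize on a fine partition $K=\bigcup_j K_j^{(\varepsilon)}$ chosen so that on each cell the tail weak-$L^3$ norm is within $\varepsilon$ of the $(\infty,\mathrm{loc})$-constant.

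Where your sketch diverges from the paper is in your diagnosis of the ``main obstacle''. You try to arrange that only boundedly many cells contribute to $\sum_j\beta_{j,r(t)}(a(t))^3$, and you correctly sense this is hard. The paper does \emph{not} attempt this. Instead it uses the $L^2$-Pythagorean identity for the disjoint partition twice: once to split the left-hand side as
\[
\|\widehat W\widehat P^{\,\mathcal O}\varphi\|^2=\sum_j\|\chi_{K_j^{(\varepsilon)}}\widehat W\widehat P^{\,\mathcal O}\varphi\|^2,
\]
and once to recombine the right-hand side via $\sum_j\|\chi_{K_j^{(\varepsilon)}}\widehat P^{\,\mathcal O}\varphi\|^2=\|\widehat P^{\,\mathcal O}\varphi\|^2$. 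On each cell one applies the basic inequality (your first paragraph) to the localized function $\chi_{K_j^{(\varepsilon)}}\widehat W$, using a common truncation level $a\geqslant\max_j a_j$; because the weak-$L^3$ tail on every cell is bounded by the \emph{same} constant $\varepsilon+\|\widehat W\|^{(\infty,\mathrm{loc})}_{L^3_w}$, the resulting per-cell bound has a coefficient independent of $j$, and the sum over $j$ collapses without picking up the number of cells. One then sets $a=2v^{-1/3}(K)(\#\mathcal O)^{1/3}\bigl(\varepsilon+\|\widehat W\|^{(\infty,\mathrm{loc})}_{L^3_w}\bigr)$; the partition and the thresholds $a_j$ depend only on $\varepsilon$, so the condition $a\geqslant\max_j a_j$ becomes a lower bound on $\#\mathcal O$ (this is inequality (5.9) in the paper), and letting $\varepsilon\downarrow 0$ as $\#\mathcal O\to\infty$ produces the nonincreasing $f_{\widehat W}$ with the required limit. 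In short: control $\max_j\beta_{j,r}(a)$, not $\sum_j\beta_{j,r}(a)^3$; the sum over cells is absorbed by the $L^2$-additivity on the $\varphi$ side.
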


\begin{proof}
For $a>0$, we get
$$
\| \widehat W_{[a]} \| ^2_{L^2(K;{\mathcal M}_M)}=  
\, \sum\limits_{\nu \, =\, 1}^{+\infty }\, \int_{\{ \, x\, \in \, K\, :\, 2^{\, \nu
-1}a\, \leqslant \, \| \widehat W(x)\, \| \, <\, 2^{\, \nu }a\, \} }\| \widehat W(x) 
\| ^2\, dx\leqslant
$$ $$
\leqslant \frac 8a\ \sum\limits_{\nu \, =\, 1}^{+\infty }\, 2^{-\nu }\,
\| \widehat W_{[2^{\, \nu -1}a]}\| ^3_{L^3_w(K;{\mathcal M}_M)}\leqslant
\frac 8a\ \| \widehat W_{[a]}\| ^3_{L^3_w(K;{\mathcal M}_M)}\, .
$$
Hence, for all vector functions $\varphi \in L^2(K;{\mathbb C}^M)$, 
$$
\| \, \widehat W\, \widehat P^{\, {\mathcal O}}\varphi \, \| \leqslant \| \, 
(\widehat W-\widehat W_{[a]})\, \widehat P^{\, {\mathcal O}}\varphi \, \| +\| 
\widehat W_{[a]}\, \widehat P^{\, {\mathcal O}}\varphi \, \| \leqslant  \eqno (5.7)
$$ $$
\leqslant (a+v^{-\frac 12}(K)\, (\# \, {\mathcal O})^{\frac 12}\, \| \widehat 
W_{[a]} \| _{L^2(K;{\mathcal M}_M)})\, \| \, \widehat P^{\, {\mathcal O}}\varphi \, 
\| \leqslant
$$ $$
\leqslant \bigl( \, a+\frac {2\sqrt 2}{\sqrt a}\ v^{-\frac 12}(K)\, (\# \, 
{\mathcal O})^{\frac 12}\, \| \widehat W_{[a]} \| ^{\, \frac 32}
_{L^3(K;{\mathcal M}_M)}\, \bigr) \, \| \widehat P^{\, {\mathcal O}}\varphi \|
$$
(see Lemma \ref{l5.1}). Fix a number $\varepsilon >0$. We can represent the 
fundamental domain $K$ in the form $\bigcup_{j\, =\, 1}^JK^{\, (\varepsilon )}_j$, 
where $J\in {\mathbb N}$, and $K^{\, (\varepsilon )}_j$ are disjoint measurable sets 
of sufficiently small diameters $d(K^{\, (\varepsilon )}_j)$ such that for all $j$ 
the following inequalities are fulfilled:
$$
\| \, \chi _{K^{\, (\varepsilon )}_j}\, \widehat W \, \| ^{(\infty )}_{L^3_w(K;
{\mathcal M}_M)}<\frac {\varepsilon }2+\| \widehat W \| ^{(\infty ,\, {\mathrm 
{loc}})}_{L^3_w(K;{\mathcal M}_M)}
$$
(here $\chi _T$ is the characteristic function of a set $T\subseteq K$). For every
$j=1,\dots ,J$, we choose a number $a_j>0$ such that
$$
\| \, \chi _{K^{\, (\varepsilon )}_j}\, \widehat W_{[a_j]} \, \| _{L^3_w(K;
{\mathcal M}_M)}<\frac {\varepsilon }2+\| \, \chi _{K^{\, (\varepsilon )}_j}\, 
\widehat W \, \| ^{(\infty )}_{L^3_w(K;{\mathcal M}_M)}<\varepsilon +\| \widehat 
W \| ^{(\infty ,\, {\mathrm {loc}})}_{L^3_w(K;{\mathcal M}_M)}\, .
$$
From (5.7), for $a\geqslant \max\limits_ja_j\, $, we derive
$$
\| \, \widehat W\, \widehat P^{\, {\mathcal O}}\varphi \, \| ^2=
\sum\limits_j\, \| \, \chi _{K^{\, (\varepsilon )}_j}\, \widehat W\, \widehat P^{\, 
{\mathcal O}}\varphi \, \| ^2\leqslant  \eqno (5.8)
$$ $$
\leqslant 2\, \sum\limits_j \, \bigl( \, a^2+\frac 8a\ v^{-1}(K)\, (\# \, 
{\mathcal O})\ \| \, \chi _{K^{\, (\varepsilon )}_j}\, \widehat W_{[a_j]} \, \| 
_{L^3_w(K;{\mathcal M}_M)}^{\, 3}\, \bigr) \, \| \, \chi _{K^{\, 
(\varepsilon )}_j}\, \widehat P^{\, {\mathcal O}}\varphi \, \| ^2\leqslant
$$ $$
\leqslant 2\, \bigl( \, a^2+\frac 8a\ v^{-1}(K)\, (\# \, 
{\mathcal O})\ \bigl( \, \varepsilon +\| \widehat W\| _{L^3_w(K;{\mathcal M}_M)}
^{(\infty ,\, {\mathrm {loc}})}\, \bigr) ^3 \bigr) \, \sum\limits_j\| \, \chi 
_{K^{\, (\varepsilon )}_j}\, \widehat P^{\, {\mathcal O}}\varphi \, \| ^2\, .
$$
If
$$
\# \, {\mathcal O}\, \geqslant \, \frac 18\, v(K)\, \bigl( \, \varepsilon +\| 
\widehat W\| _{L^3_w(K;{\mathcal M}_M)}^{(\infty ,\, {\mathrm {loc}})}\, \bigr) 
^{-3}\, \max\limits_j a_j^3\, ,  \eqno (5.9)
$$
then we put
$$
a=2\, v^{-\frac 13}(K)\, (\# \, {\mathcal O})^{\frac 13}\, \bigl( \, \varepsilon +\| 
\widehat W\| _{L^3_w(K;{\mathcal M}_M)}^{(\infty ,\, {\mathrm {loc}})}\, \bigr) \, .
$$
Therefore inequality (5.8) (under condition (5.9)) implies that
$$
\| \, \widehat W\, \widehat P^{\, {\mathcal O}}\varphi \, \| \leqslant 4v^{-\frac 
13}(K)\, (\# \, {\mathcal O})^{\frac 13}\, \bigl( \, \varepsilon +\| \widehat W\| 
_{L^3_w(K;{\mathcal M}_M)}^{(\infty ,\, {\mathrm {loc}})}\, \bigr) \, \| \widehat 
P^{\, {\mathcal O}}\varphi \| \, .  \eqno (5.10)
$$
Since the number $\varepsilon >0$ can be chosen as small as we wish, the inequality
claimed in Lemma \ref{l5.2} is a consequence of (5.10).
\end{proof}

\begin{lemma} \label{l5.3}
There is a number $c_3=c_3(h)>0$ such that for any $\varepsilon >0$ and any matrix
function $\widehat V\in L^3_w(K;{\mathcal M}_M)$, for which
$$
\| \widehat V \| _{L^3_w(K;{\mathcal M}_M)}^{(\infty ,\, {\mathrm {loc}})}
\leqslant c_3\, \varepsilon \, ,  \eqno (5.11)
$$
there exists a number $\varkappa _0^{\, \prime}\, (\varepsilon )=\varkappa _0^{\, 
\prime}\, (\varepsilon ; \Lambda ,\gamma ,h,\widehat V)>2h$ such that for all 
$\varkappa _1\geqslant \varkappa _0^{\, \prime}\, (\varepsilon )$, all $\varkappa \in
[\varkappa _1 \, ,\Xi \, \varkappa _1]$, all vectors $k\in {\mathbb R}^3$ with
$|(k,\gamma )|=\pi $, and all vector functions $\varphi \in L^2(K;{\mathbb C}^M)$ 
the inequality
$$
\| \, \widehat G^{-\frac 12}_+\, \widehat V\, \widehat P^{\, {\mathcal K}(h^{\,
l})}\varphi \, \| \leqslant \varepsilon \, \| \, \widehat G_-^{\, \frac 12}\,
\widehat P^{\, {\mathcal K}(h^{\, l})}\varphi \, \|  \eqno (5.12)
$$
holds.
\end{lemma}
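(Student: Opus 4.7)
The plan is to use a dyadic frequency decomposition $\widehat P^{\, {\mathcal K}(h^{\, l})}\varphi=\sum_{\mu=1}^l\widehat P^{\, {\mathcal K}_{\mu}}\varphi$ combined with the $L^3_w$ operator bound in Lemma \ref{l5.2}, together with the elementary observation that the multiplier $\widehat G_+^{\, -\frac 12}$ has operator norm at most $\varkappa^{-\frac 12}$ on all of $L^2(K;{\mathbb C}^M)$ (because $G^{\, +}_N(k;\varkappa)\geqslant \varkappa+|k_{\perp}+2\pi N_{\perp}|\geqslant \varkappa$ for every $N\in\Lambda^*$). This factor $\varkappa^{-\frac 12}$ will absorb the volume growth coming from Lemma \ref{l5.2} on each dyadic shell, leaving a bound that depends essentially only on $\|\widehat V\|^{(\infty,\mathrm{loc})}_{L^3_w(K;{\mathcal M}_M)}$.

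For each $\mu\in\{1,\dots ,l\}$ I would apply Lemma \ref{l5.2} with ${\mathcal O}={\mathcal K}(h^{\, \mu})$ (rather than ${\mathcal O}={\mathcal K}_{\mu}$) and input vector $\widehat P^{\, {\mathcal K}_{\mu}}\varphi$, using the identity $\widehat P^{\, {\mathcal K}(h^{\, \mu})}\widehat P^{\, {\mathcal K}_{\mu}}=\widehat P^{\, {\mathcal K}_{\mu}}$. The point of inflating to ${\mathcal K}(h^{\, \mu})$ is that these cardinalities are monotone in $\mu$, so, since $f_{\widehat V}$ is nonincreasing, $f_{\widehat V}(\#{\mathcal K}(h^{\, \mu}))\leqslant f_{\widehat V}(\#{\mathcal K}_1)$ uniformly in $\mu$. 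Combining with $\#{\mathcal K}(h^{\, \mu})\leqslant c_2\, v^{-1}(K^*)\, \varkappa\, h^{\, 2\mu}$ and with (5.3), (5.4) (the $\mu=1$ case being absorbed into a $\gamma$-dependent constant), this gives
$$
\|\, \widehat G_+^{\, -\frac 12}\, \widehat V\, \widehat P^{\, {\mathcal K}_{\mu}}\varphi\, \|\leqslant C_0\, \varkappa^{-\frac 16}\, h^{\, (\mu+3)/6}\, f_{\widehat V}(\#{\mathcal K}_1)\, \|\, \widehat G_-^{\, \frac 12}\, \widehat P^{\, {\mathcal K}_{\mu}}\varphi\, \|\, ,
$$
where $C_0=C_0(h,c_2,|\gamma|,v(K),v(K^*))$ is independent of $\mu$, $\varkappa$, and $\widehat V$.

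Assembling the shells by the triangle inequality and Cauchy--Schwarz (using pairwise orthogonality of the $\widehat P^{\, {\mathcal K}_{\mu}}$),
$$
\|\, \widehat G_+^{\, -\frac 12}\, \widehat V\, \widehat P^{\, {\mathcal K}(h^{\, l})}\varphi\, \|^{\, 2}\leqslant C_0^{\, 2}\, f_{\widehat V}(\#{\mathcal K}_1)^{\, 2}\, \varkappa^{-\frac 13}\, \Bigl(\, \sum_{\mu=1}^l h^{(\mu+3)/3}\Bigr)\, \|\, \widehat G_-^{\, \frac 12}\, \widehat P^{\, {\mathcal K}(h^{\, l})}\varphi\, \|^{\, 2}\, .
$$
The geometric sum is $\leqslant C(h)\, h^{l/3}$, and the crucial cancellation $\varkappa^{-1/3}h^{l/3}\leqslant 2^{-1/3}$ (from $2h^{\, l}\leqslant \varkappa_1\leqslant\varkappa$) makes the prefactor on the right hand side bounded by $C_1(h,\gamma,K,K^*)\, f_{\widehat V}(\#{\mathcal K}_1)^{\, 2}$, with $C_1$ independent of $\varkappa$, $k$ and $\widehat V$.

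To conclude I set $c_3\doteq (2\sqrt{C_1})^{-1}$ and choose $\varkappa_0^{\, \prime}(\varepsilon)>2h$ large enough that $\#{\mathcal K}_1=\#{\mathcal K}(h)$ exceeds a threshold $T_0=T_0(\widehat V,\varepsilon)$ for which $f_{\widehat V}(T_0)\leqslant 2\|\widehat V\|^{(\infty,\mathrm{loc})}_{L^3_w(K;{\mathcal M}_M)}\leqslant 2c_3\varepsilon$; then $\sqrt{C_1}\, f_{\widehat V}(\#{\mathcal K}_1)\leqslant\varepsilon$ and (5.12) follows. The main obstacle is this last step: one must verify that $\#{\mathcal K}(h)\to+\infty$ as $\varkappa\to+\infty$, uniformly for $\varkappa\in[\varkappa_1,\Xi\varkappa_1]$ and all $k$ with $|(k,\gamma)|=\pi$. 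For $d=3$ this is a Gauss-type lattice-point computation: slicing by the discrete parameter $a=|k_{\parallel}+2\pi N_{\parallel}|\leqslant h$ and counting $N_{\perp}\in(\Lambda^*)_{\perp}$ in the planar annulus $|\varkappa-|k_{\perp}+2\pi N_{\perp}||\leqslant (h^2-a^2)^{1/2}$ of area $\sim\varkappa(h^2-a^2)^{1/2}$, one obtains $\#{\mathcal K}(h)\gtrsim \varkappa$ for $\varkappa$ large.
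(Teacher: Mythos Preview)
Your argument follows essentially the same route as the paper's: the dyadic decomposition $\widehat P^{\,\mathcal K(h^l)}=\sum_{\mu}\widehat P^{\,\mathcal K_\mu}$, the bound $\|\widehat G_+^{-1/2}\|\leqslant\varkappa^{-1/2}$, Lemma~\ref{l5.2} on each shell, and the key cancellation $\varkappa^{-1/6}h^{l/6}\leqslant 2^{-1/6}$ that controls the geometric sum. There are, however, two technical points where you diverge from the paper and where your version is weaker.

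First, you feed $\#\mathcal K_1$ into $f_{\widehat V}$ and then must show $\#\mathcal K(h)\to\infty$ uniformly in $k$; this is the lattice-point lower bound you flag as the ``main obstacle''. The paper sidesteps this entirely: it plugs the deterministic \emph{upper} bound $c_2\,v^{-1}(K^*)\,\varkappa\,h^{2\mu}$ into $f_{\widehat V}$ (which is legitimate because the construction in the proof of Lemma~\ref{l5.2} actually gives the bound for any $t\geqslant\#\mathcal O$, not just $t=\#\mathcal O$), and then uses monotonicity in $\mu$ to reduce to $f_{\widehat V}(c_2\,v^{-1}(K^*)\,\varkappa\,h^{2})$, which manifestly tends to $\|\widehat V\|^{(\infty,\mathrm{loc})}_{L^3_w}$ as $\varkappa\to\infty$. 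No Gauss-type count is needed.

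Second, and more importantly for the statement as written, your constant $c_3=(2\sqrt{C_1})^{-1}$ depends on $|\gamma|$ (through the $\mu=1$ shell, where (5.3) costs a factor $\sqrt{|\gamma|/\pi}$), whereas the lemma asserts $c_3=c_3(h)$ only. The paper obtains this by separating the $\mu=1$ term: it carries an extra factor $\sqrt{|\gamma|/\pi}\,\varkappa^{-1/6}h^{1/6}$ which is $o(1)$ and is therefore absorbed into the choice of $\varkappa_0'(\varepsilon)$, while the sum over $\mu\geqslant 2$ gives $\varkappa^{-1/6}\sum_{\mu=2}^l h^{\mu/6}<2$, yielding $c_3=\tfrac18\,c_2^{-1/3}h^{-1/2}$. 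This $h$-only dependence is actually used later (end of the proof of Theorem~\ref{th1.6}) to make $\widetilde c_1$ a universal constant via rescaling, so it is not cosmetic. Your argument is easily repaired by the same separation.
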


\begin{proof}
Taking into account the estimates $\# \, {\mathcal K}_{\mu}\leqslant c_2\, v^{-1}
(K^*)\, \varkappa h^{\, 2\mu }$, $\mu =1,\dots ,l$, and
$$
h^{\, \frac 12}\, \| \, \widehat P^{\, {\mathcal K}_1}\varphi \, \| \leqslant 
\sqrt {\frac {|\gamma |h}{\pi}}\ \| \, \widehat G_-^{\, \frac 12}\, \widehat P
^{\, {\mathcal K}_1}\varphi \, \| \, , \ \ 
h^{\, \frac {\mu}2}\, \| \, \widehat P^{\, {\mathcal K}_{\mu}}\varphi \, \| 
\leqslant h^{\, \frac 12}\, \| \, \widehat G_-^{\, \frac 12}\, \widehat P^{\, 
{\mathcal K}_{\mu}}\varphi \, \| \, ,\ \mu=2,\dots ,l\, ,
$$ 
and using Lemma \ref{l5.2}, for all vector functions $\varphi \in L^2(K;{\mathbb 
C}^M)$ we obtain
$$
\| \, \widehat G^{-\frac 12}_+\, \widehat V\, \widehat P^{\, {\mathcal K}(h^{\, l})}
\varphi \, \| \leqslant \varkappa ^{-\frac 12}\, \sum\limits_{\mu \, =\, 1}^l\, \| 
\, \widehat V\, \widehat P^{\, {\mathcal K}_{\mu}}\varphi \, \| \leqslant  \eqno (5.13)
$$ $$
\leqslant 4\, \varkappa ^{-\frac 12}\, v^{-\frac 13}(K)\, \sum\limits_{\mu \, =\, 1}
^l\, (c_2\, v^{-1}(K^*)\, \varkappa h^{\, 2\mu })^{\frac 13}\, f_{\widehat V}(c_2
\, v^{-1}(K^*)\, \varkappa h^{\, 2\mu })\, \| \, \widehat P^{\, {\mathcal K}_{\mu}}
\varphi \| \leqslant
$$ $$
\leqslant 4\, c_2^{\, \frac 13}\, f_{\widehat V}(c_2\, v^{-1}(K^*)\, \varkappa 
h^{\, 2})\, \varkappa ^{-\frac 16}\, \sum\limits_{\mu \, =\, 1}^l\, h^{\, \frac 
{\mu}6}\, (h^{\, \frac {\mu}2}\, \| \, \widehat P^{\, {\mathcal K}_{\mu}}\varphi \, 
\| ) \leqslant 
$$ $$
\leqslant 4\, c_2^{\, \frac 13}\, h^{\, \frac 12}\, \biggl(
\, \sqrt {\frac {|\gamma |}{\pi }}\, \varkappa ^{-\frac 16}\, h^{\, \frac 16}+
\varkappa ^{-\frac 16}\, \sum\limits_{\mu \, =\, 2}^l\, h^{\, \frac {\mu}6}\,
\biggr) \, f_{\widehat V}(c_2\, v^{-1}(K^*)\, \varkappa h^{\, 2})\, \| \, \widehat 
G_-^{\, \frac 12}\, \widehat P^{\, {\mathcal K}(h^{\, l})}\varphi \, \| \, .
$$
Since
$$
\varkappa ^{-\frac 16}\, \sum\limits_{\mu \, =\, 2}^l\, h^{\, \frac {\mu}6}=
(\varkappa ^{-1}h^{\, l})^{\frac 16}\, \sum\limits_{\nu \, =\, 0}^{l-2}h^{\, 
-\frac {\nu}6}<2^{\, \frac 56}<2
$$
and 
$$
f_{\widehat V}(c_2\, v^{-1}(K^*)\, \varkappa h^{\, 2})\to \| \widehat V \| 
^{(\infty ,\, {\mathrm {loc}})}_{L^3_w(K;{\mathcal M}_M)}
$$ 
as $\varkappa \to +\infty $, estimate (5.12) follows from (5.13) for
$$
c_3=\frac 18\ c_2^{-\frac 13}\, h^{-\frac 12}
$$
and under the choice of a sufficiently large number $\varkappa ^{\, \prime }_0
(\varepsilon )$.
\end{proof}

In the sequel, we fix a number $\varepsilon ^{\, \prime }>0$ and put $\varepsilon =
\varepsilon ^{\, \prime }/3$, $\varkappa _0\geqslant \varkappa ^{\, \prime }_0
(\varepsilon )$. We assume that condition (5.11) is fulfilled for the function 
$\widehat V\in L^3_w(K;{\mathcal M}_M)$. From (5.12) it follows that
$$
\| \, \widehat G^{-\frac 12}_+\, \widehat P^{\, (-)}\, \widehat V\widehat P^{\, (-)}
\varphi \, \| \leqslant \varepsilon \, \| \, \widehat G_-^{\, \frac 12}\, \widehat 
P^{\, (-)}\varphi \, \| \, ,  \eqno (5.14)
$$ $$
\| \, \widehat G^{-\frac 12}_+\, \widehat P^{\, (-)}\, \widehat V\, \widehat P^{\, 
(+)}\varphi \, \| \leqslant \varepsilon \, \| \, \widehat G_-^{\, \frac 12}\, 
\widehat P^{\, (+)}\varphi \, \| \leqslant \varepsilon \, \| \, \widehat G^{\, 
\frac 12}_+\widehat P^{\, (+)}\varphi \, \| \, .  \eqno (5.15)
$$

We shall also assume that $\varkappa _0\geqslant \varkappa _0^{\, \prime}\, 
((4\, \Xi \, h+1)^{-\frac 12}\, \varepsilon )$. Then (5.5) and (5.12) yield
$$
\| \, \widehat G_-^{-\frac 12}\, \widehat P^{\, \Lambda ^*\backslash {\mathcal K}
(h^{\, l})}\, \widehat V\, \widehat P^{\, (-)}\varphi \, \| \leqslant  \eqno (5.16)
$$ $$
\leqslant (4\, \Xi \, h+1)^{\frac 12}\, \| \, \widehat G^{-\frac 12}_+\, \widehat 
V\, \widehat P^{\, (-)}\varphi \, \| \leqslant \varepsilon \, \| \, \widehat G_-
^{\, \frac 12}\, \widehat P^{\, (-)}\varphi \, \| \, ,
$$ $$
\| \, \widehat G_-^{-\frac 12}\, \widehat P^{\, \Lambda ^*\backslash {\mathcal K}
(h^{\, l})}\, \widehat V\, \widehat P^{\, (+)}\varphi \, \| \leqslant  \eqno (5.17)
$$ $$
\leqslant (4\, \Xi \, h+1)^{\frac 12}\, \| \, \widehat G^{-\frac 12}_+\, \widehat 
V\, \widehat P^{\, (+)}\varphi \, \| \leqslant \varepsilon \, \| \, \widehat G_-
^{\, \frac 12}\, \widehat P^{\, (+)}\varphi \, \| \leqslant \varepsilon \, \| \, 
\widehat G_+^{\, \frac 12}\, \widehat P^{\, (+)}\varphi \, \| \, .
$$
If we put $\varphi =\widehat G_-^{-\frac 12}\, \psi $, where $\psi \in \widetilde 
H^{\, \frac 12}(K;{\mathbb C}^M)$, then (5.12) implies that
$$
\| \, \widehat G^{-\frac 12}_+\, \widehat V\, \widehat P^{\, {\mathcal K}(h^{\, l})}\,
\widehat G_-^{-\frac 12}\psi \, \| \leqslant \varepsilon \, \| \, P^{\, {\mathcal 
K}(h^{\, l})}\psi \, \| \leqslant \varepsilon \, \| \psi \| \, .
$$
By continuity, the operator $\widehat G^{-\frac 12}_+\, \widehat V \, \widehat P^{\, 
{\mathcal K}(h^{\, l})}\, \widehat G_-^{-\frac 12}$ having the domain 
$$
D(\widehat G^{-\frac 12}_+\, \widehat V \, \widehat P^{\, {\mathcal K}(h^{\, l})}\, 
\widehat G_-^{-\frac 12})=\widetilde H^{\, \frac 12}(K;{\mathbb C}^M)
$$ 
extends to the bounded operator $\widehat G^{-\frac 
12}_+\, \widehat V \, \widehat G_-^{-\frac 12}\, \widehat P^{\, {\mathcal K}(h^{\, 
l})}$ acting on the space $L^2(K;{\mathbb C}^M)$. Therefore, for the adjoint 
operator that takes a vector function $\psi \in \widetilde 
H^{\, \frac 12}(K;{\mathbb C}^M)$ to the vector function $\widehat G_-^{-\frac 12}\, 
\widehat P^{\, {\mathcal K}(h^{\, l})}\, \widehat V\, \widehat G^{-\frac 12}_+\psi $, 
the estimate
$$
\| \, \widehat G_-^{-\frac 12}\, \widehat P^{\, {\mathcal K}(h^{\, l})}\, \widehat V\,
\widehat G^{-\frac 12}_+\psi \, \| \leqslant \varepsilon \, \| \psi \| \, ,\ \ \psi 
\in \widetilde H^{\, \frac 12}(K;{\mathbb C}^M)\, ,  \eqno (5.18)
$$
is fulfilled as well. If we put $\psi =\widehat G^{\, \frac 12}_+\, \varphi $, where 
$\varphi \in \widetilde H^1(K;{\mathbb C}^M)$, then (5.18) yields
$$
\| \, \widehat G_-^{-\frac 12}\, \widehat P^{\, {\mathcal K}(h^{\, l})}\, \widehat 
V\varphi \, \| \leqslant \varepsilon \, \| \, \widehat G^{\, \frac 12}_+\varphi \, 
\| \, .  \eqno (5.19)
$$
In particular, for any vector function $\varphi \in L^2(K;{\mathbb C}^M)$, 
$$
\| \, \widehat G_-^{-\frac 12}\, \widehat P^{\, (+)}\, \widehat V\, \widehat P^{\, (+)}
\varphi \, \| \leqslant \varepsilon \, \| \, \widehat G^{\, \frac 12}_+\, \widehat 
P^{\, (+)}\varphi \, \| \, .  \eqno (5.20)
$$

Since we assume that $\varkappa _0\geqslant \varkappa _0^{\, \prime}\, ((4\, \Xi \,  
h+1)^{-\frac 12}\, \varepsilon )$, inequality (5.19) is also true under the change
$\varepsilon $ to $(4\, \Xi \, h+1)^{-\frac 12}\, \varepsilon $. Therefore (see 
(5.5)) for any vector function $\varphi \in \widetilde H^1(K;{\mathbb C}^M)$ we get
$$
\| \, \widehat G_-^{-\frac 12}\, \widehat P^{\, (+)}\, \widehat V\, \widehat 
P^{\, \Lambda ^*\backslash {\mathcal K}(h^{\, l})}\varphi \, \| \leqslant  \eqno (5.21)
$$ $$
\leqslant (4\, \Xi \, h+1)^{-\frac 12}\, \varepsilon \, \| \, \widehat G^{\, 
\frac 12}_+\, \widehat P^{\, \Lambda ^*\backslash {\mathcal K}(h^{\, l})}\varphi \, 
\| \leqslant \varepsilon \, \| \, \widehat G_-^{\, \frac 12}\, \widehat P^{\, \Lambda 
^*\backslash {\mathcal K}(h^{\, l})}\varphi \, \| \, ,
$$ $$
\| \, \widehat G^{-\frac 12}_+\, \widehat P^{\, (-)}\, \widehat V\, \widehat P^{\, 
\Lambda ^*\backslash {\mathcal K}(h^{\, l})}\varphi \, \| \leqslant  
\| \, \widehat G^{-\frac 12}_-\, \widehat P^{\, (-)}\, \widehat V\, \widehat P^{\, 
\Lambda ^*\backslash {\mathcal K}(h^{\, l})}\varphi \, \| \leqslant  \eqno (5.22)
$$ $$
\leqslant (4\, \Xi \, h+1)^{-\frac 12}\, \varepsilon \, \| \, \widehat G^{\frac 12}_+
\, \widehat P^{\, \Lambda ^*\backslash {\mathcal K}(h^{\, l})}\varphi \, \| \leqslant 
\varepsilon \, \| \, \widehat G^{\frac 12}_-\, \widehat P^{\, \Lambda ^*\backslash 
{\mathcal K}(h^{\, l})}\varphi \, \| \, .
$$

Now, suppose that the condition
$$
\| \widehat V \| ^{(\infty ,\, {\mathrm {loc}})}_{L^3_w(K;{\mathcal M}_M)}
<C^{\, -1}\, (2\, \Xi \, h+1)^{-1}\, \varepsilon  \eqno (5.23)
$$
is satisfied along with condition (5.11), where $C=C(3)>0$ is the constant from 
(0.6). From (0.6), (1.1), and (5.6) it follows that under the choice of a 
sufficiently large number $\varkappa _0^{\, \prime \prime }>2h$ (dependent on 
$\varepsilon $, $\Lambda $, $\gamma $, $\widehat V$, $\Xi $, $h$) and for all 
$\varkappa _1\geqslant \varkappa _0 \geqslant \varkappa _0^{\, \prime \prime }\, $, 
all $\varkappa \in [\varkappa _1\, , \Xi \varkappa _1]$, all vectors $k\in {\mathbb 
R}^3$ with $|(k,\gamma )|=\pi $, and all vector functions $\psi \in L^2(K;{\mathbb 
C}^M)$ the inequality
$$
\| \, \widehat V\, \widehat G_-^{-1}\, \widehat P^{\, \Lambda ^*\backslash {\mathcal
K}(h^{\, l})}\psi \, \| \leqslant  \varepsilon \, \| \, \widehat P^{\, \Lambda ^*
\backslash {\mathcal K}(h^{\, l})}\psi \, \|  \eqno (5.24)
$$
holds (and $\widehat G_-^{-1}\psi \in \widetilde H^1(K;{\mathbb C}^M)$). Furthermore,
$$
(\widehat V\, \widehat G_-^{-1}\, \widehat P^{\, \Lambda ^*\backslash {\mathcal
K}(h^{\, l})})^{\, *}\, \widehat P^{\, \Lambda ^*\backslash {\mathcal K}(h^{\, 
l})}\, \psi = \widehat G_-^{-1}\, \widehat P^{\, \Lambda ^*\backslash {\mathcal
K}(h^{\, l})}\, \widehat V\, \widehat P^{\, \Lambda ^*\backslash {\mathcal K}(h^{\, 
l})}\, \psi
$$
for the adjoint operator $(\widehat V\, \widehat G_-^{-1}\, \widehat P^{\, \Lambda 
^*\backslash {\mathcal K}(h^{\, l})})^{\, *}$ and all vector functions $\psi \in
\widetilde H^1(K;{\mathbb C}^M)$, whence
$$
\| \, \widehat G_-^{-1}\, \widehat P^{\, \Lambda ^*\backslash {\mathcal
K}(h^{\, l})}\, \widehat V\, \widehat P^{\, \Lambda ^*\backslash {\mathcal K}(h^{\, 
l})}\, \psi \, \| \leqslant \varepsilon \, \| \, \widehat P^{\, \Lambda ^*\backslash 
{\mathcal K}(h^{\, l})}\, \psi \, \| \, .  \eqno (5.25)
$$
Using the interpolation of operators (see \cite{BL,RSII} and also the proof of Lemma 
\ref{l4.1}), from (5.24) and (5.25), for all vector functions $\psi \in \widetilde 
H^{\, \frac 12}(K;{\mathbb C}^M)$ we derive
$$
\| \, \widehat G_-^{-\frac 12}\, \widehat P^{\, \Lambda ^*\backslash {\mathcal K}
(h^{\, l})}\, \widehat V\, \widehat G_-^{-\frac 12}\, \widehat P^{\, \Lambda ^*
\backslash {\mathcal K}(h^{\, l})}\, \psi \, \| \leqslant \varepsilon \, \| \, 
\widehat P^{\, \Lambda ^*\backslash {\mathcal K}(h^{\, l})}\, \psi \, \| \, .
$$
If we put $\varphi =\widehat G_-^{-\frac 12}\, \psi \in \widetilde H^1(K;{\mathbb 
C}^M)$, then the previous inequality yields 
$$
\| \, \widehat G_-^{-\frac 12}\, \widehat P^{\, \Lambda ^*\backslash {\mathcal K}
(h^{\, l})}\, \widehat V\, \widehat P^{\, \Lambda ^*\backslash {\mathcal K}(h^{\, 
l})}\, \varphi \, \| \leqslant \varepsilon \, \| \, \widehat G_-^{\, \frac 12}\,
\widehat P^{\, \Lambda ^*\backslash {\mathcal K}(h^{\, l})}\, \varphi \, \| \, .
\eqno (5.26)
$$

The following theorem is a key point in the proof of Theorem \ref{th1.5}.

\begin{theorem} \label{th5.1}
Let $\widehat V^{\, (s)}\in L^3\ln ^{\, 1+\delta } L(K;{\mathcal S}^{(s)}_M)$, 
$\delta >0$, $s=0,1$, let $\widehat V=\widehat V^{\, (0)}+\widehat V^{\, (1)}$, and 
let $\Xi >1$. Then $\mathrm ($for a given number $\varepsilon >0$$\mathrm )$ there 
exists a number $\widetilde \varkappa _0>2h$ such that for any $\varkappa _1\geqslant 
\widetilde \varkappa _0$ there is a number $\varkappa \in [\varkappa _1\, , \Xi \,
\varkappa _1]$ such that for all vectors $k\in {\mathbb R}^3$ with $|(k,\gamma )|
=\pi $, and all vector functions $\varphi \in L^2(K;{\mathbb C}^M)$ we have
$$
\| \widehat G_-^{-\frac 12}\widehat P^{\, (+)}\widehat V\widehat P^{\, (-)}
\varphi \| \leqslant \varepsilon \, \| \widehat G^{\, \frac 12}_-\widehat P^{\, (-)}
\varphi \| \, .  \eqno(5.27)
$$
\end{theorem}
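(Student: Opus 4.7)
The plan is to bound $\widehat G_-^{-1/2}\widehat P^{(+)}\widehat V\widehat P^{(-)}$ by exploiting three ingredients in sequence: the key geometric estimate on products of opposite-sign projections, a dyadic decomposition of $\widehat V$ that puts the Zygmund hypothesis to work, and an averaging argument over $\varkappa\in[\varkappa_1,\Xi\,\varkappa_1]$ to avoid Fourier resonances.

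First I would record the essential matrix-element bound. Since $\widehat V^{\,(s)}_N\in{\mathcal S}^{(s)}_M$ commutes with every projection $\widehat P^{\,\pm}_{\widetilde e}\,$, for $N,N'\in{\mathcal K}(h^l)$ the $(N,N')$-entry of $\widehat P^{(+)}\widehat V\widehat P^{(-)}$ is $\widehat V_{N-N'}\,\widehat P^{\,+}_{\widetilde e(k+2\pi N)}\,\widehat P^{\,-}_{\widetilde e(k+2\pi N')}\,$; by (1.4) and (5.2) its operator norm on ${\mathbb C}^M$ is bounded by $\|\widehat V_{N-N'}\|\cdot(2\pi/\varkappa)\,|N_{\perp}-N'_{\perp}|$. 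This is the only place where the off-diagonal character of the block enters, and it supplies the factor $O(|N-N'|/\varkappa)$ that makes the estimate conceivable; by contrast, the diagonal blocks (5.14)--(5.15) and (5.20) do not have access to this smallness and were controlled earlier by Lemma \ref{l5.3} alone.

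Next I would decompose $\widehat V=\sum_{j\geqslant 0}\widehat V_j$ with $\widehat V_j=\widehat V$ on $\{2^j\leqslant \|\widehat V(x)\|<2^{j+1}\}$ and zero elsewhere, so that $\|\widehat V_j\|_{L^{\infty}}\leqslant 2^{j+1}$, $\|\widehat V_j\|^2_{L^2}\lesssim 4^j s_j$ with $s_j=v(\mathrm{supp}\,\widehat V_j)$, while the Zygmund hypothesis becomes $\sum_j(j+1)^{1+\delta}2^{3j}s_j<+\infty$. For each piece I would estimate $\widehat G_-^{-1/2}\widehat P^{(+)}\widehat V_j\widehat P^{(-)}$ in Hilbert--Schmidt norm, further decomposing into dyadic frequency shells ${\mathcal K}_{\mu}\times{\mathcal K}_{\mu'}$ and inserting the matrix-element bound from the previous step. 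Using (5.1) to convert $\sum_M\|(\widehat V_j)_M\|^2\lesssim 4^j s_j$ together with the cardinality bound $\#\,{\mathcal K}_{\mu}\leqslant c_2\,v^{-1}(K^*)\,\varkappa\,h^{2\mu}$ and the lower bounds $G^{\,-}_N\geqslant h^{\mu-1}$, one obtains for each $j$ a dyadic contribution of the form $C(h,l)\,(j+1)^{\alpha}\,4^j s_j$ (for some exponent $\alpha$ arising from the interaction of scales), valid outside a small set of ``bad'' $\varkappa$ where the counting function $\#\,{\mathcal K}_{\mu}(k,\varkappa)$ is anomalously large.

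Finally I would execute the averaging: by Fubini and a Chebyshev-type estimate applied to the counting function $\varkappa\mapsto\#\,{\mathcal K}_{\mu}(k,\varkappa)$, the Lebesgue measure of ``bad'' $\varkappa\in[\varkappa_1,\Xi\,\varkappa_1]$ for the index pair $(j,\mu)$ is controlled by a factor that, combined with the summability of $\sum_j(j+1)^{-(1+\delta)}$, permits a union bound over $j$ and $\mu$. A simultaneously ``good'' $\varkappa$ therefore exists for all sufficiently large $\varkappa_1\,$, and summing the dyadic contributions against $\sum_j(j+1)^{1+\delta}2^{3j}s_j<+\infty$ yields (5.27). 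The main technical obstacle will be the Hilbert--Schmidt step: one must track carefully how the geometric factor $|N_{\perp}-N'_{\perp}|/\varkappa$ interacts with the number of lattice pairs in ${\mathcal K}_{\mu}\times{\mathcal K}_{\mu'}$ sharing a fixed difference $N-N'=M$, so that the overall bound closes with exactly a logarithmic power $(j+1)^{1+\delta}$ — the exponent $1+\delta$ with $\delta>0$ is the critical threshold at which the series over $j$ converges, and any weaker Zygmund hypothesis would leave a divergent tail.
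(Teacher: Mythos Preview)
Your overall architecture is right---the geometric matrix-element bound, a truncation of $\widehat V$ by size, and an averaging in $\varkappa$---and these are exactly the three ingredients the paper uses. But the averaging step, as you describe it, misidentifies the enemy and would not close.

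The quantity $\#\,{\mathcal K}_{\mu}(k,\varkappa)$ is \emph{uniformly} bounded by $c_2\,v^{-1}(K^*)\,\varkappa\,h^{2\mu}$; there are no ``bad $\varkappa$'' where it is anomalously large, and a Chebyshev argument on this counting function has nothing to bite on. The genuine obstruction sits in the pair-counting function $S_{\mu\nu}(n)=\#\{N\in{\mathcal K}_{\mu}:N-n\in{\mathcal K}_{\nu}\}$, which satisfies
\[
S_{\mu\nu}(n)\ \lesssim\ \frac{h^{\mu+\nu+\min\{\mu,\nu\}}\,\varkappa^{3/2}}{(\pi|n_{\perp}|+h^{\max\{\mu,\nu\}})\,\sqrt{\varkappa+2h^{\max\{\mu,\nu\}}-\pi|n_{\perp}|}}
\]
and blows up like $(\varkappa+2h^{\max}-\pi|n_{\perp}|)^{-1/2}$ when the two annular shells are nearly tangent. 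After inserting this into the analogue of (5.28), the relevant expression is $\sqrt{\varkappa_1}\sum_n\|(\widehat V_{[a]})_n\|^2\,(\varkappa+2h^{\max}-\pi|n_{\perp}|)^{-1/2}$, and it is the \emph{integration of this square-root singularity over} $\varkappa\in[\varkappa_1,\Xi\varkappa_1]$ that produces the gain: $\frac{1}{(\Xi-1)\varkappa_1}\int_{\varkappa_1}^{\Xi\varkappa_1}(\cdots)\,d\varkappa\lesssim ((\Xi-1)\varkappa_1)^{-1/2}\sum_n\|(\widehat V_{[a]})_n\|^2$. A pigeonhole then selects a good $\varkappa$ for which this holds simultaneously for all scale pairs, uniformly in $k$. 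Your Hilbert--Schmidt step with only the crude cardinality bound would leave an uncontrolled factor $\sum_n|n_{\perp}|^2\|(\widehat V_j)_n\|^2$ (a derivative of $\widehat V_j$), which is why the refined $S_{\mu\nu}(n)$ estimate is not optional.

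One further point: the paper does not decompose $\widehat V$ dyadically in value once and for all, but couples the truncation level to the frequency scale, setting $a_j=h^{j/3}r_j$ with $j=j(\mu,\nu)=\mu+\nu+\min\{\mu,\nu\}$ and a carefully chosen sequence $r_j\downarrow 0$ satisfying $\sum_j(\ln a_j)^{-1-\delta}<\infty$. This tuning is what turns the Zygmund bound $\|\widehat V_{[a]}\|_{L^2}^2\leqslant(a\ln^{1+\delta}a)^{-1}Y_{\delta}(\widehat V;a)$ into the block estimate $\|\widehat P^{(+)}_{\mu}\widehat V\widehat P^{(-)}_{\nu}\psi\|\leqslant\varepsilon''\,h^{j(\mu,\nu)/3}\|\widehat P^{(-)}_{\nu}\psi\|$, after which a Schur summation with off-diagonal decay $h^{-|\mu-\nu|/6}$ finishes the proof. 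Your dyadic-in-value scheme can be made to work, but only after you replace the $\#\,{\mathcal K}_{\mu}$ story by the $S_{\mu\nu}(n)$ singularity and its $\varkappa$-average.
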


\begin{proof}
For all $\mu , \nu =1,\dots ,l$ and all vector functions $\psi \in L^2(K;{\mathbb 
C}^M)$, we shall obtain upper estimates for the norms
$$
\| \widehat P^{\, (+)}_{\mu}\widehat V\widehat P^{\, (-)}_{\nu}\psi \| \, .
$$
Given $\mu , \nu \in \{ 1,\dots ,l\} $ and $n\in \Lambda ^*$, by $S_{\mu \nu }(n)$
denote the number of vectors $N\in {\mathcal K}_{\mu}\, $ such that $N-n\in 
{\mathcal K}_{\nu}\, $. If either $2\pi |n_{\perp}|>2\varkappa +h^{\, \mu }
+h^{\, \nu }$ or $2\pi |n_{\parallel}|>h^{\, \mu }+h^{\, \nu }$, then $S_{\mu
\nu }(n)=0$. Since $h>2\pi d(K^*)$, we have $h^{\, l}+2\pi d(K^*)<2h^{\, l}\leqslant
\varkappa _1 \leqslant \varkappa $, hence, $h^{\, l} <\varkappa -2\pi d(K^*)$. Under 
these conditions, there is a universal constant $c_4>0$ (see \cite{TMF95}) such that
for all $\mu , \nu =1,\dots ,l$ and all vectors $n\in \Lambda ^*$ with $\pi 
|n_{\perp}|\leqslant \varkappa +h^{\, \max \, \{ \mu ,\nu \} }$ (and $\pi 
|n_{\parallel}|\leqslant h^{\, \max \, \{ \mu ,\nu \} }$), the estimate
$$
S_{\mu \nu }(n)\leqslant \frac {c_4}{v(K^*)}\ \frac {h^{\, \mu +\nu +\min \, \{ \mu 
,\nu \} }\varkappa ^{\, \frac 32}}{(\pi |n_{\perp}|+h^{\, \max \, \{ \mu ,\nu \} })
\, \sqrt {\varkappa +2h^{\, \max \, \{ \mu ,\nu \} }-\pi |n_{\perp}|}}
$$
holds. Therefore (also see (1.4) and (5.1)), for all matrix functions $\widehat 
W\in L^2(K;{\mathcal M}_M)$, which have the Fourier coefficients $\widehat 
W_N\, $, $N\in \Lambda ^*$, that commute with all orthogonal projections $\widehat 
P^{\, \pm }_{\widetilde e}\, $, $\widetilde e\in S_1(e)$, and for all vector functions 
$\psi \in L^2(K;{\mathbb C}^M)$ we get
$$
\| \widehat P^{\, (+)}_{\mu}\widehat W\widehat P^{\, (-)}_{\nu}\psi \| ^2 =
\eqno (5.28)
$$ $$
=\, v(K) \sum\limits_{N\, \in \, {\mathcal K}_{\mu}}\, \bigl\| \sum\limits
_{\substack{n\, \in \, \Lambda ^*\, : \\ N-n\, \in \, {\mathcal K}_{\nu}}}
\widehat P^{\, +}_{\widetilde e(k+2\pi N)}\widehat W_n\widehat P^{\, -}_{\widetilde 
e(k+2\pi (N-n))}\, \psi _{N-n}\, \| ^2 \leqslant
$$ $$
\leqslant \frac 14\, v(K) \sum\limits_{N\, \in \, {\mathcal K}_{\mu}}\, \bigl( 
\sum\limits _{\substack{n\, \in \, \Lambda ^*\, : \\ N-n\, \in \, {\mathcal K}_{\nu}}}
|\, \widetilde e(k+2\pi N)-\widetilde e(k+2\pi (N-n))\, |\, 
\| \widehat W_n\| \, \| (\widehat P^{\, -}\psi )_{N-n}\| \, \bigr) ^2 
\leqslant
$$ $$
\leqslant v(K)\, \varkappa ^{-2}\sum\limits_{N\, \in \, {\mathcal K}_{\mu}}\, \bigl( 
\sum\limits_{\substack{n\, \in \, \Lambda ^*\, : \\ N-n\, \in \, 
{\mathcal K}_{\nu}}}2\pi |n_{\perp}|\, \| \widehat W_n\| \,
\| (\widehat P^{\, -}\psi )_{N-n}\| \, \bigr) ^2 \leqslant
$$ $$
\leqslant v(K)\, \varkappa ^{-2}\sum\limits_{N\, \in \, {\mathcal K}_{\mu}}\, \bigl( 
\sum\limits_{\substack{n\, \in \, \Lambda ^*\, : \\ N-n\, \in \, 
{\mathcal K}_{\nu}}}(2\pi |n_{\perp}|)^2 \| \widehat W_n\| ^2 \, 
\bigr) \bigl( \sum\limits_{\substack{n\, \in \, \Lambda ^*\, : \\ N-n\, \in \, 
{\mathcal K}_{\nu}}}\| (\widehat P^{\, -}\psi )_{N-n}\| ^2\, \bigr) =
$$ $$
=\varkappa ^{-2} \, \bigl( \sum\limits_{n\, \in \, \Lambda ^*}\, \bigl( 
\sum\limits_{\substack{N\, \in \, {\mathcal K}_{\mu}\, : \\ N-n\, \in \, 
{\mathcal K}_{\nu}}}1\, \bigr) \, (2\pi |n_{\perp}|)^2 \| \widehat W_n\| 
^2 \, \bigr) \, \| \widehat P^{\, (-)}_{\nu}\psi \| ^2 =
$$ $$
=\varkappa ^{-2} \, \bigl( \sum\limits_{n\, \in \, \Lambda ^*}S_{\mu \nu }(n)\,
(2\pi |n_{\perp}|)^2 \| \widehat W_n\| ^2 \, \bigr) \, \| \widehat P^{\, 
(-)}_{\nu}\psi \| ^2 \leqslant
$$ $$
\leqslant c_4\, v(K)\, h^{\, \mu +\nu +\min \, \{ \mu ,\nu \} }\, \times 
$$ $$
\times \ \biggl( \, \sum\limits_{\substack{n\, \in \, \Lambda ^*\, : \\ \pi 
|n_{\perp}|\, \leqslant \, \varkappa +h^{\, \max \, \{ \mu ,\nu \} }\, , \\ \pi 
|n_{\parallel}|\, \leqslant \, h^{\, \max \, \{ \mu ,\nu \} } }}\frac {(2\pi 
|n_{\perp}|)^2 \| \widehat W_n\| ^2}{(\pi |n_{\perp}|+h^{\, \max \, \{ \mu ,\nu \} 
})\, \sqrt {\varkappa (\varkappa +2h^{\, \max \, \{ \mu ,\nu \} }-\pi |n_{\perp}|)}}\ 
\biggr) \, \| \widehat P^{\, (-)}_{\nu}\psi \| ^2\, .
$$
For any $a\geqslant 0$, the Fourier coefficients $(\widehat V_{[a]})_N\, $, $N\in
\Lambda ^*$, of the matrix function $\widehat V_{[a]}\in L^3(K;{\mathcal M}_M)
\subset L^2(K;{\mathcal M}_M)$ commute with all orthogonal projections $\widehat 
P^{\, \pm }_{\widetilde e}\, $, $\widetilde e\in S_1(e)$. In particular, from this
and (5.28), using the inequalities $2\varkappa +2h^{\, \max \, \{ \mu ,\nu \} }
\leqslant 2\varkappa +2h^{\, l}<3\varkappa $ and $\varkappa \leqslant \Xi \varkappa 
_1\, $, we obtain
$$
\| \widehat P^{\, (+)}_{\mu}\widehat V_{[a]}\widehat P^{\, (-)}_{\nu}\psi \| 
\leqslant \eqno (5.29)
$$ $$
\leqslant h^{\, \frac 12\, (\mu +\nu +\min \, \{ \mu ,\nu \} )}\, \biggl( 6c_4
\, \Xi \, \sqrt {\varkappa _1}\, v(K)\, \sum\limits_{\substack{n\, \in \, \Lambda 
^*\, : \\ \pi |n_{\perp}|\, \leqslant \, \varkappa +h^{\, \max \, \{ \mu ,\nu \} 
}}}\frac {\| (\widehat V_{[a]})_n\| ^2}{\sqrt {\varkappa +2h^{\, \max \, \{ \mu ,\nu 
\} }-\pi |n_{\perp}|}}\ \biggr) ^{\frac 12}\, \| \widehat P^{\, (-)}_{\nu}\psi \| \, .
$$

Let $a\geqslant 2$. Denote
$$
Y_{\delta}(\widehat V;a)\doteq \int_{\{ x\, \in \, K\, :\, \| \widehat 
V(x)\| \, \geqslant \, a\} }\| \widehat V(x)\| ^3\ln^{\, 1+\delta 
}\| \widehat V(x)\| \, dx \, ;
$$
$Y_{\delta}(\widehat V;a)\downarrow 0$ as $a\to +\infty $. The following
inequality is valid:
$$
\| \widehat V_{[a]}\| ^2_{L^2(K;{\mathcal M}_M)}\leqslant  \eqno (5.30)
$$ $$
\leqslant (a\ln ^{\, 1+\delta }a)^{-1}\int_{\{ x\, \in \, K\, :\, \| \widehat 
V(x)\| \, \geqslant \, a\} }\| \widehat V(x)\| ^3\ln^{\, 1+\delta 
}\| \widehat V(x)\| \, dx=(a\ln ^{\, 1+\delta }a)^{-1}\, Y_{\delta}(\widehat V;a)\, .
$$

Choose numbers $r_j\in (0,1]$, $j\in {\mathbb N}$, such that $r_1=1$, $r_j\downarrow 
0$, $a_j\doteq h^{\, \frac 13\, j}r_j\uparrow +\infty $ (then $a_j\geqslant 2$), 
$r_j^{-1}Y_{\delta}(\widehat V;a_j)\downarrow 0$ as $j\to +\infty $, and
$$
\sum\limits_{j\, =\, 1}^{+\infty }\, (\ln a_j)^{-1-\delta }<+\infty \, .
$$

Let us define the functions
$$
{\mathcal G}^{\, (l)}(h;\xi )=
\left\{
\begin{array}{lll}
(h-\xi )^{-\frac 12} & \text {if}\ \xi <h\, , \\
(h^{\, k}-\xi )^{-\frac 12} & \text {if}\ h^{\, k-1}<\xi <h^{\, k}\, ,\
k=2,\dots ,l\, , \\
0 & \text {if}\ \xi >h^{\, l}\, . 
\end{array}
\right.
$$
For all $\Delta >0$, we write
$$
T_{l,\, h}(\Delta )=\max\limits_{\tau \, \in \, {\mathbb R}}\, \int_{\tau }
^{\tau +\Delta }{\mathcal G}^{\, (l)}(h;\xi )\, d\xi \, .
$$
If $0<\Delta \leqslant h^{\, 2}-h$, then
$$
T_{l,\, h}(\Delta )=\int_0^{\Delta }\frac {d\xi }{\sqrt {\xi }}=2\sqrt {\Delta }\, .
$$
If either $h^{\, k}-h<\Delta \leqslant h^{\, k+1}-h$, $k=2,\dots ,l-1$, or $h^{\, l}-h
<\Delta $ (for $k=l$), then
$$
T_{l,\, h}(\Delta )=\biggl( \, \int_0^{h^{\, 2}-h}+\int_0^{h^{\, 3}-h^{\, 2}}+\dots
+\int_0^{h^{\, k}-h^{\, k-1}}+\int_0^{\Delta -(h^{\, k}-h)}\ \biggr) \, \frac 
{d\xi }{\sqrt {\xi }}<
$$ $$
<\, 2\, \bigl( \bigl( 1-\frac 1{\sqrt h}\, \bigr) ^{-1}h^{\, \frac k2}+\sqrt {\Delta 
}\, \bigr) < 5\, \sqrt {\Delta }\, .
$$

Assign the number $j=j(\mu ,\nu )=\mu +\nu +\min \, \{ \mu ,\nu \} \in \{ 1,\dots ,
3l\} $ to each ordered pair $(\mu ,\nu )$ of numbers $\mu , \nu \in \{ 1,\dots ,l\} $. 
Let ${\mathcal L}(j)$, where $j=1,\dots ,3l$, be the set of ordered pairs $(\mu ,\nu 
)$ with $j(\mu ,\nu )=j$. We have $\# \, {\mathcal L}(j)<j$. For every $j\in \{ 1,
\dots ,3l\} $ (if ${\mathcal L}(j)\neq \emptyset $), let $\mu _1^{(j)},\dots ,\mu 
_{k(l,j)}^{(j)}$ be the different numbers $\max \, \{ \mu ,\nu \} $ with
$(\mu ,\nu )\in {\mathcal L}(j)$ arranged in the increasing order. We define the
functions
$$
{\mathcal G}^{\, (l)}_j(h;\xi )=
\left\{
\begin{array}{lll}
(h^{\, \mu _1^{(j)}}-\xi )^{-\frac 12} & \text {if}\ \xi <h^{\, \mu _1^{(j)}}
\, , \\
(h^{\, \mu _k^{(j)}}-\xi )^{-\frac 12} & \text {if}\ h^{\, \mu _{k-1}^{(j)}}
<\xi <h^{\, \mu _k^{(j)}}\, ,\ k=2,\dots ,k(l,j)\, , \\
0 & \text {if}\ \xi >h^{\, \mu _{k(l,j)}^{(j)}}\, . 
\end{array}
\right.
$$
For all $\xi \in {\mathbb R}\backslash \{ h,h^{\, 2},\dots ,h^{\, l}\} $, we have
${\mathcal G}^{\, (l)}_j(h;\xi )\leqslant {\mathcal G}^{\, (l)}(h;\xi )$. Hence,
for all $\Delta >0$,
$$
\max\limits_{\tau \, \in \, {\mathbb R}}\, \int_{\tau }
^{\tau +\Delta }{\mathcal G}^{\, (l)}_j(h;\xi )\, d\xi \leqslant T_{l,\, h}(\Delta )
<5\, \sqrt {\Delta }\, .  \eqno (5.31)
$$

Using (5.1), (5.30), and (5.31), we obtain the following estimates (the prime above
the summation sign means that we omit the summands with $Y_{\delta}(\widehat 
V;a_j)=0$):
$$
{\sum\limits_{j\, =\, 1}^{3l}}^{\, \prime}\, a_jY_{\delta}^{-1}(\widehat V;a_j)\, 
\frac 1{(\Xi -1)\varkappa _1}\, \times \eqno (5.32)
$$ $$
\times \ \int_{\varkappa _1}^{\, \Xi \, \varkappa _1}\, \biggl( \,
\max\limits_{(\mu ,\nu )\, \in \, {\mathcal L}(j)}\,   \sum\limits_{\substack{n\, 
\in \, \Lambda ^*\, : \\ \pi |n_{\perp}|\, \leqslant \, \varkappa +h^{\, \max \,
\{ \mu ,\nu \} }}}\frac {\| (\widehat V_{[a_j]})_n\| ^2}{\sqrt {\varkappa +2h^{\, 
\max \, \{ \mu ,\nu \} }   -\pi |n_{\perp}|}}\ \biggr) \, d\varkappa \leqslant
$$ $$
\leqslant {\sum\limits_{j\, =\, 1}^{3l}}^{\, \prime}\, a_jY_{\delta}^{-1}(\widehat 
V;a_j)\, \frac 1{(\Xi -1)\varkappa _1}\, \int_{\varkappa _1}^{\, \Xi \, \varkappa _1}\, 
\bigl( \, \sum\limits_{n\, \in \, \Lambda ^*}{\mathcal G}^{\, (l)}_j(h;\pi |n_{\perp}|
-\varkappa )\, \| (\widehat V_{[a_j]})_n\| ^2\, \bigr) \, d\varkappa \leqslant
$$ $$
\leqslant \, \frac 5{\sqrt {(\Xi -1)\varkappa _1}}\ {\sum\limits_{j\, =\, 1}^{3l}}
^{\, \prime}\, a_jY_{\delta}^{-1}(\widehat V;a_j)\, \sum\limits_{n\, \in \, \Lambda 
^*}\| (\widehat V_{[a_j]})_n\| ^2 \leqslant
$$ $$
\leqslant \, \frac {5M}{\sqrt {(\Xi -1)\varkappa _1}}\ v^{-1}(K)\ {\sum\limits_{j\, 
=\, 1}^{3l}}^{\, \prime}\, a_jY_{\delta}^{-1}(\widehat V;a_j)\, \| (\widehat V
_{[a_j]})_n\| ^2_{L^2(K;{\mathcal M}_M)} \leqslant
$$ $$
\leqslant \, \frac {5M}{\sqrt {(\Xi -1)\varkappa _1}}\ v^{-1}(K)\, \sum\limits
_{j\, =\, 1}^{+\infty }\, (\ln a_j)^{-1-\delta }\, .
$$ 

On the other hand, 
$$
\sum\limits_{j=1}^{3l}\, 2^{-j}\, \sum\limits_{(\mu ,\nu )\, \in \, {\mathcal L}
(j)}\, \frac 1{(\Xi -1)\varkappa _1}\, \times  \eqno (5.33)
$$ $$
\times \ \int_{\varkappa _1}^{\, \Xi \, \varkappa _1}\, \biggl( \, \sum\limits
_{\substack{n\, \in \, \Lambda ^*\, : \\ \frac 12\, \varkappa _1\, <\, \pi 
|n_{\perp}|\, \leqslant \, \varkappa +h^{\, \max \, \{ \mu ,\nu \} }}}\frac {\| 
\widehat V_n\| ^2}{\sqrt {\varkappa +2h^{\, \max \, \{ \mu ,\nu \} }-\pi 
|n_{\perp}|}}\ \biggr) \, d\varkappa \leqslant
$$ $$
\leqslant \sum\limits_{j=1}^{3l}\, 2^{-j}\, \sum\limits_{(\mu ,\nu )\, \in \, 
{\mathcal L}(j)}\, \frac 1{(\Xi -1)\varkappa _1}\, \times
$$ $$
\times \ \sum\limits
_{\substack{n\, \in \, \Lambda ^*\, : \\ \varkappa _1\, <\, 2\pi |n_{\perp}|}}
\ \biggl( \, \int_{\, [\varkappa _1,\, \Xi \, \varkappa _1]\, \cap \, [\pi |n_{\perp}|-
h^{\, \max \, \{ \mu ,\nu \} },+\infty )}\ \frac {d\varkappa }{\sqrt {\varkappa +2h^{\, 
\max \, \{ \mu ,\nu \} }-\pi |n_{\perp}|}}\, \biggr) \, \| \widehat V_n\| ^2
\leqslant
$$ $$
\leqslant \biggl( \, \sum\limits_{j=1}^{3l}\, j2^{-j}\, \biggr) \ \frac 1{(\Xi -1)
\varkappa _1}\ \biggl( \, \int_{\varkappa _1}^{\, \Xi \, \varkappa _1}\, \frac
{d\varkappa }{\sqrt {\varkappa -\varkappa _1}}\, \biggr) \, \sum\limits
_{\substack{n\, \in \, \Lambda ^*\, : \\ \varkappa _1\, <\, 2\pi |n_{\perp}|}}
\, \| \widehat V_n\| ^2 \leqslant
$$ $$
\leqslant \frac 4{\sqrt {(\Xi -1)\varkappa _1}}\ \sum\limits_{\substack{n\, \in \, 
\Lambda ^*\, : \\ \varkappa _1\, <\, 2\pi |n_{\perp}|}}\, \| \widehat V_n\| ^2\, .
$$
Let us denote
$$
c_5=c_5(M,\Lambda ,\Xi ,h;\widehat V,\{ r_j\} )\doteq \frac 2{\sqrt {\Xi -1}}\
\bigl( \, 2+\frac 52\, Mv^{-1}(K)\, \sum\limits_{j\, =\, 1}^{+\infty }\, (\ln a_j)
^{-1-\delta }\, \bigr) \, .
$$
Then (5.32) and (5.33) imply that there is a number $\varkappa \in [\varkappa _1\, ,
\Xi \, \varkappa _1]$ such that for all $\mu ,\nu =1,\dots ,l$,
$$
\sum\limits_{\substack{n\, \in \, \Lambda ^*\, : \\ 
\pi |n_{\perp}|\, \leqslant \, \varkappa +h^{\, \max \, \{ \mu ,\nu \} }}}\frac 
{\| (\widehat V_{[a_j]})_n\| ^2}{\sqrt {\varkappa +2h^{\, \max \, \{ \mu ,\nu \} }
-\pi |n_{\perp}|}}\, \leqslant \, c_5\, \varkappa _1^{-\frac 12}\, a_j^{-1}Y
_{\delta}(\widehat V;a_j)\, ,  \eqno (5.34)
$$ $$
\sum\limits_{\substack{n\, \in \, \Lambda ^*\, : \\ 
\frac 12\, \varkappa _1\, <\, \pi |n_{\perp}|\, \leqslant \, \varkappa +h^{\, \max \, 
\{ \mu ,\nu \} }}}\frac {\| \widehat V_n\| ^2}{\sqrt {\varkappa +2h^{\, \max \, \{ 
\mu ,\nu \} }-\pi |n_{\perp}|}}\, \leqslant \, 2^{\, j}\, c_5\, \varkappa _1^{-\frac 
12}\, \sum\limits_{\substack{n\, \in \, \Lambda ^*\, : \\ \varkappa _1\, < \, 
2\pi |n_{\perp}|}}\| \widehat {\mathcal V}_n\| ^2\, ,  \eqno (5.35)
$$ 
where $a_j=h^{\, \frac 13j}r_j\, $, $j=j(\mu ,\nu )$. For every number $j\in \{ 1,
\dots ,3l\} $ (if the number $\varkappa _1$ is fixed), from (5.28) and (5.35) it 
follows that for all ordered pairs $(\mu ,\nu )\in {\mathcal L}(j)$, all vector 
functions $\psi \in L^2(K;{\mathbb C}^M)$, and for the number $\varkappa \in 
[\varkappa _1\, ,\, \Xi \, \varkappa _1]$ chosen as above, the estimate
$$
\| \widehat P^{\, (+)}_{\mu}\widehat V\widehat P^{\, (-)}_{\nu}\psi \| 
\leqslant  \eqno (5.36)
$$ $$
\leqslant (2c_4\, v(K))^{\, \frac 12}\, h^{\, \frac 12j}\, \bigl( \bigl( \, 3\, 
\Xi \, \sqrt {\varkappa _1}\sum\limits_{\substack{n\, \in \, \Lambda ^*\, : \\ 
\frac 12\, \varkappa _1\, <\, \pi |n_{\perp}|\, \leqslant \, \varkappa +h^{\, \max \, 
\{ \mu ,\nu \} }}}\frac {\| \widehat V_n\| ^2}{\sqrt {\varkappa +2h^{\, \max \, \{ 
\mu ,\nu \} }-\pi |n_{\perp}|}}\ \bigr) ^{\frac 12}+
$$ $$
+\, \bigl( \, \sqrt {\varkappa _1}\sum\limits_{\substack{n\, \in \, \Lambda ^*\, : \\ 
2\pi |n_{\perp}|\, \leqslant \, \varkappa _1}}\frac {2\pi |n_{\perp}|\ \| \widehat 
V_n\| ^2}{\varkappa _1 \sqrt {\varkappa +2h^{\, \max \, \{ \mu ,\nu \} }-\pi 
|n_{\perp}|}}\ \bigr) ^{\frac 12}\, \bigr) \, \| \widehat P^{\, (-)}_{\nu}\psi \| 
\leqslant
$$ $$
\leqslant (2c_4\, v(K))^{\, \frac 12}\, h^{\, \frac 12j}\, \bigl( \bigl( \, 
3\cdot 2^{\, j}\, \Xi\, c_5\sum\limits_{\substack{n\, \in \, \Lambda ^*\, : \\ 
\varkappa _1\, < \, 2\pi |n_{\perp}|}}\| \widehat V_n\| ^2\, \bigr) ^{\frac 12}+
\bigl( \, \sqrt 2 \sum\limits_{\substack{n\, \in \, \Lambda ^*\, : \\ 2\pi 
|n_{\perp}|\, \leqslant \, \varkappa _1}}\frac {2\pi |n_{\perp}|}{\varkappa _1}\   
\| \widehat V_n\| ^2\, \bigr) ^{\frac 12}\, \bigr) \, \| \widehat P^{\, (-)}
_{\nu}\psi \| 
$$
holds. Let $\varepsilon ^{\, \prime \prime}\doteq \frac 13\, \varepsilon \, h^{-1}
\min \, \{ 1,\pi |\gamma |^{-1}\} $. We choose a number $j_0=j_0(\varepsilon ^{\, 
\prime \prime})\in {\mathbb N}$ (also dependent on $v(K)$, $c_5\, $, $\Xi $, $\widehat 
V$, $h$, $\{ r_j\} $) such that $r_j\leqslant \frac 12\, \varepsilon ^{\, \prime 
\prime}$ and
$$
(6\, \Xi \, c_4\, c_5\, v(K))^{\frac 12}\, (r_j^{-1}Y_{\delta}(\widehat V;a_j))
^{\frac 12}\leqslant \frac 12\, \varepsilon ^{\, \prime \prime} 
$$
for all $j>j_0\, $. Since
$$
\sum\limits_{\substack{n\, \in \, \Lambda ^*\, : \\ \varkappa _1\, < \, 2\pi 
|n_{\perp}|}}\, \| \widehat V_n\| ^2\to 0\, ,\ \ \sum\limits_{\substack{n\, \in 
\, \Lambda ^*\, : \\ 2\pi |n_{\perp}|\, \leqslant \, \varkappa _1}}\, \frac {2\pi 
|n_{\perp}|}{\varkappa _1}\ \| \widehat V_n\| ^2 \to 0 \ \ \ {\text {as}}\ \, 
\varkappa _1\to +\infty \, ,
$$
from (5.36) it follows that there is a number 
$\widetilde \varkappa _0=\widetilde \varkappa _0(M,\Lambda ,\, \Xi ,h,\widehat V,
\{ r_j\} ;j_0,\varepsilon ^{\, \prime \prime})>2h$ such that for all $\varkappa 
_1\geqslant \widetilde \varkappa _0\, $, for the numbers $\varkappa \in [\varkappa 
_1\, ,\, \Xi \, \varkappa _1]$ chosen as above, for all numbers $\mu ,\nu =1,\dots ,l$ 
with $j(\mu ,\nu )\leqslant j_0\, $ (where $l=l(h,\varkappa _1)\in {\mathbb N}$), 
and all vector functions $\psi \in L^2(K;{\mathbb C}^M)$ we have
$$
\| \widehat P^{\, (+)}_{\mu}\widehat V\widehat P^{\, (-)}_{\nu}\psi \| \leqslant 
\varepsilon ^{\, \prime \prime}\, h^{\, \frac 13\, j(\mu ,\nu )}\, \| \widehat P^{\, 
(-)}_{\nu}\psi \| \, .  \eqno (5.37)
$$
At the same time, if $\varkappa _1\geqslant \widetilde \varkappa _0>2h$ and the 
number $[\varkappa _1\, ,\, \Xi \, \varkappa _1]\ni \varkappa $ is chosen as above, 
then for all numbers $\mu ,\nu =1,\dots ,l$ with $j=j(\mu ,\nu )>j_0\, $, and all
vector functions $\psi \in L^2(K;{\mathbb C}^M)$, taking into account estimates (5.29),
(5.34), and the definition of the number $j_0\, $, we derive
$$
\| \widehat P^{\, (+)}_{\mu}\widehat V\widehat P^{\, (-)}_{\nu}\psi \| \leqslant 
\| \widehat P^{\, (+)}_{\mu}(\widehat V-\widehat V_{[a_j]})\widehat P^{\, 
(-)}_{\nu}\psi \| +\| \widehat P^{\, (+)}_{\mu}\widehat V_{[a_j]}\widehat P^{\, 
(-)}_{\nu}\psi \| \leqslant
$$ $$
\leqslant \bigl( \, a_j+ h^{\, \frac 12\, j}\, \bigl( \, 6\, \Xi \, c_4\, \sqrt 
{\varkappa _1}\, v(K)\, \sum\limits_{\substack{n\, \in \, \Lambda ^*\, : \\ \pi 
|n_{\perp}|\, \leqslant \, \varkappa +h^{\, \max \, \{ \mu ,\nu \} }}}\frac {\| 
(\widehat V_{[a_j]})_n\| ^2}{\sqrt {\varkappa +2h^{\, \max \, \{ \mu ,\nu \} }-\pi 
|n_{\perp}|}}\ \bigr) ^{\frac 12}\, \bigr) \, \| \widehat P^{\, (-)}_{\nu}\psi \| 
\leqslant
$$ $$
\leqslant h^{\, \frac 13\, j}\, \bigl( r_j+(6\, \Xi \, c_4\, c_5\, v(K))^{\, \frac 
12}\, (r_j^{-1}Y_{\delta}(\widehat V;a_j))^{\, \frac 12}\, \bigr) \, \| \widehat P^{\, 
(-)}_{\nu}\psi \| \leqslant \varepsilon ^{\, \prime \prime}\, h^{\, \frac 13\, j}\, \| 
\widehat P^{\, (-)}_{\nu}\psi \| \, ,
$$
that is, estimate (5.37) is valid for all $\mu ,\nu =1,\dots ,l$.

From (5.3), (5.4), and (5.37) (for all $\mu ,\nu =1,\dots ,l$ and all vector
functions $\psi \in L^2(K;{\mathbb C}^M)$) we deduce the following estimate:
$$
\| \widehat G_-^{-\frac 12}\widehat P^{\, (+)}_{\mu}\widehat V\widehat G_-^{-\frac 12}
\widehat P^{\, (-)}_{\nu}\psi \| \leqslant \varepsilon ^{\, \prime \prime}\, (h\, 
\max \{ 1,\, \frac {|\gamma }{\pi }\, \} )\, h^{\, \frac 13\, j(\mu ,\nu )}\, h^{\, 
-\frac {\mu +\nu }2}\, \| \widehat P^{\, (-)}_{\nu}\psi \| =
$$ $$
=\, \frac {\varepsilon }3\ h^{\, \frac 13\, \min \, \{ \mu ,\nu \} }\, h^{\, -\frac 
{\mu +\nu }6}\, \| \widehat P^{\, (-)}_{\nu}\psi \| =\frac {\varepsilon }3\ h^{\, 
-\frac 16\, |\mu -\nu |}\, \| \widehat P^{\, (-)}_{\nu}\psi \| \, .
$$
Whence
$$
\| \widehat G_-^{-\frac 12}\widehat P^{\, (+)}\widehat V\widehat G_-^{-\frac 12}
\widehat P^{\, (-)}\psi \| ^2=\sum\limits_{\mu \, =\, 1}^l \, \| \widehat G_-
^{-\frac 12}\widehat P^{\, (+)}_{\mu}\widehat V\widehat G_-^{-\frac 12}\, \biggl( 
\, \sum\limits_{\nu \, =\, 1}^l\widehat P^{\, (-)}_{\nu}\biggr) \, \psi \| ^2\leqslant
$$ $$
\leqslant \sum\limits_{\mu \, =\, 1}^l\, \biggl( \, \sum\limits_{\nu \, =\, 1}^l \, 
\| \widehat G_-^{-\frac 12}\widehat P^{\, (+)}_{\mu }\widehat V\widehat G_-^{-\frac 
12}\widehat P^{\, (-)}_{\nu }\psi \| \biggr) ^2\leqslant 
\biggl( \, \frac {\varepsilon }3\, \biggr) ^2\, \sum\limits_{\mu \, =\, 1}^l\, 
\biggl( \, \sum\limits_{\nu \, =\, 1}^l h^{\, -\frac 16\, |\mu -\nu |}\, \| \widehat 
P^{\, (-)}_{\nu}\psi \| \biggr) ^2\leqslant    
$$ $$
\leqslant \biggl( \, \frac {\varepsilon }3\, \biggr) ^2\, \sum\limits_{\mu \, =\, 1}
^l\, \biggl( \, \sum\limits_{\nu \, =\, 1}^l h^{\, -\frac 16\, |\mu -\nu |}\, \biggr) 
\, \biggl( \, \sum\limits_{\nu \, =\, 1}^l h^{\, -\frac 16\, |\mu -\nu |}\, \| 
\widehat P^{\, (-)}_{\nu}\psi \| ^2 \biggr) \leqslant 
$$ $$
\leqslant \biggl( \, \frac {1+h^{\, -\frac 16}}{1-h^{\, -\frac 16}}\, \biggr) \,  
\biggl( \, \frac {\varepsilon }3\, \biggr) ^2\, \sum\limits_{\nu \, =\, 1}^l\,
\biggl( \, \sum\limits_{\mu \, =\, 1}^l h^{\, -\frac 16\, |\mu -\nu |}\, \biggr) \,
\| \widehat P^{\, (-)}_{\nu}\psi \| ^2\leqslant
$$ $$
\leqslant \biggl( \, \frac {1+h^{\, -\frac 16}}{1-h^{\, -\frac 16}}\, \biggr) ^2\,  
\biggl( \, \frac {\varepsilon }3\, \biggr) ^2\, \sum\limits_{\nu \, =\, 1}^l\, 
\| \widehat P^{\, (-)}_{\nu}\psi \| ^2\leqslant (\varepsilon )^2\, \| \widehat P^{\, 
(-)}\psi \| ^2\, .
$$
To complete the proof, it remains to put $\psi =\widehat G_-^{\, \frac 12}\widehat 
P^{\, (-)}\varphi $, $\varphi \in L^2(K;{\mathbb C}^M)$. Theorem \ref{th5.1} is 
proved.
\end{proof}

Now, let us use estimates (5.14) -- (5.17), (5.20) -- (5.22), and (5.26), (5.27)
(conditions (5.11) and (5.23) are fulfilled because 
$$
\| \widehat V \| ^{(\infty ,\, 
{\mathrm {loc}})}_{L^3_w(K;{\mathcal M}_M)}=\| \widehat V \| ^{(\infty )}_{L^3_w(K;
{\mathcal M}_M)}=0\, )\, .
$$
We choose the number $\varkappa _0\doteq \max \, \{ \varkappa _0^{\, \prime}
(\varepsilon ), \varkappa _0^{\, \prime}\, ((4\, \Xi \, h+1)^{-\frac 12}\,
\varepsilon ), \varkappa _0^{\, \prime \prime}, \widetilde \varkappa _0\} >2h$. Then,
for all $\varkappa _1\geqslant \varkappa _0\, $, for the number $\varkappa \in 
[\varkappa _1\, ,\, \Xi \, \varkappa _1]$ chosen in Theorem \ref{th5.1}, for all 
vectors $k\in {\mathbb R}^3$ with $|(k,\gamma )|=\pi $, and all vector functions $\varphi \in 
\widetilde H^1(K;{\mathbb C}^M)$ we get 
$$
\| \widehat G_-^{-\frac 12}\widehat P^{\, +}_*\widehat V\varphi \| ^2+\| \widehat 
G^{-\frac 12}_+\widehat P^{\, -}_*\widehat V\varphi \| ^2\leqslant  \eqno (5.38)
$$ $$
\leqslant 3\, \bigl( \, \| \widehat G_-^{-\frac 12}P^{\, (+)}\widehat VP^{\, (+)}
\varphi \| ^2+\| \widehat G_-^{-\frac 12}P^{\, (+)}\widehat VP^{\, (-)}\varphi \| ^2+
\| \widehat G_-^{-\frac 12}P^{\, (+)}\widehat VP^{\, \Lambda ^*\backslash {\mathcal
K}(h^{\, l})}\varphi \| ^2\bigr) +
$$ $$
+\, 3\, \bigl( \, \| \widehat G_+^{-\frac 12}P^{\, (-)}\widehat VP^{\, (+)}\varphi 
\| ^2+\| \widehat G_+^{-\frac 12}P^{\, (-)}\widehat VP^{\, (-)}\varphi \| ^2+
\| \widehat G_+^{-\frac 12}P^{\, (-)}\widehat VP^{\, \Lambda ^*\backslash {\mathcal
K}(h^{\, l})}\varphi \| ^2\bigr) +
$$ $$
3\, \bigl( \| \widehat G_-^{-\frac 12}P^{\, \Lambda ^*\backslash {\mathcal
K}(h^{\, l})}\widehat VP^{\, (+)}\varphi \| ^2+\| \widehat G_-^{-\frac 12}P^{\, 
\Lambda ^*\backslash {\mathcal K}(h^{\, l})}\widehat VP^{\, (-)}\varphi \| ^2+
\| \widehat G_-^{-\frac 12}P^{\, \Lambda ^*\backslash {\mathcal K}(h^{\, l})}
\widehat VP^{\, \Lambda ^*\backslash {\mathcal K}(h^{\, l})}\varphi \| ^2\bigr) 
\leqslant
$$ $$
\leqslant 9\, (\varepsilon )^2\, \bigl( \, \| \widehat G^{\, \frac 12}_+P^{\, (+)}
\varphi \| ^2+\| \widehat G_-^{\, \frac 12}P^{\, (-)}\varphi \| ^2+\| \widehat G_-
^{\, \frac 12}P^{\, \Lambda ^*\backslash {\mathcal K}(h^{\, l})}\varphi \| ^2\, 
\bigr) \leqslant
$$ $$
\leqslant (\varepsilon ^{\, \prime })^2\, \bigl( \, \| \widehat G^{\, \frac 12}_+
\widehat P^{\, +}\varphi \| ^2+\| \widehat G_-^{\, \frac 12}\widehat P^{\, -}
\varphi \| ^2\, \bigr) 
$$
(where $l=l(h,\varkappa _1)$). This completes the proof of Theorem \ref{th1.5}.

\section{Proof of Theorem \ref{th1.6}}

In what follows, we use the assumptions and the notation from Section 5 in the
case where $\Xi =1$ (and $\varkappa _1=\varkappa $). We assume that the vectors
$k\in {\mathbb R}^3$ satisfy the condition $|(k,\gamma )|=\pi $. The number $h$ 
is chosen (and fixed) in Section 5 (in the end of the proof of Theorem \ref{th1.6},
we shall put $h=64$). Fix a number $\varepsilon ^{\, \prime }>0$ and put 
$\varepsilon =\frac 1{2\sqrt 6}\, \varepsilon ^{\, \prime }$. Choose a number 
$\widetilde c_1=\widetilde c_1(h)>0$ such that $\widetilde c_1\leqslant \frac 
1{8\sqrt 6}\, c_3$ and $\widetilde c_1<\frac 1{8\sqrt 6}\, C^{-1}\, (2h^2+1)^{-1}
<C^{-1}$, where $c_3=c_3(h)$ is the constant from Lemma \ref{l5.3} and $C=C(3)$ is
the constant from inequality (0.6). If $\widehat V^{\, (s)}\in L^3_w(K;{\mathcal S}
^{(s)}_M)$ and
$$
\| \widehat V^{\, (s)}\| ^{(\infty ,\, {\mathrm {loc}})}_{L^3_w(K;{\mathcal M}_M)}
\leqslant \widetilde c_1\, \varepsilon ^{\, \prime }\, ,\ s=0,1\, ,
$$
then inequalities (5.11) and (5.23) (for $\Xi =h$) hold for the function $\widehat V
=\widehat V^{\, (0)}+\widehat V^{\, (1)}$.

Assume that (for $\Xi =1$) $\varkappa =\varkappa _1\geqslant 2h^2$ (then $l=l(h,
\varkappa )\geqslant 2$ and $2h^{\, l-1}<h^{\, l}\leqslant \frac 12\, \varkappa $). 
We define the functions
$$ 
\Theta (h,\varkappa ;t)=\left\{
\begin{array}{lll}
1 & {\text {if}} \ \, t\leqslant h^{\, l-1}\, , \\ [0.2cm]
2-h^{-l+1}t & {\text {if}} \ \, h^{\, l-1}<t\leqslant 2h^{\, l-1}\, , \\ [0.2cm]
0 & {\text {if}} \ \, t>2h^{\, l-1}\, ,
\end{array}
\right.
$$
and the operators $\widehat \Theta =\widehat \Theta (h,k;\varkappa )$ that take
vector functions $\psi \in L^2(K;{\mathbb C}^M)$ to the vector functions
$$
\widehat \Theta \psi =\sum\limits_{N\, \in \, \Lambda ^*}\, \Theta (h,\varkappa ;
G^{\, -}_N(k;\varkappa ))\, \psi _N\, e^{\, 2\pi i\, (N,x)}.
$$

The following theorem is a key point in the proof of Theorem \ref{th1.6}.

\begin{theorem} \label{th6.1}
Let $d=3$, $\gamma \in \Lambda \backslash \{ 0\} $, $\sigma \in (0,2]$. Then for 
any $\Lambda $-periodic matrix function 
$$
\widehat V^{\, (s)}=\sum\limits_{q=1}^{Q_s}\widehat V^{\, (s)}_q\, ,\ s=0,1\, ,
$$ 
with $\widehat V^{\, (s)}_q\in L^3_w(K;{\mathcal S}^{(s)}_M)$, $\beta _{\gamma ,\, 
\sigma }(0;\widehat V^{\, (s)}_q)<+\infty $, $q=1,\dots ,Q_s\, $, for which the
essential supports ${\mathrm {supp}}\, \widehat V^{\, (s)}_q$ do not intersect for
different $q$ $\mathrm ($for $s=0$ and $s=1$, separately$\mathrm )$, and for any 
$\delta >0$ there is a number $\widetilde \varkappa ^{\, \prime }_0(\delta )=
\widetilde \varkappa ^{\, \prime }_0(M,\Lambda ,|\gamma |,h,\sigma ,\widehat V^{\, 
(0)},\widehat V^{\, (1)};\delta )\geqslant 2h^2$ such that for all $\varkappa 
\geqslant \widetilde \varkappa ^{\, \prime }_0(\delta )$, all vectors $k\in 
{\mathbb R}^3$ with $|(k,\gamma )|=\pi $, and all vector functions
$\varphi \in L^2(K;{\mathbb C}^M)$ the inequalities
$$
\| \, \widehat G_-^{\, -\frac 12}\, \widehat P^{\, +}\, \widehat \Theta \, 
\widehat V^{\, (s)}\, \widehat P^{\, -}\, \widehat \Theta \, \varphi \, \|
\leqslant c_6\, (\, \delta +\max\limits_{q\, =\, 1,\dots ,Q_s}\, \beta _{\gamma ,\, 
\sigma }(\widehat V^{\, (s)}_q) )\, \| \, \widehat G_-^{\, \frac 12}\, 
\widehat P^{\, (-)}\varphi \, \| \, ,\ \, s=0,1\, ,
$$
hold, where $c_6=c_6\, (h,\sigma )>0$ $\mathrm ($see $\mathrm ($6.14$\mathrm 
)$$\mathrm )$.
\end{theorem}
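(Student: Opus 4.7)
The strategy is to adapt the dyadic framework used in Theorem \ref{th5.1} and Lemma \ref{l5.3}, while exploiting the extra cancellation from the outer ``off-diagonal'' projectors $\widehat P^+\cdots\widehat P^-$ through identity (1.4). Since every Fourier coefficient $\widehat V^{(s)}_n\in\mathcal S^{(s)}_M$ commutes with all $\widehat P^{\pm}_{\widetilde e}$, the matrix element between Fourier modes $N$ and $N-n$ of $\widehat P^+\widehat V^{(s)}\widehat P^-$ equals $\widehat V^{(s)}_n\bigl(\widehat P^+_{\widetilde e(k+2\pi N)}\widehat P^-_{\widetilde e(k+2\pi(N-n))}\bigr)$, whose operator norm is bounded by
$$
\tfrac12\bigl|\widetilde e(k+2\pi N)-\widetilde e(k+2\pi(N-n))\bigr|\,\|\widehat V^{(s)}_n\|\leq \tfrac{2\pi |n_\perp|}{\varkappa}\,\|\widehat V^{(s)}_n\|
$$
via (1.4) and (5.2). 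This $|n_\perp|/\varkappa$ gain is the mechanism that lets the hypothesis $\beta_{\gamma,\sigma}(\widehat V^{(s)}_q)<+\infty$ pay off.

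\textbf{Dyadic estimate for fixed $q$.} Split $\widehat V^{(s)}_q=\widehat V^{(s),\flat}_q+\widehat V^{(s),\sharp}_q$, where $\widehat V^{(s),\flat}_q$ retains only the Fourier modes with $2\pi|n_\perp|\leq R_0$, choosing $R_0=R_0(\delta)$ so that $\beta_{\gamma,\sigma}(R;\widehat V^{(s)}_q)\leq 2\beta_{\gamma,\sigma}(\widehat V^{(s)}_q)$ for every $R\geq R_0$ and every $q$. The low part involves only finitely many Fourier modes, and the combined $1/\varkappa$ factor together with Lemma \ref{l5.1}-type bounds renders its contribution $O(\delta)$ once $\varkappa\geq\widetilde\varkappa^{\,\prime}_0(\delta)$. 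For the high part, the definition of $\beta_{\gamma,\sigma}$ gives $\|(\widehat V^{(s),\sharp}_q)_n\|\leq c\,v^{-1}(K)\,(2\pi|n_\perp|)^{\sigma-2}(2\pi|n|)^{-\sigma}\,\beta_{\gamma,\sigma}(\widehat V^{(s)}_q)$, which combined with the $|n_\perp|/\varkappa$ gain yields an effective Fourier multiplier of size $\beta_q|n_\perp|^{\sigma-1}|n|^{-\sigma}/\varkappa$. Decomposing input and output indices into the dyadic rings $\mathcal K_\mu$, applying the lattice-point counting $S_{\mu\nu}(n)$ from Theorem \ref{th5.1}, and absorbing the $\widehat G_-^{\pm 1/2}$ weights against the $h^{\mu}$-factors produces a Schur-type estimate that, after summing the resulting geometric series in some $h^{-\alpha(\sigma)}$, yields the per-$q$ bound $\|\widehat G_-^{-1/2}\widehat P^+\widehat\Theta\,\widehat V^{(s)}_q\,\widehat P^-\widehat\Theta\,\varphi\|\leq c_6(h,\sigma)\bigl(\delta+\beta_{\gamma,\sigma}(\widehat V^{(s)}_q)\bigr)\|\widehat G_-^{1/2}\widehat P^{(-)}\varphi\|$.

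\textbf{From fixed $q$ to $\max_q$.} The main obstacle is the last step: recombining the per-$q$ estimates into a bound with $\max_q$, not $\sum_q$, since a naive triangle inequality would cost a factor $Q_s$. I would use duality together with the pointwise disjointness of the supports $E_q=\mathrm{supp}\,\widehat V^{(s)}_q$. Namely, with $f=\widehat P^-\widehat\Theta\varphi$ and any test $h\in L^2$,
$$
\bigl\langle h,\widehat G_-^{-1/2}\widehat P^+\widehat\Theta\,\widehat V^{(s)} f\bigr\rangle=\sum_q\bigl\langle (\widehat G_-^{-1/2}\widehat P^+\widehat\Theta)^{*}h,\widehat V^{(s)}_q f\bigr\rangle=\sum_q\bigl\langle \chi_{E_q}(\widehat G_-^{-1/2}\widehat P^+\widehat\Theta)^{*}h,\widehat V^{(s)}_q f\bigr\rangle
$$
because $\widehat V^{(s)}_q$ vanishes off $E_q$. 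Cauchy--Schwarz in $q$ combined with a localized version of the single-$q$ bound (namely $\|\widehat V^{(s)}_q f\|\leq C(\delta+\beta_q)\|(\cdots)f\|$, obtained by running the dyadic estimate with the roles of input and output exchanged) and the pointwise orthogonality $\sum_q\|\chi_{E_q}g\|^2\leq\|g\|^2$ then converts $\sum_q(\delta+\beta_q)^2$ into $\max_q(\delta+\beta_q)^2$, as required. The delicate point of this step --- the main obstacle --- is that the single-$q$ bound must be stated in a form that localizes near $E_q$ so that the insertion of $\chi_{E_q}$ is harmless; this is where the $L^3_w$-smallness hypothesis (condition 3), already controlled by Lemma \ref{l5.3} applied to each $\widehat V^{(s)}_q$, plays a dual role, together with an interpolation argument in the spirit of the proof of Theorem \ref{th1.4} to trade the spectral weights $\widehat G_-^{\pm 1/2}$ for the spatial indicators.
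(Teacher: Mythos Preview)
Your per-$q$ estimate is essentially the paper's Theorem~6.2: the cancellation from $\widehat P^{+}\cdots\widehat P^{-}$ via (1.4)--(5.2), the low/high Fourier split of $\widehat V^{(s)}_q$, and the dyadic Schur summation are all present there. That part of your plan is fine.

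The genuine gap is the passage from a single $q$ to $\max_q$. Your duality argument writes the pairing as $\sum_q\langle \chi_{E_q}\,(\widehat G_-^{-1/2}\widehat P^{+}\widehat\Theta)^{*}h,\ \widehat V^{(s)}_q f\rangle$ and then invokes a ``localized single-$q$ bound'' of the form $\|\widehat V^{(s)}_q f\|\le C(\delta+\beta_q)\|(\cdots)f\|$. But no such bound is available: the entire gain comes from sandwiching $\widehat V^{(s)}_q$ between $\widehat P^{+}$ and $\widehat P^{-}$, and once you strip $\widehat G_-^{-1/2}\widehat P^{+}\widehat\Theta$ off onto the test function $h$, that cancellation is gone. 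You cannot simultaneously keep the spectral cancellation $\widehat P^{+}\cdots\widehat P^{-}$ inside the $q$-th term \emph{and} localize that term with the spatial indicator $\chi_{E_q}$, because $\widehat G_-^{-1/2}\widehat P^{+}\widehat\Theta$ is a Fourier multiplier and hence highly nonlocal in $x$. Your final sentence correctly flags this as ``the main obstacle'', but the appeal to an interpolation ``in the spirit of Theorem~\ref{th1.4}'' does not address it: interpolation trades between two endpoint bounds that are already local in the same sense, it does not convert a spectral projector into a spatial one.

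The paper resolves exactly this obstacle, and the mechanism is not visible in your sketch. First (Theorem~6.3) it replaces the cutoff $\widehat\Theta$ by a modified cutoff $\widehat\Theta_{l'}$ whose symbol $\Theta_{l'}(h,\varkappa;t)$ is Lipschitz in $t$ with constant $\sim h^{-l_1-1}$; the resulting error is harmless. Then, instead of the rough indicators $\chi_{E_q}$, it chooses smooth cutoffs $\mathcal F^{(s)}_q$ equal to $1$ on $\mathrm{supp}\,\widehat V^{(s)}_q$ with pairwise disjoint supports, and truncates their Fourier series at scale $\widetilde\delta\varkappa\sim h^{l_1}$. The key technical input is Lemma~6.1: the commutator of $\widehat{\mathcal A}^{\pm}_{l'}=\widehat G_-^{-1/2}\widehat P^{\pm}\widehat\Theta_{l'}$ with the truncated cutoff $\mathcal F^{(s,1)}_q$ has norm $O(\varkappa^{-3/2})$. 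This is what allows one to move the spatial cutoffs through the spectral operator at negligible cost, arriving at $\sum_q\|\widehat{\mathcal A}^{+}_{l'}\widehat V^{(s)}_q\widehat{\mathcal A}^{-}_{l'}\mathcal F^{(s)}_q\psi\|^2$, to which Theorem~6.2 applies termwise, and then $\sum_q\|\mathcal F^{(s)}_q\psi\|^2\le\|\psi\|^2$ (disjoint supports) yields the $\max_q$. Without an analogue of this commutator step, your argument does not close.
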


Theorem \ref{th6.1} is proved in the end of this Section.
\vskip 0.2cm

{\it Proof} \, of Theorem \ref{th1.6}. First let us obtain estimates which are
similar to estimates (5.14) -- (5.17), (5.20) -- (5.22), and (5.26) used in the proof
of Theorem \ref{th1.5}. We assume that $\varkappa \geqslant \varkappa ^{\, \prime 
\prime \prime }_0\geqslant 2h^2\, $. A few additional lower bounds on the number 
$\varkappa ^{\, \prime \prime \prime }_0$ will be given below. For the vectors 
$k\in {\mathbb R}^3$ we suppose that $|(k,\gamma )|=\pi $, and for the number 
$l=l(h,\varkappa )\in {\mathbb N}\backslash \{ 1\} $ we have $2h^{\, l}\leqslant 
\varkappa <2h^{\, l+1}$ (the number $h$ satisfies the conditions from Section 5: 
$h\geqslant 64$ and $h>2\pi d(K^*)$). Let $\varkappa ^{\, \prime \prime \prime }_0
\geqslant \varkappa ^{\, \prime }_0(\varepsilon )$. By Lemma \ref{l5.3}, estimate
(5.12) holds. Hence, from (5.14), (5.15), and (5.20) it follows that for all
$\varphi \in L^2(K;{\mathbb C}^M)$ the following estimates are fulfilled:
$$
\| \, \widehat G_+^{\, -\frac 12}\, \widehat P^{\, (-)}\, \widehat V\, 
\widehat P^{\, \mp }\, \widehat \Theta \, \varphi \, \| \leqslant \varepsilon \,
\| \, \widehat G_{\mp }^{\, \frac 12}\, \widehat P^{\, (\mp )}\varphi \, \| \, ,
\eqno (6.1)
$$ $$
\| \, \widehat G_-^{\, -\frac 12}\, \widehat P^{\, (+)}\, \widehat V\, 
\widehat P^{\, +}\, \widehat \Theta \, \varphi \, \| \leqslant \varepsilon \,
\| \, \widehat G_+^{\, \frac 12}\, \widehat P^{\, (+)}\varphi \, \| \, .
\eqno (6.2)
$$
From (5.5) (setting $\Xi =h$ and replacing $l$ by $l-1$), for all $N\in \Lambda 
^*\backslash \, {\mathcal K}(h^{\, l-1})$ we obtain
$$
G^{\, -}_N(k;\varkappa )>(4h^2+1)^{-1}\, G^{\, +}_N(k;\varkappa )\, .
$$
Therefore, from Lemma \ref{l5.3}, for $\varkappa ^{\, \prime \prime \prime }_0
\geqslant \varkappa ^{\, \prime }((4h^2+1)^{-\frac 12}\, \varepsilon )$, (by analogy
with estimates (5.16) and (5.17)) we get
$$
\| \, \widehat G_-^{\, -\frac 12}\, \widehat P^{\, \Lambda ^*\backslash \, {\mathcal 
K}(h^{\, l-1})}\, \widehat V\, \widehat P^{\, \mp }\, \widehat \Theta \, \varphi \, 
\| \leqslant \varepsilon \, \| \, \widehat G_{\mp }^{\, \frac 12}\, \widehat 
P^{\, (\mp )} \varphi \, \| \, ,\ \, \varphi \in L^2(K;{\mathbb C}^M)\, ,
\eqno (6.3)
$$
and from (5.19), where  $\varepsilon$ is replaced by $(4h^2+1)^{-\frac 12}\, 
\varepsilon \, $, (by analogy with estimates (5.16) and (5.17)) we deduce the
inequalities
$$
\| \, \widehat G_{\mp }^{\, -\frac 12}\, P^{\, (\pm )}\, \widehat V\, \widehat 
P^{\, \Lambda ^*\backslash \, {\mathcal K}(h^{\, l-1})}\, \varphi \, \| \leqslant 
\varepsilon \, \| \, \widehat G_-^{\, \frac 12}\, \widehat P^{\, \Lambda ^*
\backslash \, {\mathcal K}(h^{\, l-1})}\, \varphi \, \| \, ,\ \, \varphi \in 
\widetilde H^1(K;{\mathbb C}^M)\, .  \eqno (6.4)
$$
Now, let $\varkappa ^{\, \prime \prime \prime }_0\geqslant \varkappa ^{\, \prime 
\prime }_0\, $, where the number $\varkappa ^{\, \prime \prime }_0$ is chosen for 
$\Xi =h$. Changing $l$ to $l-1$ in estimate (5.24), we have
$$
\| \, \widehat G_-^{\, -\frac 12}\, P^{\, \Lambda ^*\backslash \, {\mathcal K}(h^{\, 
l-1})}\, \widehat V\, \widehat P^{\, \Lambda ^*\backslash \, {\mathcal K}(h^{\, 
l-1})}\, \varphi \, \| \leqslant \varepsilon \, \| \, \widehat G_-^{\, \frac 12}\, 
\widehat P^{\, \Lambda ^*\backslash \, {\mathcal K}(h^{\, l-1})}\, \varphi \, \| 
\, ,\ \, \varphi \in \widetilde H^1(K;{\mathbb C}^M)\, .  \eqno (6.5)
$$
Under the conditions of Theorem \ref{th6.1}, we put $\delta =\frac 14\, \varepsilon 
\, c_6^{-1}$ and assume that the number $\varkappa ^{\, \prime \prime \prime }_0$ 
satisfies the last lower estimate: $\varkappa ^{\, \prime \prime \prime }_0\geqslant 
\widetilde \varkappa ^{\, \prime }_0(\delta )$. Choose a constant
$$
\widetilde c_1^{\, \prime }=\widetilde c_1^{\, \prime }(h,\sigma )=
\delta (\varepsilon ^{\, \prime})^{-1}=\frac 1{8\sqrt 6}\ c_6^{-1}\, .
$$
From Theorem \ref{th6.1} (for $\varkappa \geqslant \varkappa ^{\, \prime \prime 
\prime }_0$) it follows that
$$
\| \, \widehat G_-^{\, -\frac 12}\, \widehat P^{\, +}\, \widehat \Theta \, 
\widehat V\, \widehat P^{\, -}\, \widehat \Theta \, \varphi \, \| \leqslant 
\varepsilon \, \| \, \widehat G_-^{\, \frac 12}\, \widehat P^{\, (-)} \varphi \, 
\| \, ,\ \, \varphi \in L^2(K;{\mathbb C}^M)\, .  \eqno (6.6)
$$
Finally, from (6.1) -- (6.6), for all $\varkappa \geqslant \varkappa ^{\, \prime 
\prime \prime }_0\, $, all $k\in {\mathbb R}^3$ with $|(k,\gamma )|=\pi $, and all 
$\varphi \in \widetilde H^1(K;{\mathbb C}^M)$ (by analogy with (5.38)) we obtain 
estimate (1.15):
$$
\| \widehat G_-^{-\frac 12}\widehat P^{\, +}_*\widehat V\varphi \| ^2+\| \widehat 
G^{-\frac 12}_+\widehat P^{\, -}_*\widehat V\varphi \| ^2\leqslant  
$$ $$
\leqslant 3\, \bigl( \, \| \widehat G_-^{-\frac 12}P^{\, (+)}\widehat VP^{\, +}
\widehat \Theta \varphi \| ^2+2\, \| \widehat G_-^{-\frac 12}P^{\, +}\widehat \Theta 
\, \widehat VP^{\, -}\widehat \Theta \varphi \| ^2+\| \widehat G_-^{-\frac 12}P^{\, 
(+)}\widehat V(\widehat I-\widehat \Theta )\varphi \| ^2\bigr) +
$$ $$
+\, 3\, \bigl( \, \| \widehat G_+^{-\frac 12}P^{\, (-)}\widehat VP^{\, +}\widehat 
\Theta \varphi \| ^2+\| \widehat G_+^{-\frac 12}P^{\, (-)}\widehat VP^{\, -}
\widehat \Theta \varphi \| ^2+\| \widehat G_+^{-\frac 12}P^{\, (-)}\widehat V
(\widehat I-\widehat \Theta )\varphi \| ^2\bigr) +
$$ $$
3\, \bigl( \| \widehat G_-^{-\frac 12}P^{\, \Lambda ^*\backslash \, {\mathcal
K}(h^{\, l})}\widehat VP^{\, +}\widehat \Theta \varphi \| ^2+2\, \| \widehat G_-
^{-\frac 12}P^{\, \Lambda ^*\backslash \, {\mathcal K}(h^{\, l-1})}\widehat VP^{\, -}
\widehat \Theta \varphi \| ^2+\| \widehat G_-^{-\frac 12}P^{\, \Lambda ^*\backslash 
\, {\mathcal K}(h^{\, l})}\widehat V(\widehat I-\widehat \Theta )\varphi \| ^2\bigr) 
\leqslant
$$ $$
\leqslant 3\, (\varepsilon )^2\, \bigl( \, 3\, \| \widehat G^{\, \frac 12}_+P^{\, (+)}
\varphi \| ^2+5\, \| \widehat G_-^{\, \frac 12}P^{\, (-)}\varphi \| ^2+3\, \| 
\widehat G_-^{\, \frac 12}P^{\, \Lambda ^*\backslash \, {\mathcal K}(h^{\, l-1})}
\varphi \| ^2\, \bigr) \leqslant
$$ $$
\leqslant 3\, (\varepsilon )^2\, \bigl( \, 6\, \| \widehat G^{\, \frac 12}_+P^{\, (+)}
\varphi \| ^2+8\, \| \widehat G_-^{\, \frac 12}P^{\, (-)}\varphi \| ^2+\, \| 
\widehat G_-^{\, \frac 12}P^{\, \Lambda ^*\backslash \, {\mathcal K}(h^{\, l})}
\varphi \| ^2\, \bigr) \leqslant
$$ $$
\leqslant 24\, (\varepsilon )^2\, \bigl( \, \| \widehat G^{\, \frac 12}_+P^{\, (+)}
\varphi \| ^2+\| \widehat G_-^{\, \frac 12}P^{\, (-)}\varphi \| ^2+\| \widehat G_-
^{\, \frac 12}P^{\, \Lambda ^*\backslash \, {\mathcal K}(h^{\, l})}\varphi \| ^2\, 
\bigr) \leqslant
$$ $$
\leqslant (\varepsilon ^{\, \prime })^2\, \bigl( \, \| \widehat G^{\, \frac 12}_+
\widehat P^{\, +}\varphi \| ^2+\| \widehat G_-^{\, \frac 12}\widehat P^{\, -}
\varphi \| ^2\, \bigr) \, .
$$
It remains to remove the technical lower bound $h>2\pi d(K^*)$ for the number  
$h$. Under the linear transformations $x\to \lambda x$, $x\in {\mathbb R}^3$, where
$\lambda >0$, we have $\gamma \to \lambda \gamma $ (the vector $\widetilde e\in 
S_1(\gamma )$ does not change), $K^*\to \lambda ^{-1}K^*$, $d(K^*)\to \lambda ^{-1}
d(K^*)$, and
$$
\varepsilon ^{\, \prime }\to \lambda \, \varepsilon ^{\, \prime },\ \ 
\| \widehat W\| ^{(\infty ,\, {\mathrm {loc}})}_{L^3_w(K;{\mathcal M}_M)}\to
\lambda \, \| \widehat W\| ^{(\infty ,\, {\mathrm {loc}})}_{L^3_w(K;{\mathcal M}_M)}
\, ,\ \ \beta _{\gamma ,\, \sigma }(\widehat W)\to \lambda \, \beta _{\gamma ,\, 
\sigma }(\widehat W)\, .
$$
Therefore conditions 3 and 4 from Theorem \ref{th1.6}, and estimate (1.15) do not
change under such transformations. Choosing the number $\lambda >0$ such that $h=64>
\lambda ^{-1}\cdot 2\pi \, d(K^*)$, we conclude that we can take the universal
constant $\widetilde c_1=\widetilde c_1(64)$ and the constant $\widetilde c_1^{\, 
\prime }=\widetilde c_1^{\, \prime }(64,\sigma )$ dependent only on $\sigma $. 
Theorem \ref{th1.6} is proved.

\begin{theorem} \label{th6.2}
Let $d=3$, $\gamma \in \Lambda \backslash \{ 0\} $, $\sigma \in (0,2]$. 
Suppose that $\widehat W\in L^2(K;{\mathcal M}_M)$, $\beta _{\gamma ,\, 
\sigma }(0;\widehat W)<+\infty $, and for a.e. $x\in {\mathbb R}^3$ the matrices
$\widehat W(x)$ commute with all orthogonal projections $\widehat P^{\, \pm }
_{\widetilde e}\, $, $\widetilde e\in S_1(\gamma )$ $\mathrm ($in particular,
we may consider the functions $\widehat W=\widehat W^{\, (0)}+\widehat W^{\, (1)}$,
$\widehat W^{\, (s)}\in L^2(K;{\mathcal S}^{\, (s)}_M)$, $s=0,1$$\mathrm )$.
Then for any $\delta >0$ there is a number $\varkappa ^{\, \sharp}>2h$ such that
for all $\varkappa \geqslant \varkappa ^{\, \sharp}$, all vectors $k\in {\mathbb 
R}^3$ with $|(k,\gamma )|=\pi $, and all vector functions $\psi \in L^2(K;{\mathbb 
C}^M)$ the inequality
$$
\| \, \widehat G_-^{\, -\frac 12}\, \widehat P^{\, (+)}\,  
\widehat W\, \widehat G_-^{\, -\frac 12}\, \widehat P^{\, (-)}\, \psi \, \|
\leqslant \frac 12\, c_6\, (\, \delta +\beta _{\gamma ,\, 
\sigma }(\widehat W))\, \| \, \psi \, \| \, ,  \eqno (6.7)
$$
holds, where $c_6=c_6\, (h,\sigma )>0$ is the constant from Theorem \ref{th6.1} 
$\mathrm ($see $\mathrm ($6.14$\mathrm )$$\mathrm )$.
\end{theorem}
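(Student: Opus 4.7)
The plan is to adapt the dyadic scheme used in the proof of Theorem \ref{th5.1} to the pointwise Fourier-coefficient hypothesis $\beta _{\gamma ,\, \sigma }(\widehat W)<+\infty $. I write $\widehat P^{\, (\pm )}=\sum_{\mu =1}^l\widehat P^{\, (\pm )}_\mu $ and reduce (6.7) to a block matrix-element bound of the form
$$
\| \, \widehat G_-^{\, -\frac 12}\, \widehat P^{\, (+)}_\mu \, \widehat W \, \widehat G_-^{\, -\frac 12}\, \widehat P^{\, (-)}_\nu \, \psi \, \| \leqslant c'(h,\sigma )\, \bigl( \delta +\beta _{\gamma ,\, \sigma }(\widehat W)\bigr) \, h^{-\frac 16\, |\mu -\nu |}\, \| \, \widehat P^{\, (-)}_\nu \psi \, \| \, ,
$$
valid for all $\varkappa \geqslant \varkappa ^\sharp (\delta ,\widehat W)$. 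Once this is in hand, summing in $\mu ,\nu $ by Schur's test with geometric kernel $h^{-|\mu -\nu |/6}$ (exactly as in the closing computation of §5) gives (6.7) with $c_6(h,\sigma )=2c'\cdot (1+h^{-1/6})/(1-h^{-1/6})$.

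Since the function $R\mapsto \beta _{\gamma ,\, \sigma }(R;\widehat W)$ is non-increasing and tends to $\beta _{\gamma ,\, \sigma }(\widehat W)$, I fix $R=R(\delta )$ with $\beta _{\gamma ,\, \sigma }(R;\widehat W)\leqslant \beta _{\gamma ,\, \sigma }(\widehat W)+\delta /(2c')$, and split $\widehat W=\widehat W^{\, \flat }+\widehat W^{\, \sharp }$, where $(\widehat W^{\, \flat })_n=0$ for $2\pi |n_\perp |\geqslant R$ and $(\widehat W^{\, \sharp })_n=0$ for $2\pi |n_\perp |<R$. For the high-frequency piece $\widehat W^{\, \sharp }$ I insert the pointwise bound
$$
\| (\widehat W^{\, \sharp })_n\| \leqslant v^{-1}(K)\, \beta _{\gamma ,\, \sigma }(R;\widehat W)\, (2\pi |n_\perp |)^{-(2-\sigma )}(2\pi |n|)^{-\sigma }
$$
into the key inequality (5.28), whose right-hand side involves $\sum_n S_{\mu \nu }(n)(2\pi |n_\perp |)^2\| (\widehat W^{\, \sharp })_n\| ^2$. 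Using the Kloosterman-type estimate on $S_{\mu \nu }(n)$ and splitting $(2\pi |n|)^{-2\sigma }$ according to whether $|n_\perp |$ or $|n_\parallel |$ dominates, the resulting $n$-sum factors (up to universal constants) into a radial integral in $|n_\perp |$ against the weight $1/[(\pi |n_\perp |+h^{\max \{ \mu ,\nu \} })\sqrt{\varkappa +2h^{\max }-\pi |n_\perp |}]$ and a sum in $|n_\parallel |\leqslant h^{\max }/\pi $, both of which converge uniformly in $\varkappa $ for $\sigma \in (0,2]$ to a quantity $\lesssim h^{j(\mu ,\nu )/3}\beta _{\gamma ,\, \sigma }(R;\widehat W)^2$; after normalizing by the factors $h^{-(\mu +\nu )/2}$ from (5.3)--(5.4), the exponent identity $j/3-(\mu +\nu )/2=-|\mu -\nu |/6$ (the same one used at the end of the proof of Theorem \ref{th5.1}) gives the claimed block bound.

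For the low-frequency piece $\widehat W^{\, \flat }$, the projector-difference bound $|\widetilde e(k+2\pi N)-\widetilde e(k+2\pi (N-n))|\leqslant 4\pi |n_\perp |/\varkappa \leqslant 2R/\varkappa $ from (5.2) inserted into (5.28) produces an extra factor $R/\varkappa $. Approximating $\widehat W^{\, \flat }$ in $L^2(K;{\mathcal M}_M)$ by a trigonometric polynomial (truncating further in the $n_\parallel $-direction) and applying Lemma \ref{l5.1} to the remainder, the low-frequency contribution is $O(R\, \varkappa ^{-1/2})$ after normalization, again with the geometric dependence $h^{-|\mu -\nu |/6}$. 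Choosing $\varkappa \geqslant \varkappa ^\sharp (\delta ,\widehat W,R)$ large enough makes this bounded by the $\delta $-part of the target, completing the block estimate.

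The main obstacle will be the near-resonant regime in the $n$-sum for $\widehat W^{\, \sharp }$: close to $\pi |n_\perp |=\varkappa +2h^{\max \{ \mu ,\nu \} }$ the factor $1/\sqrt{\varkappa +2h^{\max }-\pi |n_\perp |}$ is nearly singular, and one must verify that the $(2\pi |n_\perp |)^{-(2-\sigma )}$ decay provided by $\beta _{\gamma ,\, \sigma }$ tames this singularity uniformly in $\varkappa $, while simultaneously producing dependence on $\mu ,\nu $ only through $h^{j(\mu ,\nu )}$ and $h^{\max \{ \mu ,\nu \} }$ with no hidden logarithmic or polynomial losses in $\varkappa $, so that the normalization by $h^{-(\mu +\nu )/2}$ recovers precisely the decay $h^{-|\mu -\nu |/6}$. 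The range $\sigma \in (0,2]$ is exactly what makes both the radial $n_\perp $-integral (requiring $\sigma >0$) and the $n_\parallel $-sum (requiring $\sigma \leqslant 2$) convergent, and it fixes the explicit form of $c_6(h,\sigma )$ and ultimately of $\widetilde c_1^{\, \prime }(h,\sigma )$ in Theorem \ref{th1.6}.
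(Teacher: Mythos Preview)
Your overall strategy --- split $\widehat W$ at a frequency threshold $R$ in $|n_\perp|$, feed both pieces into the projector--difference inequality (5.28) together with the $S_{\mu\nu}(n)$ bound, and then sum the resulting block estimates over $\mu,\nu$ --- is exactly the paper's approach. Two points, however, are not quite right.

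First, the block exponent you quote, $h^{j(\mu,\nu)/3}$, and the resulting kernel $h^{-|\mu-\nu|/6}$, are \emph{specific to the Zygmund argument} of Theorem~\ref{th5.1}: there the truncation level $a_j=h^{j/3}r_j$ was deliberately engineered so that the identity $j/3-(\mu+\nu)/2=-|\mu-\nu|/6$ would appear. Under the $\beta_{\gamma,\sigma}$ hypothesis there is no such truncation, and the $n$--sum in (5.28) does not produce $h^{j/3}$. What the computation actually gives (this is the paper's (6.10)) is
\[
\|\,\widehat P^{\,(+)}_\mu\,\widehat W^{\,[1]}_{\mu,\nu}(R;\cdot)\,\widehat P^{\,(-)}_\nu\psi\,\|
\;\lesssim\; \sigma^{-1/2}\,h^{\frac12(\mu+\nu+\min\{\mu,\nu\})}\,\varkappa^{-1/2}\,
\beta_{\gamma,\sigma}(R;\widehat W)\,\|\widehat P^{\,(-)}_\nu\psi\|,
\]
so after normalizing by $\widehat G_-^{-1/2}$ on both sides the block bound is $h^{\min\{\mu,\nu\}/2}\varkappa^{-1/2}\beta$, not $h^{-|\mu-\nu|/6}\beta$. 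The double sum is then controlled using $\varkappa\geqslant 2h^{l}$ via
\[
\varkappa^{-1/2}\sum_{\mu,\nu=2}^{l}h^{\frac12\min\{\mu,\nu\}}
=\varkappa^{-1/2}h^{l/2}\sum_{\mu_1,\nu_1\geqslant 0}h^{-\frac12\max\{\mu_1,\nu_1\}}<2,
\]
which is how the paper obtains (6.13); Schur's test with this (non--geometric) kernel would give the same bound. The low--frequency block similarly carries $R^{2}h^{-\min\{\mu,\nu\}/2}\varkappa^{-1}$ after normalization (see (6.8)), again not $h^{-|\mu-\nu|/6}$; it is absorbed into the $\mathfrak F$--term via $\varkappa^{-1}\sum_{\mu,\nu}h^{-\min/2}\lesssim l^{2}\varkappa^{-1}\to 0$.

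Second, the paper does \emph{not} treat the full range $\sigma\in(0,2]$ directly. It first restricts to $\sigma\in(0,\tfrac14]$, where the near--field integral $\int_R^{3h^{\max}}\xi^{2\sigma-1}\,d\xi\int_{-3h^{\max}}^{3h^{\max}}(\xi^2+\eta^2)^{-\sigma}\,d\eta$ is bounded by $C\sigma^{-1}h^{\max}$ without a logarithm, and then passes to general $\sigma\in(0,2]$ at the end by the monotonicity $\beta_{\gamma,\sigma'}(R;\widehat W)\leqslant\beta_{\gamma,\sigma}(R;\widehat W)$ with $\sigma'=\min\{\tfrac14,\sigma\}$. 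Your claim that the $n_\perp$-- and $n_\parallel$--integrals converge directly for all $\sigma\in(0,2]$ glosses over the behaviour of the $\eta$--integral near $\sigma=\tfrac12$ and of the weight $(2\pi|n_\perp|)^{2(\sigma-1)}$ for $\sigma>1$; the reduction to small $\sigma$ is the clean way to avoid this.
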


{\it Proof}. To start with, we assume that $\sigma \in (0,\frac 14\, ]$. Let 
$\varkappa >2h$ and let $k\in {\mathbb R}^3$ with $|(k,\gamma )|=\pi $. We shall
derive upper bounds for the norms
$$
\| \, \widehat P^{\, (+)}_{\mu }\, \widehat W\, \widehat P^{\, (-)}_{\nu }\psi \,
\| \, ,\ \mu ,\nu =1,\dots ,l\, ,\ \psi \in L^2(K;{\mathbb C}^M)\, . 
$$
Since the Fourier coefficients $\widehat W_N$, $N\in \Lambda ^*$, commute with all
orthogonal projections $\widehat P^{\, \pm }_{\widetilde e}\, $, $\widetilde e\in 
S_1(\gamma )$, we can use estimate (5.28). For $\mu ,\, \nu \in \{ 1,\dots ,l \} $ 
and $R\in [2\pi d(K^*),2\varkappa ]$, the function $\widehat W$ can be represented
in the form
$$
\widehat W(x)=\widehat W^{\, [0]}_{\mu ,\, \nu }(R;x)+\widehat W^{\, [1]}_{\mu ,\, 
\nu }(R;x)\, ,\ x\in {\mathbb R}^3\, ,
$$
where
$$
\widehat W^{\, [0]}_{\mu ,\, \nu }(R;x)=\sum\limits_{\substack{N\, \in \, \Lambda 
^*\, : \\  2\pi |N_{\perp}|\, \leqslant \, R}}\widehat W_N\, e^{\, 2\pi i\, (N,x)}
\, ,\ \, \widehat W^{\, [1]}_{\mu ,\, \nu }(R;x)=\sum\limits_{\substack{N\, \in \, 
\Lambda ^*\, : \\  R\, <\, 2\pi |N_{\perp}|}}\widehat W_N\, e^{\, 2\pi i\, (N,x)}\, .
$$
By (5.28), we get
$$
\| \, \widehat P^{\, (+)}_{\mu }\, \widehat W^{\, [0]}_{\mu ,\, \nu }(R;.)\, 
\widehat P^{\, (-)}_{\nu }\psi \, \| ^2\leqslant  \eqno (6.8)
$$ $$
\leqslant v(K)\, \varkappa ^{-2}\sum\limits_{N\, \in \, {\mathcal K}_{\mu}}\, \bigl( 
\sum\limits_{\substack{n\, \in \, \Lambda ^*\, : \\ 2\pi |n_{\perp}|\, \leqslant \, R
\, , \\ N-n\, \in \, {\mathcal K}_{\nu}}}2\pi |n_{\perp}|\, \| \widehat W_n\| \,
\| (\widehat P^{\, -}\psi )_{N-n}\| \, \bigr) ^2 \leqslant
$$ $$
\leqslant v(K)\, \varkappa ^{-2}\, \bigl( \sum\limits_{\substack{n\, \in \, \Lambda 
^*\, : \\ 2\pi |n_{\perp}|\, \leqslant \, R \, , \\ \pi |n_{\parallel}|\, \leqslant \, 
h^{\, \max \, \{ \mu ,\nu \} }}}2\pi |n_{\perp}|\, \| \widehat W_n\| \, \bigr) ^2
\, \bigl( \, \sum\limits_{N\, \in \, \Lambda ^*}\| (\widehat P^{\, (-)}_{\nu }\psi )
_N\| ^2\, \bigr) \leqslant
$$ $$
\leqslant R^{\, 2}\, \varkappa ^{-2}\, \bigl( \sum\limits_{\substack{n\, \in \, 
\Lambda ^*\, : \\ 2\pi |n_{\perp}|\, \leqslant \, R \, , \\ \pi |n_{||}|\, \leqslant 
\, h^{\, \max \, \{ \mu ,\nu \} }}}1 \, \bigr) \, \bigl( \, \sum\limits_{n\, \in \, 
\Lambda ^*}\| \widehat W_n\| ^2 \bigr) \, \| \widehat P^{\, (-)}_{\nu }\psi \| 
^2 \leqslant
$$ $$
\leqslant \, 3\pi ^{-2}M\, R^{\, 4}\, h^{\, \max \, \{ \mu ,\nu \} }\, 
\varkappa ^{-2}\, \| \widehat W\| ^2_{L^2(K;{\mathcal M}_M)}\, \| \widehat P^{\, (-)}
_{\nu }\psi \| ^2\, .
$$
The following estimate is also a consequence of (5.28):
$$
\| \, \widehat P^{\, (+)}_{\mu }\, \widehat W^{\, [1]}_{\mu ,\, \nu }(R;.)\, 
\widehat P^{\, (-)}_{\nu }\psi \, \| ^2\leqslant  \eqno (6.9)
$$ $$
\leqslant c_4\, v^{-1}(K)\, h^{\, \mu +\nu +\min \, \{ \mu ,\nu \} }\, \varkappa 
^{-\frac 12}\, \beta ^2_{\gamma ,\, \sigma }(R;\widehat W)\, \times 
$$ $$
\times \ \biggl( \, \sum\limits_{\substack{n\, \in \, \Lambda ^*\, : \\ R\, <\, 2\pi 
|n_{\perp}|\, \leqslant \, 2\varkappa +2h^{\, \max \, \{ \mu ,\nu \} }\, , \\ \pi 
|n_{\parallel}|\, \leqslant \, h^{\, \max \, \{ \mu ,\nu \} } }}\frac {(2\pi 
|n_{\perp}|)^{-2(1-\sigma )}\, (2\pi |n|)^{-2\sigma }}{(\pi |n_{\perp}|+h^{\, \max \, 
\{ \mu ,\nu \} })\, \sqrt {\varkappa +2h^{\, \max \, \{ \mu ,\nu \} }-
\pi |n_{\perp}|}}\ \biggr) \, \| \widehat P^{\, (-)}_{\nu}\psi \| ^2\, .
$$
Under the condition $R<2h^{\, \max \, \{ \mu ,\nu \} }$, we have
$$
\sum\limits_{\substack{n\, \in \, \Lambda ^*\, : \\ R\, <\, 2\pi 
|n_{\perp}|\, \leqslant \, 2h^{\, \max \, \{ \mu ,\nu \} }\, , \\ \pi 
|n_{\parallel}|\, \leqslant \, h^{\, \max \, \{ \mu ,\nu \} } }}(2\pi |n_{\perp}|)
^{-2(1-\sigma )}\, (2\pi |n|)^{-2\sigma }\leqslant  
$$ $$
\leqslant \, \frac 4{(2\pi )^3\, v(K^*)}\ \biggl( 4\pi \, (\, h^{\, \max \, \{ \mu 
,\nu \} }+\pi d(K^*)\, )\, +
$$ $$
+\, \int_R^{\, 2h^{\, \max \, \{ \mu ,\nu \} }\, +\, 2\pi d(K^*)}\, \frac {2\pi \xi 
\, d\xi }{\xi ^{\, 2(1-\sigma )}}\ \int_{-2h^{\, \max \, \{ \mu ,\nu \} }\, -\, 
2\pi d(K^*)}^{\, 2h^{\, \max \, \{ \mu ,\nu \} }\, +\, 2\pi d(K^*)}\frac {d\eta }
{(\xi ^2+\eta ^2)^{\sigma }}\, \biggr) \leqslant
$$ $$
\leqslant \pi ^{-2}\, v^{-1}(K^*)\ \biggl( 3\, h^{\, \max \, \{ \mu ,\nu \} }+ 
\int_R^{\, 3h^{\, \max \, \{ \mu ,\nu \} }}\frac {d\xi }{\xi ^{\, 1-2\sigma }}\ 
\int_{-3h^{\, \max \, \{ \mu ,\nu \} }}^{\, 3h^{\, \max \, \{ \mu ,\nu \} }}\frac 
{d\eta }{(\xi ^2+\eta ^2)^{\sigma }}\, \biggr) <
$$ $$
< 8\, \pi ^{-2}\, \sigma ^{-1}\, v^{-1}(K^*)\, h^{\, \max \, \{ \mu ,\nu 
\} }
$$ 
and
$$
\sum\limits_{\substack{n\, \in \, \Lambda ^*\, : \\ h^{\, \max \, \{ \mu ,\nu \} }\, 
<\, \pi |n_{\perp}|\, \leqslant \, \varkappa +h^{\, \max \, \{ \mu ,\nu \} }\, , \\ 
\pi |n_{\parallel}|\, \leqslant \, h^{\, \max \, \{ \mu ,\nu \} } }}\frac 
{(2\pi |n_{\perp}|)^{2\sigma }\, (2\pi |n|)^{-2\sigma }}{(2\pi |n_{\perp}|)^3\,
\sqrt {\varkappa +2h^{\, \max \, \{ \mu ,\nu \} }-\pi |n_{\perp}|}}\ \leqslant  
$$ $$
\leqslant \, \sum\limits_{\substack{n\, \in \, \Lambda ^*\, : \\ h^{\, \max \, 
\{ \mu ,\nu \} }\, <\, \pi |n_{\perp}|\, \leqslant \, \varkappa +h^{\, \max \, 
\{ \mu ,\nu \} }\, , \\ \pi |n_{\parallel}|\, \leqslant \, h^{\, \max \, \{ 
\mu ,\nu \} } }}\frac 1{(2\pi |n_{\perp}|)^3}\ \frac 1{\sqrt {\varkappa +2h^{\, 
\max \, \{ \mu ,\nu \} }-\pi |n_{\perp}|}}\ \leqslant  
$$ $$
\leqslant \, \frac 4{(2\pi )^3\, v(K^*)}\, \left( \frac 32 \right) ^{\frac 52}\, 
( h^{\, \max \, \{ \mu ,\nu \} }+\pi d(K^*) )\, \times
$$ $$
\times \, \int_{2h^{\, \max \, \{ \mu ,\nu 
\} }\, -\, 2\pi d(K^*)}^{\, 2\varkappa \, +\, 2h^{\, \max \, \{ \mu ,\nu \} }\, +\,
2\pi d(K^*)}\ \frac {2\pi \xi \, d\xi }{\xi ^{\, 3}\, \sqrt {\varkappa +2h^{\, \max 
\, \{ \mu ,\nu \} }-\frac 12\, \xi }} \leqslant
$$ $$
\leqslant \, \frac {\sqrt 3}{\pi ^2}\, \left( \frac 32 \right) ^4 v^{-1}(K^*)\,
h^{\, \max \, \{ \mu ,\nu \} }\, \times
$$ $$
\times \, \biggl( \varkappa ^{-\frac 12}\ \biggl| ^{\, \xi \, =\, \varkappa }
_{\, \xi \, =\, h^{\, \max \, \{ \mu ,\nu \} }}-\frac 1{\xi }\ +\ 2\varkappa ^{-2}\
\biggr| ^{\, \xi \, =\, 2\varkappa \, +\, 3h^{\, \max \, \{ \mu ,\nu \} }}_{\, \xi 
\, =\, \varkappa }-\sqrt {2\varkappa +4h^{\, \max \, \{ \mu ,\nu \} }-\xi }\, 
\biggr) <
$$ $$
< \, 34\, \pi ^{-2}\, v^{-1}(K^*)\, \varkappa ^{-\frac 12}\, .
$$
Hence, (6.9) yields
$$
\| \, \widehat P^{\, (+)}_{\mu }\, \widehat W^{\, [1]}_{\mu ,\, \nu }(R;.)\, 
\widehat P^{\, (-)}_{\nu }\psi \, \| ^2\leqslant  \eqno (6.10)
$$ $$
\leqslant \, 25\, \pi ^{-2}\, c_4\, \sigma ^{-1}\, h^{\, \mu +
\nu +\min \, \{ \mu ,\nu \} }\, \varkappa ^{-1}\, \beta ^{\, 2}_{\gamma ,\, \sigma }
(R;\widehat W)\, \| \widehat P^{\, (-)}_{\nu}\psi \| ^2 \, .
$$
From (6.8) and (6.10) (also see (5.3) and (5.4)), for all $\varkappa >2h$, all
vectors $k\in {\mathbb R}^3$ with $|(k,\gamma )|=\pi $, all vector functions
$\psi \in L^2(K;{\mathbb C}^M)$, and all numbers $R\in [2\pi d(K^*),2\varkappa ]$
we obtain
$$
\| \, \widehat G_-^{-\frac 12}\, \widehat P^{\, (+)}\, \widehat W\, \widehat G_-
^{-\frac 12}\, \widehat P^{\, (-)}\psi \, \| \leqslant \sum\limits_{\mu ,\, \nu \, =\, 
1}^l\| \, \widehat G_-^{-\frac 12}\, \widehat P^{\, (+)}_{\mu }\, \widehat W\, 
\widehat G_-^{-\frac 12}\, \widehat P^{\, (-)}_{\nu }\psi \, \| 
\leqslant  \eqno (6.11)
$$ $$
\leqslant \, \sqrt {\frac {|\gamma |}{\pi }}\ \sum\limits_{\nu \, =\, 1}^l
\, \| \, \widehat P^{\, (+)}_1\, (\widehat W^{\, [0]}_{1,\, \nu }(R;.)+\widehat W^{\, 
[1]}_{1,\, \nu }(R;.))\, \widehat G_-^{-\frac 12}\, \widehat P^{\, (-)}_{\nu }\psi 
\, \| \, +
$$ $$
+ \, \sum\limits_{\mu \, =\, 2}^l\, \sum\limits_{\nu \, =\, 1}^lh^{\, -\frac {\mu 
-1}2}\, \| \, \widehat P^{\, (+)}_{\mu }\, (\widehat W^{\, [0]}_{\mu ,\, \nu }(R;.)+
\widehat W^{\, [1]}_{\mu ,\, \nu }(R;.))\, \widehat G_-^{-\frac 12}\, \widehat 
P^{\, (-)}_{\nu }\psi \, \| \leqslant  
$$ $$
\leqslant \, \sqrt {\frac {|\gamma |}{\pi }}\ \,  
\sum\limits_{\nu \, =\, 1}^l\, \biggl( \frac {\sqrt 
{3M}}{\pi }\ R^{\, 2}\, h^{\, \frac 12\, \nu }\, \varkappa ^{-1}\, 
\| \widehat W\| _{L^2(K;{\mathcal M}_M)}\, +
$$ $$
+\, \frac 5{\pi {\sqrt \sigma }}\ c_4 ^{\, 
\frac 12}\, h^{\, 1+\frac 12\, \nu }\, \varkappa ^{-\frac 12}\, \beta _{\gamma ,\, 
\sigma }(R;\widehat W) \biggr) \, \| \, \widehat G_-^{-\frac 12}\, \widehat P^{\, 
(-)}_{\nu}\psi \, \| \, +
$$ $$
+ \,  \sum\limits_{\mu \, =\, 2}^l\,
\sum\limits_{\nu \, =\, 1}^lh^{\, -\frac {\mu -1}2}\, \biggl( \frac {\sqrt 
{3M}}{\pi }\ R^{\, 2}\, h^{\, \frac 12\, \max \, \{ \mu ,\nu \} }\, \varkappa ^{-1}\, 
\| \widehat W\| _{L^2(K;{\mathcal M}_M)}\, +
$$ $$
+\, \frac 5{\pi {\sqrt \sigma }}\ c_4 ^{\, \frac 12}\, h^{\, \frac 12\, (\mu +\nu 
+\min \, \{ \mu ,\nu \} )}\, \varkappa ^{-\frac 12}\, \beta _{\gamma ,\, \sigma }
(R;\widehat W) \biggr) \, \| \, \widehat G_-^{-\frac 12}\, \widehat P^{\, (-)}
_{\nu}\psi \, \| \leqslant
$$ $$
\leqslant \, \frac {\sqrt h}{\pi }\ \biggl( \, \frac {|\gamma |}{\pi }+2(l-1)\,
\sqrt {\frac {|\gamma |}{\pi }}\ \biggr)\, \bigl( \sqrt {3M}\, R^{\, 2}\, 
\varkappa ^{-1}\, \| \widehat W\| _{L^2(K;{\mathcal M}_M)}\, +
$$ $$
+\, 5\, c_4^{\, \frac 12}\, \sigma ^{-\frac 12}\, h\, \varkappa ^{-\frac 12}\, 
\beta _{\gamma ,\, \sigma }(R;\widehat W) \bigr) \, \| \, \widehat P^{\, 
(-)}_{\nu}\psi \, \| \, +
$$ $$
+ \, \frac h{\pi }\ \sum\limits_{\mu ,\, \nu \, =\, 2}^l\, \bigl( \sqrt {3M}\, R^{\, 
2}\, h^{\, -\frac 12\, \min \, \{ \mu ,\nu \} }\, \varkappa ^{-1}\, \| \widehat W\| 
_{L^2(K;{\mathcal M}_M)}\, +
$$ $$
+\, 5\, c_4^{\, \frac 12}\, \sigma ^{-\frac 12}\, h^{\, \frac 12\, \min \, \{ \mu 
,\nu \} }\, \varkappa ^{-\frac 12}\, \beta _{\gamma ,\, \sigma }(R;\widehat W) 
\biggr) \, \| \, \widehat P^{\, (-)}_{\nu}\psi \, \| \, .
$$
On the other hand, $l\leqslant (\ln ^{-1} h)\, \ln \, \frac {\varkappa }2$ and
$$
\frac h{\varkappa }\ \sum\limits_{\mu ,\, \nu \, =\, 1}^l\, h^{\, -\frac 
12\, \min \, \{ \mu ,\nu \} }\, \leqslant \, h^{\, \frac 12}\, l^{\, 2}\, \varkappa 
^{-1}\leqslant (h^{\, \frac 12}\, \ln ^{-1} h)\, \varkappa ^{-1}\ln \, \frac 
{\varkappa }2\ ,  \eqno (6.12)
$$
\vskip 0.1cm
$$
\frac h{\sqrt {\varkappa }}\ \sum\limits_{\mu ,\, \nu \, =\, 2}^l\, h^{\, \frac 
12\, \min \, \{ \mu ,\nu \} }\, =\, \frac h{\sqrt {\varkappa }}\ \, h^{\, \frac 12\, 
l}\, \sum\limits_{\mu _1,\, \nu _1\, =\, 0}^{l-2}\, h^{\, -\frac 12\, \max \, \{ 
\mu _1,\nu _1\} }\leqslant
$$ $$
\leqslant \, \frac h{\sqrt 2}\ \sum\limits_{\mu _1\, =\, 0}^{+\infty }(2\mu _1+1)
\, h^{\, -\frac 12\, \mu _1}\, =\, \frac h{\sqrt 2}\ \, (1+h^{\, -\frac 12}\, )
(1-h^{\, -\frac 12}\, )^{-2} <2h
$$
(because $h>16$). Therefore (for $\sigma \in (0,\frac 14\, ]$),
$$
\| \, \widehat G_-^{-\frac 12}\, \widehat P^{\, (+)}\, \widehat W\, \widehat 
G_-^{-\frac 12}\, \widehat P^{\, (-)}\psi \, \| \leqslant  \eqno (6.13)
$$ $$
\leqslant \, \bigl( \, 10\, \pi ^{-1}\, c_4^{\, \frac 12}\, \sigma ^{-\frac 12}\, 
h\, \beta _{\gamma ,\, \sigma }(R;\widehat W) +\mathfrak F\, (\sigma ,h,R,\widehat 
W;\varkappa )\, \bigr) \, \| \, \widehat P^{\, (-)}\psi \, \| \, ,
$$
where
$$
\mathfrak F\, (\sigma ,h,R,\widehat W;\varkappa )=\pi ^{-1}\, \sqrt h\, (\, \ln
^{-1}h\, )\, (\, \ln \frac {\varkappa }2\, )\, \biggl( 1+\sqrt {\frac {|\gamma |}
{\pi }}\ \biggr) ^2\, \times
$$ $$
\times \, \bigl( \sqrt {3M}\, R^{\, 2}\, \varkappa ^{-1} 
\, \| \widehat W\| _{L^2(K;{\mathcal M}_M)} +
5\, c_4^{\, \frac 12}\, \sigma ^{-\frac 12}\, h\, \varkappa ^{\, -\frac 12}\,         
\beta _{\gamma ,\, \sigma }(R;\widehat W) \bigr) \, .
$$
Now let us suppose that $\sigma \in (0,2]$. Denote 
$$
c_6=40\, {\sqrt 2}\, \pi ^{-1}\, c_4^{\, \frac 12}\, h\,  \sigma ^{-\frac 12}
\eqno (6.14)
$$
and $\sigma ^{\, \prime }\doteq \min \, \{ \frac 14\, ,\sigma \} $. Since
$(\sigma ^{\, \prime })^{-\frac 12}\leqslant 2\sqrt 2\, \sigma ^{-\frac 12}$ and
$\beta _{\gamma ,\, \sigma ^{\, \prime }}(R;\widehat W)\leqslant \beta 
_{\gamma ,\, \sigma }(R;\widehat W)$ (for all $R\geqslant 0$), inequality
(6.13) (in which we replace $\sigma $ by $\sigma ^{\, \prime }$) implies the
estimate
$$
\| \, \widehat G_-^{-\frac 12}\, \widehat P^{\, (+)}\, \widehat W\, \widehat 
G_-^{-\frac 12}\, \widehat P^{\, (-)}\psi \, \| \leqslant  
\bigl( \frac 12\, c_6\, \beta _{\gamma ,\, \sigma }(R;\widehat W) +\mathfrak F\, 
(\sigma ,h,R,\widehat W;\varkappa )\, \bigr) \, \| \, \widehat P^{\, (-)}\psi \, 
\| \, .  \eqno (6.15)
$$
Finally, choose (and fix) a number $R\geqslant 2\pi d(K^*)$ such that
$$
\beta_{\gamma ,\, \sigma }(R;\widehat W)\leqslant \frac {\delta }2+\beta_{\gamma ,\, 
\sigma }(\widehat W)\, .
$$
Then, from (6.15) it follows that there is a number $\varkappa ^{\, \sharp }=
\varkappa ^{\, \sharp }(M,\Lambda ,|\gamma |,h,\sigma ;\widehat W,\delta )>2h$, for
which $\varkappa ^{\, \sharp }\geqslant \frac 12\, R$, such that for all
$\varkappa \geqslant \varkappa ^{\, \sharp }\, $, all $k\in {\mathbb R}^3$ with 
$|(k,\gamma )|=\pi $, and all $\psi \in L^2(K;{\mathbb C}^M)$ the inequality (6.7)
holds. This completes the proof of Theorem \ref{th6.2}.
\vskip 0.2cm

Let $l^{\, \prime }\in {\mathbb N}\backslash \{ 1\} $ and let $\varkappa \geqslant
2h^{\, l^{\, \prime }+1}$. Then $l=l(h,\varkappa )\geqslant l^{\, \prime }+1
\geqslant 3$. Denote
$$
l_1\doteq l-l^{\, \prime }\in \{ 1,\dots ,l-2\}
$$
and define the functions
$$ 
\Theta _{l^{\, \prime }}(h,\varkappa ;t)=\left\{
\begin{array}{llll}
1 & {\text {if}} \ \, h^{\, l_1+1}<t\leqslant h^{\, l-1}\, , \\ [0.2cm]
2-h^{-l+1}t & {\text {if}} \ \, h^{\, l-1}<t\leqslant 2h^{\, l-1}\, , \\ [0.2cm]
-1+2h^{-l_1-1}t & {\text {if}} \ \, \frac 12\, h^{\, l_1+1}<t\leqslant 
h^{\, l_1+1}\, , \\ [0.2cm]
0 & {\text {if}}\ \, t\leqslant \frac 12\, h^{\, l_1+1}\ \, 
{\text {or}}\, \ t>2h^{\, l-1}\, ,
\end{array}
\right.
$$
and the operators $\widehat \Theta _{l^{\, \prime }}=\widehat \Theta _{l^{\, 
\prime }}(h,k;\varkappa )$ that take a vector function $\psi \in L^2(K;{\mathbb C}
^M)$ to the vector function 
$$
\widehat \Theta _{l^{\, \prime }}\, \psi =\sum\limits_{N\, \in \, \Lambda ^*}\, 
\Theta _{l^{\, \prime }}(h,\varkappa ;G^{\, -}_N(k;\varkappa ))\, \psi _N\, 
e^{\, 2\pi i\, (N,x)}.
$$

\begin{theorem} \label{th6.3}
Let $d=3$, $\gamma \in \Lambda \backslash \{ 0\} $, $\sigma \in (0,2]$. 
Suppose that $\widehat V=\widehat V^{\, (0)}+\widehat V^{\, (1)}$, where 
$\widehat V^{\, (s)}\in L^2(K;{\mathcal S}^{(s)}_M)$, $s=0,1$, and $\beta 
_{\gamma ,\, \sigma }(0;\widehat V)<+\infty $. Then for any $\widetilde
\varepsilon >0$ there are numbers $l^{\, \prime }=l^{\, \prime }(\Lambda ,|\gamma |,
\sigma , \widehat V;\widetilde \varepsilon )\in {\mathbb N}\backslash \{ 1\} $ and
$\varkappa ^{\, \sim }_0=\varkappa ^{\, \sim }_0(M,\Lambda ,|\gamma |,h,
\sigma , \widehat V;\widetilde \varepsilon )\geqslant 2h^{\, l^{\, \prime }+1}$
such that for all $\varkappa \geqslant \varkappa ^{\, \sim }_0$, all vectors 
$k\in {\mathbb R}^3$ with $|(k,\gamma )|=\pi $, and all vector functions 
$\psi \in L^2(K;{\mathbb C}^M)$ the inequality
$$
\| \, \widehat G_-^{\, -\frac 12} \widehat P^{\, +} \widehat \Theta \, 
\widehat V \widehat G_-^{\, -\frac 12} \widehat P^{\, -} \widehat \Theta  
\psi \, -\, \widehat G_-^{\, -\frac 12} \widehat P^{\, +} \widehat \Theta _{\, l^{\, 
\prime }} \widehat V \widehat G_-^{\, -\frac 12} \widehat P^{\, -} \widehat 
\Theta _{\, l^{\, \prime }} \psi \, \| \leqslant \widetilde \varepsilon \, \| \, 
\widehat P^{\, -} \widehat \Theta \, \psi \, \|  \eqno (6.16)
$$
holds.
\end{theorem}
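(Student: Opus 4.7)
The plan is to bound the operator norm of the difference in (6.16) by decomposing it into two pieces and invoking the shell estimates from the proof of Theorem \ref{th6.2}. Write the difference as $A+B$, where $\widehat\Delta\doteq\widehat\Theta-\widehat\Theta_{l'}$ is a nonnegative multiplier supported on $\bigcup_{\mu=1}^{l_1+1}\mathcal K_\mu$ and satisfying $\widehat\Delta\le\widehat\Theta$ pointwise, and
$$
A=\widehat G_-^{-1/2}\widehat P^+\widehat\Delta\,\widehat V\,\widehat G_-^{-1/2}\widehat P^-\widehat\Theta,\qquad
B=\widehat G_-^{-1/2}\widehat P^+\widehat\Theta_{l'}\,\widehat V\,\widehat G_-^{-1/2}\widehat P^-\widehat\Delta.
$$
By the symmetry of the two pieces (the bound for $B$ uses $\|\widehat P^-\widehat\Delta\psi\|\le\|\widehat P^-\widehat\Theta\psi\|$), I focus on $A$.

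I decompose on both sides: first use output-shell orthogonality $\|A\psi\|^2=\sum_{\mu=1}^{l_1+1}\|\widehat G_-^{-1/2}\widehat P^{(+)}_\mu\widehat V\widehat G_-^{-1/2}\widehat P^-\widehat\Theta\psi\|^2$ (the support of $\widehat\Delta$ restricts $\mu\le l_1+1$); then split the inner factor as $\widehat P^-\widehat\Theta=\sum_{\nu=1}^l\widehat P^{(-)}_\nu\widehat\Theta$. After splitting $\widehat V=\widehat V^{[0]}_R+\widehat V^{[1]}_R$ at a suitable radius $R$, the shell estimates (6.8), (6.10) (established in the proof of Theorem \ref{th6.2}) yield
$$
\|\widehat G_-^{-1/2}\widehat P^{(+)}_\mu\widehat V^{[1]}_R\widehat G_-^{-1/2}\widehat P^{(-)}_\nu\widehat\Theta\psi\|\le C\varkappa^{-1/2}\beta_{\gamma,\sigma}(R;\widehat V)\,h^{1+\min(\mu,\nu)/2}\,\|\widehat P^{(-)}_\nu\widehat\Theta\psi\|
$$
and a parallel bound with $R^2\varkappa^{-1}\|\widehat V\|_{L^2}$ and $h^{1-\min(\mu,\nu)/2}$ for the $\widehat V^{[0]}_R$-term. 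Applying Cauchy--Schwarz in $\nu$ with unit weights, and using $\sum_\nu\|\widehat P^{(-)}_\nu\widehat\Theta\psi\|^2=\|\widehat P^-\widehat\Theta\psi\|^2$ (which holds because $\widehat\Theta$ is supported in $\mathcal K(h^l)$), reduces matters to the scalar sums $\sum_{\mu\le l_1+1}(l-\mu+1)\,h^{\pm\mu}$. Elementary computation shows the positive-exponent sum is dominated by $\mu=l_1+1$ and equals $\sim l'h^{l_1+1}$ (with \emph{no} $l$ factor), while the negative-exponent sum is bounded by $Cl/h$ and will be tamed by the $\varkappa^{-2}$ prefactor of the $\widehat V^{[0]}_R$-term.

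Substituting $\varkappa\ge 2h^l$ yields
$$
\|A\psi\|\le C\Bigl(R^2\varkappa^{-1}\|\widehat V\|_{L^2}\sqrt{hl}+\sqrt{l'}\,h^{(3-l')/2}\,\beta_{\gamma,\sigma}(R;\widehat V)\Bigr)\,\|\widehat P^-\widehat\Theta\psi\|.
$$
The three parameters are chosen in order: first pick $R=R(\widetilde\varepsilon,\widehat V)$ so large that $\beta_{\gamma,\sigma}(R;\widehat V)\le\beta_{\gamma,\sigma}(\widehat V)+\widetilde\varepsilon$; next pick $l'\ge 2$ so large that $C\sqrt{l'}\,h^{(3-l')/2}(\beta_{\gamma,\sigma}(\widehat V)+\widetilde\varepsilon)\le\widetilde\varepsilon/4$; finally pick $\varkappa_0^{\sim}\ge 2h^{l'+1}$ so large that $CR^2\varkappa^{-1}\|\widehat V\|_{L^2}\sqrt{hl}\le\widetilde\varepsilon/4$ for all $\varkappa\ge\varkappa_0^{\sim}$ (this requires only $\varkappa^{-1}\sqrt{\log\varkappa}$ to be small, which is automatic). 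The main obstacle is the dominance argument for the restricted shell sum: naive application of Schur's test or unweighted Cauchy--Schwarz tends to produce a $\sqrt{l}$ factor that would grow with $\varkappa$ and defeat the argument, so the key technical point is to observe that $\sum_{\mu\le l_1+1}(l-\mu+1)\,h^\mu$ is a geometric-type series dominated by its top index, producing $l'h^{l_1+1}$ rather than the naive $l\cdot h^{l_1+1}$; consequently only $\sqrt{l'}$ (with $l'$ fixed) appears in the final bound, which can then be made arbitrarily small.
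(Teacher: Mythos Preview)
Your proposal is correct and follows essentially the same route as the paper: decompose the difference via $\widehat\Delta=\widehat\Theta-\widehat\Theta_{l'}$, apply the shell estimates (6.8) and (6.10), and exploit that the restricted sum over $\mu\le l_1+1$ produces a factor of order $(l')^{\alpha}h^{(3-l')/2}$, which is made small by choosing $l'$ large. The paper simplifies slightly by fixing $R=2\pi d(K^*)$ from the outset (your optimization over $R$ is unnecessary, since the hypothesis $\beta_{\gamma,\sigma}(0;\widehat V)<\infty$ already bounds the $\beta$-term for any fixed $R$) and uses the triangle inequality on both shell sums rather than output orthogonality plus Cauchy--Schwarz, obtaining a factor $(l'+2)$ in place of your $\sqrt{l'}$; both choices give the same conclusion.
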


{\it Proof}. Taking into account the inclusions
$$
\widehat \Theta _{\, l^{\, \prime }} \varphi \in {\mathcal H}\, ( \bigcup\limits
_{\mu \, =\, l_1+1}^l{\mathcal K}_l\, )\, ,\ \ (\widehat \Theta - \widehat \Theta _{\, 
l^{\, \prime }} )\varphi \in {\mathcal H}\, (\,  \bigcup\limits_{\mu \, =\, 1}^{l_1+1}
{\mathcal K}_l\, )\, ,\ \ \varphi \in L^2(K;{\mathbb C}^M)\, ,
$$
and estimates (5.3), (5.4), from (6.8) and (6.10), where we put $R=2\pi d(K^*)$, 
it follows that for all $\varkappa \geqslant 2h^{\, l^{\, \prime }+1}$, all
vectors $k\in {\mathbb R}^3$ with $|(k,\gamma )|=\pi $, and all vector functions 
$\psi \in L^2(K;{\mathbb C}^M)$ the following inequalities are valid (see (6.11)):
$$
\| \, \widehat G_-^{\, -\frac 12} \widehat P^{\, +} \widehat \Theta \, 
\widehat V \widehat G_-^{\, -\frac 12} \widehat P^{\, -} \widehat \Theta  
\psi \, -\, \widehat G_-^{\, -\frac 12} \widehat P^{\, +} \widehat \Theta _{\, l^{\, 
\prime }} \widehat V \widehat G_-^{\, -\frac 12} \widehat P^{\, -} \widehat 
\Theta _{\, l^{\, \prime }} \psi \, \| \leqslant  \eqno (6.17)
$$ $$
\leqslant \| \, \widehat G_-^{\, -\frac 12} \widehat P^{\, +} (\widehat \Theta 
-\widehat \Theta _{\, l^{\, \prime }})\, \widehat V \widehat G_-^{\, -\frac 12} 
\widehat P^{\, -} \widehat \Theta \psi \, \| + \| \, \widehat G_-^{\, -\frac 12} 
\widehat P^{\, +} \widehat \Theta _{\, l^{\, \prime }} \widehat V \widehat G_-^{\, 
-\frac 12} \widehat P^{\, -} (\widehat \Theta -\widehat \Theta _{\, l^{\, \prime }}) 
\psi \, \| \leqslant 
$$ $$
\leqslant \sum\limits_{\mu \, =\, 1}^{l_1+1}\ \sum\limits_{\nu \, =\, 1}^l\
\| \, \widehat G_-^{\, -\frac 12} \widehat P^{\, (+)}_{\mu }\, \widehat V \widehat 
G_-^{\, -\frac 12} \widehat P^{\, (-)}_{\nu } \widehat \Theta \psi \, \| +
$$ $$
+\sum\limits_{\mu \, =\, l_1+1}^l\ \sum\limits_{\nu \, =\, 1}^{l_1+1}\
\| \, \widehat G_-^{\, -\frac 12} \widehat P^{\, (+)}_{\mu }\, \widehat V \widehat 
G_-^{\, -\frac 12}\widehat P^{\, (-)}_{\nu } (\widehat \Theta -\widehat \Theta _{\, 
l^{\, \prime }})\, \psi \, \| \leqslant
$$ $$
\leqslant \, \frac h{\pi }\ \max \ \bigl\{ 1, \frac {|\gamma |}{\pi }\, \bigr\} \ 
\sqrt {3M}\, (2\pi d(K^*))^{\, 2}\, \| \widehat V\| _{L^2(K;{\mathcal M}_M)}\,
\varkappa ^{-1}\, \times
$$ $$
\times \, \biggl( \biggl( \ \sum\limits_{\mu \, =\, 1}^{l_1+1}\ \sum\limits_{\nu 
\, =\, 1}^l\, +\sum\limits_{\mu \, =\, l_1+1}^l\ \sum\limits_{\nu \, =\, 1}^{l_1+1}\ 
\biggr) \ h^{\, -\frac 12\, \min \, \{ \mu ,\nu \} }\biggr) \, \| \, \widehat P^{\, 
-} \widehat \Theta \psi \, \| +
$$ $$
+\, \frac h{\pi }\ \max \ \bigl\{ 1, \frac {|\gamma |}{\pi }\, \bigr\} \ 5\, 
c_4^{\, \frac 12}\, \sigma ^{-\frac 12 }\, \beta _{\gamma ,\, \sigma }
(0;\widehat V)\, \varkappa ^{-\frac 12}\, \times
$$ $$
\times \, \biggl( \biggl( \ \sum\limits_{\mu \, =\, 1}^{l_1+1}\ \sum\limits_{\nu 
\, =\, 1}^l\, +\sum\limits_{\mu \, =\, l_1+1}^l\ \sum\limits_{\nu \, =\, 1}^{l_1+1}\ 
\biggr) \ h^{\, \frac 12\, \min \, \{ \mu ,\nu \} }\biggr) \, \| \, \widehat P^{\, 
-} \widehat \Theta \psi \, \| \, .
$$
At the same time, the estimates (6.12) and
$$
\frac h{\sqrt {\varkappa }}\ \biggl( \ \sum\limits_{\mu \, =\, 1}^{l_1+1}\ 
\sum\limits_{\nu \, =\, 1}^l\, +\sum\limits_{\mu \, =\, l_1+1}^l\ \sum\limits_{\nu 
\, =\, 1}^{l_1+1}\ \biggr) \ h^{\, \frac 12\, \min \, \{ \mu ,\nu \} }\, <
\, \frac {2h}{\sqrt {\varkappa }}\ \sum\limits_{\mu \, =\, 1}^{l_1+1}\ 
\sum\limits_{\nu \, =\, 1}^l\ h^{\, \frac 12\, \min \, \{ \mu ,\nu \} }\, <
$$ $$
<\, 2h\, (h^{\, \frac 12\, l}\varkappa ^{-\frac 12})\, \sum\limits_{\mu _1\, =\, 0}
^{+\infty }\, (\, l^{\, \prime }+2\mu _1)\, h^{\, -\frac 12\, (\, l^{\, \prime }+
\mu _1-1)} < 2\sqrt 2\, (l^{\, \prime }+2)\, h^{\, \frac 12\, (3-l^{\, \prime })}
$$
hold (because $h>4$). Therefore, inequality (6.17) implies that there are numbers 
$l^{\, \prime }=l^{\, \prime }(\Lambda ,|\gamma |,\sigma , \widehat V;\widetilde 
\varepsilon )\in {\mathbb N}\backslash \{ 1\} $ and $\varkappa ^{\, \sim }_0=\varkappa 
^{\, \sim }_0(M,\Lambda ,|\gamma |,h,\sigma , \widehat V;\widetilde \varepsilon )
\geqslant 2h^{\, l^{\, \prime }+1}$ such that inequality (6.16) is fulfilled for
all $\varkappa \geqslant \varkappa ^{\, \sim }_0$, all $k\in {\mathbb R}^3$ with 
$|(k,\gamma )|=\pi $, and all $\psi \in L^2(K;{\mathbb C}^M)$. Theorem \ref{th6.3} 
is proved.
\vskip 0.2cm

{\it Proof} of Theorem \ref{th6.1}. From Theorem \ref{th6.3} (under the change
$\psi =\widehat G_-^{\, \frac 12}\widehat P^{\, (-)}\varphi $, $\varphi \in
L^2(K;{\mathbb C}^M)$), we see that it suffices to prove that for any numbers 
$\delta >0$ and $l^{\, \prime }\in {\mathbb N}\backslash \{ 1\} $ there is a number 
$\widetilde \varkappa _0^{\, \prime }(\delta ,l^{\, \prime })=\widetilde 
\varkappa _0^{\, \prime }(M,\Lambda ,|\gamma |,h,\sigma ,\widehat V^{\, (0)},
\widehat V^{\, (1)};\delta ,l^{\, \prime })\geqslant 2h^{\, l^{\, \prime }+1}$
such that for all $\varkappa \geqslant \widetilde \varkappa _0^{\, \prime }
(\delta ,l^{\, \prime })$, all $k\in {\mathbb R}^3$ with $|(k,\gamma )|=\pi $, and
all $\psi \in L^2(K;{\mathbb C}^M)$ the inequalities
$$
\| \, \widehat G_-^{\, -\frac 12}\, \widehat P^{\, +}\, \widehat \Theta _{\, l^{\, 
\prime }}\, \widehat V^{\, (s)}\, \widehat G_-^{\, -\frac 12}\, \widehat P^{\, -}\, 
\widehat \Theta _{\, l^{\, \prime }}\, \psi \, \| \leqslant  \eqno (6.18)
$$ $$
\leqslant \, c_6\, (\, \delta +\max\limits_{q\, =\, 1,\dots ,Q_s}\, \beta _{\gamma ,\, 
\sigma }(\widehat V^{\, (s)}_q) )\, \| \, \widehat P^{\, (-)}\psi \, \| \, ,\ 
\, s=0,1\, ,  
$$
hold. Fix numbers $\delta >0$ and $l^{\, \prime }\in {\mathbb N}\backslash \{ 1\} $.
For $s=0,1$ and $q=1,\dots ,Q_s$, suppose ${\mathcal F}^{\, (s)}_q\in C^{\infty }
({\mathbb R}^3;{\mathbb R})$ are $\Lambda $-periodic functions such that $0\leqslant 
{\mathcal F}^{\, (s)}_q(x)\leqslant 1$ for all $x\in {\mathbb R}^3$, ${\mathcal F}
^{\, (s)}_q(x)=1$ for $x\in {\mathrm {supp}}\, {\mathcal F}^{\, (s)}_q$, and 
${\mathrm {supp}}\, {\mathcal F}^{\, (s)}_{q_1}\cap {\mathrm {supp}}\, {\mathcal F}
^{\, (s)}_{q_2}=\emptyset $ for $q_2\neq q_2\, $, $s=0,1$. Let us denote $\widetilde 
\delta =16\, h^{\, -l^{\, \prime }-1}$ (then $\widetilde \delta \varkappa <32\, h^{\, 
l_1}$, where $l_1=l-l^{\, \prime }$). We define the functions
$$
{\mathcal F}^{\, (s,1)}_q(x)=\sum\limits_{\substack{N\, \in \, \Lambda 
^*\, : \\  2\pi |N|\, \leqslant \, \widetilde \delta \varkappa }}({\mathcal F}^{\, 
(s)}_q)_N\, e^{\, 2\pi i\, (N,x)}\, ,\ \ {\mathcal F}^{\, (s,2)}_q(x)=\sum\limits
_{\substack{N\, \in \, \Lambda ^*\, : \\  2\pi |N|\, > \, \widetilde \delta \varkappa 
}}({\mathcal F}^{\, (s)}_q)_N\, e^{\, 2\pi i\, (N,x)}\, ,\ \ x\in {\mathbb R}^3\, .
$$
For all $\beta \geqslant 0$ (and all $q=1,\dots ,Q_s\, $, $s=0,1$),
$$
\varkappa ^{\, \beta }\, \| {\mathcal F}^{\, (s,2)}_q \| _{L^{\infty}({\mathbb R}
^3;{\mathbb R})}\to 0  \eqno (6.19)
$$
as $\varkappa \to +\infty $.

In what follows, we shall use the brief notation
$$
\widehat {\mathcal A}^{\, \pm }_{\, l^{\, \prime }}\doteq \widehat G_-^{\, -\frac
12}\, \widehat P^{\, \pm }\, \widehat \Theta _{\, l^{\, \prime }}\, .  \eqno (6.20)
$$

\begin{lemma} \label{l6.1}
There are constants $c_7(h,l^{\, \prime };{\mathcal F}^{\, (s)}_q)>0$ such that
for all $\varkappa \geqslant 2h^{\, l^{\, \prime }+1}$, all $k\in {\mathbb R}^3$,
and all $\varphi \in L^2(K;{\mathbb C}^M)$ we have
$$
\| \, (\, \widehat {\mathcal A}^{\, \pm }_{\, l^{\, \prime }}\, {\mathcal F}^{\, 
(s,1)}_q-{\mathcal F}^{\, (s,1)}_q\, \widehat {\mathcal A}^{\, \pm }_{\, l^{\, 
\prime }}\, )\, \varphi \, \| \leqslant c_7(h,l^{\, \prime };{\mathcal F}^{\, (s)}_q)
\, \varkappa ^{\, -\frac 32}\, \| \varphi \| \, ,  \eqno (6.21)
$$
$q=1,\dots ,Q_s\, $, $s=0,1$.
\end{lemma}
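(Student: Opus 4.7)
The plan is to regard $\widehat{\mathcal A}^{\, \pm}_{\, l^{\, \prime}}$ as a matrix-valued Fourier multiplier and bound the commutator uniformly through a gradient estimate of its symbol. Its action on $\varphi\in L^2(K;{\mathbb C}^M)$ is $(\widehat{\mathcal A}^{\, \pm}_{\, l^{\, \prime}}\varphi)_N=m(N)\varphi_N$, where
$$
m(N)=(G^-_N(k;\varkappa))^{-1/2}\,\Theta_{l^{\, \prime}}(h,\varkappa;G^-_N(k;\varkappa))\,\widehat P^{\, \pm}_{\widetilde e(k+2\pi N)}\in{\mathcal M}_M,
$$
with $m(N)\doteq\widehat 0$ wherever $\widetilde e(k+2\pi N)$ is undefined (those $N$ lie outside $\mathrm{supp}\,\Theta_{l^{\, \prime}}$, by the geometry below). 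Writing ${\mathcal F}^{\, (s,1)}_q=\sum_n f_n e^{\, 2\pi i\,(n,x)}$, expansion in Fourier series gives
$$
\bigl((\widehat{\mathcal A}^{\, \pm}_{\, l^{\, \prime}}{\mathcal F}^{\, (s,1)}_q-{\mathcal F}^{\, (s,1)}_q\widehat{\mathcal A}^{\, \pm}_{\, l^{\, \prime}})\varphi\bigr)_N=\sum_n f_n\,(m(N)-m(N-n))\,\varphi_{N-n},
$$
and Minkowski's inequality in $\ell^2(\Lambda^*)$ yields
$$
\|\widehat{\mathcal A}^{\, \pm}_{\, l^{\, \prime}}{\mathcal F}^{\, (s,1)}_q-{\mathcal F}^{\, (s,1)}_q\widehat{\mathcal A}^{\, \pm}_{\, l^{\, \prime}}\|_{L^2\to L^2}\leqslant \sum_n|f_n|\,\sup_{N\in\Lambda^*}\|m(N)-m(N-n)\|_{{\mathcal M}_M}.
$$
It therefore suffices to establish the uniform symbol-difference estimate $\|m(N)-m(N-n)\|\leqslant C(h,l^{\, \prime})|n|\varkappa^{\, -3/2}$.

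The second step is to pin down the geometry on $\mathrm{supp}\,\Theta_{l^{\, \prime}}=[\tfrac12\,h^{l_1+1},2h^{l-1}]$. From $\varkappa\geqslant 2h^{l^{\, \prime}+1}$ and $2h^l\leqslant \varkappa<2h^{l+1}$ one has $l_1=l-l^{\, \prime}\geqslant 1$ and $h^{l_1+1}\geqslant \varkappa/(2h^{l^{\, \prime}})$, so on $\mathrm{supp}\,\Theta_{l^{\, \prime}}$
$$
\tfrac14\,\varkappa/h^{l^{\, \prime}}\leqslant G^-_N(k;\varkappa)\leqslant 2h^{l-1}\leqslant \varkappa/h.
$$
Since $|\varkappa-|k_\perp+2\pi N_\perp||\leqslant G^-_N\leqslant \varkappa/h$, we also get $|k_\perp+2\pi N_\perp|\geqslant \varkappa/2$, so $\widetilde e(k+2\pi N)$ is well defined throughout the support. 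Now decompose
$$
m(N)-m(N-n)=(g(G^-_N)-g(G^-_{N-n}))\,\widehat P^{\, \pm}_{\widetilde e(k+2\pi N)}+g(G^-_{N-n})\bigl(\widehat P^{\, \pm}_{\widetilde e(k+2\pi N)}-\widehat P^{\, \pm}_{\widetilde e(k+2\pi(N-n))}\bigr),
$$
with $g(t)=t^{-1/2}\Theta_{l^{\, \prime}}(h,\varkappa;t)$. A direct piecewise computation on the flat plateau and on each of the two ramps of $\Theta_{l^{\, \prime}}$ gives $|g'(t)|\leqslant C(h)(h^{l_1+1})^{-3/2}\leqslant C'(h)h^{3l^{\, \prime}/2}\varkappa^{\, -3/2}$ (crucially, each ramp of $\Theta_{l^{\, \prime}}$ has width comparable to its height); combined with $|G^-_N-G^-_{N-n}|\leqslant 2\pi|n|$ and a boundary-of-support argument when only one of $G^-_N$, $G^-_{N-n}$ lies in $\mathrm{supp}\,\Theta_{l^{\, \prime}}$, this produces the first-term bound $\lesssim h^{3l^{\, \prime}/2}|n|\varkappa^{\, -3/2}$. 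For the second term, (1.4) together with the explicit estimate of $|\widetilde e(k+2\pi N)-\widetilde e(k+2\pi(N-n))|$ (as in the derivation of (5.2), using $|k_\perp+2\pi N_\perp|\geqslant \varkappa/2$) gives $\|\widehat P^{\, \pm}_{\widetilde e(k+2\pi N)}-\widehat P^{\, \pm}_{\widetilde e(k+2\pi(N-n))}\|\lesssim |n|/\varkappa$, while $|g(G^-_{N-n})|\lesssim h^{l^{\, \prime}/2}\varkappa^{\, -1/2}$; the product is again $\lesssim h^{l^{\, \prime}/2}|n|\varkappa^{\, -3/2}$. Adding the two contributions delivers $\|m(N)-m(N-n)\|\leqslant C(h,l^{\, \prime})|n|\varkappa^{\, -3/2}$.

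Substituting into the Minkowski bound and using $|({\mathcal F}^{\, (s,1)}_q)_n|\leqslant|({\mathcal F}^{\, (s)}_q)_n|$ gives
$$
\|\widehat{\mathcal A}^{\, \pm}_{\, l^{\, \prime}}{\mathcal F}^{\, (s,1)}_q-{\mathcal F}^{\, (s,1)}_q\widehat{\mathcal A}^{\, \pm}_{\, l^{\, \prime}}\|\leqslant C(h,l^{\, \prime})\,\varkappa^{\, -3/2}\sum_n|n|\,|({\mathcal F}^{\, (s)}_q)_n|,
$$
and the last sum is finite and $\varkappa$-independent because ${\mathcal F}^{\, (s)}_q\in C^\infty({\mathbb R}^3;{\mathbb R})$ has rapidly decaying Fourier coefficients; defining $c_7(h,l^{\, \prime};{\mathcal F}^{\, (s)}_q)$ by this product completes the proof. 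The main technical obstacle is the gradient bound for the scalar symbol $g$: because $\Theta_{l^{\, \prime}}$ is only piecewise linear and not $C^1$, no smooth-symbol calculus is available, and each piece of $\Theta_{l^{\, \prime}}$ has to be handled by hand to verify that the contribution from $\Theta'_{l^{\, \prime}}$ on the ramps does not exceed the contribution from $(G^-)^{-1/2}$—this is exactly where the match between the width and the height of each ramp is used to produce the clean $(h^{l_1+1})^{-3/2}\sim h^{3l^{\, \prime}/2}\varkappa^{\, -3/2}$ bound on $|g'|$.
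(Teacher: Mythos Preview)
Your proof is correct and follows essentially the same route as the paper's: Fourier-expand the commutator, bound the symbol difference $m(N)-m(N-n)$ pointwise by $C(h,l')\,|n|\,\varkappa^{-3/2}$, and sum via Minkowski against $\sum_n |n|\,|({\mathcal F}^{(s)}_q)_n|$. The only cosmetic differences are that the paper splits the symbol difference into three pieces (separating $(G^-)^{-1/2}$, $\Theta_{l'}$, and $\widehat P^{\pm}$) rather than your two via $g(t)=t^{-1/2}\Theta_{l'}(t)$, and that the paper makes explicit use of the Fourier truncation $2\pi|n|\leqslant\widetilde\delta\varkappa$ in ${\mathcal F}^{(s,1)}_q$ to confine both $N$ and $N-n$ to $\mathcal{K}(h^{l})\setminus\mathcal{K}(h^{l_1})$ --- this truncation is what actually justifies invoking $|k_\perp+2\pi N_\perp|\geqslant\varkappa/2$ in your projection-difference estimate (strictly speaking it is $N-n$ that lies in the support when $g(G^-_{N-n})\ne 0$, but the truncation on $n$ then forces $|k_\perp+2\pi N_\perp|\geqslant \varkappa/4$, which is enough).
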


\begin{proof}
The choice of the number $\widetilde \delta $ implies that
$$
(\, {\mathcal F}^{\, (s,1)}_q\, \widehat {\mathcal A}^{\, \pm }_{\, l^{\, 
\prime }}\, \varphi \, )_N=(\, \widehat {\mathcal A}^{\, \pm }_{\, l^{\, 
\prime }}\, {\mathcal F}^{\, (s,1)}_q\, \varphi \, )_N=0
$$
for $N\in (\Lambda ^*\backslash \, {\mathcal K}(h^{\, l}))\cup {\mathcal K}(h^{\, 
l_1})$,
$$
(\, \widehat {\mathcal A}^{\, \pm }_{\, l^{\, 
\prime }}\, {\mathcal F}^{\, (s,1)}_q\, \varphi \, )_N=  \eqno (6.22)
$$ $$
=\sum\limits_{\substack{n\, \in \, \Lambda 
^*\, : \\  2\pi |n|\, \leqslant \, \widetilde \delta \varkappa \, , \\
N-n\, \in \, {\mathcal K}(h^{\, l})\backslash \, {\mathcal K}(h^{\, l_1})}}
(G^{\, -}_N(k;\varkappa ))^{-\frac 12}\, \widehat P^{\, \pm }_{\widetilde e(k+
2\pi N)}\, \Theta _{\, l^{\, \prime }}(h,\varkappa ;G^{\, -}_N(k;\varkappa ))\,
({\mathcal F}^{\, (s)}_q)_n\, \varphi _{N-n}\, ,  
$$ $$
(\, {\mathcal F}^{\, (s,1)}_q\, \widehat {\mathcal A}^{\, \pm }_{\, l^{\, 
\prime }}\, \varphi \, )_N=  \eqno (6.23)
$$ $$
=\sum\limits_{\substack{n\, \in \, \Lambda 
^*\, : \\  2\pi |n|\, \leqslant \, \widetilde \delta \varkappa \, , \\
N-n\, \in \, {\mathcal K}(h^{\, l})\backslash \, {\mathcal K}(h^{\, l_1})}}
(G^{\, -}_{N-n}(k;\varkappa ))^{-\frac 12}\, \widehat P^{\, \pm }_{\widetilde e(k+
2\pi (N-n))}\, \Theta _{\, l^{\, \prime }}(h,\varkappa ;G^{\, -}_{N-n}(k;\varkappa 
))\, ({\mathcal F}^{\, (s)}_q)_n\, \varphi _{N-n}  
$$
for $N\in {\mathcal K}(h^{\, l})\backslash \, {\mathcal K}(h^{\, l_1})$. Furthermore,
$$
(G^{\, -}_N(k;\varkappa ))^{-\frac 12}\, \widehat P^{\, \pm }_{\widetilde e(k+
2\pi N)}\, \Theta _{\, l^{\, \prime }}(h,\varkappa ;G^{\, -}_N(k;\varkappa ))\, -
$$ $$
-\, (G^{\, -}_{N-n}(k;\varkappa ))^{-\frac 12}\, \widehat P^{\, \pm }_{\widetilde e(k+
2\pi (N-n))}\, \Theta _{\, l^{\, \prime }}(h,\varkappa ;G^{\, -}_{N-n}(k;\varkappa 
))=  \eqno (6.24)
$$ $$
=((G^{\, -}_N(k;\varkappa ))^{-\frac 12}-(G^{\, -}_{N-n}(k;\varkappa ))^{-\frac 
12}\, )\, \widehat P^{\, \pm }_{\widetilde e(k+2\pi N)}\, \Theta _{\, l^{\, \prime }}
(h,\varkappa ;G^{\, -}_N(k;\varkappa ))\, +
$$ $$
+\, (G^{\, -}_{N-n}(k;\varkappa ))^{-\frac 12}\, (\, \widehat P^{\, \pm }_{\widetilde 
e(k+2\pi N)}-\widehat P^{\, \pm }_{\widetilde e(k+2\pi (N-n))}\, )\, \Theta _{\, l^{\, 
\prime }}(h,\varkappa ;G^{\, -}_N(k;\varkappa ))\, +
$$ $$
+\,(G^{\, -}_{N-n}(k;\varkappa ))^{-\frac 12}\, \widehat P^{\, \pm }_{\widetilde e(k+
2\pi (N-n))}\, (\, \Theta _{\, l^{\, \prime }}(h,\varkappa ;G^{\, -}_N(k;\varkappa ))-
\Theta _{\, l^{\, \prime }}(h,\varkappa ;G^{\, -}_{N-n}(k;\varkappa ))\, )\, .
$$
For $N, N-n\in {\mathcal K}(h^{\, l})\backslash \, {\mathcal K}(h^{\, l_1})$,
we have 
$$
|\, (G^{\, -}_N(k;\varkappa ))^{-\frac 12}-(G^{\, -}_{N-n}(k;\varkappa ))^{-\frac 
12}\, | \leqslant \, \frac 12\ h^{\, -\frac 32\, l_1}\cdot 2\pi |n|\, ,
$$ $$
\| \, \widehat P^{\, \pm }_{\widetilde 
e(k+2\pi N)}-\widehat P^{\, \pm }_{\widetilde e(k+2\pi (N-n))}\, \| \leqslant
\frac {2\pi |n|}{\varkappa }\ ,
$$ $$
|\, \Theta _{\, l^{\, \prime }}(h,\varkappa ;G^{\, -}_N(k;\varkappa ))-
\Theta _{\, l^{\, \prime }}(h,\varkappa ;G^{\, -}_{N-n}(k;\varkappa ))\, | \leqslant
2h^{-l_1-1}\cdot 2\pi |n|
$$
(see (1.4), (5.2), and the definition of the functions $\Theta _{\, l^{\, \prime }}
(h,\varkappa ;.)$). Therefore (6.22) -- (6.24) yield
$$
\| \, ((\, \widehat {\mathcal A}^{\, \pm }_{\, l^{\, \prime }}\, {\mathcal F}^{\, 
(s,1)}_q-{\mathcal F}^{\, (s,1)}_q\, \widehat {\mathcal A}^{\, \pm }_{\, l^{\, 
\prime }}\, )\, \varphi \, )_N\, \| \leqslant 3\sqrt 2\, h^{\, \frac 32\, (l^{\,
\prime }+1)}\, \varkappa ^{-\frac 32}\, \sum\limits_{\substack{n\, \in \, \Lambda 
^*\, : \\  2\pi |n|\, \leqslant \, \widetilde \delta \varkappa }}
2\pi |n| \cdot |({\mathcal F}^{\, (s)}_q)_n|\cdot \| \varphi _{N-n}\| \, .
$$ 
From this we obtain that estimates (6.21) hold with constants
$$ 
c_7(h,l^{\, \prime };{\mathcal F}^{\, (s)}_q)= 3{\sqrt 2}\ h^{\, \frac 32\, (l^{\,
\prime }+1)}\, \sum\limits_{n\, \in \, \Lambda ^*}2\pi |n|\cdot |({\mathcal F}^{\, 
(s)}_q)_n|\, .
$$
Lemma \ref{l6.1} is proved.
\end{proof}

If $\widehat W\in L^3_w(K;{\mathcal M}_M)$, then (see (0.6) and (1.1)) there is a
constant $c_8=c_8(\Lambda ,|\gamma |;\widehat W)>0$ such that for all $k\in {\mathbb 
R}^3$ with $|(k,\gamma )|=\pi $, and all $\varphi \in \widetilde H^1(K;{\mathbb C}
^M)$ the following inequality is satisfied:
$$
\| \widehat W\varphi \| \leqslant c_8\, \| \sum\limits_{j=1}^3\widehat \alpha _j\,
\bigl( k_j-i\, \frac {\partial }{\partial x_j}\, \bigr) \varphi \, \| \, .
$$
Therefore, for all vector functions $\varphi \in {\mathcal H}({\mathcal K}(h^{\,
l}))$,
$$
\| \widehat W\varphi \| \leqslant \frac 32\, c_8\varkappa \, \| \varphi \| \, .  
\eqno (6.25)
$$

Now let us obtain inequality (6.18) (for sufficiently large numbers $\varkappa 
\geqslant \widetilde \varkappa ^{\, \prime }_0(\delta ,l^{\, \prime })\geqslant 
2h^{\, l^{\, \prime }+1}$). From (6.19), (6.25), and Lemma \ref{l6.1} it follows
that there exists a number $\widetilde \varkappa ^{\, \prime }_0(\delta ,l^{\, 
\prime })\geqslant 2h^{\, l^{\, \prime }+1}$ (dependent also on $\Lambda $, $|\gamma 
|$, $h$, $\sigma $, on the numbers $Q_s\, $, and on the functions $\widehat V^{\, 
(s)}_q$, ${\mathcal F}^{\, (s)}_q$) such that for all $\varkappa \geqslant \widetilde 
\varkappa ^{\, \prime }_0(\delta ,l^{\, \prime })$, all $k\in {\mathbb R}^3$ with 
$|(k,\gamma )|=\pi $, and all $\psi \in L^2(K;{\mathbb C}^M)$ we have
$$
\| \, \widehat G_-^{\, -\frac 12}\, \widehat P^{\, +}\, \widehat \Theta _{\, l^{\, 
\prime }}\, \widehat V^{\, (s)}\, \widehat G_-^{\, -\frac 12}\, \widehat P^{\, -}\, 
\widehat \Theta _{\, l^{\, \prime }}\, \psi \, \| ^2=
\| \, \widehat {\mathcal A}^{\, +}_{\, l^{\, \prime }}\, \sum\limits_{q\, =\, 1}
^{Q_s}\, ({\mathcal F}^{\, (s,1)}_q+{\mathcal F}^{\, (s,2)}_q)\, \widehat V^{\, (s)}\,
\widehat {\mathcal A}^{\, -}_{\, l^{\, \prime }}\, \psi \, \| ^2 \leqslant
$$ $$
\leqslant \, \frac 87\ \| \, \widehat {\mathcal A}^{\, +}_{\, l^{\, \prime }}\, 
\sum\limits_{q\, =\, 1}^{Q_s}{\mathcal F}^{\, (s,1)}_q\, \widehat V^{\, (s)}\,
\widehat {\mathcal A}^{\, -}_{\, l^{\, \prime }}\, \psi \, \| ^2+
\frac 1{14}\ \delta ^2c_6^2\, \| \psi \| ^2\, \leqslant
$$ $$
\leqslant \, \frac 97\ \| \, \sum\limits_{q\, =\, 1}^{Q_s}{\mathcal F}^{\, (s,1)}_q\,
\widehat {\mathcal A}^{\, +}_{\, l^{\, \prime }}\,  \widehat V^{\, (s)}\,
\widehat {\mathcal A}^{\, -}_{\, l^{\, \prime }}\, \psi \, \| ^2+
\frac 2{14}\ \delta ^2c_6^2\, \| \psi \| ^2\, \leqslant
$$ $$
\leqslant \, \frac {10}7\ \| \, \sum\limits_{q\, =\, 1}^{Q_s}{\mathcal F}^{\, (s)}_q\,
\widehat {\mathcal A}^{\, +}_{\, l^{\, \prime }}\,  \widehat V^{\, (s)}\,
\widehat {\mathcal A}^{\, -}_{\, l^{\, \prime }}\, \psi \, \| ^2+
\frac 3{14}\ \delta ^2c_6^2\, \| \psi \| ^2\, =
$$ $$
=\, \frac {10}7\, \sum\limits_{q\, =\, 1}^{Q_s} \, \| \, {\mathcal F}^{\, (s)}_q\,
\widehat {\mathcal A}^{\, +}_{\, l^{\, \prime }}\,  \widehat V^{\, (s)}\,
\widehat {\mathcal A}^{\, -}_{\, l^{\, \prime }}\, \psi \, \| ^2+
\frac 3{14}\ \delta ^2c_6^2\, \| \psi \| ^2\, \leqslant
$$ $$
\leqslant \, \frac {11}7\, \sum\limits_{q\, =\, 1}^{Q_s} \, \| \, {\mathcal F}^{\, 
(s,1)}_q\, \widehat {\mathcal A}^{\, +}_{\, l^{\, \prime }}\,  \widehat V^{\, (s)}\,
\widehat {\mathcal A}^{\, -}_{\, l^{\, \prime }}\, \psi \, \| ^2+
\frac 4{14}\ \delta ^2c_6^2\, \| \psi \| ^2\, \leqslant
$$ $$
\leqslant \, \frac {12}7\, \sum\limits_{q\, =\, 1}^{Q_s} \, \| \, \widehat {\mathcal A}
^{\, +}_{\, l^{\, \prime }}\, {\mathcal F}^{\, (s,1)}_q\, \widehat V^{\, (s)}\,
\widehat {\mathcal A}^{\, -}_{\, l^{\, \prime }}\, \psi \, \| ^2+
\frac 5{14}\ \delta ^2c_6^2\, \| \psi \| ^2\, \leqslant
$$ $$
\leqslant \, \frac {13}7\, \sum\limits_{q\, =\, 1}^{Q_s} \, \| \, \widehat {\mathcal A}
^{\, +}_{\, l^{\, \prime }}\, \widehat V^{\, (s)}\, \widehat {\mathcal A}^{\, -}
_{\, l^{\, \prime }}\, {\mathcal F}^{\, (s,1)}_q\, \psi \, \| ^2+
\frac 6{14}\ \delta ^2c_6^2\, \| \psi \| ^2\, \leqslant
$$ $$
\leqslant \, 2\, \sum\limits_{q\, =\, 1}^{Q_s} \, \| \, \widehat {\mathcal A}
^{\, +}_{\, l^{\, \prime }}\, \widehat V^{\, (s)}\, \widehat {\mathcal A}^{\, -}
_{\, l^{\, \prime }}\, {\mathcal F}^{\, (s)}_q\, \psi \, \| ^2+
\frac 12\ \delta ^2c_6^2\, \| \psi \| ^2
$$
(we use the notation (6.20)). Finally, (perhaps picking a larger number $\widetilde 
\varkappa ^{\, \prime }_0(\delta ,l^{\, \prime })$ which now may also depend on
$M$) these estimates and Theorem \ref{th6.2} imply that inequalities (6.18) hold:
$$
\| \, \widehat G_-^{\, -\frac 12}\, \widehat P^{\, +}\, \widehat \Theta _{\,
l^{\, \prime }}\, \widehat V^{\, (s)}\, \widehat G_-^{\, -\frac 12}\,
\widehat P^{\, -}\, \widehat \Theta _{\, l^{\, \prime }}\, \psi \, \| ^2
\leqslant 
$$ $$
\leqslant \, \frac 12\ c_6^2\, \sum\limits_{q\, =\, 1}^{Q_s} \, (\, \delta 
+\beta _{\gamma ,\, \sigma }(\widehat V^{\, (s)}_q))^2\, \| \, {\mathcal F}^{\, 
(s)}_q\, \psi \, \| ^2+ \frac 12\, \delta ^2c_6^2\, \| \psi \| ^2\, \leqslant
$$ $$
\leqslant \, \frac 12\ c_6^2\, (\, \delta +\max\limits_{q\, =\, 1,\dots ,Q_s}\, 
\beta _{\gamma ,\, \sigma }(\widehat V^{\, (s)}_q))^2\, \sum\limits_{q\, =\, 1}^{Q_s}
\| \, {\mathcal F}^{\, (s)}_q\, \psi \, \| ^2+ \frac 12\, \delta ^2c_6^2\, \| \psi 
\| ^2\, \leqslant 
$$ $$
\leqslant \, c_6^2\, (\, \delta +\max\limits_{q\, =\, 1,\dots ,Q_s}\, 
\beta _{\gamma ,\, \sigma }(\widehat V^{\, (s)}_q))^2\, \| \psi \| ^2\, ,\ \
s=0,1\, . 
$$
This completes the proof of Theorem \ref{th6.1}. 

\vskip 1.0cm

\end{document}